\documentclass[11pt]{article}
\usepackage[a4paper,top=3cm,bottom=2cm,left=2.5cm,right=2.5cm,marginparwidth=1.75cm]{geometry}
\usepackage[round]{natbib}

\usepackage{bm}
\usepackage{color}
\usepackage{booktabs} 
\usepackage{nicefrac}
\usepackage{float}
\usepackage{caption}
\usepackage{subcaption}
\usepackage{mathtools}
\usepackage{soul}
\mathtoolsset{showonlyrefs}
\usepackage{enumitem}
\usepackage{multirow}
\usepackage{scalerel}
\usepackage{url}
\usepackage{xcolor}
\usepackage[colorlinks = true,
            linkcolor = blue,
            urlcolor  = blue,
            citecolor = blue,
            anchorcolor = blue]{hyperref}

\usepackage{mathtools}
\usepackage{amsmath,mathrsfs,amsfonts,amssymb,amsthm}
\usepackage{thmtools,thm-restate}
\newtheorem{lemma}{Lemma}[section]
\newtheorem{theorem}{Theorem}[section]
\newtheorem{proposition}{Proposition}[section]

\newtheorem{definition}{Definition}[section]
\newtheorem{remark}{Remark}[section]

\usepackage{tcolorbox}
\usepackage{microtype}      
\usepackage{xcolor}         
\usepackage[normalem]{ulem}
\usepackage[none]{hyphenat}
\usepackage{breakcites}
\usepackage[noend]{algpseudocode}
\usepackage[ruled,vlined]{algorithm2e}

\SetCommentSty{mycommfont}
\SetKwInput{KwInput}{Input}                
\SetKwInput{KwOutput}{Output}  
\SetKwInput{KwInitialization}{Initialization}

\usepackage{dsfont}
\allowdisplaybreaks
\newcommand{\textBlue}[1]{{\leavevmode\color{blue}#1}} 








\newcommand{\lp}{\left(}
\newcommand{\rp}{\right)}
\newcommand{\lb}{\left[}
\newcommand{\rb}{\right]}
\newcommand{\lbp}{\left\{}
\newcommand{\rbp}{\right\}}
\newcommand{\lba}{\left\lvert}
\newcommand{\rba}{\right\rvert}
\newcommand{\lV}{\left\lVert}
\newcommand{\rV}{\right\rVert}
\newcommand{\mv}{\middle\vert}

\newcommand{\mcal}{\mathcal}

\newcommand{\mb}{\mathbf}

\newcommand{\mbb}{\mathbb}
\newcommand{\msf}{\mathsf}

\newcommand{\lan}{\langle}
\newcommand{\ran}{\rangle}

\newcommand{\eqDef}{\triangleq}
\newcommand{\diid}{\overset{\text{i.i.d.}}{\sim}}

\newcommand{\E}{\mathbb{E}}
\newcommand{\Var}{\mathsf{Var}}

\renewcommand{\Pr}{\mathbb{P}}

\newcommand{\Unif}{\mathrm{Unif}}

\newcommand{\KLD}[2]{{D}_{\msf{KL}}\left( #1\, \middle\Vert #2 \right)}
\newcommand{\TV}[2]{\lVert #1 - #2 \rVert_{\msf{TV}}}


\newcommand{\Probability}{\mathbb{P}}
\newcommand{\Indicator}{\mathds{1}}
\newcommand{\Expectation}{\mathbb{E}}
\DeclareMathAlphabet{\mathbsf}{OT1}{cmss}{bx}{n}
\DeclareMathAlphabet{\mathssf}{OT1}{cmss}{m}{sl}
\newcommand{\rvbx}{{\mathbsf{x}}} 
\newcommand{\rvbz}{{\mathbsf{z}}} 
\newcommand{\svbx}{{\boldsymbol{x}}} 
\newcommand{\svbu}{{\boldsymbol{u}}} 
\newcommand{\svbz}{{\boldsymbol{z}}} 
\newcommand{\svbp}{{\boldsymbol{p}}} 
\newcommand{\svbmu}{{\boldsymbol{\mu}}} 
\newcommand{\cX}{\mathcal{X}} 
\newcommand{\cU}{\mathcal{U}} 
\newcommand{\cZ}{\mathcal{Z}} 

\newcommand{\qPU}{q^{\texttt{pu}}}
\newcommand{\qSS}{q^{\texttt{ss}}}
\newcommand{\qMRC}{q^{\texttt{mrc}}}
\newcommand{\qMMRC}{q^{\texttt{mmrc}}}
\newcommand{\MRC}{\texttt{MRC}}
\newcommand{\MMRC}{\texttt{MMRC}}
\newcommand{\pMRC}{p^{\texttt{mrc}}}
\newcommand{\pMMRC}{p^{\texttt{mmrc}}}
\newcommand{\piMRC}{\pi^{\texttt{mrc}}}
\newcommand{\piMMRC}{\pi^{\texttt{mmrc}}}

\newcommand{\PrivUnit}{\texttt{PrivUnit}_2}
\newcommand{\SubsetSelection}{\texttt{Subset Selection}}
\newcommand{\sphere}{\mathbb{S}}

\newcommand{\norm}[1]{\left\|{#1}\right\|} 
\newcommand{\ltwo}[1]{\norm{#1}_2} 

\let\svthefootnote\thefootnote
\newcommand\freefootnote[1]{%
	\let\thefootnote\relax%
	\footnotetext{#1}%
	\let\thefootnote\svthefootnote%
}

\title{Optimal Compression of Locally Differentially Private Mechanisms}

\author{%
	Abhin Shah\footnote{Work done while A.S and W.C were interns at Google. J.B. provided the initial idea. P.K. and L.T. gave the conceptual and theoretical framework. A.S. and L.T. designed the algorithm. A.S., W.C., and L.T., devised the proofs. A.S. designed and performed the simulations with support from J.B.. A.S., W.C., P.K., and L.T. wrote the manuscript. J.B., P.K., and L.T., are listed alphabetically.}
	\\
	Massachusetts Institute of Technology\\
	\texttt{abhin@mit.edu} \\
	\and
	Wei-Ning Chen\footnotemark[1]
	\\
	Stanford University \\
	\texttt{wnchen@stanford.edu} \\
	\and
	Johannes Balle\footnotemark[1]
	\\
	Google Research \\
	\texttt{jballe@google.com}
	\and
	Peter Kairouz\footnotemark[1]
	\\
	Google Research \\
	\texttt{kairouz@google.com}
	\and
	Lucas Theis\footnotemark[1]
	\\
	Google Research \\
	\texttt{theis@google.com} 
}
\date{}
\begin{document}
\sloppy
\maketitle
\begin{abstract}
Compressing the output of $\varepsilon$-locally differentially private (LDP) randomizers naively leads to suboptimal utility. In this work, we demonstrate the benefits of using schemes that jointly compress and privatize the data using shared randomness. In particular, we investigate a family of schemes based on Minimal Random Coding \citep{HPHJ19} and prove that they offer optimal privacy-accuracy-communication tradeoffs. Our theoretical and empirical findings show that our approach can compress $\PrivUnit$ \citep{BDFKR2018} and $\SubsetSelection$ \citep{YB18}, the best known LDP algorithms for mean and frequency estimation, to the order of $\varepsilon$ bits of communication while preserving their privacy and accuracy guarantees.
\end{abstract}
\section{Introduction}
\label{sec:introduction}
Machine learning and data analytics are critical tools for designing better products and services. So far, these tools have been predominantly applied in datacenters on data that was curated from millions of users. However, centralized data collection and processing can expose individuals to privacy risks and organizations to legal risks if data is not properly managed. Indeed, increasing privacy concerns are fueling the demand for distributed learning and analytics systems that ensure that the underlying data remains private and secure. This is evident from the recent surge of interest in federated learning and analytics \citep[e.g.,][]{RM20,KM21}.

Designing private and efficient distributed learning and analytics systems involves addressing three main challenges: (a) preserving the privacy of the user's local data, (b) communicating the privatized data efficiently to a central server, and (c) achieving high accuracy on a task (e.g., mean or frequency estimation). Privacy is often achieved by enforcing $\varepsilon$-local differential privacy
($\varepsilon$-LDP) \citep{warner1965randomized, evfimievski2003limiting, dwork2006calibrating, kasiviswanathan2011can}, which guarantees that the outcome from a privatization mechanism will not release too much individual information statistically. Efficient communication, on the other hand, is achieved via compression and dimensionality reduction techniques  \citep{an2016distributed, alistarh17qsgd, wen2017terngrad, wang2018atomo, han2018distributed, han2018geometric, agarwal2018cpsgd, g2019vqsgd, barnes2020rtopk, CKO21}. 

Most existing works focus on addressing two of the three above-mentioned challenges, such as achieving good privacy-accuracy or good communication-accuracy tradeoffs separately. However, doing so can lead to suboptimal performance where all three desiderata are concerned. It is thus important to investigate the joint privacy-communication-accuracy tradeoffs when designing communication-efficient and private distributed algorithms. Under $\varepsilon$-LDP constraints, \cite{CKO20} presents minimax order-optimal mechanisms for frequency and mean estimation that require only $\varepsilon$ bits (independent of the underlying dimensionality of the problem) by using shared randomness\footnote{We assume that the encoder and the decoder can depend on a random quantity that both the server and the user have access to. See Section \ref{subsec:shared_randomness} for details.}.
However,  as noted by \cite{FT21}, the algorithms of \cite{CKO20} are not competitive in terms of accuracy with the best known schemes -- $\SubsetSelection$ for frequency estimation \citep{YB18} and $\PrivUnit$ for mean estimation \citep{BDFKR2018}.
Motivated by this fact, the present work addresses the following fundamental question: \emph{Can we attain the best known accuracy under $\varepsilon$-LDP while only using on the order of $\varepsilon$ bits of communication}? We answer this question affirmatively by leveraging a technique based on importance sampling called Minimal Random Coding \citep{HPHJ19,C08,SCV16}. 

\subsection{Our Contributions}
\label{subsec:contributions}
We first demonstrate that Minimal Random Coding ($\MRC$) can compress any $\varepsilon$-LDP mechanism in a near-lossless fashion using only on the order of $\varepsilon$ bits of communication (see Theorem \ref{theorem:mrc_accuracy_local}). We also prove that the resulting compressed mechanism is $2\varepsilon$-LDP (see Theorem \ref{theorem:mrc_pure_privacy}).
Thus, to achieve $\varepsilon$-LDP, one has to simulate an $\varepsilon/2$ mechanism and pay the corresponding penalty in accuracy. 
Similar to \cite{CKO20}, this approach can achieve the order optimal privacy-accuracy tradeoffs with about $\varepsilon$ bits of communication but is not competitive with the best known LDP schemes. However, we show that this approach is optimal if one is willing to accept approximate LDP with a small $\delta$ (see Theorem~\ref{theorem:mrc_approximate_privacy}).

To overcome the limitations of $\MRC$ in the pure LDP case, we present a modified version ($\MMRC$) such that the resulting compressed mechanism is $\varepsilon$-LDP (see Theorem \ref{theorem:mmrc_privacy}). We show that $\MMRC$ can simulate a large class of LDP mechanisms in a near-lossless fashion using only on the order of $\varepsilon$ bits of communication (see Theorem \ref{theorem:mmrc_accuracy_local_wrt_mrc} in conjunction with Theorem \ref{theorem:mrc_accuracy_local}).

While the class of LDP mechanisms $\MMRC$ can simulate includes the best-known schemes for mean and frequency estimation, $\MMRC$ (similar to $\MRC$) is biased for a fixed number of bits of communication. We show that $\MMRC$ simulating $\PrivUnit$ and $\SubsetSelection$ can be debiased (see Lemma \ref{lemma:mmrc_privunit_bias} and Lemma \ref{lemma:mmrc_ss_bias}), while preserving the corresponding accuracy guarantees (see Theorem \ref{thm:me_mmrc_pu} and Theorem \ref{thm:fe_mmrc_ss}).


Finally, we empirically demonstrate that $\MMRC$ achieves an accuracy comparable to $\PrivUnit$ and $\SubsetSelection$ (see Section \ref{subsec:sim_pu1} and Section \ref{subsec:sim_ss1})\footnote{The source code of our implementation is available at 
\url{https://tinyurl.com/rcc-dp}.} while only using about $\varepsilon$ bits.

We discuss interesting open problems in Section \ref{sec:conclusion} and defer all additional results and experiments to the Appendix.

\subsection{Related Work}
\label{subsec:related_works}
Recent works have examined approaches for compressing LDP schemes in the presence of shared randomness. 
When $\varepsilon \leq 1$, for frequency estimation, \cite{BS15} showed that a single bit is enough to simulate any LDP randomizer with (almost) no impact on its utility although with a large amount of shared randomness. Their result was improved upon, in terms of the amount of shared randomness required, by \cite{bassily2017practical}, \cite{BNS19}, and \cite{acharya2019communication}.

\cite{CKO20} generalized these methods to arbitrary $\varepsilon$'s, and provided order-optimal schemes for both frequency and mean estimation that only use on the order of $\varepsilon$ bits. However, their method is only order-optimal and cannot achieve the accuracy of the best known schemes: $\PrivUnit$ \citep{BDFKR2018} for mean estimation and $\SubsetSelection$ \citep{YB18}\footnote{$\SubsetSelection$ is similar to asymmetric RAPPOR \citep{erlingsson14rappor} in the sense that both have the same marginal distribution. Here, we focus on simulating $\SubsetSelection$.} 
for frequency estimation. We show how one can achieve the accuracy of these schemes with on the order of $\varepsilon$ bits of communication (when $\varepsilon \geq 1$). While we don't advocate large $\varepsilon$ (our methods work for $\varepsilon=1$ as well), we note that larger $\varepsilon$ are both of theoretical and practical interest 
since \textit{amplification via shuffling} can convert a local $\varepsilon > 1$ to a small central $\varepsilon$ \citep{erlingsson2019amplification,balle2019privacy, erlingsson2020encode}.

In the absence of shared randomness, \cite{girgis2021a}, \cite{girgis2021b}, \cite{CKO20} provided order-optimal mechanisms for frequency and mean estimation but their mechanisms do not achieve the best known accuracy. \citet{FT21} presented an approach for compressing $\varepsilon$-LDP schemes in a lossless fashion using a pseudorandom generator (PRG). Their approach, which relies on cryptographic hardness of the PRG, can compress $\SubsetSelection$ to $O(\ln d)$ bits and $\PrivUnit$ to $O(\varepsilon+\ln d)$ bits, where $d$ is the dimension of the underlying problem. 
Their approach, similar to ours, can achieve the privacy vs accuracy tradeoffs of the best known schemes, i.e., $\SubsetSelection$ and $\PrivUnit$. Nevertheless, their approach is designed to work without shared randomness, therefore requiring more bits than necessary if shared randomness is available, as in our work. 

Unlike previous work, our technique of compressing generic LDP schemes relies on Minimal Random Coding ($\MRC$), which was designed to simulate noisy channels. Several papers in information theory and related fields have studied the problem of efficiently simulating noisy channels over digital channels \cite[e.g.,][]{BS02,HJMR07,LE18} and proposed general solutions. In particular, these papers showed that any noisy channel can be simulated at a bit-rate which is close to the mutual information between the information available to the sender and the receiver. However, this result only holds if a shared source of randomness is available. Without such a source, the achievable rate has been shown to be close to Wyner's common information \citep{W75,C08}, which can be significantly larger than the mutual information \citep{XLC11}. While promising as a recipe for simulating arbitrary differentially private mechanisms, the general coding schemes discussed in these papers have not been analyzed for their effect on differential privacy guarantees. $\MRC$ \citep{HPHJ19}, which we analyze and build upon here, is one of these schemes and is also known as likelihood encoder in information theory \citep{C08,SCV16}. 

Finally, mean and frequency estimation under LDP constraints, two canonical problems in distributed learning and analytics, have been widely studied \citep{duchi2013local, nguyn2016collecting, BDFKR2018, wang2019locally, g2019vqsgd, erlingsson14rappor, BS15, kairouz16discrete, YB18, acharya2019hadamard}.
\section{Preliminaries}
\label{sec:preliminaries}
\subsection{Locally Differentially Private (LDP)}
Suppose $\svbx \in \cX$ is some user's data that must remain private.  A privatization mechanism $q$ is a randomized mapping that maps $\svbx \in \cX$ to $\svbz \in \cZ$ with probability $q(\svbz|\svbx)$ where $\cZ$ can be arbitrary. The user transmits $\svbz \sim q(\cdot|\svbx)$, i.e., a privatized version of $\svbx$ to the server. Further, $q$ is $\varepsilon$-LDP if 
\begin{align}
    \forall \svbx, \svbx' \in \cX, \svbz \in \cZ, ~ q(\svbz|\svbx) \leq e^{\varepsilon} q(\svbz|\svbx')  \label{eq:ldp}
\end{align}
and $q$ is $(\varepsilon, \delta)$-LDP if $\forall \svbx, \svbx' \in \cX, Z \subseteq \cZ,$
$$\sum_{\svbz \in Z} q(\svbz|\svbx) \leq e^{\varepsilon} \sum_{\svbz \in Z} q(\svbz|\svbx') + \delta.$$
Here, we focus on $\varepsilon$-LDP mechanisms where $\varepsilon \geq 1$.


\subsection{Shared Randomness}
\label{subsec:shared_randomness}
Here, we allow $\varepsilon$-LDP mechanisms to use \emph{shared randomness}. That is, $q$ can depend on a random variable $\svbu \in \cU$ that is known to both the user and the server (but $\svbu$ is independent of $\svbx$). The corresponding $\varepsilon$-LDP constraint is
$$\forall \svbx, \svbx' \in \cX, \svbz \in \cZ, \svbu \in \cU, ~ q(\svbz|\svbx, \svbu) \leq \exp(\varepsilon)q(\svbz|\svbx', \svbu). $$
The server wishes to reconstruct $\svbx$ from $\svbz$ and the corresponding estimator is allowed to implicitly depend on $\svbu$.
However, for simplicity, we suppress the dependence on $\svbu$ in our notation. In practice, shared randomness can be achieved via downlink communication, that is, the server generates $\svbu$ (e.g., a random seed) and communicates it to the user. Further, we note that such shared randomness can be established well before the advent of any private data\footnote{Quantifying the amount of such shared randomness required remains an open question. See Section \ref{sec:conclusion}.}.

\subsection{\texorpdfstring{PrivUnit$_2$}{PrivUnit}}
The $\PrivUnit$ mechanism $\qPU$, proposed by \cite{BDFKR2018}, is an $\varepsilon$-LDP sampling scheme when the input alphabet $\cX$ is the $d-$dimensional unit $\ell_2$ sphere $\sphere^{d-1}$. Formally, given a vector $\svbx \in \sphere^{d-1}$, $\PrivUnit$ draws a random vector $\svbz$ 
from a spherical cap $\{\svbz \in \sphere^{d-1} \mid
\lan \svbz, \svbx  \ran \ge \gamma\}$ with probability $p_0$ or from its complement $\{\svbz \in \sphere^{d-1} \mid \lan \svbz, \svbx \ran < \gamma\}$  with probability $1 - p_0$, 
where $\gamma \in [0, 1]$ and $p_0  \geq 1/2$ are parameters (depending on $\varepsilon$ and $d$) that trade accuracy and privacy (see Appendix \ref{appendix:privunit}). In other words, $\qPU$ is as follows:
\begin{align}
    \qPU(\svbz | \svbx)   =   
    \begin{cases}
        \dfrac{2 p_0}{A(1,d)I_{1-\gamma^2}(\frac{d-1}{2},\frac{1}{2})} \hspace{10pt} \text{if}\ \langle \rvbx, \rvbz \rangle \geq \gamma \\[10pt]
       \dfrac{2(1-p_0)}{2A(1,d)  -  A(1,d)I_{1-\gamma^2}(\frac{d-1}{2},\frac{1}{2})} \hspace{6pt} \text{otherwise}
    \end{cases}
\end{align}
where $A(1,d)$ denotes the area of $\sphere^{d-1}$ and $I_x(a,b)$ denotes the regularized incomplete beta function. The estimator of the $\PrivUnit$ mechanism (denoted by $\hat{\svbx}^{\texttt{pu}}$) is obtained by dividing every coordinate of $\svbz$ by $m_{\texttt{pu}}$ i.e., $\hat{\svbx}^{\texttt{pu}} \coloneqq \svbz/m_{\texttt{pu}}$ where 
\begin{align}
    m_{\texttt{pu}} \coloneqq  \frac{(1 - \gamma^2)^\alpha}{2^{d-2} (d - 1)}
  \left[\frac{p_0}{B(1; \alpha,\alpha) - B(\tau; \alpha,\alpha)}
     - \frac{1 - p_0}{B(\tau; \alpha, \alpha)}\right]
  \label{eq:m}
\end{align}
with $\alpha = (d-1)/2$, $\tau = (1+\gamma)/2$, and $B(x;\alpha,\beta)$ denoting the incomplete beta function. The estimator $\hat{\svbx}^{\texttt{pu}}$ is (a) unbiased i.e., $\Expectation[\hat{\svbx}^{\texttt{pu}} | \svbx] = \svbx$, (b) has order-optimal utility i.e., $\E[\ltwo{\hat{\svbx}^{\texttt{pu}} - \svbx}^2] = \Theta\lp
   \frac{d}{\min\lp\varepsilon, (e^\varepsilon-1)^2, d\rp}\rp$, and (c) achieves the best known constants for mean estimation. See Appendix \ref{appendix:privunit} for more details on $\PrivUnit$.


\subsection{Subset Selection}
The $\SubsetSelection$ mechanism $\qSS$, proposed by \cite{YB18}, is an $\varepsilon$-LDP sampling scheme when the input alphabet $\cX$ can take $d$ different values.
Without loss of generality, let 
$\cX \coloneqq \lbp e_1, e_2,...,e_d \rbp$, where $e_j \in \{0, 1\}^d$ is the $j^{th}$ standard unit vector, i.e., the one-hot encoding of $j$.
The output alphabet $\cZ$ is the set of all $d$-bit binary strings with Hamming weight $s \coloneqq \lceil \frac{d}{1+e^{\varepsilon}}\rceil$, i.e.,
\begin{align*}
\textstyle
\cZ = \lbp \svbz = (z^{(1)}, \cdots ,z^{(d)}) \in \{0, 1\}^d: \sum_{i=1}^d z^{(i)} = s \rbp.
\end{align*}
Given $\svbx \in \cX$,  $\SubsetSelection$ maps it to $\svbz \in \cZ$ with the following conditional probability:
\begin{align}\label{eq:q_ss}
    \qSS(\svbz|\svbx) \coloneqq \begin{cases}
    \frac{e^{\varepsilon}}{{ \binom{d-1}{s-1} }e^{\varepsilon}+{ \binom{d-1}{s} }} & \text{if}\ \svbz \in \cZ_{\svbx}
    \\[10pt]
    \frac{1}{{ \binom{d-1}{s-1} }e^{\varepsilon}+{ \binom{d-1}{s} }} & \text{if}\ \svbz \in \cZ \setminus \cZ_{\svbx}
    \end{cases}
\end{align}
where $\cZ_{\svbx} = \lbp \svbz = (z^{(1)}, \cdots ,z^{(d)}) \in \cZ : z^{(x)} = 1 \rbp$ is the set of elements in $\cZ$ with $1$ in the $x^{th}$ location. The estimator of the $\SubsetSelection$ mechanism (denoted by $\hat{\svbx}^{\texttt{ss}}$) is obtained by subtracting $b_{\texttt{ss}}$ from every component of $\svbz$ and dividing every component of the result by $m_{\texttt{ss}}$ i.e., $\hat{\svbx}^{\texttt{ss}} \coloneqq (\svbz - b_{\texttt{ss}})/m_{\texttt{ss}}$ where 
\begin{align}\label{eq:m_ss}
    m_{\texttt{ss}} \coloneqq \frac{s(d  -  s)(e^{\varepsilon}-1)}{(d  -  1)(s(e^{\varepsilon}  -  1)+d)},\  b_{\texttt{ss}} \coloneqq  \frac{s((s  -  1)e^{\varepsilon} +(d  -  s))}{(d  -  1)(s(e^{\varepsilon}  -  1)+d)}.
\end{align}
The estimator $\hat{\svbx}^{\texttt{ss}}$ is (a) unbiased i.e., $\Expectation[\hat{\svbx}^{\texttt{ss}} | \svbx] = \svbx$, (b) has optimal utility i.e., $\E[\ltwo{\hat{\svbx}^{\texttt{ss}} - \svbx}^2] = \Theta\lp \frac{d}{\min\lp e^{\varepsilon}, \lp e^{\varepsilon}-1 \rp^2, d \rp}\rp$ and (c) achieves the best known constants for frequency estimation. See Appendix \ref{appendix:ss} for more details on $\SubsetSelection$.
\section{Main Results}
\label{sec:main_results}
In this section, first, we describe the Minimal Random Coding algorithm for compressing any $\varepsilon$-LDP  mechanism and prove its order-optimal privacy-accuracy-communication tradeoffs. Then, we propose the Modified Minimal Random Coding algorithm for compressing any $\varepsilon$-LDP \emph{cap-based mechanism}\footnote{The family of cap-based mechanisms includes
$\PrivUnit$ and $\SubsetSelection$. See Definition \ref{def:cap}.} and prove that it achieves optimal privacy-accuracy-communication tradeoffs.

\subsection{Minimal Random Coding (\texorpdfstring{$\MRC$}{MRC})}
\label{subsec:mrc}
Consider an $\varepsilon$-LDP mechanism $q(\cdot|\svbx)$ that we wish to compress. Under $\MRC$, first, a number of candidates $\svbz_1, \cdots, \svbz_N$ are drawn from a fixed reference distribution $p(\cdot)$ (known to both the user and the server). This can be achieved via a pseudorandom number generator with a known seed. Next, the user transmits an index $K \in [N]$ to the server where $K$ is drawn according to some distribution $\piMRC(\cdot)$ such that $\svbz_K \sim q(\cdot|\svbx)$ approximately. The distribution $\piMRC$ is such that, $\forall k \in [N]$,  $\piMRC(k) \propto w(k)$ where $w(k) \coloneqq q(\svbz_k|\svbx) / p(\svbz_k)$ are the importance weights\footnote{We suppress dependence of $\piMRC$ \& $w$ on $\svbx$ for simplicity.} (see Algorithm \ref{alg:mrc}). To communicate the index $K$ of $\MRC$, $\log N$ bits are required.

\begin{algorithm}
\KwInput{$\varepsilon$-LDP mechanism $q(\cdot|\svbx)$, reference distribution $p(\cdot)$, number of candidates $N$}
Draw samples $\svbz_1 , \cdots, \svbz_N$ from $p(\svbz)$ \tcp*[h]{Using the shared source of randomness}\\
\For {$k \in \{1,\cdots, N\}$}
{
$w(k) \leftarrow q(\svbz_k|\svbx) / p(\svbz_k) $
}
$\piMRC(\cdot) \leftarrow w(\cdot) / \sum_{k}w(k)$\\
\KwOutput{$\piMRC(\cdot), \{\svbz_1 , \cdots, \svbz_N \}$}
\caption{$\MRC$}
\label{alg:mrc}
\end{algorithm}



Let $\qMRC$ denote the distribution of $\svbz_K$ where $K \sim \piMRC(\cdot)$.
The following theorem shows that when the number of candidates is exponential in $\varepsilon$, samples drawn from $\qMRC$ will be similar to samples drawn from $q(\cdot|\svbx)$ in terms of $\ell_2$ error. In other words, $\qMRC$ can compress $q(\cdot|\svbx)$ to the order of $\varepsilon$ bits of communication as well as simulate it in a near-lossless fashion. A proof can be found in Appendix \ref{appendix:utility_mrc}.

\begin{restatable}[Utility of $\MRC$]{theorem}{mrcaccuracylocal}\label{theorem:mrc_accuracy_local}
Consider any input alphabet $\cX$, output alphabet $\cZ$, data $\svbx \in \cX$, and $\varepsilon$-LDP mechanism $q(\cdot|\svbx)$. Consider any reference distribution $p(\cdot)$ such that $| \ln (q(\svbz|\svbx) / p(\svbz)) | \leq \varepsilon$ $\forall ~ \svbx \in \cX, \svbz \in \cZ$.\footnote{Note that this condition holds for many reference distributions $p(\cdot)$. For example, one can simply choose $p(\cdot) = q(\cdot|\svbx^*)$ for some $\svbx^* \in \cX$.} Let the number of candidates be $N = 2^{(\log e + 4c) \varepsilon}$ for some constant $c \geq 0$. Then, for $\alpha \in [0,1/2]$, $\qMRC$ is such that 
\begin{align}\textstyle
     \Big| \Expectation_{\qMRC}\big[\|\svbz - \svbx\|^2\big] - \Expectation_{q}\big[\|\svbz - \svbx\|^2\big] \Big| \leq \frac{2\alpha \sqrt{\Expectation_{q}[\|\svbz - \svbx\|^4]} }{1-\alpha}  \label{eq:mrc_utility}
 \end{align}
 holds with probability at least $1 - 2\alpha$, with $c$ and $\alpha$ related by the following: $\alpha = \sqrt{2^{-c\varepsilon} + 2^{-c^2/\log e + 1}}$.
\end{restatable}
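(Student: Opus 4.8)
The plan is to condition on the realized candidates $\svbz_1,\ldots,\svbz_N \diid p$ and to recognize $\Expectation_{\qMRC}[\|\svbz-\svbx\|^2]$ as a self‑normalized importance‑sampling estimator, then control it by a truncation argument on the importance weights. Write $w(\svbz)=q(\svbz|\svbx)/p(\svbz)$, $f(\svbz)=\|\svbz-\svbx\|^2$, $\mu=\Expectation_q[f]$, $v=\Expectation_q[f^2]=\Expectation_q[\|\svbz-\svbx\|^4]$, and set $S=\tfrac1N\sum_{k=1}^N w(\svbz_k)$ and $S_f=\tfrac1N\sum_{k=1}^N w(\svbz_k)f(\svbz_k)$. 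By the definition of $\piMRC$, conditionally on the candidates $\Expectation_{\qMRC}[f]=\sum_k \piMRC(k) f(\svbz_k)=S_f/S$, and the change‑of‑measure identities $\Expectation_p[w]=1$, $\Expectation_p[wf]=\mu$ give $\Expectation[S]=1$ and $\Expectation[S_f]=\mu$. From the hypothesis $|\ln w|\le\varepsilon$ I will use two consequences: the pointwise bound $w\le e^{\varepsilon}$, and $\KLD{q}{p}=\Expectation_q[\ln w]\le\varepsilon$. We may assume $\alpha<1/2$ (otherwise the claimed probability is $\le 0$) and $v<\infty$ (otherwise the bound is trivial).

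The heart of the argument is the pair of bounds $\Pr[\,|S-1|>\alpha\,]\le\alpha$ and $\Pr[\,|S_f-\mu|>\alpha\sqrt{v}\,]\le\alpha$, each obtained from a first‑moment estimate followed by Markov's inequality. For the denominator, split $w=w\,\Indicator\{w\le a\}+w\,\Indicator\{w>a\}$ at the threshold $\ln a=\tfrac12(\ln N+\KLD{q}{p})$. The truncated part has fluctuation at most $\sqrt{a/N}=e^{-t/4}$, where $t:=\ln N-\KLD{q}{p}\ge \ln N-\varepsilon=4c\varepsilon\ln 2$, so this piece is $\le 2^{-c\varepsilon}$. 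The remaining part contributes at most $2\,\Pr_q(\ln w>\ln a)$; since under $q$ the variable $\ln w$ has mean $\KLD{q}{p}$ and takes values in $[-\varepsilon,\varepsilon]$, Hoeffding's inequality gives $\Pr_q(\ln w>\KLD{q}{p}+t/2)\le e^{-(t/2)^2/(2\varepsilon^2)}\le e^{-2c^2(\ln 2)^2}=2^{-2c^2/\log e}\le 2^{-c^2/\log e}$, hence a contribution $\le 2^{\,1-c^2/\log e}$. Combining, $\Expectation|S-1|\le 2^{-c\varepsilon}+2^{\,1-c^2/\log e}=\alpha^2$, and Markov gives the first bound. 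For the numerator the boundedness of $w$ already suffices without truncation: $\Expectation|S_f-\mu|\le\sqrt{\Var_p(wf)/N}\le\sqrt{\Expectation_p[w^2f^2]/N}=\sqrt{\Expectation_q[wf^2]/N}\le\sqrt{e^{\varepsilon}v/N}=2^{-2c\varepsilon}\sqrt{v}\le\alpha^2\sqrt{v}$, and Markov gives the second bound.

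Finally I union‑bound the two failure events: with probability at least $1-2\alpha$ we have simultaneously $|S-1|\le\alpha$ (so $S\ge 1-\alpha>0$) and $|S_f-\mu|\le\alpha\sqrt{v}$. On this event, using $\mu=\Expectation_q[\|\svbz-\svbx\|^2]\le\sqrt{\Expectation_q[\|\svbz-\svbx\|^4]}=\sqrt{v}$ by Jensen's inequality,
\[
\bigl|\Expectation_{\qMRC}[f]-\mu\bigr|=\Bigl|\frac{(S_f-\mu)-\mu(S-1)}{S}\Bigr|\le\frac{|S_f-\mu|+\mu\,|S-1|}{S}\le\frac{\alpha\sqrt{v}+\alpha\sqrt{v}}{1-\alpha}=\frac{2\alpha\sqrt{v}}{1-\alpha},
\]
which is exactly \eqref{eq:mrc_utility}.

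I expect the main obstacle to be the denominator estimate: choosing the truncation level $a$ and invoking the right tail inequality so that the two error pieces land precisely on $2^{-c\varepsilon}$ and $2^{\,1-c^2/\log e}$, using $N=e^{\varepsilon}2^{4c\varepsilon}$ and $\KLD{q}{p}\le\varepsilon$ (which is exactly where the condition $|\ln(q/p)|\le\varepsilon$ enters). The numerator estimate, the union bound, and the concluding algebra are routine.
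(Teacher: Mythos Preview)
Your proof is correct and follows the same underlying strategy as the paper. The paper's proof simply invokes Theorem~3.2 of \citet{HPHJ19} as a black box and checks that its $\alpha'$ is at most the claimed $\alpha$ (using $\KLD{q}{p}\le\varepsilon$ and the single-variable Chernoff--Hoeffding bound on $\log(q/p)$), whereas you have written out a self-contained version of that argument: identify $\Expectation_{\qMRC}[f]$ as the self-normalized importance sampling ratio $S_f/S$, control $\Expectation|S-1|$ by truncating the weights at $\ln a=\tfrac12(\ln N+\KLD{q}{p})$ and bounding the tail via Hoeffding on $\ln w$, control $\Expectation|S_f-\mu|$ directly from $w\le e^{\varepsilon}$, and finish with Markov plus a union bound. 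In fact your tail contribution is $2P$ rather than HPHJ19's $2\sqrt{P}$, so your intermediate bound is slightly sharper before you relax it to match the stated $\alpha$; this does not affect correctness.
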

In general, $\Expectation_{q}\lb \|\svbz - \svbx\|^4\rb$ in \eqref{eq:mrc_utility} can be well-controlled. See Remark~\ref{rmk:error_forth_moment} in Appendix \ref{appendix:utility_mrc} for more details.

In the next Theorem, we show that $\piMRC$ is $2\varepsilon$-LDP. Hence, the compressed mechanism $\qMRC$ is 2$\varepsilon$-LDP. 
\begin{restatable}[Pure DP guarantee of $\MRC$]{theorem}{mrcpureprivacy}\label{theorem:mrc_pure_privacy}
Consider any input alphabet $\cX$, output alphabet $\cZ$, and data $\svbx \in \cX$.
Consider any $\varepsilon$-LDP mechanism $q(\cdot|\svbx)$, reference distribution $p(\cdot)$, and number of candidates $N\geq 1$. Then, $\piMRC(\cdot)$ obtained from Algorithm \ref{alg:mrc} is a  2$\varepsilon$-LDP mechanism.
\end{restatable}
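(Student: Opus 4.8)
The plan is to verify the multiplicative $2\varepsilon$-LDP inequality directly from the definition of $\piMRC$ in Algorithm~\ref{alg:mrc}, treating the drawn candidates $\svbz_1, \ldots, \svbz_N$ as a realization of the shared randomness. Making explicit the dependence on the data that the paper suppresses, I would write $w_{\svbx}(k) = q(\svbz_k \mid \svbx)/p(\svbz_k)$ and $W_{\svbx} = \sum_{j=1}^{N} w_{\svbx}(j)$, so that $\piMRC_{\svbx}(k) = w_{\svbx}(k)/W_{\svbx}$. The goal is then to show that for all $\svbx, \svbx' \in \cX$, all realizations $\svbz_1, \ldots, \svbz_N$, and all $k \in [N]$,
\[
\piMRC_{\svbx}(k) \le e^{2\varepsilon}\, \piMRC_{\svbx'}(k);
\]
swapping the roles of $\svbx$ and $\svbx'$ then gives the reverse inequality, which is all that is required.

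The key step is to split the likelihood ratio into a ``numerator'' part and a ``normalization'' part,
\[
\frac{\piMRC_{\svbx}(k)}{\piMRC_{\svbx'}(k)} = \frac{w_{\svbx}(k)}{w_{\svbx'}(k)} \cdot \frac{W_{\svbx'}}{W_{\svbx}},
\]
and bound each factor by $e^{\varepsilon}$ using only that $q(\cdot \mid \cdot)$ is $\varepsilon$-LDP (the reference distribution $p$ and the bound $|\ln(q/p)| \le \varepsilon$ used in Theorem~\ref{theorem:mrc_accuracy_local} play no role here). For the first factor the $p(\svbz_k)$ cancels and it equals $q(\svbz_k \mid \svbx)/q(\svbz_k \mid \svbx') \le e^{\varepsilon}$. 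For the second, I would bound the sum term by term: $w_{\svbx'}(j) \le e^{\varepsilon} w_{\svbx}(j)$ for every $j$, hence $W_{\svbx'} \le e^{\varepsilon} W_{\svbx}$. Multiplying the two bounds gives the claimed factor $e^{2\varepsilon}$; since $\qMRC$ is obtained from $K \sim \piMRC$ via the map $K \mapsto \svbz_K$, which is data-independent given the shared randomness, closure of DP under postprocessing transfers the guarantee from $\piMRC$ to $\qMRC$.

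I do not expect a substantive analytic obstacle. The only points requiring care are the degenerate indices — if $p(\svbz_k) = 0$ the index is never sampled, and if $q(\svbz_k \mid \svbx) = 0$ then pure $\varepsilon$-LDP forces $q(\svbz_k \mid \svbx') = 0$, so the corresponding weight vanishes under every input and drops out of all the sums — and noting that the factor $2\varepsilon$ cannot be improved in general: the bound is attained in the limit $N \to \infty$ by the configuration in which $w(k)$ is scaled up by $e^{\varepsilon}$ while every other $w(j)$ is scaled down by $e^{-\varepsilon}$, making both factors in the decomposition tight simultaneously. This tightness is precisely what motivates the $\MMRC$ construction in the next subsection.
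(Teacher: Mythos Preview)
Your proposal is correct and essentially identical to the paper's own proof: both split the ratio $\piMRC_{\svbx}(k)/\piMRC_{\svbx'}(k)$ into the likelihood-ratio factor $q(\svbz_k\mid\svbx)/q(\svbz_k\mid\svbx')$ and the normalizer ratio $W_{\svbx'}/W_{\svbx}$, and bound each by $e^{\varepsilon}$ term-by-term using the $\varepsilon$-LDP property of $q$. Your additional remarks on degenerate indices and on tightness are not in the paper's proof but are harmless embellishments.
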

A proof is provided in Appendix \ref{appendix:pure_privacy_mrc} and it relies on fact that the following ratio can be bounded by $e^{2\varepsilon}$:
 \begin{align}
    \frac{\piMRC_{\svbx}(k)}{\piMRC_{\svbx'}(k)} = \frac{q(\svbz_k | \svbx)}{q(\svbz_k | \svbx')} \cdot \frac{\sum_{k'} q(\svbz_{k'} | \svbx')/p(\svbz_{k'})}{\sum_{k'} q(\svbz_{k'} | \svbx)/p(\svbz_{k'})}.
\end{align}

In the following Theorem, we show that $\piMRC$ is $(\varepsilon + \varepsilon_0, \delta)$-LDP implying that the compressed mechanism $\qMRC$ is $(\varepsilon + \varepsilon_0, \delta)$-LDP where $\varepsilon_0 > 0$ and $\delta \leq 1$ are free parameters. This Theorem can be viewed complementary to Theorem \ref{theorem:mrc_pure_privacy} where a stronger privacy parameter can be achieved (i.e., $\varepsilon + \varepsilon_0$ which can get arbitrarily close to $\varepsilon$ as opposed to $2\varepsilon$) albeit at the cost of trading pure privacy for approximate privacy. A proof is provided in Appendix \ref{appendix:approx_privacy_mrc}.

\begin{restatable}[Approximate DP guarantee of $\MRC$]{theorem}{mrcapproxprivacy}\label{theorem:mrc_approximate_privacy}
Consider any input alphabet $\cX$, output alphabet $\cZ$, data $\svbx \in \cX$, and $\varepsilon$-LDP mechanism $q(\cdot|\svbx)$. Consider any reference distribution $p(\cdot)$ such that $ | \ln (q(\svbz|\svbx) / p(\svbz)) | \leq \varepsilon$  $\forall ~ \svbx \in \cX, \svbz \in \cZ$.\textBlue{\footnotemark[5]}  Let $c_0 \geq 0$ be some constant and let the number of candidates $N = \exp(2\varepsilon + 2c_0)$. Then, for any $\delta \leq 1$, $\piMRC(\cdot)$ obtained from Algorithm \ref{alg:mrc} is $(\varepsilon + \varepsilon_0, \delta)$-LDP mechanism where
\begin{equation}
 \textstyle
    \varepsilon_0 \coloneqq \ln\dfrac{1+a_0}{1-a_0} \qquad \text{and} \qquad a_0 \coloneqq \exp(-c_0)\sqrt{\frac{1}{2}\ln\frac{2}{\delta}}.
\end{equation}
\end{restatable}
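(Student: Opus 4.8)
The plan is to control the ratio $\piMRC_{\svbx}(k)/\piMRC_{\svbx'}(k)$ using the exact identity already highlighted after Theorem \ref{theorem:mrc_pure_privacy}, namely
\[
\frac{\piMRC_{\svbx}(k)}{\piMRC_{\svbx'}(k)} = \frac{q(\svbz_k|\svbx)}{q(\svbz_k|\svbx')}\cdot\frac{\sum_{k'} w_{\svbx'}(k')}{\sum_{k'} w_{\svbx}(k')},
\]
where $w_{\svbx}(k) = q(\svbz_k|\svbx)/p(\svbz_k)$. The first factor is bounded by $e^{\varepsilon}$ by the $\varepsilon$-LDP property of $q$, so the whole job is to show that the ratio of the two normalizing sums $S_{\svbx} \coloneqq \frac{1}{N}\sum_{k'} w_{\svbx}(k')$ and $S_{\svbx'}$ lies in $[e^{-\varepsilon_0}, e^{\varepsilon_0}]$ with probability at least $1-\delta$. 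Since $\Expectation_p[w_{\svbx}(k')] = 1$ for every $\svbx$ (the weights have unit mean under $p$), $S_{\svbx}$ concentrates around $1$, and a two-sided deviation bound $|S_{\svbx}-1|\le a_0$ (and likewise for $\svbx'$) immediately yields $S_{\svbx'}/S_{\svbx} \le (1+a_0)/(1-a_0) = e^{\varepsilon_0}$.

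The concentration step is where the hypotheses $N = \exp(2\varepsilon+2c_0)$ and $|\ln(q(\svbz|\svbx)/p(\svbz))|\le\varepsilon$ enter. The latter gives $w_{\svbx}(k')\in[e^{-\varepsilon}, e^{\varepsilon}]$, so each summand is a bounded i.i.d.\ random variable taking values in an interval of length at most $e^{\varepsilon}-e^{-\varepsilon}\le e^{\varepsilon}$. Applying Hoeffding's inequality to the $N$-term average $S_{\svbx}$,
\[
\Pr\!\big(|S_{\svbx}-1|\ge t\big) \le 2\exp\!\Big(-\tfrac{2Nt^2}{(e^{\varepsilon}-e^{-\varepsilon})^2}\Big)\le 2\exp\!\big(-2N t^2 e^{-2\varepsilon}\big).
\]
Plugging $N e^{-2\varepsilon} = e^{2c_0}$ and choosing $t = a_0 = e^{-c_0}\sqrt{\tfrac12\ln\tfrac{2}{\delta}}$ makes the exponent equal $-2 e^{2c_0} a_0^2 = -\ln\tfrac{2}{\delta}$, so the bound is exactly $\delta$. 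Thus $|S_{\svbx}-1|\le a_0$ fails with probability at most $\delta$ (and the same holds for $\svbx'$); I will argue that the definition of approximate LDP tolerates this $\delta$ failure event — on the good event the multiplicative ratio is at most $e^{\varepsilon+\varepsilon_0}$, and on the bad event (probability $\le\delta$) the additive slack in \eqref{eq:ldp} absorbs the discrepancy. (A small bookkeeping point: one should verify $a_0<1$ so that $\varepsilon_0$ is well defined; this holds whenever $\delta$ is not too small relative to $c_0$, and otherwise the statement is trivial since $\varepsilon_0$ can be taken large.)

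The only genuine subtlety — and the step I expect to require the most care — is converting the concentration statement into a clean $(\varepsilon+\varepsilon_0,\delta)$ guarantee in the exact sense of \eqref{eq:ldp}, since the randomness in $\{\svbz_1,\dots,\svbz_N\}$ comes from the \emph{shared} source $\svbu$ and is common to both $\svbx$ and $\svbx'$. One must be careful about whether $\delta$ multiplies one bad event or two (union bound over the deviations for $\svbx$ and $\svbx'$), and whether the guarantee is meant to hold for the realized $\piMRC(\cdot|\svbu)$ or in expectation over $\svbu$; I would follow the convention that matches the shared-randomness LDP definition in Section \ref{sec:preliminaries}, namely a per-realization guarantee with the $\delta$ slack accounting for the measure of the bad $\svbu$'s, and if a factor of two in $\delta$ appears it can be folded into constants without changing the form of the statement. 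Everything else is the routine algebra of turning $|S_{\svbx}-1|\le a_0$, $|S_{\svbx'}-1|\le a_0$ into $S_{\svbx'}/S_{\svbx}\le (1+a_0)/(1-a_0)$ and combining with the $e^{\varepsilon}$ bound on the likelihood-ratio factor.
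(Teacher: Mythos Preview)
Your proposal is correct and follows essentially the same approach as the paper's proof: both apply Hoeffding's inequality to the normalized importance-weight sums $S_{\svbx}=\frac{1}{N}\sum_k w_{\svbx}(k)$ (using $\Expectation_p[w_{\svbx}(k)]=1$ and the range bound $w_{\svbx}(k)\in[e^{-\varepsilon},e^{\varepsilon}]$) to obtain $|S_{\svbx}-1|\le a_0$ with probability at least $1-\delta$, and then combine this with the $e^{\varepsilon}$ bound on $q(\svbz_k|\svbx)/q(\svbz_k|\svbx')$ via the ratio identity. If anything, you are more careful than the paper: the paper's proof applies Hoeffding once and then silently uses the deviation bound for both $\svbx$ and $\svbx'$ without an explicit union bound, whereas you flag this bookkeeping point and the $a_0<1$ requirement.
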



\subsection{Modified Minimal Random Coding (\texorpdfstring{$\MMRC$}{MMRC})}
\label{subsec:mmrc}
While the results regarding $\MRC$ in Section \ref{subsec:mrc} are general and offer order optimal privacy-accuracy tradeoffs with about $\varepsilon$ bits of communication, the resulting compressed mechanism is not exactly $\varepsilon$-LDP. More specifically, Theorem \ref{theorem:mrc_pure_privacy} introduces an additional factor of $2$ in the LDP guarantee and Theorem \ref{theorem:mrc_approximate_privacy} provides an approximate privacy guarantee instead of a pure privacy guarantee. To address these limitations, we focus on a class of $\varepsilon$-LDP mechanisms which we call \emph{cap-based} mechanisms and propose a modification to $\MRC$ such that the resulting compressed mechanism is $\varepsilon$-LDP. Further, like $\MRC$, $\MMRC$ can simulate the underlying $\varepsilon$-LDP mechanism in a near-lossless fashion while using only on the order of $\varepsilon$ bits.


We start with the definition of cap-based mechanism which is inspired from the structure of $\PrivUnit$ and $\SubsetSelection$.
\begin{definition}[Cap-based Mechanisms]\label{def:cap}
An $\varepsilon$-LDP mechanism $q(\svbz | \svbx)$ with  input alphabet $\cX$ and output alphabet $\cZ$ is a cap-based mechanism if it can be written in the following way:
\begin{align}
\textstyle
    q(\svbz | \svbx) = 
    \begin{cases}
      c_1(\varepsilon, d) & \text{if}\ \svbz \in \msf{Cap}_{\svbx}
      \\[10pt]
      c_2(\varepsilon, d) & \text{if}\ \svbz \notin \msf{Cap}_{\svbx}
    \end{cases} \label{eq:cap_mechanism}
\end{align}
where (a) $c_1(\varepsilon, d)$ and $c_2(\varepsilon, d)$ are constants with respect to $\svbx$ and $\svbz$ such that $c_1(\varepsilon, d) \geq c_2(\varepsilon, d)$, and (b) $\msf{Cap}_{\svbx} \subseteq \cZ$ such that $\Probability_{\svbz \sim \Unif(\cZ)}\lp \svbz \in \msf{Cap}_{\svbx} \rp$ is independent of $\svbx$ and is at least $c_2(\varepsilon, d) / 2 c_1(\varepsilon, d)$.
\end{definition}

In words, a cap-based $\varepsilon$-LDP mechanism samples uniformly either from $\msf{Cap}_{\svbx}$ or from $\cZ \setminus \msf{Cap}_{\svbx}$ where $\msf{Cap}_{\svbx} \subseteq \cZ$ is such that if $\svbz$ is sampled uniformly from $\cZ$, it will belong to $\msf{Cap}_{\svbx}$ with probability at least $c_2(\varepsilon, d) / 2 c_1(\varepsilon, d)$. 
It is easy to see that $\qSS$ defined in \eqref{eq:q_ss} is a cap-based mechanism with $\msf{Cap}_{\svbx} = \cZ_{\svbx}$, $c_1(\varepsilon, d) = \frac{e^{\varepsilon}}{{ \binom{d-1}{s-1} }e^{\varepsilon}+{ \binom{d-1}{s} }}$, and $c_2(\varepsilon, d) = \frac{1}{{ \binom{d-1}{s-1} }e^{\varepsilon}+{ \binom{d-1}{s} }}$. See Appendix \ref{appendix:ss} where we evaluate $\Probability_{\svbz \sim \Unif(\cZ)}\lp \svbz \in \cZ_{\svbx} \rp$ and show that it is at least $ 1/2e^{\varepsilon}$. In Appendix \ref{appendix:privunit}, we show that $\qPU$ is a cap-based mechanism.

For a cap-based $\varepsilon$-LDP mechanism $q(\svbz | \svbx)$ and a uniform reference distribution $p(\cdot)$, the distribution $\piMRC$ obtained from Algorithm \ref{alg:mrc} takes a special form:
\begin{align}
   \piMRC(k) =
  \begin{cases}
     \frac{1}{N} \times \frac{c_1(\varepsilon, d)}{\theta c_1(\varepsilon, d) + (1 - \theta) c_2(\varepsilon, d)} \ \text{if}\ \svbz_k \in \msf{Cap}_{\svbx}
      \\[10pt]
     \frac{1}{N} \times \frac{c_2(\varepsilon, d)}{\theta c_1(\varepsilon, d) + (1 - \theta) c_2(\varepsilon, d)} \ \text{if}\ \svbz_k \notin \msf{Cap}_{\svbx}
    \end{cases}\label{eq:special_pi_mrc}
\end{align}
where $\theta$ is the fraction of candidates inside the $\msf{Cap}_{\svbx}$, i.e.,  $\theta = \frac{1}{N} \sum_{k} \Indicator(\svbz_k \in \msf{Cap}_{\svbx})$. As is, $\piMRC$ in \eqref{eq:special_pi_mrc} is not necessarily $\varepsilon$-LDP because $\theta$ can be different for $\svbx$ and $\svbx'$. However, as $N \to \infty$, $\theta \to  \Expectation[\theta] = \Probability_{\svbz \sim \Unif(\cZ)}\lp \svbz \in \msf{Cap}_{\svbx} \rp$, which is not a function of $\svbx$, implying that $\piMRC_{\svbx}(k) / \piMRC_{\svbx'}(k) \leq c_1(\varepsilon,d)/c_2(\varepsilon, d) \leq \exp(\varepsilon)$\footnote{This follows from \eqref{eq:ldp} and \eqref{eq:cap_mechanism} because $q(\cdot | \svbx)$ is $\varepsilon$-LDP.}. This shows that $\piMRC$ is $\varepsilon$-LDP when $N \to \infty$. This motivates us to modify $\piMRC$ to $\piMMRC$ such that $\piMMRC$ is $\varepsilon$-LDP irrespective of $N$. Further, when $N$ is large enough, the modification is not by much, i.e., a sample from $\piMRC$ is similar to a sample from $\piMMRC$.

To that end, define an upper threshold $t_u = \frac{1}{N} \times \frac{c_1(\varepsilon, d)}{\Expectation[\theta] c_1(\varepsilon, d) + (1 - \Expectation[\theta]) c_2(\varepsilon, d)}$ and a lower threshold  $t_l = \frac{1}{N} \times \frac{c_2(\varepsilon, d)}{\Expectation[\theta] c_1(\varepsilon, d) + (1 - \Expectation[\theta]) c_2(\varepsilon, d)}$, and initialize $\piMMRC$ to be equal to $\piMRC$. We want to modify $\piMMRC$ so as to ensure:
\begin{align}
    t_l \leq \piMMRC(k) \leq t_u ~ \forall k \in [N], \label{eq:modification}
\end{align}
which, as argued above, guarantees $\varepsilon$-LDP irrespective of the choice of $N$.
First, it is easy to see that $\theta c_1(\varepsilon, d) + (1 - \theta) c_2(\varepsilon, d)$ is an increasing function of $\theta$. Next, we will look at 3 cases depending on the relationship between $\theta$ and $\E[\theta]$: (A) If $\theta = \E[\theta]$, then $\piMMRC$ already satisfies \eqref{eq:modification}; (B) If $\theta < \E[\theta]$, then only the upper threshold is violated and we set $\piMMRC(k) = t_u ~ \forall k : \svbz_k \in \msf{Cap}_{\svbx}$ and re-normalize the remaining $\piMMRC(k)$; (C) If $\theta > \E[\theta]$, then only the lower threshold is violated, we set $\piMMRC(k) = t_l ~ \forall k : \svbz_k \notin \msf{Cap}_{\svbx}$ and re-normalize the remaining $\piMMRC(k)$. The re-normalization step does not violate \eqref{eq:modification}. We provide pseudo-code to calculate $\piMMRC$ in Algorithm \ref{alg:mmrc}. 

\begin{algorithm}
\KwInput{$\varepsilon$-LDP cap-based mechanism $q(\cdot|\svbx)$, the associated $\msf{Cap}_{\svbx}$, reference distribution $p(\cdot)$, number of candidates $N$, lower threshold $t_l$, upper threshold $t_u$}
$\piMRC(\cdot), \{\svbz_1 , \cdots, \svbz_N \} \leftarrow$ $\MRC(p(\cdot), q(\cdot|\svbx), N)$\\
$\theta \leftarrow \frac{1}{N} \sum_{k} \Indicator(\svbz_k \in \msf{Cap}_{\svbx})$ \tcp*[h]{Compute the fraction of candidates inside the cap}\\
\KwInitialization{$\piMMRC(\cdot) \leftarrow \piMRC(\cdot)$}
\If{$\max_{k} \piMMRC(k) > t_u$} 
  {\tcp*[h]{Upper threshold is violated}\\
  $\piMMRC(k) \leftarrow t_u$, $\forall ~ k : \svbz_k \in \msf{Cap}_{\svbx}$
  $\piMMRC(k) \leftarrow \frac{1-N \theta t_u }{N(1 - \theta)}$, $\forall ~ k : \svbz_k \notin \msf{Cap}_{\svbx}$ 
  }
\ElseIf{{$\min_{k} \piMMRC(k) < t_l$}}
    {\tcp*[h]{Lower threshold is violated}\\
  $\piMMRC(k) \leftarrow t_l$, $\forall ~ k : \svbz_k \notin \msf{Cap}_{\svbx}$
  $\piMMRC(k) \leftarrow \frac{1-N (1-\theta) t_l }{N \theta }$, $\forall ~ k : \svbz_k \in \msf{Cap}_{\svbx}$
  }
 \KwOutput{$\piMMRC(\cdot), \{\svbz_1 , \cdots, \svbz_N \}$}
\caption{$\MMRC$}
\label{alg:mmrc}
\end{algorithm}

Let $\qMMRC$ denote the distribution of $\svbz_K$ where $K \sim \piMMRC(\cdot)$.
In the following Theorem, we show that $\piMMRC$ is $\varepsilon$-LDP implying that the compressed mechanism $\qMMRC$ is $\varepsilon$-LDP. The proof follows from \eqref{eq:modification} and can be found in Appendix \ref{appendix:privacy_mmrc}.
\begin{restatable}[DP guarantee of $\MMRC$]{theorem}{mmrcprivacy}\label{theorem:mmrc_privacy}
Consider any input alphabet $\cX$, output alphabet $\cZ$, data $\svbx \in \cX$, and $\varepsilon$-LDP cap-based mechanism $q(\cdot|\svbx)$.
Let the reference distribution $p(\cdot)$ be the uniform distribution on $\cZ$. Consider any number of candidates $N \geq 1$. Then, $\piMMRC(\cdot)$ obtained from Algorithm \ref{alg:mmrc} is an $\varepsilon$-LDP mechanism.
\end{restatable}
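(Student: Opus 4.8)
The plan is to show directly that the output distribution $\piMMRC$ of Algorithm~\ref{alg:mmrc} satisfies the pointwise ratio bound $\piMMRC_{\svbx}(k)/\piMMRC_{\svbx'}(k) \le e^{\varepsilon}$ for all $k \in [N]$ and all $\svbx, \svbx' \in \cX$, using the shared randomness (the fixed draw $\svbz_1,\dots,\svbz_N$) as the conditioning event on which the LDP inequality must hold. The key observation, already recorded in the text around \eqref{eq:modification}, is that the thresholds $t_l = \frac{1}{N}\cdot\frac{c_2(\varepsilon,d)}{\Expectation[\theta]c_1(\varepsilon,d)+(1-\Expectation[\theta])c_2(\varepsilon,d)}$ and $t_u = \frac{1}{N}\cdot\frac{c_1(\varepsilon,d)}{\Expectation[\theta]c_1(\varepsilon,d)+(1-\Expectation[\theta])c_2(\varepsilon,d)}$ do \emph{not} depend on $\svbx$: they are defined through $\Expectation[\theta] = \Pr_{\svbz\sim\Unif(\cZ)}(\svbz\in\msf{Cap}_{\svbx})$, which by Definition~\ref{def:cap}(b) is independent of $\svbx$. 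So if I can establish that $t_l \le \piMMRC(k) \le t_u$ holds for every $k$, i.e.\ property \eqref{eq:modification}, regardless of which $\svbx$ is the input, then for any two inputs $\svbx,\svbx'$ and any index $k$,
\[
\frac{\piMMRC_{\svbx}(k)}{\piMMRC_{\svbx'}(k)} \le \frac{t_u}{t_l} = \frac{c_1(\varepsilon,d)}{c_2(\varepsilon,d)} \le e^{\varepsilon},
\]
where the last inequality is exactly the $\varepsilon$-LDP property \eqref{eq:ldp} of the cap-based mechanism $q$ applied to \eqref{eq:cap_mechanism} (comparing a point in $\msf{Cap}_{\svbx}$ to a point outside it). This would complete the proof.

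So the work reduces to verifying \eqref{eq:modification} for the output of Algorithm~\ref{alg:mmrc}, which I would do by going through the three cases in the algorithm. Case (A), $\theta = \Expectation[\theta]$: here $\piMRC$ from \eqref{eq:special_pi_mrc} already has its two values equal to $t_u$ and $t_l$ exactly, so nothing is clipped and \eqref{eq:modification} holds trivially. Case (B), $\theta < \Expectation[\theta]$: since $\theta c_1(\varepsilon,d)+(1-\theta)c_2(\varepsilon,d)$ is increasing in $\theta$, the in-cap value of $\piMRC$ exceeds $t_u$ while the out-of-cap value lies strictly below $t_l$; the algorithm clips the in-cap entries down to $t_u$ and redistributes the freed mass $N\theta\big(\tfrac{1}{N}\tfrac{c_1}{\theta c_1+(1-\theta)c_2} - t_u\big)$ over the $N(1-\theta)$ out-of-cap entries, raising each to the common value $\frac{1-N\theta t_u}{N(1-\theta)}$. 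I would check (i) this new out-of-cap value is a valid probability and (ii) it still satisfies $t_l \le \frac{1-N\theta t_u}{N(1-\theta)} \le t_u$ — the lower bound follows because we only added mass to entries that were below $t_l$, and the upper bound follows from a short monotonicity computation using $\theta < \Expectation[\theta]$ together with $\sum_k \piMMRC(k)=1$. Case (C), $\theta > \Expectation[\theta]$, is symmetric with the roles of $t_u,t_l$ and in/out-of-cap swapped. In all three cases $\sum_k \piMMRC(k) = 1$ by construction (the redistribution is mass-preserving), so $\piMMRC$ is a genuine probability distribution, and \eqref{eq:modification} holds.

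The main obstacle, and the only place requiring real care, is the upper/lower bound in the redistribution step of cases (B) and (C): one must confirm that after clipping one side to a threshold and spreading the remainder uniformly on the other side, the resulting common value still lands in $[t_l, t_u]$ rather than overshooting the opposite threshold. This is where the inequality $\Pr_{\svbz\sim\Unif(\cZ)}(\svbz\in\msf{Cap}_{\svbx}) \ge c_2(\varepsilon,d)/2c_1(\varepsilon,d)$ from Definition~\ref{def:cap}(b) is needed: it guarantees $\Expectation[\theta]$ is bounded away from $0$ enough that $t_l$ is not so small relative to $t_u$ that the redistributed mass blows past $t_u$ (and, dually, the clipped-off mass in case (C) is not so large that the remaining in-cap entries drop below $t_l$). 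Concretely I expect the condition $\Expectation[\theta]c_1 + (1-\Expectation[\theta])c_2 \ge c_2/2 \cdot$ (something) to be exactly what makes $1 - N\theta t_u \ge 0$ and the analogous nonnegativity in case (C) go through. Once these bounds are in hand the proof is immediate; everything else is bookkeeping that I would relegate to the appendix.
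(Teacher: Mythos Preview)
Your approach is correct and matches the paper's: establish \eqref{eq:modification} and conclude via $\piMMRC_{\svbx}(k)/\piMMRC_{\svbx'}(k) \le t_u/t_l = c_1(\varepsilon,d)/c_2(\varepsilon,d) \le e^\varepsilon$. The paper's proof simply invokes \eqref{eq:modification} (relying on the main-text remark that the re-normalization step does not violate it), while you spell out the case analysis.

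Two minor corrections for when you write it out. First, in Case~(B) the out-of-cap entries of $\piMRC$ are \emph{above} $t_l$, not below: since $\theta < \Expectation[\theta]$, the common denominator $\theta c_1 + (1-\theta)c_2$ in \eqref{eq:special_pi_mrc} is smaller than that of $t_l$, so both the in-cap and out-of-cap values of $\piMRC$ exceed their respective thresholds; clipping the in-cap entries down to $t_u$ only adds mass to the out-of-cap side, so the lower bound is preserved. Second, the condition $\Expectation[\theta] \ge c_2(\varepsilon,d)/(2c_1(\varepsilon,d))$ from Definition~\ref{def:cap}(b) is \emph{not} needed here. For instance, in Case~(B) the upper-bound check $\tfrac{1-N\theta t_u}{N(1-\theta)} \le t_u$ is equivalent to $Nt_u \ge 1$, i.e.\ $\Expectation[\theta]c_1 + (1-\Expectation[\theta])c_2 \le c_1$, which follows from $c_2 \le c_1$ alone; Case~(C) is symmetric. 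The cap-probability lower bound is used elsewhere in the paper (the utility analysis, Lemma~\ref{lemma:expected_kl}), not for the privacy argument.
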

The following Theorem shows that, with number of candidates exponential in $\varepsilon$, samples drawn from $\qMMRC$ will be similar to the samples drawn from $\qMRC$ in terms of $\ell_2$ error. A proof can be found in Appendix \ref{appendix:utility_mmrc}.

\begin{restatable}[Utility of $\MMRC$]{theorem}{mmrcaccuracylocalwrtmrc}\label{theorem:mmrc_accuracy_local_wrt_mrc}
Consider any input alphabet $\cX$, output alphabet $\cZ$, data $\svbx \in \cX$, and $\varepsilon$-LDP cap-based mechanism $q(\cdot|\svbx)$.
Let the reference distribution $p(\cdot)$ be the uniform distribution on $\cZ$.
Let $N$ denote the number of candidates. Then, $\qMMRC$ is such that 
\begin{align}
    \Expectation_{\qMMRC} \big[ \lV  \svbz - \svbx \rV^2_2  \big]  \leq \Expectation_{\qMRC} \big[ \lV  \svbz - \svbx \rV^2_2  \big] + \sqrt{\frac{\rho (1+\varepsilon)}{2}}  \max_{\svbx, \svbz} \lV  \svbz - \svbx \rV^2_2  \label{eq:mmrc_utility}
\end{align}
 where $\rho \in (0,1)$ is such that $N = \frac{2\lp\exp(\varepsilon)-1\rp^2}{\rho^2}  \ln\frac{2}{\rho}.$
\end{restatable}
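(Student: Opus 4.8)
The plan is to bound the difference in $\ell_2$ error by coupling the random index $K \sim \piMRC$ with an index $K' \sim \piMMRC$ and controlling the probability that the two differ, multiplied by the worst-case squared distance $\max_{\svbx,\svbz}\lV \svbz - \svbx\rV_2^2$. More precisely, I would write
\begin{align}
    \Expectation_{\qMMRC}\big[\lV \svbz - \svbx\rV_2^2\big] - \Expectation_{\qMRC}\big[\lV \svbz - \svbx\rV_2^2\big] \leq \TV{\piMMRC}{\piMRC}{} \cdot \max_{\svbx,\svbz}\lV \svbz - \svbx\rV_2^2,
\end{align}
so that everything reduces to bounding the total variation distance between $\piMMRC$ and $\piMRC$. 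Here the randomness is over the draw of the candidates $\svbz_1,\dots,\svbz_N$ from the uniform reference $p(\cdot)$; strictly speaking I would either take expectations over the candidate draw as well or condition on it, since $\theta$ is a function of the candidates — I would carry the candidate randomness through and bound the expected TV distance.

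Next I would compute $\TV{\piMMRC}{\piMRC}{}$ explicitly using the structure in \eqref{eq:special_pi_mrc} and Algorithm~\ref{alg:mmrc}. In case (B), where $\theta < \Expectation[\theta]$, the modification lowers the mass on the $\msf{Cap}_{\svbx}$ candidates from $\tfrac{1}{N}\cdot\tfrac{c_1}{\theta c_1 + (1-\theta)c_2}$ to $t_u$ and redistributes it to the outside candidates; the total variation is exactly the total mass moved, which after simplification is an explicit monotone function of $|\theta - \Expectation[\theta]|$ that vanishes when $\theta = \Expectation[\theta]$. The key algebraic step is to show this function is Lipschitz in $\theta$ with a constant of order $(\exp(\varepsilon)-1)$ — intuitively because $c_1/c_2 \leq \exp(\varepsilon)$ controls how much the normalizing denominator $\theta c_1 + (1-\theta)c_2$ can swing, and the thresholds $t_u, t_l$ differ from the nominal $\piMRC$ values only through the gap $\theta - \Expectation[\theta]$ scaled by this ratio. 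Case (C) is symmetric. This gives a bound of the form $\TV{\piMMRC}{\piMRC}{} \lesssim (\exp(\varepsilon)-1)\,|\theta - \Expectation[\theta]|$.

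Then I would apply a concentration inequality to $\theta = \tfrac{1}{N}\sum_k \Indicator(\svbz_k \in \msf{Cap}_{\svbx})$, which is an average of $N$ i.i.d.\ Bernoulli random variables with mean $\Expectation[\theta]$. By Hoeffding's inequality, $\Pr(|\theta - \Expectation[\theta]| \geq u) \leq 2\exp(-2Nu^2)$; choosing $u$ appropriately and combining with the TV bound, one gets that with the stated $N = 2(\exp(\varepsilon)-1)^2 \rho^{-2}\ln(2/\rho)$ the expected TV distance is at most something like $\sqrt{\rho(1+\varepsilon)/2}$ — the $\ln(2/\rho)$ in $N$ is exactly what is needed to kill the $2\exp(-2Nu^2)$ tail, the $(\exp(\varepsilon)-1)^2$ cancels the Lipschitz constant squared, and the leftover $\sqrt{1+\varepsilon}$ presumably comes from bounding a residual term or from a union-type argument; I would reverse-engineer the precise constants from the target inequality \eqref{eq:mmrc_utility}. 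Substituting back yields the claim.

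The main obstacle I anticipate is the explicit total-variation computation in cases (B) and (C): one must carefully track the re-normalized masses (e.g.\ $\piMMRC(k) = \tfrac{1 - N\theta t_u}{N(1-\theta)}$ for the outside candidates) and show the total mass displaced is $O((\exp(\varepsilon)-1)|\theta - \Expectation[\theta]|)$ \emph{uniformly}, without hidden dependence on $N$ or on how small $c_2$ is relative to $c_1$. The assumption in Definition~\ref{def:cap} that $\Expectation[\theta] = \Probability_{\svbz\sim\Unif(\cZ)}(\svbz \in \msf{Cap}_{\svbx}) \geq c_2/(2c_1)$ is presumably the ingredient that prevents the denominators $\theta c_1 + (1-\theta)c_2$ and the $(1-\theta)$ normalizer from degenerating, so I would make sure that lower bound is used at exactly the point where one divides by $\Expectation[\theta]$ or $1-\Expectation[\theta]$, and that the concentration event keeps $\theta$ bounded away from $0$ and $1$ as well.
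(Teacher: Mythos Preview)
Your approach is correct but takes a genuinely different route from the paper. Both start by bounding the gap in expected $\ell_2$ error by $\max_{\svbx,\svbz}\lV\svbz-\svbx\rV_2^2$ times a total-variation distance, and both handle the candidate randomness by averaging over $\theta$. The paper, however, does not compute the TV directly: it applies Pinsker's inequality and the chain rule for KL to reduce to $\Expectation_\theta\big[\KLD{\piMRC_{\svbx,\theta}}{\piMMRC_{\svbx,\theta}}\big]$, and two supporting lemmas show this expected KL is at most $\rho(1+\varepsilon)\log e$ (the $\varepsilon$ term coming from a uniform bound $\KLD{\piMRC}{\piMMRC}\leq \varepsilon\log e$ on the Hoeffding ``bad'' event); Pinsker then produces the $\sqrt{\rho(1+\varepsilon)/2}$ in \eqref{eq:mmrc_utility}. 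Your direct computation is more elementary and sharper: writing $D(\theta)=\theta c_1+(1-\theta)c_2$, the total mass moved in cases (B) and (C) is exactly $\tfrac{\theta c_1}{D(\theta)}\cdot\tfrac{(c_1-c_2)\lvert\theta-\Expectation[\theta]\rvert}{D(\Expectation[\theta])}$ and $\tfrac{(1-\theta)c_2}{D(\theta)}\cdot\tfrac{(c_1-c_2)\lvert\theta-\Expectation[\theta]\rvert}{D(\Expectation[\theta])}$ respectively, each bounded by $(e^\varepsilon-1)\,\lvert\theta-\Expectation[\theta]\rvert$ using only $D(\Expectation[\theta])\geq c_2$ --- the cap assumption $\Expectation[\theta]\geq c_2/(2c_1)$ you anticipate needing is in fact not required for the TV bound (the paper invokes it only inside its KL lemma). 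Averaging over $\theta$ then gives an $O(\rho)$ bound rather than $O\!\big(\sqrt{\rho(1+\varepsilon)}\big)$. So do not try to reverse-engineer the $\sqrt{1+\varepsilon}$: it is purely an artifact of the Pinsker step that your route avoids entirely, and the bound you obtain will simply have a different (for small $\rho$, stronger) form than \eqref{eq:mmrc_utility}.
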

For bounded mechanisms, $\max_{\svbx, \svbz} \lV  \svbz - \svbx \rV^2_2$ in \eqref{eq:mmrc_utility} can be well-controlled. See Remark~\ref{rmk:error_max} in Appendix \ref{appendix:utility_mmrc} for a discussion.

In conjunction with Theorem \ref{theorem:mrc_accuracy_local}, Theorem \ref{theorem:mmrc_accuracy_local_wrt_mrc} implies that $\qMMRC$ can compress $q(\cdot|\svbx)$ to the order of $\varepsilon$ bits of communication and simulate it in a near-lossless fashion. This is stated formally and proved in Appendix \ref{appendix:utility_mmrc}.
\section{Mean Estimation}
\label{sec:mean_estimation}
In this section, we focus on the mean estimation problem, which is a canonical statistical task in distributed estimation with applications in distributed stochastic gradient descent, federated learning, etc. 
Let the input space $\mcal{X}$ be the $d$-dimensional unit $\ell_2$ sphere, i.e.,  $\mcal{X} = \sphere^{d-1}$.
Consider $n$ users where user $i$  
has some data $\svbx_i \in \mcal{X}$. For every $i \in [n]$, let $\svbx_i$ be privatized using an $\varepsilon$-LDP mechanism $q(\cdot|\svbx_i)$ and potentially post-processed to obtain an estimate $\hat{\svbx}_i$ of $\svbx_i$. We are interested in estimating the \emph{empirical mean} $\svbmu \eqDef \frac{1}{n}\sum_i \svbx_i$ using $\hat{\svbx}_1,\cdots,\hat{\svbx}_n$ such that the mean estimation error defined below is minimized
\begin{equation}\label{eq:rme_def}
     r_{\msf{ME}} \lp \hat{\svbmu}, q \rp \eqDef \max_{\svbx^n \in \mcal{X}^n} \E\lb \left\| \hat{\svbmu}\lp \hat{\svbx}_1,\cdots ,\hat{\svbx}_n \rp - \svbmu \right\|^2_2 \rb,
\end{equation}
where $\hat{\svbmu}$ is an estimate of $\svbmu$ and
the expectation is with respect to $q(\cdot|\svbx_i) $ as well as all (possibly shared) randomness used by $q(\cdot|\svbx_i)~\forall i \in [n]$.

\cite{BDFKR2018} show that $\PrivUnit$ achieves the order-optimal privacy-accuracy trade-off for mean estimation, i.e., $r_{\msf{ME}} \lp \hat{\svbmu}^\texttt{pu}, \qPU \rp = \Theta\lp\frac{d}{\min\lp\varepsilon, (e^\varepsilon-1)^2, d\rp}\rp $ where $\hat{\svbmu}^\texttt{pu} \coloneqq \frac{1}{n}\sum_i \hat{\svbx}^\texttt{pu}_i$.
Moreover, compared to other (order-optimal) $\varepsilon$-LDP mean estimation mechanisms, $\PrivUnit$ admits the best constants and gives the smallest $\ell_2$ error in practice (see \cite{FT21}). However, $\PrivUnit$ requires each user to send a $d$-dimensional real vector, so without any compression, the communication needed is $\Theta(d)$ bits, which can be an issue in many practical scenarios. 
 
To compress and simulate $\PrivUnit$, one can directly apply the generic $\MMRC$ mechanism defined in Section~\ref{subsec:mmrc}. However, for a fixed number of candidates $N$, $\MMRC$ yields a biased estimate of $\svbx$ and hence cannot get the correct (optimal) order of estimation error in \eqref{eq:rme_def}, i.e., the error would not decay with $n$\footnote{We note that this does not undermine the significance of Theorem \ref{theorem:mrc_accuracy_local} and Theorem \ref{theorem:mmrc_accuracy_local_wrt_mrc}. These are useful in single-user settings (i.e., $n = 1$) and are generic as they can compress (near-losslessly) any $\varepsilon$-LDP and $\varepsilon$-LDP cap-based mechanism, respectively.}. Fortunately, we show (in Section~\ref{subsec:mmrc_privunit}) that the bias can be corrected by appropriately scaling the privatized version of $\svbx$, i.e., by using an estimator which is
slightly different compared to the original estimator of $\PrivUnit$. Further, we also show (in Section \ref{subsec:sim_pu1}) that the resulting unbiased estimator for mean estimation ($\hat{\svbmu}^{\texttt{mmrc}}$) can simulate $\PrivUnit$ closely while only using on the order of $\varepsilon$ bits of communication.

\subsection{Debiasing \texorpdfstring{$\MMRC$}{MMRC} to simulate \texorpdfstring{$\PrivUnit$}{PrivUnit}}
\label{subsec:mmrc_privunit}

Let us focus on a single user and consider some data $\svbx \in \cX$. Recall the $\PrivUnit$ $\varepsilon$-LDP mechanism $\qPU$ described in Section \ref{sec:preliminaries} with parameters $p_0$ and $\gamma$. $\PrivUnit$ is a cap-based mechanism with $\msf{Cap}_{\svbx} = \{\svbz \in \sphere^{d-1} \mid
\lan \svbz, \svbx  \ran \ge \gamma\}$ (see Appendix \ref{appendix:privunit} for details).
Let $\piMMRC$ be the distribution and $\svbz_1, \svbz_2,...,\svbz_N$ be the candidates obtained from Algorithm \ref{alg:mmrc} when the reference distribution is $\Unif(\sphere^{d-1})$. Let  $K \sim \piMMRC(\cdot)$. Therefore, $\svbz_K$ is the privatized version of $\svbx$ using $\MMRC$.

Define $p_{\texttt{mmrc}} \coloneqq \Probability(\svbz_K \in \msf{Cap}_{\svbx})$
to be the probability with which the sampled candidate $\svbz_K$ belongs to the spherical cap associated with $\PrivUnit$.
Define $m_{\texttt{mmrc}}$ as the scaling factor in \eqref{eq:m} when $p_0$ in \eqref{eq:m} is replaced by $p_{\texttt{mmrc}}$. Define $\hat{\svbx}^\texttt{mmrc} \coloneqq \svbz_K / m_\texttt{mmrc}$ as the estimator of the $\MMRC$ mechanism simulating $\PrivUnit$.  The following Lemma shows that $\hat{\svbx}^\texttt{mmrc}$ is an unbiased estimator. See Appendix \ref{appendix:mmrc_pu_debias} for a proof.

\begin{restatable}{lemma}{mmrcprivunitbias}\label{lemma:mmrc_privunit_bias}
Let $\hat{\svbx}^\texttt{mmrc}$ be the estimator of the \emph{$\MMRC$} mechanism simulating \emph{$\PrivUnit$} as defined above. Then, $\Expectation_{\qMMRC}[\hat{\svbx}^\texttt{mmrc}] = \svbx$.
\end{restatable}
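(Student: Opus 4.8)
The plan is to show that the raw output $\svbz_K$ of $\MMRC$ applied to $\PrivUnit$ has \emph{exactly} the law of a $\PrivUnit$ output whose mixing weight $p_0$ has been replaced by $p_{\texttt{mmrc}}$, and then to invoke the fact that the $\PrivUnit$ estimator $\svbz/m_{\texttt{pu}}$ is unbiased for \emph{every} value of the mixing weight: the expression \eqref{eq:m} for $m_{\texttt{pu}}$ is precisely the scalar for which $\Expectation[\svbz \mid \svbx] = m_{\texttt{pu}}\,\svbx$ (see Appendix \ref{appendix:privunit}). Since by definition $m_{\texttt{mmrc}}$ is \eqref{eq:m} evaluated at $p_0 = p_{\texttt{mmrc}}$, establishing the claimed law of $\svbz_K$ immediately yields $\Expectation_{\qMMRC}[\hat{\svbx}^{\texttt{mmrc}}] = \Expectation_{\qMMRC}[\svbz_K/m_{\texttt{mmrc}}] = \svbx$.

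To establish that law, first I would observe from \eqref{eq:special_pi_mrc} and the three branches of Algorithm \ref{alg:mmrc} that $\piMMRC(k)$ takes one common value for all $k$ with $\svbz_k \in \msf{Cap}_{\svbx}$ and another common value for all $k$ with $\svbz_k \notin \msf{Cap}_{\svbx}$, and that both values depend on the drawn candidates only through the in-cap fraction $\theta = \frac1N \sum_k \Indicator(\svbz_k \in \msf{Cap}_{\svbx})$; call them $g_{\mathrm{in}}(\theta)$ and $g_{\mathrm{out}}(\theta)$. (In branch (A) this is \eqref{eq:special_pi_mrc}; in branches (B) and (C) the thresholded values $t_u, t_l$ and the renormalized values are explicit functions of $\theta$ alone.) Then, for any measurable $A \subseteq \msf{Cap}_{\svbx}$, since $\svbz_k \in A$ forces $\svbz_k \in \msf{Cap}_{\svbx}$,
\begin{equation}
\Probability(\svbz_K \in A) = \Expectation\Big[\sum_{k=1}^N \piMMRC(k)\, \Indicator(\svbz_k \in A)\Big] = \Expectation\big[g_{\mathrm{in}}(\theta)\, \#\{k : \svbz_k \in A\}\big].
\end{equation}
Because the candidates are i.i.d.\ $\Unif(\sphere^{d-1})$ and $\theta$ is a function of their cap-memberships only, conditioning on $\theta$ makes the in-cap candidates i.i.d.\ $\Unif(\msf{Cap}_{\svbx})$; hence $\Expectation[\#\{k : \svbz_k \in A\} \mid \theta] = N\theta\, \mu(A)$, where $\mu(A) \coloneqq \Probability_{\svbz \sim \Unif(\sphere^{d-1})}(\svbz \in A \mid \svbz \in \msf{Cap}_{\svbx})$ is the normalized uniform measure of $A$ within the cap. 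Substituting and then specializing to $A = \msf{Cap}_{\svbx}$ identifies $\Expectation[g_{\mathrm{in}}(\theta) N\theta] = \Probability(\svbz_K \in \msf{Cap}_{\svbx}) = p_{\texttt{mmrc}}$, so $\Probability(\svbz_K \in A) = p_{\texttt{mmrc}}\, \mu(A)$. Running the identical argument with $\msf{Cap}_{\svbx}$ replaced by its complement shows that, conditioned on $\svbz_K \notin \msf{Cap}_{\svbx}$, $\svbz_K$ is uniform on $\sphere^{d-1} \setminus \msf{Cap}_{\svbx}$, and this event has probability $1 - p_{\texttt{mmrc}}$. Thus $\svbz_K$ is $\Unif(\msf{Cap}_{\svbx})$ with probability $p_{\texttt{mmrc}}$ and $\Unif(\sphere^{d-1}\setminus\msf{Cap}_{\svbx})$ otherwise, which is exactly the $\PrivUnit$ output law with $p_0 = p_{\texttt{mmrc}}$.

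The main obstacle is the paragraph above: making rigorous that after the thresholding and renormalization of Algorithm \ref{alg:mmrc} the selected candidate is still \emph{exactly} conditionally uniform on the cap (resp.\ its complement). What makes it go through is that the modified weights see the reference draws only through $\theta$ and the cap indicators, so exchangeability of the i.i.d.\ reference samples can be exploited as above; this also painlessly handles the event that no candidate lands in the cap (then $\theta = 0$ contributes nothing to the expectation). Everything else — verifying that $p_{\texttt{mmrc}}$ as defined coincides with the mixing weight of the resulting law, and combining with the $\PrivUnit$ unbiasedness identity — is routine bookkeeping.
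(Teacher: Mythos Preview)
Your argument is correct, and it is a somewhat cleaner variant of what the paper does. The paper's own proof defers to the $\MRC$ case (Lemma~\ref{theorem:mrc_privunit_bias}), where a long tower-property computation on $\Expectation[\svbz_K]$ is carried out: using that $\piMRC_{\svbx,\svbz_1,\dots,\svbz_N}(k)$ depends on the candidates only through the cap indicators $A_k = \Indicator(\svbz_k\in\msf{Cap}_{\svbx})$, the chain collapses to $\Expectation[\svbz\mid A=1]\,\Probability(A_K=1)+\Expectation[\svbz\mid A=0]\,\Probability(A_K=0)$, and then the rotational-symmetry identity from \citet{BDFKR2018} finishes. You use the same core observation (the modified weights depend on the candidates only through $\theta$ and the cap indicators), but instead of computing just the mean you pin down the entire law of $\svbz_K$ as ``$\PrivUnit$ with mixing weight $p_{\texttt{mmrc}}$'' and then invoke the $\PrivUnit$ unbiasedness once. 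That buys you a slightly stronger intermediate statement (distributional, not just first-moment) and a shorter write-up; the paper's version is more explicit about the conditioning steps but is otherwise doing the same work. Your handling of the $\theta=0$ edge case is also fine, and the same reasoning covers $\theta=1$ on the complement side.
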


\subsection{Simulating \texorpdfstring{$\PrivUnit$}{PrivUnit} using \texorpdfstring{$\MMRC$}{MMRC}}
\label{subsec:sim_pu1}

Finally, we consider estimating the empirical mean $\svbmu$ defined earlier using the $\MMRC$ scheme simulating $\PrivUnit$. To that end, consider $n$ users and let $\hat{\svbx}^{\texttt{mmrc}}_i$ be the unbiased estimator of $\svbx_i$ at the $i^{th}$ user. Let the (unbiased) estimate of $\svbmu$ be $\hat{\svbmu}^{\texttt{mmrc}} \coloneqq \frac{1}{n}\sum_i \hat{\svbx}^{\texttt{mmrc}}_i$. 

The following Theorem shows that, for mean estimation, $\MMRC$ can simulate $\PrivUnit$ in a near-lossless manner (when $n$ is large and $\lambda$ is small) while only using on the order of $\varepsilon$ bits of communication.
A proof can be found in Appendix \ref{appendix:mmrc_pu_utility}. The key idea in the proof is to show that when the number of candidates $N$ is exponential in $\varepsilon$, the scaling factor $m_{\texttt{mmrc}}$ is close to the scaling parameter associated with $\PrivUnit$ (i.e., $m_{\texttt{pu}}$ defined in \eqref{eq:m}).

\begin{restatable}{theorem}{mmrcpu}
\label{thm:me_mmrc_pu}
Let $r_{\msf{ME}} \lp \hat{\svbmu}^\texttt{pu}, \qPU \rp$ and $r_{\msf{ME}} \lp \hat{\svbmu}^\texttt{mmrc}, \qMMRC \rp$ be the empirical mean estimation error for \emph{$\PrivUnit$} with parameter $p_0$ and \emph{$\MMRC$} simulating \emph{$\PrivUnit$} with $N$ candidates respectively. Consider any $\lambda > 0$. Then,
\begin{align}
    r_{\msf{ME}} \lp \hat{\svbmu}^\texttt{mmrc}, \qMMRC \rp \leq \lp 1+\lambda \rp^2 r_{\msf{ME}} \lp \hat{\svbmu}^\texttt{pu}, \qPU \rp + 2(1+\lambda)(2+\lambda) \sqrt{\frac{r_{\msf{ME}} \lp \hat{\svbmu}^\texttt{pu}, \qPU \rp}{n}} + \frac{(2+\lambda)^2}{n}.
\end{align}
as long as 
\begin{align}\label{eq:N_bdd_mmrc_pu}
    N \geq   \frac{e^{2\varepsilon}}{2}\lp\frac{2 (1+\lambda)}{\lambda \lp p_0 -1/2 \rp}\rp^2 \ln\lp \frac{4(1+\lambda)}{\lambda \lp p_0 -1/2 \rp} \rp.
\end{align}
\end{restatable}
We note that while a specific value of $\lambda$ can be chosen (say $0.1$ or smaller) in \eqref{eq:N_bdd_mmrc_pu}, in practice, the number of bits could be fixed (see Section \ref{subsec:mmrc_privunit_empirical}), determining the value of $\lambda$.

\subsection{Empirical Comparisons}
\label{subsec:mmrc_privunit_empirical}
Next, we empirically demonstrate the privacy-accuracy-communication tradeoffs of $\MMRC$ simulating $\PrivUnit$. Along with $\PrivUnit$, we compare against the SQKR algorithm of \cite{CKO20} which offers order-optimal privacy-accuracy tradeoffs while requiring only $\varepsilon$ bits. Following \cite{CKO20}, we generate data independently but non-identically to capture the distribution-free setting as well as ensure that the data non-central, i.e. $\svbmu \neq 0$. More specifically, we set $\svbx_1,\cdots,\svbx_{n/2} \diid N(1,1)^{\otimes d}$ and $\svbx_{n/2+1},\cdots,\svbx_{n} \diid N(10,1)^{\otimes d}$. Further, to ensure that each data lies on $\sphere^{d-1}$, we normalize each $\svbx_i$ by setting $\svbx_i \leftarrow \svbx_i/\lV \svbx_i \rV_2$. We report the average $\ell_2$ estimation error over 10 runs. See more variations in Appendix \ref{appendix:mmrc_pu_emp}. 

\begin{figure}[h]
\centering
\includegraphics[width=0.45\linewidth]{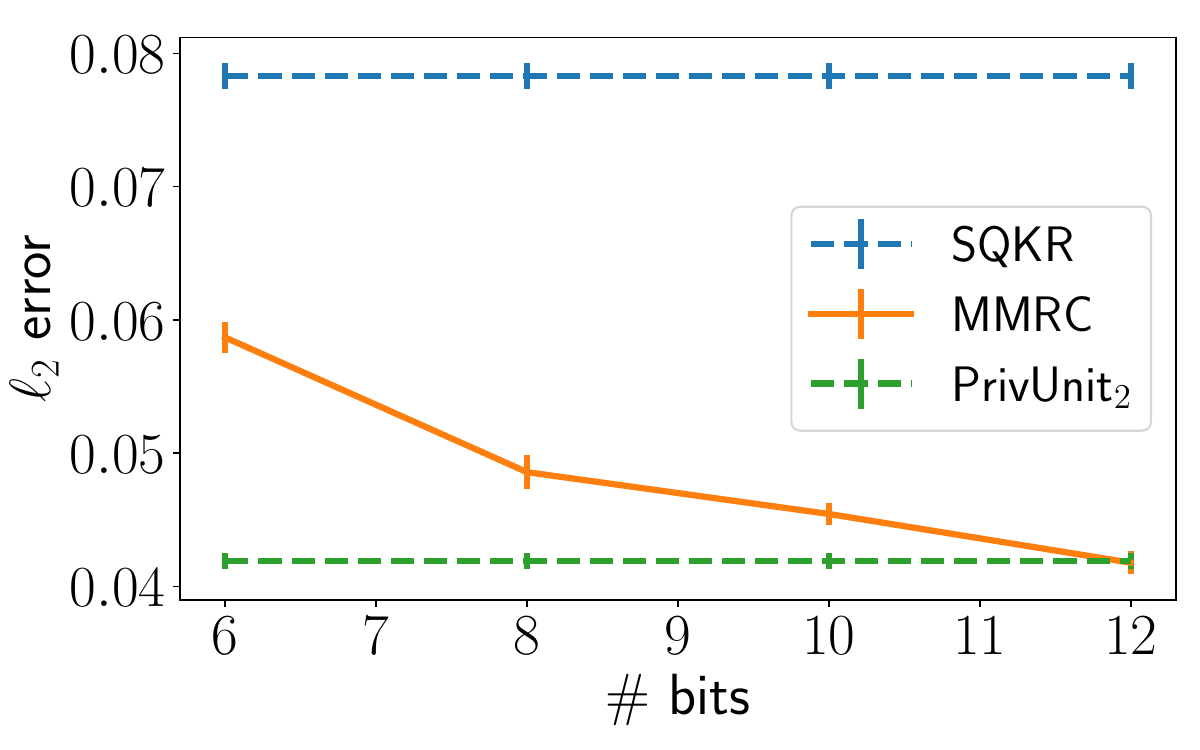} \qquad
\includegraphics[width=0.45\linewidth]{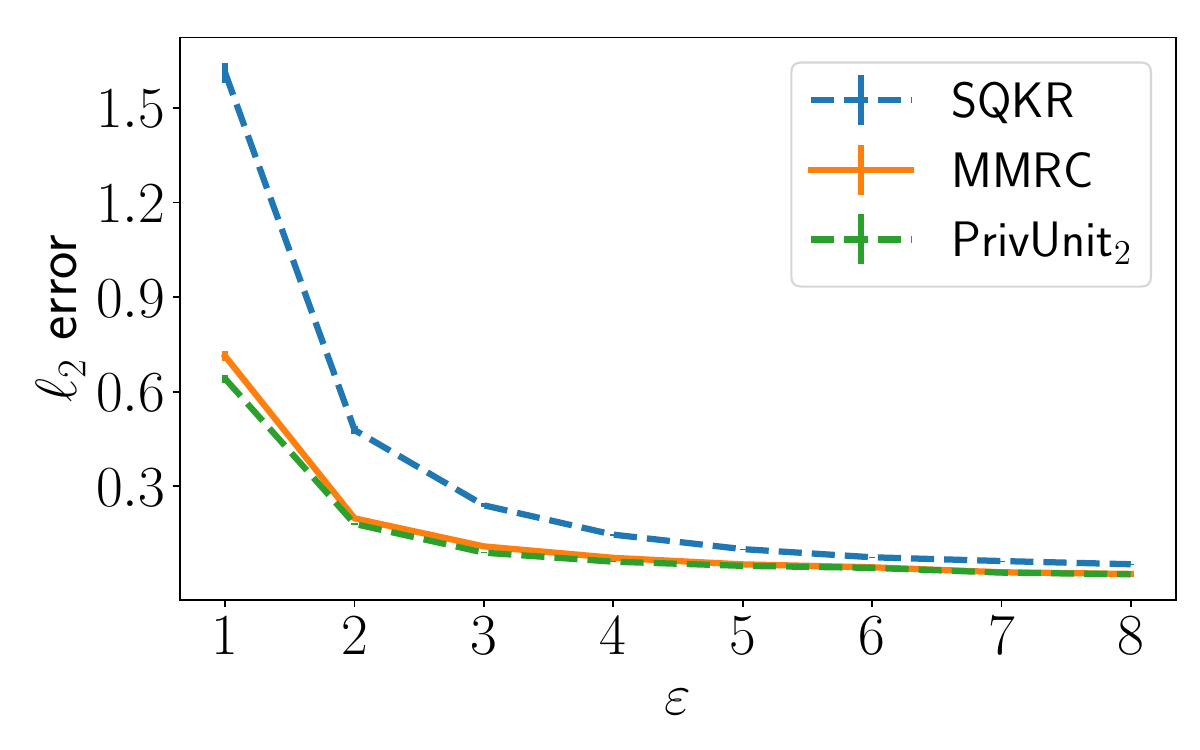}%
\caption{Comparing $\PrivUnit$, $\MMRC$ simulating $\PrivUnit$ and SQKR for mean estimation with $d = 500$ and $n = 5000$. \textbf{Left:} $\ell_2$ error vs $\#$bits for $\varepsilon = 6$. \textbf{Right:} $\ell_2$ error vs $\varepsilon$ for $\#$bits $= \max\{( \varepsilon/ \ln 2) + 2, 8\}$. SQKR uses $\#$-bits $= \varepsilon$ for both because it leads to a poor performance if $\#$-bits $ > \varepsilon$.}
\label{fig:mean}
\end{figure}

In Figure \ref{fig:mean} (Left), we show the communication-accuracy tradeoffs. We see that with correct order of bits, the accuracy of $\MMRC$ simulating $\PrivUnit$ converges to the accuracy of the uncompressed $\PrivUnit$. In Figure \ref{fig:mean} (Right), we show the privacy-accuracy tradeoffs. We see that $\MMRC$ simulating $\PrivUnit$ can attain accuracy of the uncompressed $\PrivUnit$ for the range of  $\varepsilon$'s typically considered by LDP mechanisms while only using $\max\{\lceil( \varepsilon/ \ln 2)\rceil + 2, 8\}$ bits.
\section{Frequency Estimation}
\label{sec:frequency_estimation}
In this section, we study the frequency estimation problem, which is another canonical statistical task in distributed distribution estimation, with application to federated analytics \citep{RM20}. 

Let $\mcal{X}$ be a set of $d$ distinct symbols and without loss of generality $\mcal{X} \coloneqq \lbp e_1, e_2,...,e_d \rbp$, where $e_j \in \{0, 1\}^d$ is the $j^{th}$ standard unit vector i.e., $e_j$ is the one-hot encoding of $j$. Consider $n$ users where user $i$  
has some data $\svbx_i \in \mcal{X}$. For every $i \in [n]$, let $\svbx_i$ be privatized using an $\varepsilon$-LDP mechanism $q(\cdot|\svbx_i)$ and potentially post-processed to obtain an estimate $\hat{\svbx}_i$ of $\svbx_i$.
We are interested in estimating the \emph{empirical distribution} of $\svbx_1, \cdots, \svbx_n$ defined as $\Pi \eqDef \frac{1}{n}\sum_i \svbx_i$ using $\hat{\svbx}_1,\cdots,\hat{\svbx}_n$ such that the estimation error defined below is minimized:
\begin{equation}\label{eq:rfe_def}
r_{\texttt{FE}} \lp \hat{\Pi}, q, \ell \rp \eqDef \max_{\svbx^n \in \mcal{X}^n} \E\lb \ell \lp  \hat{\Pi}(\hat{\svbx}_1,...,\hat{\svbx}_n), \Pi \rp \rb,
\end{equation}
where $\ell =  \lVert \cdot \rVert_1$ or $\lVert \cdot \rVert^2_2$, $\hat{\Pi}$ is an estimate of $\Pi$ and
the expectation is with respect to $q(\cdot|\svbx_i) $ as well as all (possibly shared) randomness used by $q(\cdot|\svbx_i)~\forall i \in [n]$. For simplicity, we only focus on $\ell_2$ error i.e., $\ell = \lVert \cdot \rVert^2_2$.

\cite{YB18} show that the $\SubsetSelection$ achieves the
order-optimal privacy-accuracy trade-off for frequency estimation i.e., $r_{\msf{FE}} \bigl( \hat{\Pi}^\texttt{ss}, \qSS \bigr) = \Theta\Bigl(
  \frac{d}{\min( e^\varepsilon, (e^\varepsilon-1)^2, d )}\Bigr)$ (where $\hat{\Pi}^\texttt{ss} \coloneqq \frac{1}{n}\sum_i \hat{\svbx}^{\texttt{ss}}_i$). Like $\PrivUnit$, compared to other (order-optimal) $\varepsilon$-LDP frequency estimation mechanisms, $\SubsetSelection$ admits the best constants and gives the smallest $\ell_2$ error in practice (see \cite{CKO20}).
However,  the communication cost associated with $\SubsetSelection$ is $O\bigl(\frac{d}{e^\varepsilon+1}\bigr)$ bits per user, which which can be
an issue for small and moderate $\varepsilon$. 
 
Similar to $\PrivUnit$, one could apply the generic $\MMRC$ scheme defined in Section~\ref{sec:main_results} to compress and simulate $\SubsetSelection$. However, for a fixed number of candidates $N$, it yields a biased estimate of $\svbx$ and hence cannot get the correct (optimal) order of estimation error in \eqref{eq:rfe_def} i.e., the error would not decay with $n$. Fortunately, similar to $\PrivUnit$, we show (in Section~\ref{subsec:mmrc_ss}) that the bias can be corrected by appropriately translating and scaling the privatized version of $\svbx$ i.e., by using an estimator which is slightly different compared to the original estimator of $\SubsetSelection$.
Further, we also show (in Section \ref{subsec:sim_ss1}) that the resulting unbiased estimator for frequency estimation ($\hat{\Pi}^{\texttt{mmrc}}$) can simulate $\SubsetSelection$ closely while only using on the order of $\varepsilon$-bits communication.  

\subsection{Debiasing \texorpdfstring{$\MMRC$}{MMRC} to simulate \texorpdfstring{$\SubsetSelection$}{Subset Selection}}
\label{subsec:mmrc_ss}
Let us focus on a single user and consider some data $\svbx \in \cX$.
Recall the $\SubsetSelection$ $\varepsilon$-LDP mechanism $\qSS$ described in Section~\ref{sec:preliminaries} with $s \coloneqq \lceil \frac{d}{1+e^\varepsilon}\rceil$. $\SubsetSelection$ is cap-based mechanism as discussed in Section \ref{sec:main_results} and Appendix \ref{appendix:ss} with $\msf{Cap}_{\svbx} = \cZ_{\svbx}$ and $\Probability_{\svbz \sim \Unif(\cZ)}\lp \svbz \in \msf{Cap}_{\svbx} \rp = s/d$.
Similar to Section \ref{subsec:mmrc_privunit}, let $\svbz_K$ be the privatized version of $\svbx$ using $\MMRC$. We define $\hat{\svbx}^\texttt{mmrc} \coloneqq (\svbz_K- b_\texttt{mmrc} )/ m_\texttt{mmrc}$ as the estimator of the $\MMRC$ mechanism simulating $\SubsetSelection$ where $m_\texttt{mmrc}$ and $b_\texttt{mmrc}$ (defined in Appendix \ref{appendix:mmrc_ss_debias}) are translation and scaling factor analogous to $m_\texttt{ss}$ and $b_\texttt{ss}$ in \eqref{eq:m_ss}. The following Lemma shows that $\hat{\svbx}^\texttt{mmrc}$ is an unbiased estimator. See Appendix \ref{appendix:mmrc_ss_debias} for a proof.

\begin{restatable}{lemma}{mmrcssbias}\label{lemma:mmrc_ss_bias}
Let $\hat{\svbx}^\texttt{mmrc}$ be the estimator of the \emph{$\MMRC$} mechanism simulating \emph{$\SubsetSelection$} as defined above. Then, $\Expectation[\hat{\svbx}^\texttt{mmrc}] = \svbx$.
\end{restatable}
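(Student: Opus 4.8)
The plan is to compute $\Expectation[\hat{\svbx}^\texttt{mmrc}]$ coordinate-wise and show it equals $\svbx$, mirroring the argument for $\PrivUnit$ in Lemma \ref{lemma:mmrc_privunit_bias} but exploiting the explicit combinatorial structure of $\SubsetSelection$. Without loss of generality take $\svbx = e_x$. The estimator is $\hat{\svbx}^\texttt{mmrc} = (\svbz_K - b_\texttt{mmrc})/m_\texttt{mmrc}$, so it suffices to show that $\Expectation[\svbz_K^{(j)}] = b_\texttt{mmrc} + m_\texttt{mmrc}$ when $j = x$ and $\Expectation[\svbz_K^{(j)}] = b_\texttt{mmrc}$ when $j \neq x$, where the expectation is over the shared randomness (the $N$ candidates drawn i.i.d.\ uniformly from $\cZ$) and over $K \sim \piMMRC(\cdot)$. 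The key symmetry is that, conditioned on the value of $\theta$ (the fraction of candidates lying in $\msf{Cap}_{\svbx} = \cZ_{\svbx}$), the chosen candidate $\svbz_K$ is drawn uniformly from the candidates inside the cap with total probability $p_\texttt{mmrc}(\theta) \coloneqq \piMMRC(\{k : \svbz_k \in \cZ_{\svbx}\})$ and uniformly from those outside with probability $1 - p_\texttt{mmrc}(\theta)$; here $p_\texttt{mmrc}(\theta)$ is exactly the piecewise-defined quantity appearing in the (commented-out) display for $b_\texttt{mmrc}$, coming from the three cases of Algorithm \ref{alg:mmrc}.

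The main steps are as follows. First I would establish the conditional law: given $\theta$, the coordinate $x$ of $\svbz_K$ is $1$ with probability $p_\texttt{mmrc}(\theta)$ and $0$ otherwise, so $\Expectation[\svbz_K^{(x)} \mid \theta] = p_\texttt{mmrc}(\theta)$ and hence $\Expectation[\svbz_K^{(x)}] = \Expectation_\theta[p_\texttt{mmrc}(\theta)] \eqqcolon \bar{p}_\texttt{mmrc}$, with $\theta \sim \frac{1}{N}\Binom(N, s/d)$. Second, for $j \neq x$, I would use exchangeability of the remaining $d-1$ coordinates: conditioned on $\svbz_K \in \cZ_{\svbx}$, the remaining $s-1$ ones are uniformly spread over the other $d-1$ positions, so $\Expectation[\svbz_K^{(j)} \mid \svbz_K \in \cZ_{\svbx}] = (s-1)/(d-1)$; conditioned on $\svbz_K \notin \cZ_{\svbx}$, all $s$ ones are spread over the $d-1$ positions $\neq x$, giving $\Expectation[\svbz_K^{(j)} \mid \svbz_K \notin \cZ_{\svbx}] = s/(d-1)$. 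Combining, $\Expectation[\svbz_K^{(j)}] = \bar{p}_\texttt{mmrc}\frac{s-1}{d-1} + (1-\bar{p}_\texttt{mmrc})\frac{s}{d-1} = \frac{s - \bar{p}_\texttt{mmrc}}{d-1}$ for every $j \neq x$. Third, I would simply \emph{define} $b_\texttt{mmrc} \coloneqq (s - \bar{p}_\texttt{mmrc})/(d-1)$ and $m_\texttt{mmrc} \coloneqq \bar{p}_\texttt{mmrc} - (s - \bar{p}_\texttt{mmrc})/(d-1) = (d\,\bar{p}_\texttt{mmrc} - s)/(d-1)$ (this matches the forms of $b_\texttt{ss}, m_\texttt{ss}$ with $p_0$ replaced by $\bar{p}_\texttt{mmrc}$), and then the centering is immediate: for $j \neq x$, $\Expectation[\hat{\svbx}^\texttt{mmrc,(j)}] = (\Expectation[\svbz_K^{(j)}] - b_\texttt{mmrc})/m_\texttt{mmrc} = 0$, and for $j = x$, $\Expectation[\hat{\svbx}^\texttt{mmrc,(x)}] = (\bar{p}_\texttt{mmrc} - b_\texttt{mmrc})/m_\texttt{mmrc} = m_\texttt{mmrc}/m_\texttt{mmrc} = 1$, so $\Expectation[\hat{\svbx}^\texttt{mmrc}] = e_x = \svbx$.

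The only real obstacle is bookkeeping rather than mathematics: one must carefully verify that $p_\texttt{mmrc}(\theta) = \piMMRC(\{k:\svbz_k\in\cZ_{\svbx}\})$ takes the claimed piecewise form across the three branches of Algorithm \ref{alg:mmrc} (case $\theta = \Expectation[\theta]$: it equals $e^\varepsilon\Expectation[\theta]/(e^\varepsilon\Expectation[\theta] + 1 - \Expectation[\theta])$ by \eqref{eq:special_pi_mrc}; case $\theta < \Expectation[\theta]$, upper threshold clipped: it equals $N\theta t_u = e^\varepsilon\theta/(e^\varepsilon\Expectation[\theta]+1-\Expectation[\theta])$; case $\theta > \Expectation[\theta]$, lower threshold clipped: it equals $1 - N(1-\theta)t_l = (e^\varepsilon\Expectation[\theta] + \theta - \Expectation[\theta])/(e^\varepsilon\Expectation[\theta]+1-\Expectation[\theta])$), and that the resulting $\bar{p}_\texttt{mmrc} = \Expectation_\theta[p_\texttt{mmrc}(\theta)]$ is well-defined and lies in $(0,1)$ so that $m_\texttt{mmrc} \neq 0$. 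Once the conditional-expectation identity $\Expectation[\svbz_K \mid \theta]$ is pinned down, everything reduces to the same linear algebra that makes $\hat{\svbx}^\texttt{ss}$ unbiased, and the proof closes. I would relegate the explicit formula for $p_\texttt{mmrc}(\theta)$ and the evaluation of $\bar{p}_\texttt{mmrc}$ to Appendix \ref{appendix:mmrc_ss_debias}, where $b_\texttt{mmrc}$ and $m_\texttt{mmrc}$ are formally defined.
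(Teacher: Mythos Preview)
Your proposal is correct and follows essentially the same route as the paper's proof (via Lemma~\ref{thm:mmrc_ss_bias_0}): both condition on whether $\svbz_K$ lands in $\cZ_{\svbx}$, derive the piecewise expression for $p_\texttt{mmrc}(\theta)$ from the three branches of Algorithm~\ref{alg:mmrc}, and arrive at the affine relation $\Expectation[(\svbz_K)_j] = m_\texttt{mmrc}\,p_j + b_\texttt{mmrc}$ which inverts to give unbiasedness. The only cosmetic difference is that the paper introduces an auxiliary binomial count $\bar{\theta}$ to compute the off-diagonal coordinate marginal, whereas you obtain $(s-1)/(d-1)$ and $s/(d-1)$ directly from the observation that $\svbz_K$ conditioned on being in (resp.\ out of) the cap is uniform on $\cZ_{\svbx}$ (resp.\ $\cZ\setminus\cZ_{\svbx}$); this is a slightly cleaner phrasing of the same computation.
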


\subsection{Simulating \texorpdfstring{$\SubsetSelection$}{Subset Selection} using \texorpdfstring{$\MMRC$}{MMRC}}
\label{subsec:sim_ss1}

Finally, we consider estimating the empirical frequency $\Pi$ defined earlier using the $\MMRC$ scheme simulating $\SubsetSelection$. To that end, consider $n$ users and let $\hat{\svbx}^{\texttt{mmrc}}_i$ be the unbiased estimator of $\svbx_i$ at the $i^{th}$ user. Let the (unbiased) estimate of $\Pi$ be $\hat{\Pi}^{ \texttt{mmrc}} \coloneqq \frac{1}{n}\sum_i \hat{\svbx}^\texttt{mmrc}_i$. 
The following Theorem shows that, for frequency estimation, $\MMRC$ can simulate $\SubsetSelection$ in a near-lossless manner (when $\lambda$ is small) while only using on the order of $\varepsilon$ bits of communication. A proof can be found in Appendix \ref{appendix:mmrc_ss_ut}. Similar to $\PrivUnit$, the key idea in the proof is to show that when the number of candidates $N$ is exponential in $\varepsilon$, the scaling factor $m_{\texttt{mmrc}}$ is close to the scaling parameter associated with $\qSS$ (i.e., $m_{\texttt{ss}}$ defined in \eqref{eq:m_ss}).


\begin{restatable}{theorem}{mmrcss}
\label{thm:fe_mmrc_ss}
 Let $r_{\msf{FE}} \lp \hat{\Pi}^\texttt{ss}, \qSS \rp$ and $r_{\msf{FE}} \lp \hat{\Pi}^\texttt{mmrc}, \qMMRC \rp$ be the empirical mean estimation error for \emph{$\SubsetSelection$} and \emph{$\MMRC$} simulating \emph{$\SubsetSelection$} with $N$ candidates respectively. Consider any $\lambda > 0$. Then
 \begin{equation}
     r_{\msf{FE}} \lp \hat{\Pi}^{ \texttt{mmrc}}, \qMMRC \rp \leq  
     \lp 1+4\lambda+5\lambda^2+2\lambda^3 \rp r_{\msf{FE}} \lp \hat{\Pi}^{ \texttt{ss}}, \qSS \rp,
 \end{equation}
 as long as 
  \begin{align}\label{eq:N_bdd_mmrc_ss}
      N\geq \frac{2(e^{\varepsilon}+1)^2(1+\lambda)^2}{0.24^2\lambda^2}\ln\lp \frac{8(1+\lambda)}{0.24\lambda}\rp.
  \end{align}
 \end{restatable}

Similar to mean estimation, while a specific value of $\lambda$ can be chosen in \eqref{eq:N_bdd_mmrc_ss}, in practice, the number of bits could be fixed (see Section \ref{subsec:mmrc_ss_empirical}), determining the value of $\lambda$.

\subsection{Empirical Comparisons.}
\label{subsec:mmrc_ss_empirical}
Next, we empirically demonstrate the privacy-accuracy-communication tradeoffs of $\MMRC$ simulating $\SubsetSelection$. Along with $\SubsetSelection$, we compare against the RHR algorithm of \cite{CKO20} which offers order-optimal privacy-accuracy tradeoffs while requiring only $\varepsilon$ bits. Following \cite{acharya2019hadamard}, we generate $\svbx_1, \cdots, \svbx_n$ from the Zipf distribution with degree 1. We report the average $\ell_2$ estimation error over 10 runs. See more variations in 
Appendix \ref{appendix:mmrc_ss_emp}.

\begin{figure}[h]
\centering
\includegraphics[width=0.45\linewidth]{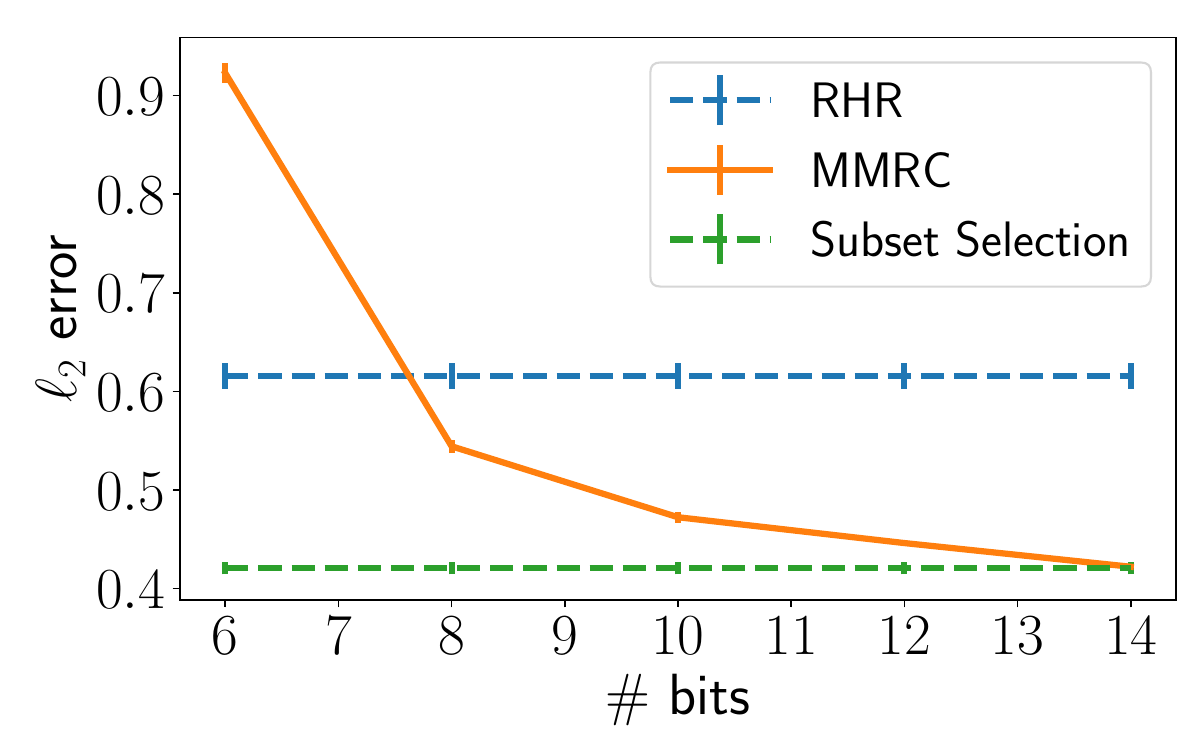} \qquad
\includegraphics[width=0.45\linewidth]{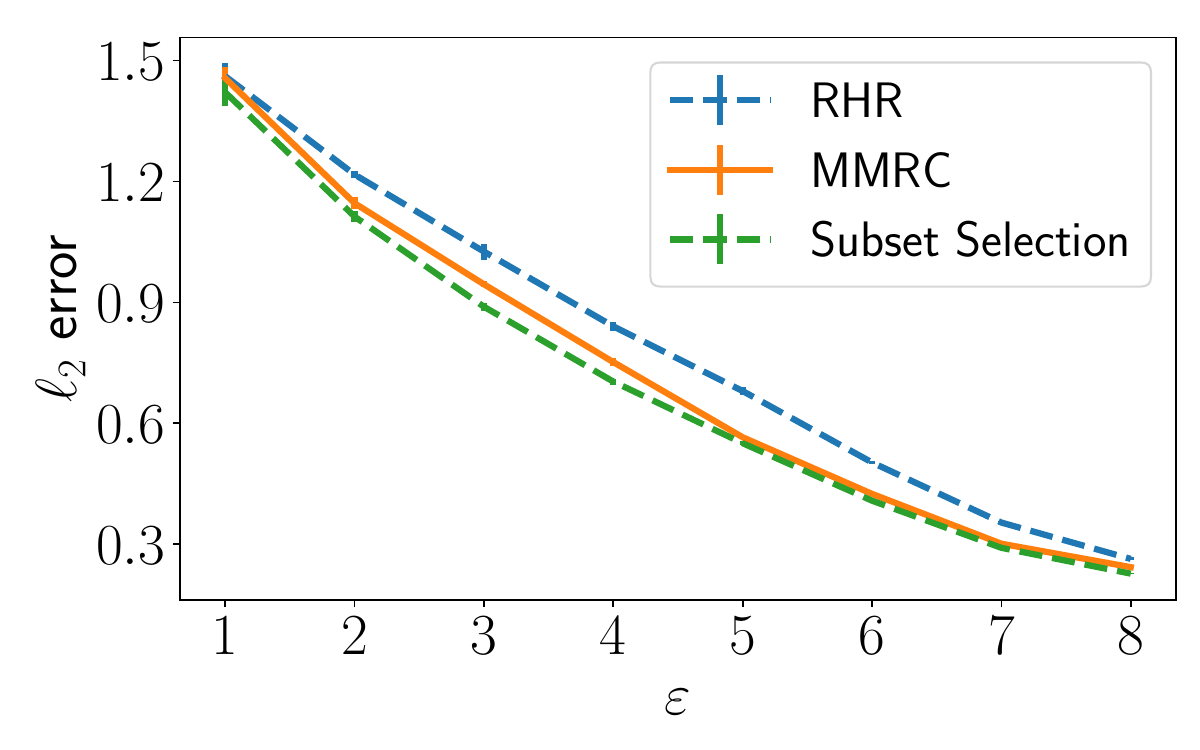}
\caption{Comparing $\SubsetSelection$, $\MMRC$ simulating $\SubsetSelection$ and RHR for frequency estimation with $d = 500$ and $n = 5000$. \textbf{Left:} $\ell_2$ error vs $\#$bits for $\varepsilon = 6$. \textbf{Right:} $\ell_2$ error vs $\varepsilon$ for $\#$bits $= \max\{\lceil( \varepsilon/ \ln 2)\rceil + 3, 8\}$. RHR uses $\#$-bits $= \varepsilon$ for both as it leads to a poor performance if $\#$-bits $ > \varepsilon$.}
\label{fig:freq}
\end{figure}

In Figure \ref{fig:freq} (Left), we show the communication-accuracy tradeoffs. We see that with correct order of bits, the accuracy of $\MMRC$ simulating $\SubsetSelection$ converges to the accuracy of the uncompressed $\SubsetSelection$. In Figure \ref{fig:freq} (Right), we show the privacy-accuracy tradeoffs. More specifically, $\MMRC$ simulating $\SubsetSelection$ can attain the accuracy of the uncompressed $\SubsetSelection$ for the range of  $\varepsilon$'s typically considered by LDP mechanisms while only using $\max\{\lceil( \varepsilon/ \ln 2)\rceil + 3, 8\}$ bits. 
\section{Conclusion and Future Work}
\label{sec:conclusion}
We demonstrated how Minimal Random Coding can be used to simulate a class of $\varepsilon$-LDP mechanisms in a manner which is communication efficient while preserving accuracy and differential privacy guarantees. Further, for mean and frequency estimation, we proposed unbiased versions of our schemes (relying only on translation and scaling) that attain the privacy-accuracy tradeoffs of the best known schemes i.e., $\PrivUnit$ and $\SubsetSelection$, while requiring on the order of $\varepsilon$ bits of communication.

We now briefly discuss a few non-trivial and interesting open questions. \\

\noindent\textbf{Computational Cost.} The computational cost of our approach, similar to \cite{FT21} grows linearly in $d$ and exponentially in $\varepsilon$ (as we need $N = \exp(O(\varepsilon))$ candidates to properly simulate the optimal mechanisms). An important question for future research is therefore how to increase the computational efficiency of $\MRC$ and $\MMRC$ with respect to $\varepsilon$.\\

\noindent\textbf{Privacy Amplification via Shuffling.} As mentioned in Section \ref{subsec:related_works}, privacy amplification via shuffling techniques ensure a central $\varepsilon \approx 1$ even when the local $\varepsilon >1$. While our method could be combined with these amplification techniques in principle, we leave the analysis of the privacy, utility, and communication guarantees of the resulting scheme as a question for future research.\\

\noindent\textbf{Other schemes to simulate noisy channels.} $\MRC$ is only one of several channel simulation schemes studied in information theory which could be considered for compression of $\varepsilon$-LDP mechanisms. 
Similar to $\MRC$, other channel simulation schemes, e.g., rejection sampling \citep{HJMR07} or schemes based on the Poisson functional representation \citep{LE18}, can also compress noisy signals to a number of bits which is close to the information contained in the signal (which decreases as noise increases). 
Analyzing these schemes for their effect on differential privacy guarantees is an interesting open question.\\

\noindent\textbf{Shared Randomness.} Finally, here we assumed the existence of a shared source of randomness. We further assumed that each user is using a different source of shared randomness. While shared randomnesss only adds to the cost of downlink and not uplink communication (which is usually the bottleneck in settings like federated learning), a question left for future research is how much communication is required to establish and select these sources of randomness.

\section*{Acknolwedgements}

We thank the anonymous reviewers of AISTATS 2022 for their comments and suggestions.
We sincerely thank Jakub Kone\v{c}n\'{y} and Wennan Zhu for helpful discussions. We also thank Zachary Charles for support with the software infrastructure.
\bibliographystyle{abbrvnat}
\bibliography{references}

\clearpage
\appendix
\section*{Appendix}
\textbf{Organization.}
The Appendix is organized as follows. In Appendix \ref{appendix:societal_impacts}, we discuss the societal impact associated with our work. In Appendix \ref{appendix:mrc}, we focus on $\MRC$ and provide the proofs of Theorem \ref{theorem:mrc_accuracy_local}, Theorem \ref{theorem:mrc_pure_privacy}, and Theorem \ref{theorem:mrc_approximate_privacy}. In Appendix \ref{appendix:mmrc}, we focus on $\MMRC$ and provide the proofs of Theorem \ref{theorem:mmrc_privacy} and Theorem \ref{theorem:mmrc_accuracy_local_wrt_mrc}. Further, we also provide Theorem \ref{theorem:mmrc_accuracy_local} where we show that $\MMRC$ can simulate any $\varepsilon$-LDP cap-based mechanism in a nearly lossless fashion with about $\varepsilon$ bits of communication. In Appendix \ref{appendix:privunit}, we provide additional preliminary on $\PrivUnit$ and also show that $\PrivUnit$ is a cap-based mechanism (Definition \ref{def:cap}). In Appendix \ref{appendix:mrc_pu}, we show how $\PrivUnit$ can be simulated using $\MRC$ analogous to how we simulated $\PrivUnit$ using $\MMRC$ in Section \ref{sec:mean_estimation}. Along with the theoretical guarantees, we also provide some empirical comparisons between $\MRC$ simulating $\PrivUnit$ and $\PrivUnit$. In Appendix \ref{appendix:mmrc_pu}, we provide the proofs of Lemma \ref{lemma:mmrc_privunit_bias} and Theorem \ref{thm:me_mmrc_pu} as well as some additional empirical comparisons between $\MMRC$ simulating $\PrivUnit$ and $\PrivUnit$.
In Appendix \ref{appendix:ss}, we provide additional preliminary on $\SubsetSelection$ and also show that $\SubsetSelection$ is a cap-based mechanism (Definition \ref{def:cap}). 
In Appendix \ref{appendix:mrc_ss}, we show how $\SubsetSelection$ can be simulated using $\MRC$ analogous to how we simulated $\SubsetSelection$ using $\MMRC$ in Section \ref{sec:frequency_estimation}. Along with the theoretical guarantees, we also provide some empirical comparisons between $\MRC$ simulating $\SubsetSelection$ and $\SubsetSelection$. 
In Appendix \ref{sec:mmrc_ss_app}, we provide the proofs of Lemma \ref{lemma:mmrc_ss_bias} and Theorem \ref{thm:fe_mmrc_ss} as well as some additional empirical comparisons between $\MMRC$ simulating $\SubsetSelection$ and $\SubsetSelection$.

\section{Societal impact}
\label{appendix:societal_impacts}

Collecting large datasets allows us to build better machine learning models which can facilitate our lives in many different ways. However, harnessing data from devices can expose their users to privacy risks. Research into differential privacy can help to minimize these risks. At present, our work is mostly theoretical in nature as there are a few unsolved questions. In particular, for large $\varepsilon$ the computational complexity of our approach may be too expensive to be practical.

\section{Minimal Random Coding}
\label{appendix:mrc}

Let $q(\svbz|\svbx)$ be an $\varepsilon$-LDP mechanism for all $\svbx \in \cX$ and $\svbz \in \cZ$. Let $p(\svbz)$ be the fixed reference distribution over $\cZ$ and let $\{\svbz_k\}_{k=1}^{N}$ be $N$ candidates drawn from $p(\svbz)$. From Algorithm \ref{alg:mrc}, the distribution over the indices $k \in [N]$ under minimal random coding $(\MRC)$ is as follows:
\begin{align}
  \piMRC_{\svbx}(k) \coloneqq \dfrac{q(\svbz_k|\svbx)/p(\svbz_k)}{\sum_{k'} q(\svbz_{k'}|\svbx)/p(\svbz_{k'})}  \label{eq:generic_pi} 
\end{align}
$\piMRC$ can be viewed as a function that maps $\svbx$ and $(\svbz_1,...,\svbz_N)$ to a distribution in $[N]$. However for notational convenience, when the context is clear, we will omit the dependence on $\svbx$ and $(\svbz_1,...,\svbz_N)$.

Let $\qMRC$ denote the distribution of $\svbz_K$ where $K \sim \piMRC(\cdot)$ i.e., with $\delta(\cdot)$ denoting the Dirac delta function:
\begin{align}
    \qMRC(\svbz | \svbx) \coloneqq \sum_{k} \piMRC(k) \delta(\svbz - \svbz_k). \label{eq:qmrc}
\end{align}

\subsection{Utility of \texorpdfstring{$\MRC$}{MRC}}\label{appendix:utility_mrc}
In this section, we prove Theorem \ref{theorem:mrc_accuracy_local} i.e., we show that $\MRC$ can simulate any $\varepsilon$-LDP mechanism in a nearly lossless fashion with about $\varepsilon$ bits of communication
\mrcaccuracylocal*
\begin{proof}
In order to prove this theorem, we invoke Theorem~3.2 of \citet{HPHJ19}.

Recall Theorem~3.2 \citep{HPHJ19}: Let $q'$ and $p$ be distributions over $\cZ$. Let $t \geq 0$ be some constant and let $N' = 2^{\lp \KLD{q'(\svbz)}{p(\svbz)} + t \rp}$.
Let $\tilde{q}$ be a discrete distribution constructed by drawing $N'$ samples $\{\svbz_k\}_{k=1}^{N'}$ from $p$ and defining \begin{align}
    \tilde{q}(\svbz) \coloneqq \sum_{k = 1}^{N'} \dfrac{q(\svbz_k)/p(\svbz_k)}{\sum_{k'}  q(\svbz_{k'})/p(\svbz_{k'})}\delta(\svbz - \svbz_k).
\end{align}
Furthermore, let $f$ be a measurable function and $\|f\|_{q'} = \sqrt{\Expectation_{q'(\svbz)}[f^2(\svbz)]}$ be its 2-norm under $q'$. Then it holds that 
\begin{align}
    \Probability\bigg(\Big|\Expectation_{\tilde{q}(\svbz)}[f(\svbz)] - \Expectation_{q'(\svbz)}[f(\svbz)] \Big| \geq \dfrac{2\|f\|_{q'} \alpha'}{1-\alpha'}\bigg) \leq 2\alpha'
\end{align}
where 
\begin{align}
\alpha' = \sqrt{2^{-t/4} + 2\sqrt{\Probability( \log(q'(\svbz)/p(\svbz)) > \KLD{q'(\svbz)}{p(\svbz)} + t/2)}}.
\end{align}

We apply Theorem~3.2 \citep{HPHJ19} to $q'(\svbz) \coloneqq q(\svbz|\svbx)$ and $f(\svbz) \coloneqq \|\svbz-\svbx\|^2$. We identify $\tilde{q}(\svbz) = \qMRC(\svbz|\svbx)$ and $N' = N$. To prove Theorem~\ref{theorem:mrc_accuracy_local}, it suffices to show that $\alpha \geq \alpha'$. Note that
\begin{align}
\KLD{q(\svbz|\svbx)}{p(\svbz)} \stackrel{(a)}{=} \Expectation_{q(\svbz|\svbx)}\bigg[\log\bigg(\dfrac{q(\svbz|\svbx)}{p(\svbz)}\bigg)\bigg] \stackrel{(b)}{\leq} \varepsilon \log e,
\end{align}
where $(a)$ follows the definition of KL-divergence and $(b)$ follows since $ |\log (q(\svbz|\svbx) / p(\svbz)) | \leq \varepsilon \log e$ by the assumption on $p$. We therefore have
\begin{align}
    t
    = (\log e + 4c)\varepsilon - \KLD{q(\svbz|\svbx)}{p(\svbz)}
    \geq 4c\varepsilon.
\end{align}
 It follows that
\begin{align}
    & \Probability \big(\log (q(\svbz|\svbx)\|p(\svbz)) > \KLD{q(\svbz|\svbx)}{p(\svbz)} + t/2\big)  \\
    & \qquad \leq
    \Probability \big(\log (q(\svbz|\svbx)\|p(\svbz)) > \Expectation\big[\log (q(\svbz|\svbx)\|p(\svbz))\big] + 2c\varepsilon\big)\\
    & \qquad \stackrel{(b)}{\leq} \exp(-2c^2/(\log e)^2)\\
    & \qquad = 2^{-2c^2/\log e}.
\end{align}
where $(b)$ follows from Hoeffding's inequality since $ |\log (q(\svbz|\svbx) / p(\svbz)) | \leq \varepsilon \log e$ by the assumption on $p$. Therefore,
\begin{align}
    \alpha'
    \leq \sqrt{2^{-c\varepsilon} + 2\sqrt{2^{-2c^2/\log e}}}
    = \sqrt{2^{-c\varepsilon} + 2^{-c^2/\log e + 1}} = \alpha.
\end{align}
\end{proof}

\begin{remark}\label{rmk:error_forth_moment}
For most $\varepsilon$-LDP mechanisms $q(\cdot|\svbx)$, the term $\Expectation_{q}\lb \|\svbz - \svbx\|^4\rb$ in \eqref{eq:mrc_utility} can be well-controlled. For instance, for \emph{$\SubsetSelection$} and \emph{$\PrivUnit$}, the output spaces are bounded, and therefore, $\sqrt{\Expectation_{q}[\|\svbz - \svbx\|^4]}$ is of the same order as $\Expectation_{q}\big[\|\svbz - \svbx\|^2\big]$. Therefore, by making $\alpha$ small enough (in Theorem~\ref{theorem:mrc_accuracy_local}) i.e. by increasing $c$, the estimation error of \emph{$\MRC$} can be arbitrarily close to the estimation error of the underlying scheme it is simulating.
\end{remark}

\subsection{Privacy of \texorpdfstring{$\MRC$}{MRC}}
\label{appendix:privacy_mrc}
\subsubsection{Pure Privacy of \texorpdfstring{$\MRC$}{MRC}}\label{appendix:pure_privacy_mrc}
In this section, we prove Theorem \ref{theorem:mrc_pure_privacy} i.e., we show that $\piMRC$ is a $2\varepsilon$-LDP mechanism.
\mrcpureprivacy*
\begin{proof}
For any $\svbx, \svbx' \in \cX, \svbz \in \cZ$, using the definition of an $\varepsilon$-LDP mechanism, we have
\begin{align}
    q(\svbz|\svbx) \leq \exp(\varepsilon) q(\svbz|\svbx'). \label{eq:ldp_definition}
\end{align}
For any $\svbx, \svbx' \in \cX, \{\svbz_k\}_{k=1}^{N} \in \cZ^N$ and $k \in [N]$, we have
\begin{align}
    \frac{\piMRC_{\svbx}(k)}{\piMRC_{\svbx'}(k)} & \stackrel{(a)}{=} \frac{q(\svbz_k | \svbx)}{q(\svbz_k | \svbx')} \times \frac{\sum_{k'} q(\svbz_{k'} | \svbx')/p(\svbz_{k'})}{\sum_{k'} q(\svbz_{k'} | \svbx)/p(\svbz_{k'})} \\
    & \stackrel{(b)}{\leq} \exp(\varepsilon) \times \frac{\sum_{k'} \exp(\varepsilon) q(\svbz_{k'} | \svbx)/p(\svbz_{k'})}{\sum_{k'} q(\svbz_{k'} | \svbx)/p(\svbz_{k'})} \\
    & = \exp(\varepsilon) \times \frac{\exp(\varepsilon) \sum_{k'}  q(\svbz_{k'} | \svbx)/p(\svbz_{k'})}{\sum_{k'} q(\svbz_{k'} | \svbx)/p(\svbz_{k'})} \\
    & = \exp(2\varepsilon).
\end{align}
where $(a)$ follows from the definition of $\piMRC$ and $(b)$ follows from \eqref{eq:ldp_definition}.
\end{proof}

\subsubsection{Approximate Privacy of \texorpdfstring{$\MRC$}{MRC}}\label{appendix:approx_privacy_mrc}
In this section, we prove Theorem \ref{theorem:mrc_approximate_privacy} i.e., we provide the approximate DP guarantee of $\piMRC$.
\mrcapproxprivacy*
\begin{proof}
Fix any $\svbx \in \cX$. Let us define the following random variable:
\begin{align}
w(\svbz|\svbx) = q(\svbz|\svbx) / p(\svbz). \label{eq:importance_weights}
\end{align}
Assuming $\svbz \sim p(\cdot)$, the expected value of the random variable $w(\svbz|\svbx)$ is 
\begin{align}
\Expectation_{p}[w(\svbz|\svbx)] = \Expectation_{p}[q(\svbz|\svbx) / p(\svbz)] = \int_{\svbz \in \cZ} q(\svbz|\svbx) = 1.
\end{align}
Further, the random variable $w(\svbz|\svbx)$ can be bounded as follows:
\begin{align}
| w(\svbz|\svbx) | = | q(\svbz|\svbx) / p(\svbz) | \stackrel{(a)}{\leq} \exp(\varepsilon).
\end{align}
where $(a)$ follows from the assumption on $p(\cdot)$. Therefore, we have
\begin{align}
    \Probability\bigg( \bigg| \frac{1}{N}\sum_{k=1}^{N} w(\svbz_k|\svbx) - 1 \bigg| \geq a_0 \bigg) & \stackrel{(a)}{\leq}  2 \exp\bigg(\frac{-2Na_0^2}{(\exp(\varepsilon)-\exp(-\varepsilon))^2}\bigg) \\
    & \leq 2 \exp\bigg(\frac{-2Na_0^2}{\exp(2\varepsilon)}\bigg) \stackrel{(b)}{=} \delta \label{eq:hoeffdings_inequality}
\end{align}
where $(a)$ follows from Hoeffding's inequality and $(b)$ follows from the definition of $a_0$ and $N$. Now, for any $\svbx, \svbx' \in \cX, \{\svbz_k\}_{k=1}^{N} \in \cZ^N$ and $k \in [N]$, we have
\begin{align}
    \frac{\piMRC_{\svbx}(k)}{\piMRC_{\svbx'}(k)} & 
    \stackrel{(a)}{=} \frac{q(\svbz_k | \svbx)}{q(\svbz_k | \svbx')} \times \frac{\sum_{k'} q(\svbz_{k'} | \svbx')/p(\svbz_{k'})}{\sum_{k'} q(\svbz_{k'} | \svbx)/p(\svbz_{k'})} \\
    & \stackrel{(b)}{=} \frac{q(\svbz_k | \svbx)}{q(\svbz_k | \svbx')} \times \frac{\sum_{k'} w(\svbz_{k'}|\svbx')}{\sum_{k'} w(\svbz_{k'}|\svbx)} \\
    & \stackrel{(c)}{\leq} \exp(\varepsilon) \times \frac{\frac{1}{N}\sum_{k'} w(\svbz_{k'}|\svbx')}{\frac{1}{N}\sum_{k'} w(\svbz_{k'}|\svbx)} \label{eq:approx_dp_intermediate}
\end{align}
where $(a)$ follows from the definition of $\piMRC$, $(b)$ follows from \eqref{eq:importance_weights}
and $(c)$ follows from \eqref{eq:ldp_definition}. Now, using \eqref{eq:hoeffdings_inequality} in \eqref{eq:approx_dp_intermediate}, we have with probability at least $1-\delta$:
\begin{align}
    \frac{\piMRC_{\svbx}(k)}{\piMRC_{\svbx'}(k)} \leq \exp(\varepsilon) \times \frac{1+a_0}{1-a_0} \stackrel{(a)}{=} \exp(\varepsilon + \varepsilon_0)
\end{align}
 where $(a)$ follows from the definition of $\varepsilon_0$.
\end{proof}

\section{Modified Minimal Random Coding}
\label{appendix:mmrc}

Let $q(\svbz|\svbx)$ be an $\varepsilon$-LDP cap-based mechanism (see definition \ref{def:cap}) for all $\svbx \in \cX$ and $\svbz \in \cZ$. Let $p(\svbz)$ be the uniform distribution over $\cZ$ and let $\{\svbz_k\}_{k=1}^{N}$ be $N$ candidates drawn from $p(\svbz)$. Let $\theta$ denote the fraction of candidates inside the $\msf{Cap}_{\svbx}$ associated with $q(\svbz|\svbx)$. Let $\piMMRC$ be the distribution over the indices $k \in [N]$ under modified minimal random coding $(\MMRC)$ obtained from Algorithm \ref{alg:mmrc}.
Recall that $\piMMRC(k)$ is bounded by an upper threshold $t_u$ and a lower threshold $t_l$ (Section~\ref{subsec:mmrc}),
\begin{align}
    t_u &= \frac{1}{N} \times \frac{c_1(\varepsilon, d)}{\Expectation[\theta] c_1(\varepsilon, d) + (1 - \Expectation[\theta]) c_2(\varepsilon, d)}, & 
    t_l &= \frac{1}{N} \times \frac{c_2(\varepsilon, d)}{\Expectation[\theta] c_1(\varepsilon, d) + (1 - \Expectation[\theta]) c_2(\varepsilon, d)}.
\end{align}

Similar to $\piMRC$, $\piMMRC$ can be be viewed as a function that maps $\svbx$ and $(\svbz_1,...,\svbz_N)$ to a distribution in $[N]$. However, to reduce clutter, we will generally omit the dependence on $\svbx$ and $(\svbz_1,...,\svbz_N)$. Further, since  $\piMMRC$ depends on $(\svbz_1,...,\svbz_N)$ only through $\theta$, we will sometimes show this dependence as $\piMMRC_{\svbx, \theta}$. 

Finally, let $\qMMRC$ denote the distribution of $\svbz_K$ where $K \sim \piMMRC$.
That is, with $\delta$ denoting the Dirac delta function:
\begin{align}
    \qMMRC(\svbz | \svbx) \coloneqq \sum_{k} \piMMRC(k) \delta(\svbz - \svbz_k). \label{eq:qmmrc}
\end{align}

\subsection{Privacy of \texorpdfstring{$\MMRC$}{MMRC}}\label{appendix:privacy_mmrc}
In this section, we prove Theorem \ref{theorem:mmrc_privacy} i.e., we show that $\piMMRC$ is a $\varepsilon$-LDP mechanism.

\mmrcprivacy*
\begin{proof}
For any $\varepsilon$-LDP cap-based $q(\cdot | \svbx)$, we have the following from \eqref{eq:ldp} and \eqref{eq:cap_mechanism}:
\begin{align}
 \frac{c_1(\varepsilon,d)}{c_2(\varepsilon,d)} \leq \exp(\varepsilon). \label{eq:cap-based-bound}
\end{align}
Further, the modification of $\piMRC$ to $\piMMRC$ ensures that \eqref{eq:modification} is true, that is,
\begin{align}
    t_l \leq \piMMRC(k) \leq t_u ~ \forall k \in [N].
\end{align}
Therefore, for any $\svbx, \svbx' \in \cX$ and $k \in [N]$, we have
\begin{align}
    \frac{\piMMRC_{\svbx}(k)}{\piMMRC_{\svbx'}(k)} & \leq \frac{t_u}{t_l} \stackrel{(a)}{=} \frac{c_1(\varepsilon,d)}{c_2(\varepsilon,d)} \stackrel{(b)}{\leq} \exp(\varepsilon),
\end{align}
where $(a)$ follows from the definitions of $t_u$ and $t_l$ and $(b)$ follows from \eqref{eq:cap-based-bound}.
\end{proof}

\subsection{Supporting Lemmas to prove the utility of \texorpdfstring{$\MMRC$}{MMRC}}
\label{appendix:supporting_lemma_mmrc_utility}

To prove Theorem~\ref{theorem:mmrc_accuracy_local_wrt_mrc} (Section \ref{appendix:utility_mmrc}), we prove that the expected KL divergence between $\piMRC$ and $\piMMRC$ can be controlled arbitrarily when the number of candidates is of the right order (Lemma \ref{lemma:expected_kl}).
To prove Lemma~\ref{lemma:expected_kl}, we first show that the KL divergence between $\piMRC$ and $\piMMRC$, for a given fraction of candidates inside the $\msf{Cap}_{\svbx}$, can be bounded in terms of $\varepsilon$ (Lemma \ref{lemma:kl}).

\subsubsection{The KL divergence between \texorpdfstring{$\piMRC$}{} and \texorpdfstring{$\piMMRC$}{} is small}
\begin{lemma}\label{lemma:kl}
Let $q(\svbz|\svbx)$ be an $\varepsilon$-LDP cap-based mechanism. Let $p(\svbz)$ be the uniform distribution over $\cZ$ and let $\{\svbz_k\}_{k=1}^{N}$ be $N$ candidates drawn from $p(\svbz)$. Let $\theta$ denote the fraction of candidates inside the $\msf{Cap}_{\svbx}$ associated with $q(\svbz|\svbx)$. Let $\piMRC$ be the distribution over the indices $k \in [N]$ under $\MRC$ obtained from Algorithm \ref{alg:mrc} and $\piMMRC$ be the distribution over the indices $k \in [N]$ under $\MMRC$ obtained from Algorithm \ref{alg:mmrc}. Then,
\begin{align}
    \KLD{\piMRC_{\svbx, \theta}(\cdot)}{\piMMRC_{\svbx, \theta}(\cdot)} \leq \varepsilon \log e
\end{align}
\end{lemma}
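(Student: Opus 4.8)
The plan is to compute the KL divergence directly, exploiting the fact that, for a cap-based mechanism with a uniform reference distribution, both $\piMRC$ and $\piMMRC$ are piecewise constant — they take one value on indices inside $\msf{Cap}_{\svbx}$ and another value outside. Write $a \coloneqq \Expectation[\theta] = \Probability_{\svbz \sim \Unif(\cZ)}(\svbz \in \msf{Cap}_{\svbx})$, which by Definition \ref{def:cap} does not depend on $\svbx$, and abbreviate $c_1 \coloneqq c_1(\varepsilon,d)$, $c_2 \coloneqq c_2(\varepsilon,d)$. From \eqref{eq:special_pi_mrc}, on the $N\theta$ indices inside the cap $\piMRC(k) = \frac{1}{N}\cdot\frac{c_1}{\theta c_1 + (1-\theta)c_2}$ and on the $N(1-\theta)$ indices outside $\piMRC(k) = \frac{1}{N}\cdot\frac{c_2}{\theta c_1 + (1-\theta)c_2}$. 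The modification to $\piMMRC$ (Algorithm \ref{alg:mmrc}) replaces these by the thresholded/renormalized values, which are again constant on each of the two groups. So $\KLD{\piMRC}{\piMMRC} = N\theta\,\piMRC_{\mathrm{in}}\ln\frac{\piMRC_{\mathrm{in}}}{\piMMRC_{\mathrm{in}}} + N(1-\theta)\,\piMRC_{\mathrm{out}}\ln\frac{\piMRC_{\mathrm{out}}}{\piMMRC_{\mathrm{out}}}$, i.e. the KL divergence of the induced two-point distribution $(P_{\mathrm{in}}, P_{\mathrm{out}}) = (N\theta\,\piMRC_{\mathrm{in}}, N(1-\theta)\,\piMRC_{\mathrm{out}})$ versus its $\MMRC$ counterpart.

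Next I would split into the three cases of Section \ref{subsec:mmrc}. In case (A), $\theta = a$ and $\piMMRC = \piMRC$, so the divergence is $0 \le \varepsilon\log e$ trivially. In case (B), $\theta < a$: the inside-cap mass is capped at $N\theta t_u = \frac{N\theta}{N}\cdot\frac{c_1}{a c_1 + (1-a)c_2}$ and the outside mass is set to $1$ minus that; compared with the $\MRC$ masses these differ only through replacing the denominator $\theta c_1 + (1-\theta) c_2$ by $a c_1 + (1-a) c_2$. Case (C) is symmetric with the roles of the two groups and of $t_l$ swapped. In each case the pointwise ratios $\piMRC_{\mathrm{in}}/\piMMRC_{\mathrm{in}}$ and $\piMRC_{\mathrm{out}}/\piMMRC_{\mathrm{out}}$ are ratios of the two possible denominators (or their renormalized analogues), and I would bound each such ratio by $\max\{c_1/c_2,\, c_2/c_1\}^{\pm}$... more precisely, since $\theta c_1 + (1-\theta)c_2$ and $a c_1 + (1-a) c_2$ both lie in $[c_2, c_1]$, their ratio lies in $[c_2/c_1, c_1/c_2]$, hence by \eqref{eq:cap-based-bound} every logarithmic factor $\big|\ln(\piMRC_\bullet/\piMMRC_\bullet)\big|$ is at most $\ln(c_1/c_2) \le \varepsilon$. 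Since $\KLD{\piMRC}{\piMMRC}$ is a convex combination (with weights $P_{\mathrm{in}}, P_{\mathrm{out}}$ summing to $1$) of these logarithmic factors, it is bounded by $\varepsilon$; converting to $\log$ base $2$ gives the claimed $\varepsilon\log e$.

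The main obstacle is bookkeeping rather than any genuine difficulty: one must verify that the renormalized masses in cases (B) and (C) (the quantities $\frac{1-N\theta t_u}{N(1-\theta)}$ and $\frac{1-N(1-\theta)t_l}{N\theta}$ from Algorithm \ref{alg:mmrc}) are indeed sandwiched so that their ratio to the corresponding $\piMRC$ mass stays in $[c_2/c_1, c_1/c_2]$ — i.e. that renormalization does not blow up the ratio beyond the bound coming from changing the denominator. This should follow from the elementary fact (noted in Section \ref{subsec:mmrc}) that $\theta \mapsto \theta c_1 + (1-\theta)c_2$ is increasing, so that in case (B) ($\theta < a$) we have $\theta c_1 + (1-\theta)c_2 < a c_1 + (1-a)c_2$ and the cap-and-renormalize operation moves mass from the (now over-weighted) inside group to the outside group while keeping both masses within a factor $c_1/c_2$ of their $\MRC$ values; case (C) is handled symmetrically. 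Once these two inequalities are checked, the bound on the KL divergence follows immediately from the convex-combination argument above.
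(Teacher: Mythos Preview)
Your proposal is correct and follows essentially the same route as the paper: both arguments split into the three cases $\theta = \Expectation[\theta]$, $\theta < \Expectation[\theta]$, $\theta > \Expectation[\theta]$, exploit that $\piMRC$ and $\piMMRC$ are constant on the two groups (inside/outside $\msf{Cap}_{\svbx}$), and finish via $c_1/c_2 \le e^{\varepsilon}$.

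One small simplification compared to your plan: the paper does \emph{not} bound the absolute value of every log-ratio. Instead, in case (B) it observes directly that the outside-cap log-ratio is non-positive (because renormalization increases $\piMMRC_{\mathrm{out}}$), so that term can simply be dropped from the KL sum; what remains is the common factor $\log\frac{\Expectation[\theta] c_1 + (1-\Expectation[\theta])c_2}{\theta c_1 + (1-\theta)c_2} \le \log(c_1/c_2) \le \varepsilon\log e$, which is just a ratio of the two denominators and needs no bookkeeping about the renormalized mass. Case (C) is symmetric. This sidesteps entirely the verification you flagged as the ``main obstacle'' --- checking that the renormalized outside (resp.\ inside) value stays within a factor $c_1/c_2$ of its $\piMRC$ counterpart. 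Your absolute-value bound does hold (it follows from $\piMMRC(k)\in[t_l,t_u]$ together with the observation that, in case (B), $\piMRC_{\mathrm{out}} > t_l$ since $\theta < \Expectation[\theta]$), but it is stronger than necessary; noticing the sign of each log-ratio makes the argument shorter.
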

\begin{proof}
We consider three different cases depending on whether $\theta = \Expectation[\theta]$, $\theta < \Expectation[\theta]$ or $\theta > \Expectation[\theta]$.
\begin{enumerate}
    \item[1.]  For $\theta = \Expectation[\theta]$, we have $\piMRC_{\svbx, \theta}(\cdot) = \piMMRC_{\svbx, \theta}(\cdot)$. Therefore,
    \begin{align}
    \KLD{\piMRC_{\svbx, \theta}(\cdot)}{\piMMRC_{\svbx, \theta}(\cdot)} = \KLD{\piMRC_{\svbx, \theta}(\cdot)}{\piMRC_{\svbx, \theta}(\cdot)} = 0 \leq \varepsilon \log e. \label{eq:KL_0}
    \end{align}
    \item[2.]
If $\theta < \Expectation[\theta]$, then $\piMRC$ violates the upper threshold $t_u$ so that $\piMMRC(k) = t_u$ for all $k \in \msf{Cap}_{\svbx}$ and we have
\begin{align}
    & \KLD{\piMRC_{\svbx, \theta}(\cdot)}{\piMMRC_{\svbx, \theta}(\cdot)} = \sum_{k} \piMRC_{\svbx, \theta}(k) \log \dfrac{\piMRC_{\svbx, \theta}(k)}{\piMMRC_{\svbx, \theta}(k)} \\
    & \stackrel{(a)}{=} \sum_{k \in \msf{Cap}_{\svbx}} \frac{1}{N} \times \frac{ c_1(\varepsilon,d)}{c_2(\varepsilon,d) +  \theta (c_1(\varepsilon,d) - c_2(\varepsilon,d))} \log \dfrac{c_2(\varepsilon,d) + \Expectation[\theta] \times (c_1(\varepsilon,d) - c_2(\varepsilon,d))}{c_2(\varepsilon,d) +  \theta \times (c_1(\varepsilon,d) - c_2(\varepsilon,d))}\\ 
    &  \qquad  + \sum_{k \notin \msf{Cap}_{\svbx}} \frac{1}{N} \times \dfrac{c_2(\varepsilon,d)}{c_2(\varepsilon,d) +  \theta (c_1(\varepsilon,d) - c_2(\varepsilon,d))} \bigg[ \log \dfrac{c_2(\varepsilon,d)+ \Expectation[\theta] \times (c_1(\varepsilon,d) - c_2(\varepsilon,d))}{c_2(\varepsilon,d) +  \theta \times (c_1(\varepsilon,d) - c_2(\varepsilon,d))} \\
    &  \qquad \qquad + \log \dfrac{(1-\theta) \times c_2(\varepsilon,d)}{ (1-\Expectation[\theta]) \times c_2(\varepsilon,d) + (\Expectation[\theta] - \theta) c_1(\varepsilon,d)} \bigg] \\
    & \stackrel{(b)}{=} \dfrac{\theta c_1(\varepsilon,d)}{c_2(\varepsilon,d) +  \theta (c_1(\varepsilon,d) - c_2(\varepsilon,d))} \log \dfrac{c_2(\varepsilon,d) + \Expectation[\theta] \times (c_1(\varepsilon,d) - c_2(\varepsilon,d))}{c_2(\varepsilon,d) +  \theta \times (c_1(\varepsilon,d) - c_2(\varepsilon,d))} \\ 
    & \qquad + \dfrac{(1-\theta)c_2(\varepsilon,d)}{c_2(\varepsilon,d) +  \theta (c_1(\varepsilon,d) - c_2(\varepsilon,d))} \bigg[ \log \dfrac{c_2(\varepsilon,d)+ \Expectation[\theta] \times (c_1(\varepsilon,d) - c_2(\varepsilon,d))}{c_2(\varepsilon,d) +  \theta \times (c_1(\varepsilon,d) - c_2(\varepsilon,d))} \\
    &  \qquad \qquad + \log \dfrac{(1-\theta) \times c_2(\varepsilon,d)}{ (1-\Expectation[\theta]) \times c_2(\varepsilon,d) + (\Expectation[\theta] - \theta) c_1(\varepsilon,d)} \bigg]  \\
    & = \log \dfrac{c_2(\varepsilon,d) + \Expectation[\theta] \times (c_1(\varepsilon,d) - c_2(\varepsilon,d))}{c_2(\varepsilon,d) +  \theta \times (c_1(\varepsilon,d) - c_2(\varepsilon,d))} \\
    &  \qquad + \dfrac{(1-\theta)c_2(\varepsilon,d)}{c_2(\varepsilon,d) +  \theta (c_1(\varepsilon,d) - c_2(\varepsilon,d))} \bigg[ \log \dfrac{(1-\theta) \times c_2(\varepsilon,d)}{ (1-\Expectation[\theta]) \times c_2(\varepsilon,d) + (\Expectation[\theta] - \theta) c_1(\varepsilon,d)} \bigg]  \\
    & \stackrel{(c)}{\leq} \log \dfrac{c_2(\varepsilon,d) + \Expectation[\theta] \times (c_1(\varepsilon,d) - c_2(\varepsilon,d))}{c_2(\varepsilon,d) +  \theta \times (c_1(\varepsilon,d) - c_2(\varepsilon,d))} \\ & \stackrel{(d)}{\leq}  \log \bigg( \frac{c_2(\varepsilon,d) + \Expectation[\theta] \times (c_1(\varepsilon,d) - c_2(\varepsilon,d))}{c_2(\varepsilon,d)} \bigg) \label{eq:KL_1} \\
    & \stackrel{(e)}{\leq} \log \frac{c_1(\varepsilon,d)}{c_2(\varepsilon,d)}  \stackrel{(f)}{\leq} \varepsilon \log e
\end{align}
where $(a)$ follows from the definition of $\piMRC_{\svbx, \theta}(k)$ and $\piMMRC_{\svbx, \theta}(k)$, $(b)$ follows because $|\{k : k \in \msf{Cap}_{\svbx}\}| = \theta N$ and $|\{k : k \notin \msf{Cap}_{\svbx}\}| = (1-\theta) N$, $(c)$ follows because $\log \frac{(1-\theta) \times c_2(\varepsilon,d)}{ (1-\Expectation[\theta]) \times c_2(\varepsilon,d) + (\Expectation[\theta] - \theta) c_1(\varepsilon,d)} \leq 0$, $(d)$ follows because $\theta \geq 0$, $(e)$ follows because $\Expectation[\theta] \leq 1$, and $(f)$ follows because $c_1(\varepsilon,d) / c_2(\varepsilon,d) \leq \exp{(\varepsilon)}$.

\item[3.]
For $\theta > \Expectation[\theta]$, we have
\begin{align}
  & \KLD{\piMRC_{\svbx, \theta}(\cdot)}{\piMMRC_{\svbx, \theta}(\cdot)} = \sum_{\svbz_i} \piMRC_{\svbx, \theta}(k) \log \dfrac{\piMRC_{\svbx, \theta}(k)}{\piMMRC_{\svbx, \theta}(k)} \\
  & \stackrel{(a)}{=} \sum_{k \notin \msf{Cap}_{\svbx}} \frac{1}{N} \times  \dfrac{c_2(\varepsilon,d)}{c_2(\varepsilon,d) +  \theta (c_1(\varepsilon,d) - c_2(\varepsilon,d))} \log \dfrac{c_2(\varepsilon,d) + \Expectation[\theta] \times (c_1(\varepsilon,d) - c_2(\varepsilon,d))}{c_2(\varepsilon,d) +  \theta \times (c_1(\varepsilon,d) - c_2(\varepsilon,d))} \\ 
    & \qquad +  \sum_{k \in \msf{Cap}_{\svbx}} \frac{1}{N} \times  \dfrac{c_1(\varepsilon,d)}{c_2(\varepsilon,d) +  \theta (c_1(\varepsilon,d) - c_2(\varepsilon,d))} \bigg[ \log \dfrac{c_2(\varepsilon,d)+ \Expectation[\theta] \times (c_1(\varepsilon,d) - c_2(\varepsilon,d))}{c_2(\varepsilon,d) +  \theta \times (c_1(\varepsilon,d) - c_2(\varepsilon,d))} \\
    &  \qquad \qquad + \log \dfrac{\theta c_1(\varepsilon,d)}{\Expectation[\theta] c_1(\varepsilon,d) + (\theta - \Expectation[\theta]) \times c_2(\varepsilon,d)} \bigg] \\
    & \stackrel{(b)}{=}   \dfrac{(1-\theta) \times c_2(\varepsilon,d)}{c_2(\varepsilon,d) +  \theta (c_1(\varepsilon,d) - c_2(\varepsilon,d))} \log \dfrac{c_2(\varepsilon,d) + \Expectation[\theta] \times (c_1(\varepsilon,d) - c_2(\varepsilon,d))}{c_2(\varepsilon,d) +  \theta \times (c_1(\varepsilon,d) - c_2(\varepsilon,d))} \\ 
    & \qquad + \dfrac{\theta c_1(\varepsilon,d)}{c_2(\varepsilon,d) +  \theta (c_1(\varepsilon,d) - c_2(\varepsilon,d))} \bigg[ \log \dfrac{c_2(\varepsilon,d)+ \Expectation[\theta] \times (c_1(\varepsilon,d) - c_2(\varepsilon,d))}{c_2(\varepsilon,d) +  \theta \times (c_1(\varepsilon,d) - c_2(\varepsilon,d))}  \\
    &  \qquad \qquad + \log \dfrac{\theta c_1(\varepsilon,d)}{\Expectation[\theta] c_1(\varepsilon,d) + (\theta - \Expectation[\theta]) \times c_2(\varepsilon,d)} \bigg] \\
    & \stackrel{(c)}{\leq}  \dfrac{\theta c_1(\varepsilon,d)}{c_2(\varepsilon,d) +  \theta (c_1(\varepsilon,d) - c_2(\varepsilon,d))} \log \bigg( \dfrac{ \theta c_1(\varepsilon,d)}{\Expectation[\theta] c_1(\varepsilon,d) + (\theta - \Expectation[\theta]) \times c_2(\varepsilon,d)} \bigg) \label{eq:KL_2}\\
    & \stackrel{(d)}{\leq} \log \bigg( \dfrac{ c_1(\varepsilon,d)}{\Expectation[\theta] c_1(\varepsilon,d) + (1 - \Expectation[\theta]) \times c_2(\varepsilon,d)} \bigg)  \stackrel{(e)}{\leq} \log \frac{c_1(\varepsilon,d)}{c_2(\varepsilon,d)} \stackrel{(f)}{\leq} \varepsilon \log e  
\end{align}
where $(a)$ follows from the definition of $\piMRC_{\svbx, \theta}(k)$ and $\piMMRC_{\svbx, \theta}(k)$, $(b)$ follows because $|\{k : k \in \msf{Cap}_{\svbx}\}| = \theta N$ and $|\{k : k \notin \msf{Cap}_{\svbx}\}| = (1-\theta) N$, $(c)$ follows because $\log \dfrac{c_2(\varepsilon,d) + \Expectation[\theta] \times (c_1(\varepsilon,d) - c_2(\varepsilon,d))}{c_2(\varepsilon,d) +  \theta \times (c_1(\varepsilon,d) - c_2(\varepsilon,d))} \leq 0$, $(d)$ follows because $\theta \leq 1$, $(e)$ follows because $\Expectation[\theta] \geq 0$, and $(f)$ follows because $c_1(\varepsilon,d) / c_2(\varepsilon,d) \leq \exp{(\varepsilon)}$.
\end{enumerate}
\end{proof}

\subsubsection{The expected KL divergence between the distribution of indices in \texorpdfstring{$\MRC$}{MRC} and \texorpdfstring{$\MMRC$}{MMRC} can be controlled arbitrarily when \texorpdfstring{$N$}{N} is in the right order} 
\begin{lemma}\label{lemma:expected_kl}
Let $q(\svbz|\svbx)$ be an $\varepsilon$-LDP cap-based mechanism. Let $p(\svbz)$ be the uniform distribution over $\cZ$ and let $\{\svbz_k\}_{k=1}^{N}$ be $N$ candidates drawn from $p(\svbz)$. Let $\theta$ denote the fraction of candidates inside the $\msf{Cap}_{\svbx}$ associated with $q(\svbz|\svbx)$. Let $\piMRC$ be the distribution over the indices $k \in [N]$ under $\MRC$ obtained from Algorithm \ref{alg:mrc} and $\piMMRC$ be the distribution over the indices $k \in [N]$ under $\MMRC$ obtained from Algorithm \ref{alg:mmrc}. Then,
\begin{align}
    \Expectation_{\theta}\lb\KLD{\piMRC_{\svbx, \theta}(\cdot)}{\piMMRC_{\svbx, \theta}(\cdot)}\rb \leq \rho \times \log e \times ( 1  +  \varepsilon)
\end{align}
where $\rho \in (0,1)$ is a free variable that is related to $N$ as follows:
\begin{align}
    N = \frac{2\lp\exp(\varepsilon)-1\rp^2}{\rho^2}  \ln\frac{2}{\rho}.  
\end{align}
\end{lemma}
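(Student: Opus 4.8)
The plan is to sharpen the per-$\theta$ bound of Lemma~\ref{lemma:kl} so that it scales with $|\theta-\Expectation[\theta]|$, and then to split the expectation over $\theta$ into a neighbourhood of $\Expectation[\theta]$ (where the sharp bound is $O(\rho)$) and its complement (which has probability $O(\rho)$ by concentration). Throughout write $c_i$ for $c_i(\varepsilon,d)$, and recall that $\Expectation[\theta]=\Probability_{\svbz\sim\Unif(\cZ)}(\svbz\in\msf{Cap}_{\svbx})$, which by Definition~\ref{def:cap} satisfies $\Expectation[\theta]\ge c_2/2c_1$, while $\theta=\frac1N\sum_k\Indicator(\svbz_k\in\msf{Cap}_{\svbx})$ is an empirical mean of $N$ i.i.d.\ $\mathrm{Bernoulli}(\Expectation[\theta])$ variables.

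\textbf{Step 1: a $\theta$-sensitive refinement of Lemma~\ref{lemma:kl}.} I would re-run its three cases but stop one inequality earlier. For $\theta=\Expectation[\theta]$ the divergence is $0$. For $\theta<\Expectation[\theta]$, retracing case~2 up to (but not through) the step that drops $\theta$ gives $\KLD{\piMRC_{\svbx,\theta}(\cdot)}{\piMMRC_{\svbx,\theta}(\cdot)}\le\log\frac{c_2+\Expectation[\theta](c_1-c_2)}{c_2+\theta(c_1-c_2)}$; writing this as $\log\!\big(1+\tfrac{(\Expectation[\theta]-\theta)(c_1-c_2)}{c_2+\theta(c_1-c_2)}\big)$, using $\log(1+u)\le u\log e$, bounding the denominator below by $c_2$, and using $c_1/c_2\le e^\varepsilon$, this is at most $(\Expectation[\theta]-\theta)(e^\varepsilon-1)\log e$. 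For $\theta>\Expectation[\theta]$, I would start from \eqref{eq:KL_2}, discard the prefactor (which is $\le1$), observe that the log argument equals $1+\tfrac{(\theta-\Expectation[\theta])(c_1-c_2)}{\Expectation[\theta]c_1+(\theta-\Expectation[\theta])c_2}$, bound the denominator below by $\Expectation[\theta]c_1\ge c_2/2$, and again use $\log(1+u)\le u\log e$ together with $c_1/c_2\le e^\varepsilon$; this gives $\KLD{\cdot}{\cdot}\le 2(\theta-\Expectation[\theta])(e^\varepsilon-1)\log e$. Collecting the cases,
\[
\KLD{\piMRC_{\svbx,\theta}(\cdot)}{\piMMRC_{\svbx,\theta}(\cdot)}\le 2\,|\theta-\Expectation[\theta]|\,(e^\varepsilon-1)\log e,
\]
while Lemma~\ref{lemma:kl} already gives the complementary bound $\varepsilon\log e$.

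\textbf{Step 2: averaging over $\theta$.} Let $\cA=\{|\theta-\Expectation[\theta]|\le \rho/(2(e^\varepsilon-1))\}$. On $\cA$, Step~1 yields $\KLD{\cdot}{\cdot}\le\rho\log e$; on $\cA^c$ I use $\KLD{\cdot}{\cdot}\le\varepsilon\log e$. Since $\theta$ is an average of $N$ i.i.d.\ $[0,1]$-valued variables, Hoeffding's inequality gives $\Probability(\cA^c)\le 2\exp\!\big(-\tfrac{N\rho^2}{2(e^\varepsilon-1)^2}\big)$, and substituting $N=\tfrac{2(e^\varepsilon-1)^2}{\rho^2}\ln\tfrac2\rho$ makes the right-hand side exactly $\rho$. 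Hence
\[
\Expectation_\theta\!\left[\KLD{\piMRC_{\svbx,\theta}(\cdot)}{\piMMRC_{\svbx,\theta}(\cdot)}\right]\le \rho\log e\cdot\Probability(\cA)+\varepsilon\log e\cdot\Probability(\cA^c)\le \rho\log e+\rho\,\varepsilon\log e=\rho\log e\,(1+\varepsilon),
\]
which is the claim.

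I expect the main obstacle to be Step~1: the proof of Lemma~\ref{lemma:kl} discards the $\theta$-dependence almost immediately, so its estimates have to be redone while keeping $\theta$, and in the $\theta>\Expectation[\theta]$ branch one must invoke the cap-based lower bound $\Expectation[\theta]\ge c_2/2c_1$ to recover the factor $(e^\varepsilon-1)$ — a cruder denominator estimate there produces a factor $e^{2\varepsilon}$ instead, which would force a larger $N$ and break the stated $N$–$\rho$ relation. The remaining ingredients (the $\log(1+u)\le u\log e$ steps, the Hoeffding bound, and the final arithmetic) are routine.
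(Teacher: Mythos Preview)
Your proposal is correct and follows essentially the same route as the paper: both refine the case analysis of Lemma~\ref{lemma:kl} to keep the $|\theta-\Expectation[\theta]|$ dependence (using $\log(1+u)\le u\log e$, the denominator bounds $c_2$ and $\Expectation[\theta]c_1\ge c_2/2$ in the two branches, and $c_1/c_2\le e^\varepsilon$), then split the expectation via Hoeffding's inequality with radius $\hat\rho=\rho/(2(e^\varepsilon-1))$ so that the ``bad'' event has probability at most $\rho$ and the ``good'' event contributes at most $\rho\log e$. The only cosmetic difference is that the paper parameterises by $\hat\rho=\sqrt{\ln(2/\rho)/(2N)}$ and substitutes $N$ at the end, whereas you fix the radius directly; the arithmetic and conclusion are identical.
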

\begin{proof}
Let $\theta$ denote the fraction of candidates inside the cap, i.e.,
\begin{align}
\theta = \dfrac{1}{N} \sum_{k =1}^{N} \Indicator(\svbz_k \in \msf{Cap}_{\svbx}).
\end{align}
Therefore, we have
\begin{align}
\Expectation[\theta] = \Probability_{\svbz_k \sim \Unif(\cZ)}\lp \svbz_k \in \msf{Cap}_{\svbx} \rp = \Probability_{\svbz \sim \Unif(\cZ)}\lp \svbz \in \msf{Cap}_{\svbx} \rp. \label{eq:expec_prob_rel}
\end{align}
Now, using the Hoeffding's inequality, we have $\Probability\lbp \lba \theta- \E[\theta] \rba \geq \sqrt{\frac{\ln\lp 2/\rho \rp}{2N}} \rbp \leq \rho$. Letting $\hat{\rho} = \sqrt{\frac{\ln\lp 2/\rho \rp}{2N}}$, we have
\begin{align}
    & \Expectation_{\theta}\lb\KLD{\piMRC_{\svbx, \theta}(\cdot)}{\piMMRC_{\svbx, \theta}(\cdot)}\rb \\
    & \qquad = \sum_{\theta} \Probability(\theta) \times \KLD{\piMRC_{\svbx, \theta}(\cdot)}{\piMMRC_{\svbx, \theta}(\cdot)}\\
    & \qquad = \sum_{\theta : \lba \theta- \E[\theta] \rba \leq \hat{\rho}} \Probability(\theta) \KLD{\piMRC_{\svbx, \theta}(\cdot)}{\piMMRC_{\svbx, \theta}(\cdot)} +  \hspace{-5mm} \sum_{\theta : \lba \theta- \E[\theta] \rba > \hat{\rho}} \Probability(\theta) \KLD{\piMRC_{\svbx, \theta}(\cdot)}{\piMMRC_{\svbx, \theta}(\cdot)}
    \label{eq:expected_kl_1}
\end{align}
Now, we will upper bound $\KLD{\piMRC_{\svbx, \theta}(\cdot)}{\piMMRC_{\svbx, \theta}(\cdot)}$ whenever $\theta$ is such that $\lba \theta- \E[\theta] \rba \leq \hat{\rho}$. As in the proof of Lemma \ref{lemma:kl}, we have three different cases depending on whether $\theta = \Expectation[\theta]$, $\theta < \Expectation[\theta]$ or $\theta > \Expectation[\theta]$. 
\begin{enumerate}
\item[1.]  For $\theta = \Expectation[\theta]$, using \eqref{eq:KL_0}, we have $\KLD{\piMRC_{\svbx, \theta}(\cdot)}{\piMMRC_{\svbx, \theta}(\cdot)} = 0$. 
\item[2.]
For $\theta < \Expectation[\theta]$, using \eqref{eq:KL_1}, we have
\begin{align}
\KLD{\piMRC_{\svbx, \theta}(\cdot)}{\piMMRC_{\svbx, \theta}(\cdot)} & \leq  \log \dfrac{c_2(\varepsilon,d) + \Expectation[\theta] \times (c_1(\varepsilon,d) - c_2(\varepsilon,d))}{c_2(\varepsilon,d) +  \theta \times (c_1(\varepsilon,d) - c_2(\varepsilon,d))} \\
& \stackrel{(a)}{=} \log \dfrac{c_2(\varepsilon,d) + \Expectation[\theta] \times (c_1(\varepsilon,d) - c_2(\varepsilon,d))}{c_2(\varepsilon,d) +  \lp\Expectation[\theta] - t\rp \times (c_1(\varepsilon,d) - c_2(\varepsilon,d))} \\
& = \log \lp 1+ \dfrac{t \times (c_1(\varepsilon,d) - c_2(\varepsilon,d))}{c_2(\varepsilon,d) +  \lp\Expectation[\theta] - t\rp \times (c_1(\varepsilon,d) - c_2(\varepsilon,d))} \rp\\
& \stackrel{(b)}{\leq} \dfrac{\log e \times t \times (c_1(\varepsilon,d) - c_2(\varepsilon,d))}{c_2(\varepsilon,d) +  \lp\Expectation[\theta] - t\rp \times (c_1(\varepsilon,d) - c_2(\varepsilon,d))}\\
& \stackrel{(c)}{\leq} \log e \times t \times \lp \frac{c_1(\varepsilon,d) - c_2(\varepsilon,d)}{c_2(\varepsilon,d)} \rp\\ 
& \stackrel{(d)}{\leq} \log e \times \hat{\rho} \times \lp \frac{c_1(\varepsilon,d) - c_2(\varepsilon,d)}{c_2(\varepsilon,d)} \rp \label{eq:ekl_1}
\end{align}
where $(a)$ follows by letting $\theta = \Expectation[\theta] - t$ with $t > 0$, $(b)$ follows by using $\log (1+x) \leq x \log e$ for $x = \frac{t \times (c_1(\varepsilon,d) - c_2(\varepsilon,d))}{1+ \lp\Expectation[\theta] - t\rp \times (c_1(\varepsilon,d) - c_2(\varepsilon,d))} > 0$, $(c)$ follows because $\Expectation[\theta] - t = \theta \geq 0$, and $(d)$ follows because $t = \Expectation[\theta] - \theta \leq \hat{\rho}$.
\item[3.]
    For $\theta > \Expectation[\theta]$, using \eqref{eq:KL_2}, we have
\begin{align}
& \KLD{\piMRC_{\svbx, \theta}(\cdot)}{\piMMRC_{\svbx, \theta}(\cdot)} \\
& \qquad \leq \dfrac{\theta c_1(\varepsilon,d)}{c_2(\varepsilon,d) +  \theta (c_1(\varepsilon,d) - c_2(\varepsilon,d))} \log \bigg( \dfrac{ \theta c_1(\varepsilon,d)}{\Expectation[\theta] c_1(\varepsilon,d) + (\theta  - \Expectation[\theta] ) \times c_2(\varepsilon,d)} \bigg) \\
&  \qquad \stackrel{(a)}{\leq} \log \bigg( \dfrac{ \theta c_1(\varepsilon,d)}{\Expectation[\theta] c_1(\varepsilon,d) + (\theta  - \Expectation[\theta]) \times c_2(\varepsilon,d)} \bigg) \\
&  \qquad \stackrel{(b)}{=} \log \bigg( \dfrac{ (\Expectation[\theta] + t)c_1(\varepsilon,d)}{\Expectation[\theta] c_1(\varepsilon,d) + t c_2(\varepsilon,d)} \bigg) \\
& \qquad = \log \bigg(1+ \dfrac{ t\lp c_1(\varepsilon,d)-c_2(\varepsilon,d) \rp}{\Expectation[\theta] c_1(\varepsilon,d) + t c_2(\varepsilon,d)} \bigg) \\
& \qquad \stackrel{(c)}{\leq} \frac{\log e \times  t\lp c_1(\varepsilon,d)-c_2(\varepsilon,d) \rp}{\Expectation[\theta] c_1(\varepsilon,d) + t c_2(\varepsilon,d)} \\
& \qquad \stackrel{(d)}{\leq} \frac{\log e \times t\lp c_1(\varepsilon,d)-c_2(\varepsilon,d) \rp}{\Expectation[\theta] c_1(\varepsilon,d)} \stackrel{(e)}{\leq} \frac{\log e \times \hat{\rho}\lp c_1(\varepsilon,d)-c_2(\varepsilon,d) \rp}{\Expectation[\theta] c_1(\varepsilon,d)} \label{eq:ekl_2}
\end{align}
where $(a)$ follows because $\theta \leq 1$, $(b)$ follows by letting $\theta = \Expectation[\theta] + t$ with $t > 0$, $(c)$ follows by using $\log (1+x) \leq x \log e$ for $x = \frac{ t\lp c_1(\varepsilon,d)-c_2(\varepsilon,d) \rp}{\Expectation[\theta] c_1(\varepsilon,d) + t c_2(\varepsilon,d)} > 0$, $(d)$ follows because $t > 0$, and $(e)$ follows because $t = \theta - \Expectation[\theta]  \leq \hat{\rho}$.
\end{enumerate}
Therefore, for $\theta$ such that $\lba \theta- \E[\theta] \rba \leq \hat{\rho}$, we have the following from \eqref{eq:ekl_1} and \eqref{eq:ekl_2}:
\begin{align}
    \KLD{\piMRC_{\svbx, \theta}(\cdot)}{\piMMRC_{\svbx, \theta}(\cdot)} & \leq \dfrac{  \log e \times \hat{\rho}\lp c_1(\varepsilon,d)-c_2(\varepsilon,d) \rp}{\min \lbp c_2(\varepsilon,d), \Expectation[\theta] c_1(\varepsilon,d)\rbp} \\
    & \stackrel{(a)}{=} \dfrac{ \log e \times \hat{\rho}\lp c_1(\varepsilon,d)-c_2(\varepsilon,d) \rp}{\min \lbp c_2(\varepsilon,d), c_1(\varepsilon,d) \Probability_{\svbz \sim \Unif(\cZ)}\lp \svbz \in \msf{Cap}_{\svbx} \rp \rbp} \\
    & \stackrel{(b)}{\leq} \dfrac{ 2 \log e \times   \hat{\rho}\lp c_1(\varepsilon,d) - c_2(\varepsilon,d) \rp}{c_2(\varepsilon,d)} \\
    & \stackrel{(c)}{\leq} 2 \log e \times  \hat{\rho} \lp \exp(\varepsilon) - 1 \rp
    \label{eq:expected_kl_2}
\end{align}
where $(a)$ follows from \eqref{eq:expec_prob_rel}, $(b)$ follows because $\Probability_{\svbz \sim \Unif(\cZ)}\lp \svbz \in \msf{Cap}_{\svbx} \rp \geq c_2(\varepsilon,d)/2c_1(\varepsilon,d)$ from the definition of cap-based mechanisms, and $(c)$ follows because $c_1(\varepsilon,d) / c_2(\varepsilon,d) \leq \exp{(\varepsilon)}$.

Using \eqref{eq:expected_kl_2} and Lemma \ref{lemma:kl} in \eqref{eq:expected_kl_1}, we have
\begin{align}
 & \Expectation_{\theta}\lb\KLD{\piMRC_{\svbx, \theta}(\cdot)}{\piMMRC_{\svbx, \theta}(\cdot)}\rb \\
 & \qquad \leq \sum_{\theta : \lba \theta- \E[\theta] \rba \leq \hat{\rho}} \Probability(\theta) \times 2\log e \times \hat{\rho} \lp \exp(\varepsilon) - 1 \rp +  \sum_{\theta : \lba \theta- \E[\theta] \rba > \hat{\rho}} \Probability(\theta) \times \varepsilon \log e \\
 & \qquad \stackrel{(a)}{\leq} 2 \log e \times \hat{\rho} \lp \exp(\varepsilon) - 1 \rp + \rho \varepsilon \log e
 \label{eq:expected_kl_3} \\
  & \qquad \stackrel{(b)}{\leq} 2 \log e \times \sqrt{\dfrac{\ln\lp 2/\rho \rp}{2N}} \lp \exp(\varepsilon) - 1 \rp + \rho \varepsilon \log e\\
  & \qquad \stackrel{(c)}{\leq} \log e \times \rho (1+\varepsilon)
\end{align}
where $(a)$ follows because $\Probability \lp\lba \theta- \E[\theta] \rba \leq \hat{\rho}\rp \leq 1$ and $\Probability\lp\lba \theta- \E[\theta] \rba \geq \hat{\rho}\rp\ \leq \rho$, $(b)$ follows by plugging in $\hat{\rho} = \sqrt{\frac{\ln\lp 2/\rho \rp}{2N}}$, and $(c)$ follows by plugging in $N$.
\end{proof}

\subsection{Utility of \texorpdfstring{$\MMRC$}{MMRC}}\label{appendix:utility_mmrc}

In this section, we first prove Theorem \ref{theorem:mmrc_accuracy_local_wrt_mrc} i.e., we show that, with number of candidates exponential in $\varepsilon$, samples drawn from $\qMMRC$ will be similar to the samples drawn from $\qMRC$ in terms of $\ell_2$ error.

Then, in Theorem \ref{theorem:mmrc_accuracy_local}, we show that $\MMRC$ can simulate any $\varepsilon$-LDP cap-based mechanism in a nearly lossless fashion with about $\varepsilon$ bits of communication.


\subsubsection{Utility of \texorpdfstring{$\MMRC$}{MMRC} with respect to \texorpdfstring{$\qMRC$}{}}\label{appendix:utility_mmrc_mrc}
\mmrcaccuracylocalwrtmrc*
\begin{proof}
We will first upper bound the difference between $\Expectation_{\qMMRC} \big[ \lV  \svbz - \svbx \rV^2_2  \big]$ and $\Expectation_{\qMRC} \big[ \lV  \svbz - \svbx \rV^2_2  \big]$ in terms of the total variation distance between $\qMRC$ and $\qMMRC$. Due to a property of the total variation distance \citep[e.g.,][]{SCV16}, we have
\begin{align}
  \Expectation_{\qMMRC} \big[ \lV  \svbz - \svbx \rV^2_2  \big]   - \Expectation_{\qMRC}\big[ \lV  \svbz - \svbx \rV^2_2  \big] \leq \max_{\svbx, \svbz} \lV  \svbz- \svbx \rV^2_2  \times   \TV{\qMRC(\svbz | \svbx)}{\qMMRC(\svbz | \svbx)}. \label{eq:mmrc_generic_1}
\end{align}
Next, we will upper bound the total variation distance between $\qMRC$ and $\qMMRC$ using Pinsker's inequality as follows:
\begin{align}
  \TV{\qMRC(\svbz | \svbx)}{\qMMRC(\svbz | \svbx)} \leq \sqrt{\frac{1}{2 \log e} \KLD{\qMRC(\svbz | \svbx)}{\qMMRC(\svbz | \svbx)}}. \label{eq:mmrc_generic_2}
\end{align}
Next, we will upper bound the KL divergence between $\qMRC(\svbz | \svbx)$ and $\qMMRC(\svbz | \svbx)$. To that end, for every $\svbx \in \cX$, let $\pMRC(\svbz_1, \cdots, \svbz_N, K, \svbz_K | \svbx)$ denote the joint distribution of the candidates $\svbz_1, \cdots, \svbz_N$ drawn from $p(\rvbz)$,  the transmitted index $K$ under $\MRC$, and the sample $\svbz_K$ corresponding to $K$. We have
\begin{align}
& \pMRC(\svbz_1, \cdots, \svbz_N, K, \svbz_K | \svbx)\\
& \qquad =  p(\svbz_1, \cdots, \svbz_N | \svbx)  \times \pMRC(K | \svbz_1, \cdots, \svbz_N, \svbx) \times \pMRC(\svbz_K | \svbz_1, \cdots, \svbz_N, K, \svbx)\\
& \qquad \stackrel{(a)}{=} p(\svbz_1, \cdots, \svbz_N) \times \pMRC(K | \svbz_1, \cdots, \svbz_N, \svbx) \times \pMRC(\svbz_K | \svbz_1, \cdots, \svbz_N, K, \svbx)\\
& \qquad \stackrel{(b)}{=} p(\svbz_1, \cdots, \svbz_N) \times \piMRC_{\svbx, \theta}(k) \times \pMRC(\svbz_K | \svbz_1, \cdots, \svbz_N, K, \svbx)\\
& \qquad \stackrel{(c)}{=} p(\svbz_1, \cdots, \svbz_N) \times \piMRC_{\svbx, \theta}(k) \label{eq:mmrc_generic_3}
\end{align}
where $(a)$ follows because $\svbz_1, \cdots, \svbz_N$ are independent of $\svbx$, $(b)$ follows because $\pMRC(K | \svbz_1, \cdots, \svbz_N, \svbx) = \piMRC_{\svbx, \theta}(k)$, and $(c)$ follows because $\pMRC(\svbz_K | \svbz_1, \cdots, \svbz_N, K, \svbx) = 1$ (note that $\svbz_K$ can be viewed as a function of $(\svbz_1,...,\svbz_N, K)$). 

Similarly, for every $\svbx \in \cX$, let $\pMMRC(\svbz_1, \cdots, \svbz_N, K, \svbz_K | \svbx)$ denote the joint distribution of the candidates $\svbz_1, \cdots, \svbz_N$ drawn from $p(\rvbz)$,  the transmitted index $K$ under $\MMRC$, and the sample $\svbz_K$ corresponding to $K$. We have
\begin{align}
& \pMMRC(\svbz_1, \cdots, \svbz_N, K, \svbz_K | \svbx) \\
& \qquad =  p(\svbz_1, \cdots, \svbz_N | \svbx)  \times \pMMRC(K | \svbz_1, \cdots, \svbz_N, \svbx) \times \pMMRC(\svbz_K | \svbz_1, \cdots, \svbz_N, K, \svbx)\\
& \qquad \stackrel{(a)}{=} p(\svbz_1, \cdots, \svbz_N) \times \pMMRC(K | \svbz_1, \cdots, \svbz_N, \svbx) \times \pMMRC(\svbz_K | \svbz_1, \cdots, \svbz_N, K, \svbx)\\
& \qquad \stackrel{(b)}{=} p(\svbz_1, \cdots, \svbz_N) \times \piMMRC_{\svbx, \theta}(k) \times \pMMRC(\svbz_K | \svbz_1, \cdots, \svbz_N, K, \svbx)\\
& \qquad \stackrel{(c)}{=} p(\svbz_1, \cdots, \svbz_N) \times \piMMRC_{\svbx, \theta}(k) \label{eq:mmrc_generic_4}
\end{align}
where $(a)$ follows because $\svbz_1, \cdots, \svbz_N$ are independent of $\svbx$, $(b)$ follows because $\pMMRC(K | \svbz_1, \cdots, \svbz_N, \svbx) = \piMMRC_{\svbx, \theta}(k)$, and $(c)$ follows because $\pMMRC(\svbz_K | \svbz_1, \cdots, \svbz_N, K, \svbx) = 1$. 

We are now in a position to upper bound the KL divergence between $\qMRC(\svbz_K | \svbx)$ and $\qMMRC(\svbz_K | \svbx)$:
\begin{align}
 \KLD{\qMRC(\svbz | \svbx)}{\qMMRC(\svbz | \svbx)} &  \stackrel{(a)}{\leq}  \KLD{\pMRC(\svbz_1, \cdots, \svbz_N, K, \svbz_K | \svbx)}{\pMMRC(\svbz_1, \cdots, \svbz_N, K, \svbz_K | \svbx)}\\
 &  \stackrel{(b)}{=}   \KLD{p(\svbz_1, \cdots, \svbz_N) \times \piMRC_{\svbx, \theta}(k)}{p(\svbz_1, \cdots, \svbz_N) \times \piMMRC_{\svbx, \theta}(k)}\\
 &  \stackrel{(c)}{=} \Expectation_{\svbz_1, \cdots, \svbz_N} \lb\KLD{\piMRC_{\svbx, \theta}(k)}{\piMMRC_{\svbx, \theta}(k)}\rb \\
 &  \stackrel{(d)}{=}\Expectation_{\theta}\lb\KLD{\piMRC_{\svbx, \theta}(\cdot)}{\piMMRC_{\svbx, \theta}(\cdot)}\rb  \stackrel{(e)}{\leq} \rho \times \log e \times (1+\varepsilon) \label{eq:mmrc_generic_5}
\end{align}
where $(a)$ follows because by the chain rule for KL-divergence, $(b)$ follows from \eqref{eq:mmrc_generic_3} and \eqref{eq:mmrc_generic_4}, $(c)$ follows by the definition of KL-divergence, $(d)$ follows because $\piMRC$ and $\piMMRC$ depend on $\svbz_1, \cdots, \svbz_N$ only via $\theta$ for cap-based mechanisms, and $(e)$ follows from Lemma \ref{lemma:expected_kl} because $N = \frac{2\lp\exp(\varepsilon)-1\rp^2}{\rho^2}  \ln\frac{2}{\rho}  $. Combining \eqref{eq:mmrc_generic_1}, \eqref{eq:mmrc_generic_2},and \eqref{eq:mmrc_generic_5}, we have 
\begin{align}
    \Expectation_{\qMMRC} \big[ \lV  \svbz - \svbx \rV^2_2  \big]  \leq \Expectation_{\qMRC} \big[ \lV  \svbz - \svbx \rV^2_2  \big]  + \sqrt{\frac{\rho (1+\varepsilon)}{2}} \times \max_{\svbx, \svbz} \lV  \svbz - \svbx \rV^2_2. \label{eq:mrc_mmrc_relation}
\end{align}
\end{proof}

\begin{remark}\label{rmk:error_max}
For bounded $\varepsilon$-LDP mechanisms such as \emph{$\PrivUnit$} and \emph{$\SubsetSelection$}, the term $\max_{\svbx, \svbz} \lV  \svbz - \svbx \rV^2_2$ in \eqref{eq:mmrc_utility} is of the same order as $\Expectation_{q}\big[\|\svbz - \svbx\|^2\big]$. Therefore, by picking a large $N$ in Theorem \ref{theorem:mmrc_accuracy_local_wrt_mrc} (i.e. $\log N \geq C \varepsilon$ for a sufficiently large $C$), $\rho$ can be made arbitrarily small and the estimation error of \emph{$\MMRC$} can be arbitrarily close to the estimation error of \emph{$\MRC$}.
\end{remark}

\subsubsection{Utility of \texorpdfstring{$\MMRC$}{MMRC} with respect to \texorpdfstring{$q$}{q}}\label{appendix:utility_mmrc_q}

\begin{theorem}\label{theorem:mmrc_accuracy_local}
Consider any input alphabet $\cX$, output alphabet $\cZ$, data $\svbx \in \cX$, and $\varepsilon$-LDP cap-based mechanism $q(\cdot|\svbx)$.
Let the reference distribution $p(\cdot)$ be the uniform distribution on $\cZ$.
Let $N$ denote the number of candidates. Then, $\qMMRC$ is such that 
\begin{align}
    \Expectation_{\qMMRC} \big[ \lV  \svbz - \svbx \rV^2_2  \big]  \leq \Expectation_{q} \big[ \lV  \svbz - \svbx \rV^2_2  \big]  + \sqrt{\frac{\rho (1+\varepsilon)}{2}} \times \max_{\svbx, \svbz} \lV  \svbz - \svbx \rV^2_2  + \frac{2\alpha }{1-\alpha} \times \sqrt{\Expectation_{q}[\|\svbz - \svbx\|^4]}
\end{align}
holds with probability at least $1 - 2\alpha$ where
\begin{align}
\alpha = \sqrt{2^{-c\varepsilon} + 2^{-c^2/\log e + 1}}.
\end{align}
and $c$ and $\rho \in (0,1)$ are free variables such that
\begin{align}
    N = \max\bigg\{ 2^{(\log e + 4c) \varepsilon}, \frac{2\lp\exp(\varepsilon)-1\rp^2 }{\rho^2}  \ln\frac{2}{\rho} \bigg\}
\end{align}
\end{theorem}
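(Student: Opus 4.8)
The claim follows by chaining Theorem~\ref{theorem:mmrc_accuracy_local_wrt_mrc} with Theorem~\ref{theorem:mrc_accuracy_local} via the triangle inequality, using $\Expectation_{\qMRC}\big[\lV\svbz-\svbx\rV_2^2\big]$ as the intermediate quantity: first control the gap between $\qMMRC$ and $\qMRC$, then the gap between $\qMRC$ and the target $q$, and add them. Concretely, since $q(\cdot|\svbx)$ is an $\varepsilon$-LDP cap-based mechanism and $p(\cdot)$ is uniform on $\cZ$, Theorem~\ref{theorem:mmrc_accuracy_local_wrt_mrc} applies as stated, so for the free parameter $\rho\in(0,1)$ tied to $N$ by $N=\frac{2(\exp(\varepsilon)-1)^2}{\rho^2}\ln\frac{2}{\rho}$,
\begin{align}
\Expectation_{\qMMRC}\big[\lV\svbz-\svbx\rV_2^2\big]\;\leq\;\Expectation_{\qMRC}\big[\lV\svbz-\svbx\rV_2^2\big]+\sqrt{\tfrac{\rho(1+\varepsilon)}{2}}\,\max_{\svbx,\svbz}\lV\svbz-\svbx\rV_2^2 .
\end{align}

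Next, I would check the hypothesis of Theorem~\ref{theorem:mrc_accuracy_local}, namely $|\ln(q(\svbz|\svbx)/p(\svbz))|\le\varepsilon$ for all $\svbx,\svbz$: with $p(\cdot)$ uniform, $q(\svbz|\svbx)$ takes only the values $c_1(\varepsilon,d)$ or $c_2(\varepsilon,d)$, and the normalizations of $q$ and $p$ force the convex combination $\pi c_1(\varepsilon,d)+(1-\pi)c_2(\varepsilon,d)$ (with $\pi=\Pr_{\svbz\sim\Unif(\cZ)}(\svbz\in\msf{Cap}_{\svbx})$) to equal the constant value of $p$; combined with $c_1(\varepsilon,d)/c_2(\varepsilon,d)\le e^\varepsilon$, this gives $q(\svbz|\svbx)/p(\svbz)\in[e^{-\varepsilon},e^\varepsilon]$. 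Hence, with the free parameter $c\ge 0$ and $N=2^{(\log e+4c)\varepsilon}$, Theorem~\ref{theorem:mrc_accuracy_local} yields, with probability at least $1-2\alpha$ over the candidate draw,
\begin{align}
\Expectation_{\qMRC}\big[\lV\svbz-\svbx\rV_2^2\big]\;\leq\;\Expectation_{q}\big[\lV\svbz-\svbx\rV_2^2\big]+\tfrac{2\alpha}{1-\alpha}\sqrt{\Expectation_{q}\big[\lV\svbz-\svbx\rV_2^4\big]},\qquad \alpha=\sqrt{2^{-c\varepsilon}+2^{-c^2/\log e+1}},
\end{align}
keeping only the direction of~\eqref{eq:mrc_utility} that we need. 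Taking $N=\max\big\{2^{(\log e+4c)\varepsilon},\,\frac{2(\exp(\varepsilon)-1)^2}{\rho^2}\ln\frac{2}{\rho}\big\}$ makes $N$ at least as large as what either theorem requires — and for whichever term is not the maximizer one re-applies the corresponding result with the value of $c$ or $\rho$ for which it is tight, which only sharpens its bound — so substituting the second display into the first on the $(1-2\alpha)$-probability event produces exactly the asserted inequality.

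The point that needs care is the bookkeeping of the candidate randomness across the two invocations. Theorem~\ref{theorem:mmrc_accuracy_local_wrt_mrc} is phrased for the candidate-averaged mechanisms $\qMRC,\qMMRC$, whereas Theorem~\ref{theorem:mrc_accuracy_local} holds only with high probability over the candidates, so one must read all three expectations $\Expectation_{\qMMRC}[\cdot],\Expectation_{\qMRC}[\cdot],\Expectation_{q}[\cdot]$ consistently — the first two conditionally on the realized candidates (equivalently, on the realized cap-fraction $\theta$), and $\Expectation_{q}[\cdot]$ not depending on the candidates at all. The $\sqrt{\rho(1+\varepsilon)/2}$ gap then localizes to the high-probability event on which $\theta$ concentrates about $\Expectation[\theta]$ — equivalently, on which the empirical mean of the importance weights concentrates about $1$, which is precisely the event underlying Theorem~\ref{theorem:mrc_accuracy_local} — so a single union bound of mass $2\alpha$ covers both steps and the failure probability is not inflated. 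I would finally record the nondegeneracy requirement $\alpha<1$ (so that $2\alpha/(1-\alpha)$ is finite and $1-2\alpha$ is a genuine probability, which forces $c$ to be large enough relative to $\varepsilon$), and appeal to Remark~\ref{rmk:error_forth_moment} and Remark~\ref{rmk:error_max}: for bounded mechanisms such as $\PrivUnit$ and $\SubsetSelection$, both $\sqrt{\Expectation_q[\lV\svbz-\svbx\rV_2^4]}$ and $\max_{\svbx,\svbz}\lV\svbz-\svbx\rV_2^2$ are of the same order as $\Expectation_q[\lV\svbz-\svbx\rV_2^2]$, so choosing $c$ and $1/\rho$ polynomially large in $\varepsilon$ keeps $\log N=O(\varepsilon)$ while making both error terms an arbitrarily small fraction of $\Expectation_q[\lV\svbz-\svbx\rV_2^2]$ — i.e.\ $\qMMRC$ simulates the cap-based mechanism near-losslessly at $O(\varepsilon)$ bits.
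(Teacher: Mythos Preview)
Your proposal is correct and follows the same approach as the paper, whose proof is the one-liner ``follows from Theorem~\ref{theorem:mrc_accuracy_local} and Theorem~\ref{theorem:mmrc_accuracy_local_wrt_mrc}.'' You in fact go further than the paper by explicitly verifying the $|\ln(q/p)|\le\varepsilon$ hypothesis for uniform $p$ and by flagging the candidate-randomness bookkeeping between the two theorems, a subtlety the paper leaves implicit.
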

\begin{proof}
The proof follows from Theorem \ref{theorem:mrc_accuracy_local} and Theorem \ref{theorem:mmrc_accuracy_local_wrt_mrc}.
\end{proof}

\section{Preliminary on \texorpdfstring{$\PrivUnit$}{PrivUnit}}
\label{appendix:privunit}

First, we briefly recap the $\PrivUnit$ mechanism ($\qPU$) proposed in \cite{BDFKR2018}. $\PrivUnit$ is a private sampling scheme when the input alphabet $\cX$ is the $d-$dimensional unit $\ell_2$ sphere $\sphere^{d-1}$. More formally, given a vector $\svbx \in \sphere^{d-1}$, $\PrivUnit$ (see Algorithm \ref{alg:privunit}) draws a vector $\svbz$ from a spherical cap $\{\svbz \in \sphere^{d-1} \mid
\lan \svbz, \svbx  \ran \ge \gamma\}$ with probability $p_0 \geq 1/2$ or from its complement $\{\svbz \in \sphere^{d-1} \mid \lan \svbz, \svbx \ran < \gamma\}$  with probability $1 - p_0$, where
$\gamma \in [0, 1]$ and $p_0$ are constants that trade accuracy and privacy. In other words, the conditional density $\qPU(\svbz | \svbx)$ is:
\begin{align}\label{eq:qpu}
    \qPU(\rvbz | \rvbx) = 
    \begin{cases}
      p_0 \times \dfrac{2}{A(1,d)I_{1-\gamma^2}(\frac{d-1}{2},\frac{1}{2})} & \text{if}\ \langle \rvbx, \rvbz \rangle \geq \gamma
      \\[10pt]
      (1-p_0) \times \dfrac{2}{2A(1,d)- A(1,d)I_{1-\gamma^2}(\frac{d-1}{2},\frac{1}{2})} & \text{otherwise}
    \end{cases}
\end{align}
where $A(1,d)$ denotes the area of $\sphere^{d-1}$ and $I_x(a,b)$ denotes the regularized incomplete beta function.

\begin{algorithm}[h]
\caption{Privatized Unit Vector: $\PrivUnit$}
\label{alg:privunit}
\begin{algorithmic}
\Require $\svbx \in \sphere^{d-1}$, $\gamma \in [0,1]$, $p_0 \ge 1/2$.
\State Draw random vector $\svbz$ according to the distribution
\State \begin{equation}
	  \label{eqn:w-flip-mechanism}
	  \svbz = \begin{cases}
	    \mbox{uniform on } \{\svbz \in \sphere^{d-1} \mid
	    \lan \svbz, \svbx 
	    \ran \ge \gamma\} & \mbox{with~probability~} p_0 \\
	    \mbox{uniform on } \{\svbz \in \sphere^{d-1} \mid
	    \lan \svbz, \svbx 
	    \ran < \gamma \} & \mbox{otherwise}.
	  \end{cases}
\end{equation}
\State Set $\alpha = \frac{d-1}{2}$, $\tau = \frac{1+\gamma}{2}$, and
\begin{equation}
  m_{\texttt{pu}} = \frac{(1 - \gamma^2)^\alpha}{2^{d-2} (d - 1)}
  \left[\frac{p_0}{B(\alpha,\alpha) - B(\tau; \alpha,\alpha)}
    - \frac{1 - p_0}{B(\tau; \alpha, \alpha)}\right]
  \label{eqn:norm-of-W}
\end{equation}
\Return $\hat{\svbx}^{\texttt{pu}} = \frac{\svbz}{m_{\texttt{pu}}}$
\end{algorithmic}
\end{algorithm}

Given its inputs $\svbx, \gamma,$ and $p_0$, Algorithm \ref{alg:privunit} returns an estimator $\hat{\svbx}^{\texttt{pu}} \coloneqq \svbz/m_{\texttt{pu}}$ which is differentially private and unbiased where $m_{\texttt{pu}}$ is a scaling factor. The choice of $\gamma$ described in Theorem \ref{thm:PrivUnitPriv} ensures differential privacy and the choice of the scaling factor $m$ described in \eqref{eqn:norm-of-W} ensures unbiasedness where
\begin{align}
  B(x;\alpha,\beta) \coloneqq \int_{0}^x
  t^{\alpha - 1} (1- t)^{\beta-1}dt
  ~~ \mbox{where} ~~
  B(\alpha,\beta) \coloneqq
  B(1;\alpha,\beta) = \frac{\Gamma(\alpha)
    \Gamma(\beta)}{\Gamma(\alpha+\beta)} \label{eq:incomplete_beta}
\end{align}
denotes the incomplete beta function. 

\subsection{\texorpdfstring{$\PrivUnit$}{PrivUnit} is a differentially private mechanism}
\label{subsubsec:privunit_dp}
The following theorem borrowed from \cite{BDFKR2018} describes the choice of $\gamma$ and provides the precise associated differential privacy guarantee of the $\PrivUnit$ mechanism.
\begin{theorem}[\cite{BDFKR2018}]
  \label{thm:PrivUnitPriv}
  Let $\gamma \in [0,1]$ and $p_0 = \frac{e^{\varepsilon_0}}{1 +
    e^{\varepsilon_0}}$. Then algorithm \emph{$\PrivUnit(\cdot, \gamma, p_0)$}
  is $\varepsilon = (\bar{\varepsilon}+\varepsilon_0)$-differentially private whenever $\gamma \ge 0$ is
  such that
  \begin{align}
  \begin{aligned}\label{eqn:sufficient-gamma}
      \bar{\varepsilon} & \ge
      \log\frac{ 1+\gamma \cdot \sqrt{ 2(d-1) / \pi } }{
        {\lp 1 - \gamma \cdot \sqrt{ 2(d-1) / \pi } \rp_+} },
      ~~~ \mbox{i.e.} ~~~
      \gamma \le \frac{e^{\bar{\varepsilon}} - 1}{e^{\bar{\varepsilon}} + 1} \sqrt{\frac{\pi}{2(d-1)}},\\
     or\\
       \bar{\varepsilon} & \ge 1/2 \log(d) + \log 6 - \frac{d - 1}{2}
       \log (1 - \gamma^2) + \log \gamma
       ~~ \mbox{and} ~~
       \gamma \ge \sqrt{\frac{2}{d}}.
  \end{aligned}
  \end{align}
\end{theorem}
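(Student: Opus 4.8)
The plan is to reduce the privacy claim to a single scalar estimate, exploiting that the output density in \eqref{eq:qpu} is piecewise constant. Write $I \coloneqq I_{1-\gamma^2}\!\left(\tfrac{d-1}{2},\tfrac12\right)$ for the regularized incomplete beta function appearing there. The first step is to record the geometric meaning of $I$: if $\svbz \sim \Unif(\sphere^{d-1})$ then $\lan \svbx,\svbz\ran$ has density proportional to $(1-t^2)^{(d-3)/2}$ on $[-1,1]$, and integrating over $[\gamma,1]$ identifies the cap mass $\mu \coloneqq \Probability_{\svbz \sim \Unif(\sphere^{d-1})}(\lan\svbx,\svbz\ran \ge \gamma) = I/2$, which is independent of $\svbx$. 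Consequently, relative to $\Unif(\sphere^{d-1})$, the density $\qPU(\svbz\mid\svbx)$ equals $p_0/\mu$ on the cap $C_\svbx \coloneqq \{\svbz : \lan\svbx,\svbz\ran \ge \gamma\}$ and $(1-p_0)/(1-\mu)$ off it. Next I would observe that for any $\svbx \neq \svbx'$ a point $\svbz$ lies in one of the four sets cut out by $C_\svbx$ and $C_{\svbx'}$, on two of which the ratio $\qPU(\svbz\mid\svbx)/\qPU(\svbz\mid\svbx')$ is $1$ and on the other two it is $\frac{p_0(1-\mu)}{(1-p_0)\mu}$ or its reciprocal; since $p_0 \ge \tfrac12 \ge \mu$ (the cap lies in a hemisphere because $\gamma\ge0$) the former exceeds $1$, and since $\svbx\neq\svbx'$ the two caps $C_\svbx,C_{\svbx'}$ cannot be nested, so $C_\svbx\setminus C_{\svbx'}$ has positive measure. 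Hence the worst-case ratio equals $\frac{p_0}{1-p_0}\cdot\frac{1-\mu}{\mu} = e^{\varepsilon_0}\cdot\frac{2-I}{I}$ (using $p_0/(1-p_0)=e^{\varepsilon_0}$), so $\PrivUnit(\cdot,\gamma,p_0)$ is $(\bar\varepsilon+\varepsilon_0)$-DP if and only if $\tfrac{2-I}{I}\le e^{\bar\varepsilon}$, i.e.\ if and only if $\mu = \tfrac12 I \ge \frac{1}{e^{\bar\varepsilon}+1}$. Everything then reduces to lower-bounding $\mu$ under each hypothesis on $\bar\varepsilon$.

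\emph{Small-$\gamma$ regime.} Here I would use that the density $t \mapsto (1-t^2)^{(d-3)/2}/B(\tfrac{d-1}{2},\tfrac12)$ of $\lan\svbx,\svbz\ran$ is nonincreasing on $[0,1]$ and at $t=0$ equals $\tfrac{1}{B((d-1)/2,1/2)} = \frac{\Gamma(d/2)}{\sqrt\pi\,\Gamma((d-1)/2)} \le \sqrt{\tfrac{d-1}{2\pi}}$, the last bound coming from log-convexity of $\Gamma$ (as $\Gamma(d/2) \le \sqrt{\Gamma((d-1)/2)\,\Gamma((d+1)/2)}$). Since this density carries mass $\tfrac12$ on $[0,1]$, subtracting its mass on $[0,\gamma]$ gives $\mu \ge \tfrac12 - \gamma\sqrt{\tfrac{d-1}{2\pi}}$, and requiring the right-hand side to be at least $\frac{1}{e^{\bar\varepsilon}+1}$ rearranges exactly to $\gamma\sqrt{\tfrac{2(d-1)}{\pi}} \le \frac{e^{\bar\varepsilon}-1}{e^{\bar\varepsilon}+1}$, which is the first displayed condition (equivalently $\gamma \le \frac{e^{\bar\varepsilon}-1}{e^{\bar\varepsilon}+1}\sqrt{\tfrac{\pi}{2(d-1)}}$; this already forces $\gamma\sqrt{2(d-1)/\pi} < 1$, so the positive-part truncation in the statement plays no role).

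\emph{Large-$\gamma$ regime.} When $\gamma \ge \sqrt{2/d}$ the linear bound above can be vacuous, so I would estimate $\mu$ directly. Substituting $t = \sqrt{1-u}$ turns the cap integral into $\int_\gamma^1 (1-t^2)^{(d-3)/2}\,dt = \tfrac12\int_0^{1-\gamma^2} u^{(d-3)/2}(1-u)^{-1/2}\,du$, whose integrand concentrates near $u = 1-\gamma^2$. A Laplace-type estimate --- restricting to $u \in \bigl[(1-\gamma^2)(1-\tfrac1d),\,1-\gamma^2\bigr]$, where $u^{(d-3)/2}$ varies by at most a bounded factor and $(1-u)^{-1/2} \le \sqrt2/\gamma$ because $\gamma^2 \ge 1/d$ --- gives $\int_\gamma^1 (1-t^2)^{(d-3)/2}\,dt \ge c\,\tfrac{(1-\gamma^2)^{(d-1)/2}}{d\,\gamma}$ for an explicit constant $c$. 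Dividing by $B(\tfrac{d-1}{2},\tfrac12) = \frac{\sqrt\pi\,\Gamma((d-1)/2)}{\Gamma(d/2)} \le \sqrt{\tfrac{2\pi}{d-2}}$ (again log-convexity of $\Gamma$) yields $\mu \ge c'\,\tfrac{(1-\gamma^2)^{(d-1)/2}}{\sqrt d\,\gamma}$, and careful bookkeeping of the constants produces $\mu \ge \frac{(1-\gamma^2)^{(d-1)/2}}{6\sqrt d\,\gamma}$. Then $\mu \ge \tfrac{1}{e^{\bar\varepsilon}+1}$ holds whenever $e^{\bar\varepsilon} \ge 6\sqrt d\,\gamma/(1-\gamma^2)^{(d-1)/2}$, which is the second displayed condition. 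In either regime $\mu \ge \frac{1}{e^{\bar\varepsilon}+1}$, hence the worst-case likelihood ratio is at most $e^{\varepsilon_0+\bar\varepsilon} = e^{\varepsilon}$ and $\PrivUnit(\cdot,\gamma,p_0)$ is $\varepsilon$-DP.

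\textbf{Main obstacle.} The reduction and the small-$\gamma$ case are essentially bookkeeping once the identity $\mu = I/2$ is in hand. The hard part will be the large-$\gamma$ regime: one needs a lower bound on the incomplete-beta tail $\int_\gamma^1 (1-t^2)^{(d-3)/2}\,dt$ of the sharp order $\tfrac{(1-\gamma^2)^{(d-1)/2}}{\sqrt d\,\gamma}$ --- the crude estimate $\int_\gamma^1 (1-t^2)^{(d-3)/2}\,dt \ge \tfrac{(1-\gamma^2)^{(d-1)/2}}{d-1}$ discards a factor $\sim 1/\gamma$ and is too weak --- and then to control the $\Gamma$-ratio bounds and the width of the Laplace window tightly enough that exactly $\tfrac12\log d + \log 6$ comes out.
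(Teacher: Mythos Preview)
The paper does not give its own proof of this theorem: it is stated in Appendix~\ref{appendix:privunit} as a preliminary result ``borrowed from \cite{BDFKR2018}'' and no argument is supplied beyond the citation. Your proposal reconstructs essentially the proof from \cite{BDFKR2018}: reduce the likelihood ratio to $e^{\varepsilon_0}\cdot\tfrac{1-\mu}{\mu}$ where $\mu$ is the cap mass, so that $\varepsilon$-LDP is equivalent to $\mu \ge \tfrac{1}{e^{\bar\varepsilon}+1}$, and then lower-bound $\mu$ in the small-$\gamma$ and large-$\gamma$ regimes respectively via the density-at-zero bound and a Laplace-type tail estimate on the incomplete beta integral. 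This is the right skeleton; your self-identified obstacle (getting the constant $6$ in the large-$\gamma$ regime) is real but is a matter of careful bookkeeping with the $\Gamma$-ratio and the window width rather than a missing idea.
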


Here, $\varepsilon$ can be viewed as the total privacy budget. Typically, $\mu$ fraction of this budget is allocated for the spherical cap threshold $\gamma$ and $1-\mu$ fraction is allocated to the probability parameter $p_0$ with which a particular spherical cap is chosen i.e., $\bar{\varepsilon} = \mu \varepsilon$ and $\varepsilon_0 = (1-\mu) \varepsilon$ for some $\mu \in [0,1]$. While the parameter $\mu$ can be optimized over as described in \cite{FT21}, we will view it as a constant for convenience. Our results on $\MRC$ and $\MMRC$ simulating $\PrivUnit$ can be easily extended to the setup where $\mu$ needs to be optimized over.

\subsection{\texorpdfstring{$\PrivUnit$}{PrivUnit} is unbiased and order-optimal}
The following lemma borrowed from \cite{BDFKR2018} shows that the output of the $\PrivUnit$ mechanism  (a) is unbiased, (b) has a bounded norm, and (c) has order-optimal utility.

\begin{proposition}[\cite{BDFKR2018}]
 Let $\hat{\svbx}^{\texttt{pu}}$ = \emph{$\PrivUnit(\svbx,\gamma, p_0)$} for some $\svbx \in \sphere^{d-1}$, $\gamma
  \in [0,1]$, and $p_0 \in [1/2, 1]$. Then, $\E[\hat{\svbx}^{\texttt{pu}}] = \svbx$. Further, assume that $0 \le \varepsilon \le d$. Then, there exists a numerical constant $c < \infty$ such that if
  $\gamma$ saturates either of the two inequalities
  \eqref{eqn:sufficient-gamma}, then $\gamma \gtrsim \min\{\varepsilon / \sqrt{d},
  \sqrt{\varepsilon / d}\}$, and 
  \begin{equation*}
    \ltwo{\hat{\svbx}^{\texttt{pu}}}
    \le c \cdot \sqrt{\frac{d}{\varepsilon} \vee
      \frac{d}{(e^\varepsilon - 1)^2}}.
  \end{equation*}
  Additionally, $\E[\ltwo{\hat{\svbx}^{\texttt{pu}} - \svbx}^2] \lesssim
  \frac{d}{\varepsilon} \vee \frac{d}{(e^\varepsilon - 1)^2}$.
 \label{proposition:ltwo-utility}
\end{proposition}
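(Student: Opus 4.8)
The plan is to establish the three claims in order, reducing the two norm statements to a single lower bound on the scalar $m_{\texttt{pu}}$ appearing in \eqref{eq:m}, and to borrow the delicate estimates from \cite{BDFKR2018}. For the \emph{unbiasedness}, I would first observe that, conditioned on $\svbx$, the law of $\svbz$ in \eqref{eqn:w-flip-mechanism} is invariant under every rotation fixing $\svbx$, so $\Expectation[\svbz\mid\svbx]$ is a scalar multiple of $\svbx$; writing it as $m\,\svbx$ we have $m=\Expectation[\lan\svbz,\svbx\ran\mid\svbx]$. Setting $t=\lan\svbz,\svbx\ran$, the density of $t$ under $\Unif(\sphere^{d-1})$ is proportional to $(1-t^2)^{(d-3)/2}$ on $[-1,1]$, hence
\begin{equation}
  m = p_0\,\frac{\int_\gamma^1 t(1-t^2)^{\frac{d-3}{2}}\,dt}{\int_\gamma^1 (1-t^2)^{\frac{d-3}{2}}\,dt} + (1-p_0)\,\frac{\int_{-1}^\gamma t(1-t^2)^{\frac{d-3}{2}}\,dt}{\int_{-1}^\gamma (1-t^2)^{\frac{d-3}{2}}\,dt}.
\end{equation}
The numerators are elementary since $\frac{d}{dt}(1-t^2)^{\frac{d-1}{2}}=-(d-1)\,t(1-t^2)^{\frac{d-3}{2}}$, and the substitution $u=(1+t)/2$ turns each denominator into an incomplete beta function with parameters $\alpha=(d-1)/2$ evaluated at $\tau=(1+\gamma)/2$. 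Assembling these pieces recovers exactly the constant in \eqref{eq:m} (equivalently \eqref{eqn:norm-of-W}), so $\hat{\svbx}^{\texttt{pu}}=\svbz/m_{\texttt{pu}}$ satisfies $\Expectation[\hat{\svbx}^{\texttt{pu}}]=\svbx$.

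Next, for the \emph{reduction}, note that $\ltwo{\svbz}=1$ always, so $\ltwo{\hat{\svbx}^{\texttt{pu}}}=1/m_{\texttt{pu}}$ deterministically, and from unbiasedness together with $\ltwo{\svbx}=1$ one gets $\Expectation[\ltwo{\hat{\svbx}^{\texttt{pu}}-\svbx}^2]=1/m_{\texttt{pu}}^2-1$. Hence both remaining assertions follow once I show $m_{\texttt{pu}}\gtrsim\sqrt{(\varepsilon\wedge(e^\varepsilon-1)^2)/d}$, equivalently $1/m_{\texttt{pu}}^2\lesssim\frac{d}{\varepsilon}\vee\frac{d}{(e^\varepsilon-1)^2}$; in parallel I must produce the stated lower bound on $\gamma$.

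To get the \emph{lower bounds on $\gamma$ and $m_{\texttt{pu}}$}, I would split the budget as $\bar\varepsilon=\mu\varepsilon$, $\varepsilon_0=(1-\mu)\varepsilon$ for a fixed constant $\mu\in(0,1)$. Saturating the first inequality in \eqref{eqn:sufficient-gamma} gives $\gamma\asymp\frac{e^{\bar\varepsilon}-1}{e^{\bar\varepsilon}+1}\sqrt{\pi/(2(d-1))}\gtrsim\min\{\bar\varepsilon,1\}/\sqrt d$ while $\bar\varepsilon\le\tfrac12\log d$, and the second branch handles larger budgets and yields $\gamma\gtrsim\sqrt{\varepsilon/d}$; together $\gamma\gtrsim\min\{\varepsilon/\sqrt d,\sqrt{\varepsilon/d}\}$. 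To bound $m_{\texttt{pu}}$ I would then estimate the bracket in \eqref{eq:m}: a $\mathrm{Beta}(\alpha,\alpha)$ variable concentrates about $1/2$ with standard deviation of order $1/\sqrt d$, so for $\gamma=O(1/\sqrt d)$ both $B(\tau;\alpha,\alpha)/B(1;\alpha,\alpha)$ and its complement stay bounded away from $0$ and $1$, and by Stirling the prefactor $\frac{(1-\gamma^2)^\alpha}{2^{d-2}(d-1)B(1;\alpha,\alpha)}$ is of order $1/\sqrt d$; combining with $p_0-\tfrac12=\tfrac12\tanh(\varepsilon_0/2)\asymp\min\{\varepsilon_0,1\}$ yields $m_{\texttt{pu}}\gtrsim(\min\{\varepsilon_0,1\}+c'\gamma\sqrt d)/\sqrt d$, and substituting the bound on $\gamma$ and optimizing over the constant split $\mu$ gives the target. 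For large $\varepsilon$, where $\gamma$ is bounded below by a constant, the cap measure $B(1;\alpha,\alpha)-B(\tau;\alpha,\alpha)$ is exponentially small, the first term of the bracket dominates, and $m_{\texttt{pu}}=\Theta(1)$, consistent with $\Expectation[\ltwo{\hat{\svbx}^{\texttt{pu}}-\svbx}^2]=O(1)=O(d/(e^\varepsilon-1)^2)$ in that range.

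The hard part will be the non-asymptotic control of the incomplete beta function $B(\tau;\alpha,\alpha)$ near its center $\tau=1/2$, carried out uniformly across the regimes $\gamma=O(1/\sqrt d)$ (a Gaussian-type approximation) and $\gamma=\Omega(1/\sqrt d)$ (tail and Stirling estimates), together with checking that the two branches of \eqref{eqn:sufficient-gamma} glue into $\gamma\gtrsim\min\{\varepsilon/\sqrt d,\sqrt{\varepsilon/d}\}$. These are precisely the computations of \cite{BDFKR2018}, which I would cite for the fine bookkeeping rather than redo from scratch.
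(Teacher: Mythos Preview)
The paper does not prove this proposition at all: it is stated in Appendix~\ref{appendix:privunit} as ``borrowed from \cite{BDFKR2018}'' and simply cited without any argument. Your sketch is therefore not competing against any in-paper proof; it is a reconstruction of the original argument from \cite{BDFKR2018}, and the approach you outline---rotational symmetry for unbiasedness, the reduction $\Expectation[\ltwo{\hat{\svbx}^{\texttt{pu}}-\svbx}^2]=1/m_{\texttt{pu}}^2-1$, and then lower bounding $m_{\texttt{pu}}$ via beta-tail estimates split across the two regimes of \eqref{eqn:sufficient-gamma}---is indeed the structure of that original proof. Your own acknowledgment that the incomplete-beta estimates would be cited from \cite{BDFKR2018} is appropriate, since that is where the actual work lies; the present paper does nothing more.
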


\subsection{\texorpdfstring{$\PrivUnit$}{PrivUnit} is a cap-based mechanism}
The randomness in the estimator $\hat{\svbx}^{\texttt{pu}}$ obtained from the $\PrivUnit(\svbx,\gamma, p_0)$ mechanism comes from $\svbz$. Therefore, we obtain a convenient expression for the conditional distribution of $\svbz$ conditioned on $\svbx$ i.e., $\qPU(\svbz | \svbx)$. Define $\msf{Cap}_{\svbx} \coloneqq \{\svbz | \langle \svbx, \svbz \rangle \geq \gamma\}$. Recall from \eqref{eqn:sufficient-gamma} that $\gamma$ is a function of $\varepsilon$ and $d$. 
Further, as described in Section \ref{subsubsec:privunit_dp}, when the budget split parameter $\mu$ is known, $p_0$ can viewed as a function of $\varepsilon$. Then, the conditional distribution $\qPU(\svbz | \svbx)$ in \eqref{eq:qpu} can be written as follows:
\begin{align}
    \qPU(\svbz | \svbx) = 
    \begin{cases}
      c_1(\varepsilon, d) & \text{if}\ \svbz \in \msf{Cap}_{\svbx}
      \\[10pt]
      c_2(\varepsilon, d) & \text{if}\ \svbz \notin \msf{Cap}_{\svbx}
    \end{cases} \label{eq:privunit_density}
\end{align}
where $c_1(\varepsilon, d) = p_0 \times \dfrac{2}{A(1,d)I_{1-\gamma^2}(\frac{d-1}{2},\frac{1}{2})}$ and $c_2(\varepsilon, d) = (1-p_0) \times \dfrac{2}{2A(1,d)- A(1,d)I_{1-\gamma^2}(\frac{d-1}{2},\frac{1}{2})}$ are functions of $\varepsilon$ and $d$. 

Further, $\Probability_{\svbz \sim \Unif(\cZ)}\lp \svbz \in \cZ_{\svbx} \rp = \frac{I_{1-\gamma^2}(\frac{d-1}{2},\frac{1}{2})}{2}$. Therefore,
\begin{align}
    \frac{c_1(\varepsilon, d)}{c_2(\varepsilon, d)} \times \Probability_{\svbz \sim \Unif(\cZ)}\lp \svbz \in \cZ_{\svbx} \rp = \frac{p_0 \times (2 - I_{1-\gamma^2}(\frac{d-1}{2},\frac{1}{2}))}{2(1-p_0)} \stackrel{(a)}{\geq} \frac{p_0}{2(1-p_0)} \stackrel{(b)}{\geq} \frac{1}{2}
\end{align}
where $(a)$ follows because $I_{1-\gamma^2}(\frac{d-1}{2},\frac{1}{2}) \leq 1$ and $(b)$ follows because $p_0 \geq 1/2$.

\section{Simulating \texorpdfstring{$\PrivUnit$}{PrivUnit} using Minimal Random Coding}\label{appendix:mrc_pu}

In this section, we simulate $\PrivUnit$ using $\MRC$ analogous to how we simulate $\PrivUnit$ using $\MMRC$ in Section \ref{sec:mean_estimation}. First, in Appendix \ref{appendix:debias_mrc}, we provide an unbiased estimator for $\MRC$ simulating $\PrivUnit$.
Next, in Appendix \ref{appendix:mrc_pu_ut} we provide the utility guarantee associated with $\MRC$ simulating $\PrivUnit$. To do that, first, in Appendix \ref{appendix:scaling_mrc_pu}, we show that when the number of candidates $N$ is exponential in $\varepsilon$, the scaling factor $m_{\texttt{mrc}}$ is close to the scaling parameter associated with $\PrivUnit$ (i.e., $m_{\texttt{pu}}$). Next, in Appendix \ref{appendix:mrc_pu_scaling_mse}, we provide the relationship between the mean squared error associated with $\MRC$ simulating $\PrivUnit$ and the mean squared error associated with $\PrivUnit$. In Appendix \ref{appendix:mrc_pu_utility}, we combine everything and show that, for mean estimation, $\MRC$ can simulate $\PrivUnit$ in a near-lossless manner while only using on the order of $\varepsilon$-bits of communication. Finally, in Appendix \ref{appendix:mrc_pu_emp}, we provide some empirical comparisons.

\subsection{Unbiased Minimal Random Coding simulating \texorpdfstring{$\PrivUnit$}{PrivUnit}}\label{appendix:debias_mrc}
Consider the $\PrivUnit$ $\varepsilon$-LDP mechanism $\qPU$ described in Section \ref{sec:preliminaries} with parameters $p_0$ and $\gamma$. $\PrivUnit$ is a cap-based mechanism with $\msf{Cap}_{\svbx} = \{\svbz \in \sphere^{d-1} \mid
\lan \svbz, \svbx  \ran \ge \gamma\}$ as discussed in Appendix \ref{appendix:privunit}. Let $\piMRC$ be the distribution and $\svbz_1, \svbz_2,...,\svbz_N$ be the candidates obtained from Algorithm \ref{alg:mrc} when the reference distribution is $\Unif(\sphere^{d-1})$. Let  $K \sim \piMRC(\cdot)$. Define $p_{\texttt{mrc}} \coloneqq \Probability(\svbz_K \in \msf{Cap}_{\svbx})$
to be the probability with which the sampled candidate $\svbz_K$ belongs to the spherical cap associated with $\PrivUnit$.
Define $m_{\texttt{mrc}}$ as the scaling factor in \eqref{eq:m} when $p_0$ in \eqref{eq:m} is replaced by $p_{\texttt{mrc}}$. Define $\hat{\svbx}^\texttt{mrc} \coloneqq \svbz_K / m_\texttt{mrc}$ as the estimator of the $\MRC$ mechanism simulating $\PrivUnit$.  The following Lemma shows that $\hat{\svbx}^\texttt{mrc}$ is an unbiased estimator.


\begin{lemma}\label{theorem:mrc_privunit_bias}
Let $\hat{\svbx}^\texttt{mrc}$ be the estimator of the \emph{$\MRC$} mechanism simulating \emph{$\PrivUnit$} as defined above. Then, $\Expectation_{\qMRC}[\hat{\svbx}^\texttt{mrc}] = \svbx$.
\end{lemma}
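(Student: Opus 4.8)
The plan is to show that, after marginalising over the shared randomness (the candidates $\svbz_1,\dots,\svbz_N$ and the index $K$), the output $\svbz_K$ of $\MRC$ simulating $\PrivUnit$ has exactly the law of a raw $\PrivUnit$ sample run with the cap-probability $p_\texttt{mrc}$ in place of $p_0$: it is uniform on $\msf{Cap}_{\svbx}$ with probability $p_\texttt{mrc}$ and uniform on $\cZ\setminus\msf{Cap}_{\svbx}$ with probability $1-p_\texttt{mrc}$. Unbiasedness is then immediate, since $m_\texttt{mrc}$ is by definition the scaling factor in \eqref{eq:m} associated with $p_\texttt{mrc}$, and the $\PrivUnit$ estimator with the matching scaling factor is unbiased.

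To establish the distributional claim, recall that $\PrivUnit$ is a cap-based mechanism (Appendix \ref{appendix:privunit}): $\qPU(\svbz|\svbx)=c_1(\varepsilon,d)$ for $\svbz\in\msf{Cap}_{\svbx}$ and $c_2(\varepsilon,d)$ otherwise. With $p(\cdot)=\Unif(\sphere^{d-1})$ constant on $\cZ$, the importance weights $w(k)=\qPU(\svbz_k|\svbx)/p(\svbz_k)$ take only two values, so $\piMRC$ is uniform within each of the two regions, exactly as in \eqref{eq:special_pi_mrc}. Conditioning on $\theta$, the fraction of candidates in the cap, the in-cap candidates are i.i.d.\ $\Unif(\msf{Cap}_{\svbx})$ and the out-cap candidates are i.i.d.\ uniform on the complement (by exchangeability of the $N$ i.i.d.\ uniform draws). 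Given $\theta$, the index $K$ lands in the cap with probability $\theta c_1(\varepsilon,d)/(\theta c_1(\varepsilon,d)+(1-\theta)c_2(\varepsilon,d))$ and, conditioned on that event, is uniform among the in-cap candidates; hence $\svbz_K$ conditioned on $\{\svbz_K\in\msf{Cap}_{\svbx}\}$ is $\Unif(\msf{Cap}_{\svbx})$ regardless of $\theta$, and symmetrically on the complement. Averaging over $\theta$ therefore leaves the within-region laws untouched, and only the mixing weight becomes $p_\texttt{mrc}=\Probability(\svbz_K\in\msf{Cap}_{\svbx})$, a deterministic number.

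Finally, by rotational symmetry of $\msf{Cap}_{\svbx}$ about $\svbx$ there are scalars $\mu_1,\mu_2$ (depending only on $\gamma$ and $d$) with $\Expectation_{\qMRC}[\svbz_K\mid\svbz_K\in\msf{Cap}_{\svbx}]=\mu_1\svbx$ and $\Expectation_{\qMRC}[\svbz_K\mid\svbz_K\notin\msf{Cap}_{\svbx}]=\mu_2\svbx$, so $\Expectation_{\qMRC}[\svbz_K]=(p_\texttt{mrc}\mu_1+(1-p_\texttt{mrc})\mu_2)\svbx$. Since the right-hand side of \eqref{eq:m} is affine in $p_0$, the map $p\mapsto m_\texttt{pu}(p)$ satisfies $m_\texttt{pu}(p)=p\mu_1+(1-p)\mu_2$ for every $p\in[0,1]$ (this is exactly the unbiasedness identity for $\PrivUnit$ run with parameter $p$), whence $\Expectation_{\qMRC}[\svbz_K]=m_\texttt{mrc}\svbx$. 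Dividing by the constant $m_\texttt{mrc}$ gives $\Expectation_{\qMRC}[\hat{\svbx}^\texttt{mrc}]=\svbx$.

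The main obstacle is the middle step: carefully verifying that importance sampling does not distort the conditional law of $\svbz_K$ within each region. The clean route is to condition on $\theta$ and exploit exchangeability of the candidates, as above; it is also worth stressing that $p_\texttt{mrc}$ (and hence $m_\texttt{mrc}$) is a fixed expectation, not a random variable, so no Jensen-type correction arises when dividing by $m_\texttt{mrc}$. Everything else is bookkeeping with the explicit form of \eqref{eq:m}.
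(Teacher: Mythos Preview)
Your proof is correct and follows essentially the same approach as the paper's: both condition on the cap-membership indicators of the candidates, use that $\piMRC$ is uniform within each region so that $\svbz_K$ conditioned on its region is uniform there, and then invoke the rotational-symmetry computation from \cite{BDFKR2018} to identify $\Expectation[\svbz_K]=m_\texttt{mrc}\svbx$. The paper spells this out as a chain of expectation manipulations conditioning on the full vector $(A_1,\dots,A_N)$, whereas you condition on the sufficient statistic $\theta$ and phrase the conclusion distributionally (the marginal law of $\svbz_K$ is exactly $\PrivUnit$ with cap-probability $p_\texttt{mrc}$), which is a slightly cleaner packaging of the same argument.
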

\begin{proof}
For $k \in [N]$, let $A_k \coloneqq \Indicator(\svbz_k \in \msf{Cap}_{\svbx})$. Then, $p_{\texttt{mrc}} = \Probability(A_K = 1)$. Using the definition of $\hat{\svbx}^\texttt{mrc}$, we have
\begin{align}
\Expectation_{\qMRC} [\hat{\svbx}^\texttt{mrc}] & = \frac{1}{m_\texttt{mrc}} \Expectation_{\qMRC} [\svbz_K ].
\end{align}
Let us evaluate $\Expectation_{\qMRC} [\svbz_K ]$. We have
\begin{align}
\Expectation_{\qMRC} [\svbz_K ] & \stackrel{(a)}{=} \Expectation_{K, \svbz_1, \cdots, \svbz_N} [\svbz_K ] \\
& \stackrel{(b)}{=}  \Expectation_{\svbz_1, \cdots, \svbz_N} \Big[\sum_{k = 1}^{N}  \piMRC_{\svbx, \svbz_1,...,\svbz_N}(k) \times \svbz_k \Big] \\
& \stackrel{(c)}{=} \Expectation_{A_1, \cdots,A_N}\bigg[ \Expectation_{\svbz_1, \cdots, \svbz_N} \Big[\sum_{k = 1}^{N} \piMRC_{\svbx, \svbz_1,...,\svbz_N}(k) \times \svbz_k \Big| A_1, \cdots,A_N\Big]\bigg] \\
& \stackrel{(d)}{=} \sum_{k = 1}^{N} \Expectation_{A_1, \cdots,A_N}\bigg[   \Expectation_{\svbz_1, \cdots, \svbz_N} \Big[ \piMRC_{\svbx, \svbz_1,...,\svbz_N}(k) \times \svbz_k  \Big| A_1, \cdots,A_N\Big]\bigg] \\
& \stackrel{(e)}{=} \sum_{k = 1}^{N} \Expectation_{A_1, \cdots,A_N}\bigg[  \piMRC_{\svbx, A_1,...,A_N}(k) \Expectation_{\svbz_1, \cdots, \svbz_N} \Big[ \svbz_k  \Big| A_1, \cdots,A_N\Big]\bigg] \\
& \stackrel{(f)}{=} \sum_{k = 1}^{N} \Expectation_{A_1, \cdots,A_N}\bigg[\piMRC_{\svbx, A_1,...,A_N}(k) \Expectation_{\svbz_1, \cdots, \svbz_N} \Big[ \svbz_k  \Big| A_k \Big]\bigg] \\
& \stackrel{(g)}{=} \sum_{k = 1}^{N} \Expectation_{A_1, \cdots,A_N}\bigg[ \piMRC_{\svbx, A_1,...,A_N}(k) \Expectation_{\svbz_k} \Big[ \svbz_k  \Big| A_k\Big]\bigg] \\
& \stackrel{(h)}{=} \sum_{k = 1}^{N} \Expectation_{A_k} \bigg[\Expectation_{A_1, \cdots,A_N}\Big[ \piMRC_{\svbx, A_1,...,A_N}(k) \Expectation_{\svbz_k} \big[ \svbz_k  \big| A_k\big] \Big| A_k \Big] \bigg]\\
& \stackrel{(i)}{=} \sum_{k = 1}^{N} \Probability(A_k = 1) \bigg[\Expectation_{A_1, \cdots,A_N}\Big[ \piMRC_{\svbx, A_1,...,A_N}(k) \Expectation_{\svbz_k} \big[ \svbz_k  \big| A_k\big] \Big| A_k = 1\Big] \bigg] \\
& + \sum_{k = 1}^{N} \Probability(A_k = 0) \bigg[\Expectation_{A_1, \cdots,A_N}\Big[ \piMRC_{\svbx, A_1,...,A_N}(k) \Expectation_{\svbz_k} \big[ \svbz_k  \big| A_k\big] \Big| A_k = 0\Big] \bigg] \\
& = \sum_{k = 1}^{N} \Probability(A_k = 1) \bigg[\Expectation_{A_1, \cdots,A_N}\Big[\piMRC_{\svbx, A_1, \cdots,A_k = 1, \cdots, A_N}(k) \Expectation_{\svbz_k} \big[ \svbz_k  \big| A_k = 1\big] \Big] \bigg] \\
&  + \sum_{k = 1}^{N} \Probability(A_k = 0) \bigg[\Expectation_{A_1, \cdots,A_N}\Big[ \piMRC_{\svbx, A_1, \cdots,A_k = 0, \cdots, A_N}(k) \Expectation_{\svbz_k} \big[ \svbz_k  \big| A_k = 0\big] \Big] \bigg]\\
& \stackrel{(j)}{=} \Expectation_{\svbz}\big[ \svbz  \big| A = 1\big]  \sum_{k = 1}^{N} \Probability(A_k = 1) \bigg[\Expectation_{A_1, \cdots,A_N}\Big[ \piMRC_{\svbx, A_1, \cdots,A_k = 1, \cdots, A_N}(k)  \Big] \bigg] \\
&  + \Expectation_{\svbz}\big[ \svbz  \big| A = 0\big] \sum_{k = 1}^{N} \Probability(A_k = 0) \bigg[\Expectation_{A_1, \cdots,A_N}\Big[ \piMRC_{\svbx, A_1, \cdots,A_k = 0, \cdots, A_N}(k) \Big] \bigg]\\
& \stackrel{(k)}{=} \Expectation_{\svbz}\big[ \svbz  \big| A = 1\big]  \sum_{k = 1}^{N} \Probability(A_k = 1) \piMRC_{\svbx, A_k = 1}(k) \\
&+  \Expectation_{\svbz}\big[ \svbz  \big| A = 0\big]  \sum_{k = 1}^{N} \Probability(A_k = 0) \piMRC_{\svbx, A_k = 0}(k) \\
& \stackrel{(l)}{=} \Expectation_{\svbz}\big[ \svbz  \big| A = 1\big] \Probability(A_K = 1) + \Expectation_{\svbz}\big[ \svbz  \big| A = 0\big] \Probability(A_K = 0)\\
& \stackrel{(m)}{=} m_\texttt{mrc} \svbx \label{eq:miracle_zero_bias}
\end{align}
where $(a)$ follows because the randomness in $\qMRC$ comes from the randomness in $K, \svbz_1, \cdots, \svbz_N$, $(b)$ follows by calculating the expectation over $K$ and showing the dependence of $\piMRC$ on $\svbz_1, \cdots, \svbz_N$ explicitly, $(c)$ follows by the tower property of expectation, $(d)$ follows by linearity of expectation, $(e)$ follows because $\piMRC_{\svbx, \svbz_1,...,\svbz_N}(k) = \piMRC_{\svbx, A_1,...,A_N}(k)$ since $\piMRC$ depends on $\svbz_1,...,\svbz_N$ via $A_1, \cdots,A_N$, $(f)$ follows because $\svbz_k$ is independent of $A_1, \cdots,A_{k-1}, A_{k+1}, \cdots,A_N$ given $A_k$, $(g)$ follows by marginalizing $\svbz_1, \cdots, \svbz_{k-1}, \svbz_{k+1}, \cdots, \svbz_N$, $(h)$ follows by the tower property of expectation, $(i)$ follows by evaluating the expectation over $A_k$, $(j)$ follows because $\Expectation_{\svbz} \big[ \svbz \big| A = 1\big] \coloneqq \Expectation_{\svbz_k} \big[ \svbz_k  \big| A_k = 1\big]$ and $\Expectation_{\svbz} \big[ \svbz \big| A = 0\big] \coloneqq \Expectation_{\svbz_k} \big[ \svbz_k  \big| A_k = 0\big]$ are constants for every $k \in [N]$, $(k)$ follows by marginalizing $A_1, \cdots, A_N$, $(l)$ follows from the definitions of $\Probability(A_K = 1)$ and $\Probability(A_K = 0)$, and $(m)$ follows from rotational symmetry (see the proof of Lemma 4.1 in \cite{BDFKR2018} for details). Therefore, we can write
\begin{align}
\Expectation_{\qMRC} [\hat{\svbx}^\texttt{mrc}] & = \frac{1}{m_\texttt{mrc}} \Expectation_{\qMRC} [\svbz_K ] \stackrel{(a)}{=} \svbx
\end{align}
where $(a)$ follows from \eqref{eq:miracle_zero_bias}.
\end{proof}

\subsection{Utility of Minimal Random Coding simulating \texorpdfstring{$\PrivUnit$}{PrivUnit}}\label{appendix:mrc_pu_ut}

\subsubsection{The scaling factors of \texorpdfstring{$\PrivUnit$}{PrivUnit} and \texorpdfstring{$\MRC$}{MRC} are close when \texorpdfstring{$N$}{N} is of the right order}\label{appendix:scaling_mrc_pu}

In the following Lemma, we show that when the number of candidates $N$ is exponential in $\varepsilon$, then the scaling parameters associated with $\PrivUnit$ and the $\MRC$ scheme simulating $\PrivUnit$ are close.
\begin{lemma}\label{lemma:mrc_privunit_approximation_error}
Let $N$ denote the number of candidates used in the \emph{$\MRC$} scheme. Let $K \sim \piMRC$ where $\piMRC$ is the distribution over the indices $[N]$ associated the \emph{$\MRC$} scheme simulating \emph{$\PrivUnit(\svbx,\gamma, p_0)$}. Consider any $\lambda > 0$.
Then, the scaling factor $m_{\texttt{pu}}$ associated with \emph{$\PrivUnit$} and the scaling factor $m_\texttt{mrc}$ associated with the \emph{$\MRC$} scheme simulating \emph{$\PrivUnit$} are such that
\begin{align}
    m_{\texttt{pu}} - m_\texttt{mrc} \leq \lambda\cdot m_\texttt{mrc}
\end{align}
as long as
\begin{align}
    N \geq   2e^{2\varepsilon}\lp\frac{2 (1+\lambda)}{\lambda \lp p_0 -1/2 \rp}\rp^2 \ln\lp \frac{4(1+\lambda)}{\lambda \lp p_0 -1/2 \rp} \rp.
\end{align}
\end{lemma}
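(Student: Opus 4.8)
The plan is to reduce the claim to controlling the single scalar $p_0-p_{\texttt{mrc}}$, and then bound this bias by a Hoeffding argument on the importance weights, exactly in the spirit of the analysis behind Theorem~\ref{theorem:mrc_approximate_privacy}. The first observation is that the scaling factor in \eqref{eq:m} is an \emph{affine} function of the cap probability: writing $B_1\coloneqq B(1;\alpha,\alpha)$, $B_\tau\coloneqq B(\tau;\alpha,\alpha)$ with $\tau=(1+\gamma)/2$, and setting
\[
m'(\varepsilon,d)\coloneqq\frac{(1-\gamma^2)^\alpha}{2^{d-2}(d-1)}\Big(\tfrac{1}{B_1-B_\tau}+\tfrac{1}{B_\tau}\Big)>0,\qquad p^\star\coloneqq 1-\frac{B_\tau}{B_1},
\]
a direct rearrangement of \eqref{eq:m} gives $m_{\texttt{pu}}=m'(\varepsilon,d)\,(p_0-p^\star)$ and, since $m_{\texttt{mrc}}$ is the same expression with $p_0$ replaced by $p_{\texttt{mrc}}$, also $m_{\texttt{mrc}}=m'(\varepsilon,d)\,(p_{\texttt{mrc}}-p^\star)$. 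Because $\tau\ge 1/2$ and the $\mathrm{Beta}(\alpha,\alpha)$ law is symmetric about $1/2$, we have $B_\tau\ge \tfrac12 B_1$, hence $p^\star\le 1/2\le p_0$.

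Consequently, as long as $p_{\texttt{mrc}}>p^\star$ (verified below), $\frac{m_{\texttt{pu}}-m_{\texttt{mrc}}}{m_{\texttt{mrc}}}=\frac{p_0-p_{\texttt{mrc}}}{p_{\texttt{mrc}}-p^\star}$, and since $p^\star\le 1/2$ it suffices to show
\[
p_0-p_{\texttt{mrc}}\ \le\ \frac{\lambda}{1+\lambda}\,\big(p_0-\tfrac12\big).
\]
Indeed this inequality forces $p_{\texttt{mrc}}\ge p_0-\frac{\lambda}{1+\lambda}(p_0-\tfrac12)>\tfrac12\ge p^\star$ (so $m_{\texttt{mrc}}>0$ and the estimator $\hat{\svbx}^{\texttt{mrc}}=\svbz_K/m_{\texttt{mrc}}$ is well-defined), and it rearranges to $p_0-p_{\texttt{mrc}}\le\lambda(p_{\texttt{mrc}}-\tfrac12)\le\lambda(p_{\texttt{mrc}}-p^\star)$, which is $m_{\texttt{pu}}-m_{\texttt{mrc}}\le\lambda m_{\texttt{mrc}}$. (If $p_{\texttt{mrc}}\ge p_0$ the conclusion is immediate, as then $m_{\texttt{mrc}}\ge m_{\texttt{pu}}>0$.)

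It remains to bound the bias. Write $p_{\texttt{mrc}}=\Probability(\svbz_K\in\msf{Cap}_{\svbx})=\Expectation_{\svbz_1,\dots,\svbz_N}[g]$ where $g\coloneqq\sum_k\piMRC(k)\Indicator(\svbz_k\in\msf{Cap}_{\svbx})=\big(\tfrac1N\sum_k w(k)\Indicator(\svbz_k\in\msf{Cap}_{\svbx})\big)\big/\big(\tfrac1N\sum_k w(k)\big)\in[0,1]$ and $w(k)=\qPU(\svbz_k\mid\svbx)/p(\svbz_k)$ with $p=\Unif(\sphere^{d-1})$. Since $\qPU$ is $\varepsilon$-LDP and $p$ is uniform, the two values taken by $w$ straddle $1$ and have ratio at most $e^\varepsilon$, so $w(k)\in[e^{-\varepsilon},e^\varepsilon]$; moreover $\Expectation_p[w(k)]=1$ and $\Expectation_p[w(k)\Indicator(\svbz_k\in\msf{Cap}_{\svbx})]=\Probability_{\qPU}(\svbz\in\msf{Cap}_{\svbx})=p_0$. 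Hoeffding's inequality then yields, for any $a\in(0,1)$, that with probability at least $1-4\exp(-2Na^2 e^{-2\varepsilon})$ both $|\tfrac1N\sum_k w(k)\Indicator(\svbz_k\in\msf{Cap}_{\svbx})-p_0|\le a$ and $|\tfrac1N\sum_k w(k)-1|\le a$, on which event $g\in[\tfrac{p_0-a}{1+a},\tfrac{p_0+a}{1-a}]$ and hence $|g-p_0|\le\tfrac{2a}{1-a}$; on the complement use $|g-p_0|\le 1$. Taking expectations,
\[
p_0-p_{\texttt{mrc}}=\Expectation[p_0-g]\ \le\ \Expectation\big[|g-p_0|\big]\ \le\ \frac{2a}{1-a}+4\exp\!\big(-2Na^2 e^{-2\varepsilon}\big).
\]
Choosing $a$ a suitable constant multiple of $\frac{\lambda}{1+\lambda}(p_0-\tfrac12)$ makes the first term at most half of $\frac{\lambda}{1+\lambda}(p_0-\tfrac12)$, and the stated lower bound on $N$ (which has precisely the $e^{2\varepsilon}/a^2\cdot\ln(1/a)$ form of Hoeffding, with $a$ of order $\lambda(p_0-1/2)/(1+\lambda)$) makes the second term at most the other half; adding the two halves gives the displayed inequality and finishes the proof.

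The step I expect to be the main obstacle is the final balancing. Unlike a clean high-probability statement, $p_{\texttt{mrc}}$ is a full expectation of the random cap-probability $g$ (this is exactly the source of $\MRC$'s bias), so the concentration estimate must be integrated, which unavoidably introduces an additive failure-probability term alongside the $\tfrac{2a}{1-a}$ deviation term; reconciling these two against the single free parameter $a$ — and hence against $N$ through the $e^{2\varepsilon}/a^2\cdot\ln(1/a)$ Hoeffding scaling — is what pins down the constants and the logarithmic factor in the stated bound on $N$. A secondary point one must not skip is verifying $m_{\texttt{mrc}}>0$ (equivalently $p_{\texttt{mrc}}>p^\star$) so that the ratio argument and the estimator make sense; as noted, this drops out of the very same bias estimate.
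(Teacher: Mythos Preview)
Your proposal is correct, and the reduction step is in fact identical to the paper's once translated: your affine form $m=m'(\varepsilon,d)(p-p^\star)$ with $p^\star\le 1/2$ is exactly the paper's $m=\gamma_+ p+\gamma_-(1-p)$ with $\gamma_-/(\gamma_+-\gamma_-)\ge -1/2$ (indeed $p^\star=-\gamma_-/(\gamma_+-\gamma_-)$), so both proofs reduce to establishing $p_0-p_{\texttt{mrc}}\le\frac{\lambda}{1+\lambda}(p_0-\tfrac12)$.

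The genuine methodological difference is in how you bound $p_0-p_{\texttt{mrc}}$. The paper exploits the cap structure: it writes $p_{\texttt{mrc}}=\Expectation_\theta\!\big[\bar c_1\theta/(\bar c_1\theta+\bar c_2(1-\theta))\big]$ with $\theta\sim\tfrac1N\mathrm{Binom}(N,\Expectation[\theta])$, derives the exact identity $p_0-p_{\texttt{mrc}}=\tfrac{\bar c_1\bar c_2}{(\bar c_1-\bar c_2)^2}\,\Expectation\big[(\Expectation[\theta]-\theta)/((\theta+\tfrac{\bar c_2}{\bar c_1-\bar c_2})(\Expectation[\theta]+\tfrac{\bar c_2}{\bar c_1-\bar c_2}))\big]$, and applies Hoeffding to the Bernoulli average $\theta$. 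You instead apply Hoeffding directly to the two importance-weight sums (numerator and denominator of $g$), using only $w\in[e^{-\varepsilon},e^\varepsilon]$; this is mechanism-agnostic and slightly more elementary, while the paper's $\theta$-parameterization is what later drives the $\MMRC$ analysis (Lemma~\ref{lemma:mmrc_privunit_approximation_error}). Both routes yield $p_0-p_{\texttt{mrc}}\lesssim e^\varepsilon\sqrt{\ln(1/\beta)/N}+\beta$, hence the same $N\asymp e^{2\varepsilon}\big(\tfrac{1+\lambda}{\lambda(p_0-1/2)}\big)^2\ln\big(\tfrac{1+\lambda}{\lambda(p_0-1/2)}\big)$ scaling. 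Your final constant-balancing is only sketched; if you carry it through with $a$ equal to a fixed fraction of $\tfrac{\lambda(p_0-1/2)}{1+\lambda}$ you will recover the stated bound up to the constant inside the logarithm (the paper's choice $\beta=\tfrac{\lambda(p_0-1/2)}{2(1+\lambda)}$ gives the constant $4$ there).
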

\begin{proof}
Following the proofs of Lemma 4.1 and Proposition 4 in \cite{BDFKR2018}, we can write $m_{\texttt{pu}} = \gamma_+ p_0 + \gamma_-(1-p_0)$ and $m_\texttt{mrc} = \gamma_+ p_{\texttt{mrc}} + \gamma_-(1-p_{\texttt{mrc}})$ where
\begin{align}
    \gamma_+ \eqDef \frac{\lp 1-\gamma^2\rp^\alpha}{2^{d-2}(d-1)\lp B(\alpha, \alpha)-B(\tau; \alpha, \alpha) \rp}, \qquad \mathrm{and} \qquad 
    \gamma_- \eqDef \frac{\lp 1-\gamma^2\rp^\alpha}{2^{d-2}(d-1)\lp B(\tau; \alpha, \alpha) \rp}.
\end{align}
Therefore, we have
\begin{align}
  \frac{1}{m_\texttt{mrc}} - \frac{1}{m_{\texttt{pu}}} & =
  \frac{m_{\texttt{pu}}-m_\texttt{mrc}}{m_{\texttt{pu}} \cdot m_\texttt{mrc}} = \frac{1}{m_{\texttt{pu}}} \lp \frac{(\gamma_+-\gamma_-) \cdot \lp p_0 - p_{\texttt{mrc}} \rp}{\lp (\gamma_+ - \gamma_-) p_{\texttt{mrc}}+\gamma_-\rp} \rp \\
  & = \frac{1}{m_{\texttt{pu}}}\lp\frac{ p_0 - p_{\texttt{mrc}} }{ p_{\texttt{mrc}} +\dfrac{\gamma_-}{\gamma_+-\gamma_-}}\rp \label{eq:mrc_privunit_approximation_error_0}
\end{align}
From \cite{BDFKR2018}, we have $\gamma_- \leq 0 \leq \gamma_+$ and $\lba \gamma_+ \rba \geq \lba \gamma_- \rba$. These inequalities imply $\frac{\gamma_-}{\gamma_+-\gamma_-} \geq -\frac{1}{2}$. Plugging this in \eqref{eq:mrc_privunit_approximation_error_0}, we have
\begin{align}
    \frac{1}{m_\texttt{mrc}} - \frac{1}{m_{\texttt{pu}}} \leq \frac{1}{m_{\texttt{pu}}}\lp\frac{ p_0 - p_{\texttt{mrc}} }{ p_{\texttt{mrc}} - 1/2}\rp = \frac{1}{m_{\texttt{pu}}}\lp\frac{ 1 }{ \dfrac{p_0 - 1/2}{p_0 - p_{\texttt{mrc}}}-1}\rp
    \label{eq:mrc_privunit_approximation_error_1}
\end{align}
We will now upper bound $\dfrac{p_0 - p_{\texttt{mrc}}}{p_0 - 1/2}$. We start by obtaining convenient expressions for $p_{\texttt{mrc}}$ and $p_0$. To compute $p_{\texttt{mrc}} = \Probability(\svbz_K \in \msf{Cap}_{\svbx})$, recall that $\theta$ denotes the fraction of candidates that belong inside the $\msf{Cap}_{\svbx}$. Let $c_1(\varepsilon, d)$ and $c_2(\varepsilon, d)$ be as defined in \eqref{eq:privunit_density}. Let $\bar{c}_1(\varepsilon, d) = c_1(\varepsilon, d) \times A(1,d)$ and $\bar{c}_2(\varepsilon, d) = c_2(\varepsilon, d) \times A(1,d)$. It is easy to see from Algorithm \ref{alg:privunit} and \eqref{eq:privunit_density} that $\Probability(\svbz_k \in \msf{Cap}_{\svbx}) =  \bar{c}_1(\varepsilon, d) / p_0$. 
Further, since $\svbz_k$ are generated uniformly at random,
$$ \theta \sim \frac{1}{N} \msf{Binom}\lp N, \frac{ \bar{c}_1(\varepsilon, d)}{p_0}\rp,$$ so we have
\begin{align}
    p_{\texttt{mrc}} = \Pr\lbp \svbz_K \in \msf{Cap}_{\svbx} \rbp  & = \E\lb \Pr\lbp \svbz_K \in \msf{Cap}_{\svbx}| \theta \rbp \rb  \\
    & \stackrel{(a)}{=} \E\lb \frac{ \bar{c}_1(\varepsilon, d) \theta}{ \bar{c}_1(\varepsilon, d) \theta +  \bar{c}_2(\varepsilon, d)(1-\theta)} \rb \\
    & = \frac{ \bar{c}_1(\varepsilon, d)}{ \bar{c}_1(\varepsilon, d)- \bar{c}_2(\varepsilon, d)}\E\lb \frac{( \bar{c}_1(\varepsilon, d) -  \bar{c}_2(\varepsilon, d))\theta}{( \bar{c}_1(\varepsilon, d) -  \bar{c}_2(\varepsilon, d))\theta +  \bar{c}_2(\varepsilon, d)} \rb \\
    & \stackrel{(b)}{=} \frac{ \bar{c}_1(\varepsilon, d) \bar{c}_2(\varepsilon, d)}{( \bar{c}_1(\varepsilon, d)- \bar{c}_2(\varepsilon, d))^2}\E\lb \frac{ \bar{c}_1(\varepsilon, d)- \bar{c}_2(\varepsilon, d)}{ \bar{c}_2(\varepsilon, d)}-\frac{1}{\theta+\dfrac{ \bar{c}_2(\varepsilon, d)}{ \bar{c}_1(\varepsilon, d)- \bar{c}_2(\varepsilon, d)}} \rb
\label{eq:mrc_privunit_approximation_error_2}
\end{align}
where $(a)$ follows from \eqref{eq:special_pi_mrc} because $\qPU$ is a cap-based mechanism and $(b)$ follows by simple manipulations. 

To compute $p_0$, observe that we have the following relationship between $ \bar{c}_1(\varepsilon, d)$, $ \bar{c}_2(\varepsilon, d)$, and $p_0$ from \eqref{eq:privunit_density}:
\begin{align}
    \frac{ p_0}{\bar{c}_1(\varepsilon, d)} + \frac{ 1-p_0}{\bar{c}_2(\varepsilon, d)} = 1 \label{eq:c1c2p}
\end{align}
Using this and with some simple manipulations, we have 
\begin{align}
p_0
& = \frac{ \bar{c}_1(\varepsilon, d) \bar{c}_2(\varepsilon, d)}{( \bar{c}_1(\varepsilon, d)- \bar{c}_2(\varepsilon, d))^2} \lp\frac{ \bar{c}_1(\varepsilon, d)- \bar{c}_2(\varepsilon, d)}{ \bar{c}_2(\varepsilon, d)}-\frac{1}{\E\lb \theta \rb+\dfrac{ \bar{c}_2(\varepsilon, d)}{ \bar{c}_1(\varepsilon, d)- \bar{c}_2(\varepsilon, d)}}\rp \label{eq:mrc_privunit_approximation_error_3}
\end{align}
From \eqref{eq:mrc_privunit_approximation_error_2} and \eqref{eq:mrc_privunit_approximation_error_3}, we have
\begin{align}
    p_0 - p_{\texttt{mrc}} 
    &= \frac{ \bar{c}_1(\varepsilon, d) \bar{c}_2(\varepsilon, d)}{( \bar{c}_1(\varepsilon, d)- \bar{c}_2(\varepsilon, d))^2} \lp \E\lb \frac{1}{\theta+\dfrac{ \bar{c}_2(\varepsilon, d)}{ \bar{c}_1(\varepsilon, d)- \bar{c}_2(\varepsilon, d)}} - \frac{1}{\E[\theta]+\dfrac{ \bar{c}_2(\varepsilon, d)}{ \bar{c}_1(\varepsilon, d)- \bar{c}_2(\varepsilon, d)}} \rb \rp\\
    & = \frac{ \bar{c}_1(\varepsilon, d) \bar{c}_2(\varepsilon, d)}{( \bar{c}_1(\varepsilon, d)- \bar{c}_2(\varepsilon, d))^2} \lp \E\lb \frac{\E[\theta]-\theta}{\lp \theta+\dfrac{ \bar{c}_2(\varepsilon, d)}{ \bar{c}_1(\varepsilon, d)- \bar{c}_2(\varepsilon, d)} \rp \lp \E[\theta]+\dfrac{ \bar{c}_2(\varepsilon, d)}{ \bar{c}_1(\varepsilon, d)- \bar{c}_2(\varepsilon, d)} \rp} \rb \rp.
\end{align} 
Now, using the Hoeffding's inequality, we have $\Probability\lbp \lba \theta - \E[\theta] \rba \geq \sqrt{\frac{\ln \lp 2/\beta \rp}{2N}} \rbp \leq \beta$. Conditioned on the event $\lbp \lba \theta - \E[\theta] \rba \leq \sqrt{\frac{\ln \lp 2/\beta \rp}{2N}} \rbp$ and using the fact that $\lba p_0 - p_{\texttt{mrc}}\rba\leq 1$, we have
\begin{align}
    p_0   -   p_{\texttt{mrc}}  &  \leq   \dfrac{ \bar{c}_1(\varepsilon, d) \bar{c}_2(\varepsilon, d)}{( \bar{c}_1(\varepsilon, d)   -   \bar{c}_2(\varepsilon, d))^2}  \times \\
    & \lp \frac{\sqrt{\frac{\ln \lp 2/\beta \rp}{2N}}}{\lp \dfrac{p_0}{ \bar{c}_1(\varepsilon, d)}   -   \sqrt{\dfrac{\ln \lp 2/\beta \rp}{2N}}   +   \dfrac{ \bar{c}_2(\varepsilon, d)}{ \bar{c}_1(\varepsilon, d)   -   \bar{c}_2(\varepsilon, d)} \rp \lp \dfrac{p_0}{ \bar{c}_1(\varepsilon, d)}   +  \dfrac{ \bar{c}_2(\varepsilon, d)}{ \bar{c}_1(\varepsilon, d)   -   \bar{c}_2(\varepsilon, d)} \rp} \rp \hspace{-1mm} + \hspace{-1mm} \beta \label{eq:mrc_privunit_approximation_error_4}
    \end{align}
    where we have also plugged in $\E[\theta] = \dfrac{p_0}{ \bar{c}_1(\varepsilon,d) }$. Now, we can lower bound $\lp \dfrac{p_0}{ \bar{c}_1(\varepsilon, d)}+\dfrac{ \bar{c}_2(\varepsilon, d)}{ \bar{c}_1(\varepsilon, d)- \bar{c}_2(\varepsilon, d)} \rp$ as follows:
    \begin{align}
        \lp \frac{p_0}{ \bar{c}_1(\varepsilon, d)}+\frac{ \bar{c}_2(\varepsilon, d)}{ \bar{c}_1(\varepsilon, d)- \bar{c}_2(\varepsilon, d)} \rp \stackrel{(a)}{\geq} \frac{ \bar{c}_2(\varepsilon, d)}{ \bar{c}_1(\varepsilon, d)- \bar{c}_2(\varepsilon, d)}
         \stackrel{(b)}{\geq} \frac{1}{\exp(\varepsilon)-1}
    \end{align}
    where $(a)$ follows by lower bounding $ p_0/\bar{c}_1(\varepsilon, d)$ by 0 and $(b)$ follows because we have $ \bar{c}_1(\varepsilon,d)/  \bar{c}_2(\varepsilon,d) \leq \exp(\varepsilon)$. Further, if we pick $N \geq 2 \ln \lp 2/\beta \rp \lp\exp(\varepsilon)-1\rp^2$, then
    \begin{align}
        \sqrt{\frac{\ln \lp 2/\beta \rp}{2N}} \leq \frac{1}{2} \times \frac{1}{\exp(\varepsilon)-1}\leq \frac{1}{2}   \lp \frac{p_0}{ \bar{c}_1(\varepsilon, d)}+\frac{ \bar{c}_2(\varepsilon, d)}{ \bar{c}_1(\varepsilon, d)- \bar{c}_2(\varepsilon, d)} \rp. \label{eq:mrc_privunit_approximation_error_5}
    \end{align}
    Using \eqref{eq:mrc_privunit_approximation_error_5} in \eqref{eq:mrc_privunit_approximation_error_4}, we have 
    \begin{align}
p_0 - p_{\texttt{mrc}} & \leq  \frac{ \bar{c}_1(\varepsilon, d) \bar{c}_2(\varepsilon, d)}{( \bar{c}_1(\varepsilon, d)- \bar{c}_2(\varepsilon, d))^2} \times \\
& \qquad \lp \frac{2\sqrt{\frac{\ln \lp 2/\beta \rp}{2N}}}{ \lp \dfrac{p_0}{ \bar{c}_1(\varepsilon, d)}+\dfrac{ \bar{c}_2(\varepsilon, d)}{ \bar{c}_1(\varepsilon, d)- \bar{c}_2(\varepsilon, d)} \rp  \lp \dfrac{p_0}{ \bar{c}_1(\varepsilon, d)}+\dfrac{ \bar{c}_2(\varepsilon, d)}{ \bar{c}_1(\varepsilon, d)- \bar{c}_2(\varepsilon, d)} \rp} \rp + \beta \\
& = \lp \frac{2 \bar{c}_1(\varepsilon, d) \bar{c}_2(\varepsilon, d) \sqrt{\frac{\ln \lp 2/\beta \rp}{2N}}}{\lp  p_0 \lp1-\dfrac{ \bar{c}_2(\varepsilon, d)}{ \bar{c}_1(\varepsilon, d)}\rp + \bar{c}_2(\varepsilon, d) \rp^2} \rp + \beta\\
& \stackrel{(a)}{\leq}\lp \frac{2 \bar{c}_1(\varepsilon, d)}{ \bar{c}_2(\varepsilon, d)}  \sqrt{\frac{\ln \lp 2/\beta \rp}{2N}} \rp+ \beta \stackrel{(b)}{\leq}\lp 2 \exp(\varepsilon) \sqrt{\frac{\ln \lp 2/\beta \rp}{2N}} \rp+ \beta \stackrel{(c)}{\leq} \frac{\lambda(p_0 - 1/2)}{1+\lambda}. \label{eq:mrc_privunit_approximation_error_6}
\end{align}
where $(a)$ follows because $p_0 \lp1-\frac{ \bar{c}_2(\varepsilon, d)}{ \bar{c}_1(\varepsilon, d)}\rp \geq 0$, $(b)$ follows because we have $ \bar{c}_1(\varepsilon,d)/  \bar{c}_2(\varepsilon,d) \leq \exp(\varepsilon)$ and $(c)$ follows if we pick 
\begin{align}
    \beta & \leq \frac{\lambda(p_0 - 1/2)}{2(1+\lambda)}, \\
    N & \geq\frac{2\exp(2\varepsilon)\ln \lp 2/\beta \rp}{\lp \dfrac{\lambda(p_0 - 1/2)}{1+\lambda} - \beta \rp^2}
    = 2\exp(2\varepsilon)\lp\frac{2 (1+\lambda)}{\lambda \lp p_0 -1/2 \rp}\rp^2 \ln\lp \frac{4(1+\lambda)}{\lambda \lp p_0 -1/2 \rp} \rp. \label{eq:mrc_privunit_approximation_error_7}
\end{align}
Further, it is easy to verify that \eqref{eq:mrc_privunit_approximation_error_5} holds since the choice of $N$ in \eqref{eq:mrc_privunit_approximation_error_7} is such that $N \geq \frac{1}{2} \ln \lp 2/\beta \rp \lp\exp(\varepsilon)-1\rp^2$. Now, rearranging \eqref{eq:mrc_privunit_approximation_error_6} gives us an upper bound on $\dfrac{p_0 - p_{\texttt{mrc}}}{p_0 -1/2}$, i.e.,
\begin{align}
    \frac{p_0 - p_{\texttt{mrc}}}{p_0 -1/2} \leq \frac{\lambda}{1+\lambda}. \label{eq:mrc_privunit_approximation_error_8}
\end{align}
Using \eqref{eq:mrc_privunit_approximation_error_8} in \eqref{eq:mrc_privunit_approximation_error_1}, we have
\begin{align}
    \frac{1}{m_\texttt{mrc}} - \frac{1}{m_{\texttt{pu}}} \leq \frac{\lambda}{m_{\texttt{pu}}}. \label{eq:mrc_privunit_approximation_error_9}
\end{align}
Rearranging \eqref{eq:mrc_privunit_approximation_error_9} completes the proof.
\end{proof}

\subsubsection{Relationship between mean squared errors associated with \texorpdfstring{$\PrivUnit$}{PrivUnit} and \texorpdfstring{$\MRC$}{MRC} simulating \texorpdfstring{$\PrivUnit$}{PrivUnit}}\label{appendix:mrc_pu_scaling_mse}
In the following Proposition,
we show that if the scaling factor $m_{\texttt{mrc}}$ is close to the scaling parameter $m_{\texttt{pu}}$, then the mean squared error associated with $\MRC$ simulating $\PrivUnit$ (i.e., $\Expectation_{\qMRC} \big[ \lV  \hat{\svbx}^\texttt{mrc} - \svbx \rV^2_2  \big]$) is close to the mean squared error associated with $\PrivUnit$ (i.e., $\Expectation_{\qPU} \big[ \lV  \hat{\svbx}^\texttt{pu} - \svbx \rV^2_2  \big]$).
\begin{proposition}\label{proposition:mrc_mse_wrt_privunit}
Let $\qPU(\svbz | \svbx)$ be the  $\varepsilon$-LDP \emph{$\PrivUnit$} mechanism with parameters $p_0$ and $\gamma$ and estimator $\hat{\svbx}^{\texttt{pu}}$. Let $\qMRC(\svbz|\svbx)$ denote the \emph{$\MRC$} privatization mechanism simulating \emph{$\PrivUnit$} with $N$ candidates and estimator $\hat{\svbx}^{\texttt{mrc}}$.
Let $m_{\texttt{pu}}$ denote the scaling factor  associated with \emph{$\PrivUnit$} and $m_\texttt{mrc}$ denote the scaling factor  associated with the \emph{$\MRC$} scheme simulating \emph{$\PrivUnit$}. Consider any $\lambda > 0$. If $m_{\texttt{pu}} - m_\texttt{mrc} \leq \lambda \cdot m_\texttt{mrc}$, then 
  \begin{align}
    \Expectation_{\qMRC} \big[ \lV  \hat{\svbx}^\texttt{mrc} - \svbx \rV^2_2  \big]  \leq  \lp 1+\lambda \rp^2  \Expectation_{\qPU}\big[\|\hat{\svbx}^{\texttt{pu}} - \svbx\|^2\big] + 2(1+\lambda)(2+\lambda) \sqrt{\Expectation_{\qPU}\big[\|\hat{\svbx}^{\texttt{pu}} - \svbx\|^2\big]} + (2+\lambda)^2.
\end{align}
\end{proposition}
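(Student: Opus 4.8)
The plan is to reduce the entire statement to the single scalar inequality $m_{\texttt{pu}} \le (1+\lambda)\, m_\texttt{mrc}$ furnished by the hypothesis, exploiting the fact that the outputs of both \emph{$\PrivUnit$} and the \emph{$\MRC$} scheme simulating it lie on the unit sphere. First I would record a closed form for the $\PrivUnit$ error: since $\svbz \in \sphere^{d-1}$ under $\qPU$ (so $\|\svbz\|^2 = 1$), $\hat{\svbx}^{\texttt{pu}} = \svbz/m_{\texttt{pu}}$ is unbiased, and $\|\svbx\| = 1$, expanding the square gives
\begin{align}
\Expectation_{\qPU}\big[\|\hat{\svbx}^{\texttt{pu}} - \svbx\|^2\big]
= \frac{\Expectation_{\qPU}[\|\svbz\|^2]}{m_{\texttt{pu}}^2} - 2\langle \Expectation_{\qPU}[\hat{\svbx}^{\texttt{pu}}], \svbx\rangle + \|\svbx\|^2
= \frac{1}{m_{\texttt{pu}}^2} - 1,
\end{align}
so that $1/m_{\texttt{pu}}^2 = \Expectation_{\qPU}[\|\hat{\svbx}^{\texttt{pu}} - \svbx\|^2] + 1$; in particular $m_{\texttt{pu}} \in (0,1]$, and then the hypothesis forces $m_\texttt{mrc} \in (0,1]$ as well, so $1/m_\texttt{mrc}$ is well defined.

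Next I would bound $\|\hat{\svbx}^{\texttt{mrc}} - \svbx\|$ pointwise. Because $\svbz_K \in \sphere^{d-1}$, the triangle inequality gives $\|\hat{\svbx}^{\texttt{mrc}} - \svbx\| = \|\svbz_K/m_\texttt{mrc} - \svbx\| \le 1/m_\texttt{mrc} + 1$. Applying the hypothesis in the form $1/m_\texttt{mrc} \le (1+\lambda)/m_{\texttt{pu}}$, followed by the elementary bound $1/m_{\texttt{pu}} = \sqrt{\Expectation_{\qPU}[\|\hat{\svbx}^{\texttt{pu}} - \svbx\|^2] + 1} \le \sqrt{\Expectation_{\qPU}[\|\hat{\svbx}^{\texttt{pu}} - \svbx\|^2]} + 1$, yields the pointwise inequality
\begin{align}
\|\hat{\svbx}^{\texttt{mrc}} - \svbx\| \le (1+\lambda)\sqrt{\Expectation_{\qPU}\big[\|\hat{\svbx}^{\texttt{pu}} - \svbx\|^2\big]} + (2+\lambda).
\end{align}
Squaring both sides, taking $\Expectation_{\qMRC}$, and expanding $(a+b)^2 = a^2 + 2ab + b^2$ with $a = (1+\lambda)\sqrt{\Expectation_{\qPU}[\|\hat{\svbx}^{\texttt{pu}} - \svbx\|^2]}$ and $b = 2+\lambda$ reproduces exactly the three terms of the claimed bound.

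I do not expect a genuine obstacle: the only places where the structure of \emph{$\PrivUnit$} and the role of $m_\texttt{mrc}$ enter are the use of $\|\svbz_K\| = 1$ together with unbiasedness of $\hat{\svbx}^{\texttt{mrc}}$ (Lemma~\ref{theorem:mrc_privunit_bias}), and the closed-form identity above; everything else is Minkowski's inequality and scalar algebra. As a cross-check one can bypass the triangle inequality entirely: the analogous identity $\Expectation_{\qMRC}[\|\hat{\svbx}^{\texttt{mrc}} - \svbx\|^2] = 1/m_\texttt{mrc}^2 - 1$ combined with $1/m_\texttt{mrc}^2 \le (1+\lambda)^2/m_{\texttt{pu}}^2$ gives $\Expectation_{\qMRC}[\|\hat{\svbx}^{\texttt{mrc}} - \svbx\|^2] \le (1+\lambda)^2 \Expectation_{\qPU}[\|\hat{\svbx}^{\texttt{pu}} - \svbx\|^2] + \lambda(2+\lambda)$, and since $\lambda(2+\lambda) \le (2+\lambda)^2$ and the cross term is nonnegative, the stated inequality follows (indeed with a little room to spare).
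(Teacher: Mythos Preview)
Your proposal is correct and follows essentially the same route as the paper: both arrive at $\Expectation_{\qMRC}\big[\|\hat{\svbx}^{\texttt{mrc}} - \svbx\|^2\big] \le (1/m_\texttt{mrc} + 1)^2$ (you via the pointwise triangle inequality, the paper via expanding the square and Cauchy--Schwarz on the cross term), then plug in $1/m_\texttt{mrc} \le (1+\lambda)/m_{\texttt{pu}}$ together with $1/m_{\texttt{pu}} \le \sqrt{\Expectation_{\qPU}[\|\hat{\svbx}^{\texttt{pu}} - \svbx\|^2]} + 1$. Your cross-check via the exact identities $\Expectation[\|\hat{\svbx} - \svbx\|^2] = 1/m^2 - 1$ (which exploits the unbiasedness of $\hat{\svbx}^{\texttt{mrc}}$ from Lemma~\ref{theorem:mrc_privunit_bias}, something the paper's proof of this proposition does not use) is in fact strictly sharper than the stated bound.
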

\begin{proof}
We will start by upper bounding $1/m_{\texttt{pu}}$ in terms of $\Expectation_{\qPU}\big[\|\hat{\svbx}^{\texttt{pu}} - \svbx\|^2\big]$. First, observe that
\begin{align}
    \|\hat{\svbx}^{\texttt{pu}} - \svbx\|  \stackrel{(a)}{\geq} \lV \hat{\svbx}^{\texttt{pu}}\rV - \lV \svbx \rV  \stackrel{(b)}{\geq}
    \frac{1}{m_{\texttt{pu}}}  - 1  \label{eq:mrc_privunit_variance_1}
\end{align}
where $(a)$ follows from the triangle inequality and $(b)$ follows because $\|\hat{\svbx}^{\texttt{pu}}\| = 1/m_{\texttt{pu}}$ and $\|\svbx\| \leq 1$. Next, we have
\begin{align}
    \frac{1}{m_{\texttt{pu}}} = \frac{1}{m_{\texttt{pu}}} - 1 + 1 \stackrel{(a)}{\leq} \sqrt{\Expectation_{\qPU}\big[\|\hat{\svbx}^{\texttt{pu}} - \svbx\|^2\big]} + 1 \label{eq:mrc_privunit_variance_2}
\end{align}
where $(a)$ follows from \eqref{eq:mrc_privunit_variance_1}. We will now upper bound $\Expectation_{\qMRC}[\| \hat{\svbx}^\texttt{mrc} - \svbx\|^2]$. We have
\begin{align}
    \Expectation_{\qMRC}[\| \hat{\svbx}^\texttt{mrc} - \svbx\|^2] 
    & = \Expectation_{\qMRC} [\| \hat{\svbx}^\texttt{mrc}\|^2] + \lV \svbx \rV^2_2 - 2\lan \Expectation_{\qMRC} [\hat{\svbx}^\texttt{mrc}], \svbx\ran \\
    & \stackrel{(a)}{\leq} \Expectation_{\qMRC} [\| \hat{\svbx}^\texttt{mrc}\|^2] + \lV \svbx \rV^2_2 + 2\sqrt{ \Expectation_{\qMRC} [\lV \hat{\svbx}^\texttt{mrc} \rV^2] \cdot \lV \svbx \rV^2 } \\
    & \stackrel{(b)}{\leq} \lp\frac{1}{m_\texttt{mrc}}\rp^2  + 1 + \frac{2}{m_\texttt{mrc}}\\
    & \stackrel{(c)}{\leq} \lp\frac{1+\lambda}{m_{\texttt{pu}}}\rp^2 + 1 + \frac{2(1+\lambda)}{m_{\texttt{pu}}} \\
    & \stackrel{(d)}{\leq}
     \lp 1+\lambda \rp^2  \Expectation_{\qPU}\big[\|\hat{\svbx}^{\texttt{pu}} - \svbx\|^2\big] + 2(1+\lambda)(2+\lambda) \sqrt{\Expectation_{\qPU}\big[\|\hat{\svbx}^{\texttt{pu}} - \svbx\|^2\big]} + (2+\lambda)^2
\end{align}
where $(a)$ follows from Cauchy–Schwarz inequality, $(b)$ follows because $\|\hat{\svbx}^\texttt{mrc}\| = 1/m_\texttt{mrc}$ and $\|\svbx\| \leq 1$, (c) follows from Lemma~\ref{lemma:mrc_privunit_approximation_error} (which shows $m_{\texttt{pu}} - m_\texttt{mrc} \leq \lambda\cdot m_\texttt{mrc}$), and $(d)$ follows using \eqref{eq:mrc_privunit_variance_2} and some simple manipulations.
\end{proof}

In the following Lemma, we show that with on the order of $\varepsilon$-bits of communication, the mean squared error associated with $\MRC$ simulating $\PrivUnit$ (i.e., $\Expectation_{\qMRC} \big[ \lV  \hat{\svbx}^\texttt{mrc} - \svbx \rV^2_2  \big]$) is close to the mean squared error associated with $\PrivUnit$ (i.e., $\Expectation_{\qPU} \big[ \lV  \hat{\svbx}^\texttt{pu} - \svbx \rV^2_2  \big]$).

\begin{restatable}{lemma}{mrcprivunit}\label{theorem:mrc_privunit}
Let $\qPU(\svbz | \svbx)$ be the  $\varepsilon$-LDP \emph{$\PrivUnit$} mechanism with parameters $p_0$ and $\gamma$ and estimator $\hat{\svbx}^{\texttt{pu}}$. Let $\qMRC(\svbz|\svbx)$ denote the \emph{$\MRC$} privatization mechanism simulating \emph{$\PrivUnit$} with $N$ candidates and estimator $\hat{\svbx}^{\texttt{mrc}}$. 
Consider any $\lambda > 0$. Then,
\begin{align}
   \Expectation_{\qMRC} \big[ \lV  \hat{\svbx}^\texttt{mrc} - \svbx \rV^2_2  \big] 
     \leq  \lp 1+\lambda \rp^2  \Expectation_{\qPU}\big[\|\hat{\svbx}^{\texttt{pu}} -\svbx\|^2\big] + 2(1+\lambda)(2+\lambda) \sqrt{\Expectation_{\qPU}\big[\|\hat{\svbx}^{\texttt{pu}} - \svbx\|^2\big]} + (2+\lambda)^2
\end{align}
as long as 
\begin{align}
    N \geq   2e^{2\varepsilon}\lp\frac{2 (1+\lambda)}{\lambda \lp p_0 -1/2 \rp}\rp^2 \ln\lp \frac{4(1+\lambda)}{\lambda \lp p_0 -1/2 \rp} \rp.
\end{align}
\end{restatable}
\begin{proof}
The proof follows from Proposition \ref{proposition:mrc_mse_wrt_privunit} and Lemma \ref{lemma:mrc_privunit_approximation_error}.
\end{proof}

\subsubsection{Simulating \texorpdfstring{$\PrivUnit$}{PrivUnit} using Minimal Random Coding }\label{appendix:mrc_pu_utility}
The following Theorem shows that, for mean estimation, $\MRC$ can simulate $\PrivUnit$ in a near-lossless manner (when $n$ is large and $\lambda$ is small) while only using on the order of $\varepsilon$ bits of communication.

\begin{restatable}{theorem}{mrcpu}
\label{thm:me_mrc_pu}
Let $r_{\msf{ME}} \lp \hat{\svbmu}^\texttt{pu}, \qPU \rp$ and $r_{\msf{ME}} \lp \hat{\svbmu}^\texttt{mrc}, \qMRC \rp$ be the empirical mean estimation error for \emph{$\PrivUnit$} with parameter $p_0$ and \emph{$\MRC$} simulating \emph{$\PrivUnit$} with $N$ candidates respectively. Consider any $\lambda > 0$. Then,
\begin{align}
    r_{\msf{ME}} \lp \hat{\svbmu}^\texttt{mrc}, \qMRC \rp \leq  
    \lp 1+\lambda \rp^2 r_{\msf{ME}} \lp \hat{\svbmu}^\texttt{pu}, \qPU \rp + 2(1+\lambda)(2+\lambda) \sqrt{\frac{r_{\msf{ME}} \lp \hat{\svbmu}^\texttt{pu}, \qPU \rp}{n}} + \frac{(2+\lambda)^2}{n}.
\end{align}
as long as 
\begin{align}\label{eq:N_bdd_mrc_pu}
    N \geq   2e^{2\varepsilon}\lp\frac{2 (1+\lambda)}{\lambda \lp p_0 -1/2 \rp}\rp^2 \ln\lp \frac{4(1+\lambda)}{\lambda \lp p_0 -1/2 \rp} \rp.
\end{align}
\end{restatable}
\begin{proof}
The proof follows directly from Lemma~\ref{theorem:mrc_privunit} since for all $i \in [n]$, $\hat{\svbx}^{\texttt{mrc}}_i$ are independent of each other as well as unbiased.
\end{proof}

\subsection{Empirical Comparisons}
\label{appendix:mrc_pu_emp}
In this section, we compare $\MRC$ simulating $\PrivUnit$ (using its approximate DP guarantee) against $\PrivUnit$ and SQKR for mean estimation with $d = 500$ and $n = 5000$. We use the same data generation scheme described in Section \ref{subsec:mmrc_privunit_empirical} and set $\delta = 10^{-6}$. As before, SQKR uses $\#$-bits $= \varepsilon$ because it leads to a poor performance if $\#$-bits $ > \varepsilon$. We show the privacy-accuracy tradeoffs for these three methods in Figure \ref{fig:a_mean}. We see that $\MRC$ simulating $\PrivUnit$ can attain the accuracy of the uncompressed $\PrivUnit$ for the range of  $\varepsilon$'s typically considered by LDP mechanisms while only using $(3\varepsilon/ \ln 2) + 6$ bits. In comparison with the results from Section \ref{subsec:mmrc_privunit_empirical}, the results in this section come with an approximate guarantee ($\delta = 10^{-6}$) and with a higher number of bits of communication. In other words, along with the obvious gains of pure privacy instead of approximate privacy, $\MMRC$ results in a lower communication cost (and therefore a lower computation cost) compared to $\MRC$.
\begin{figure}[h]
\centering
\includegraphics[width=0.45\linewidth]{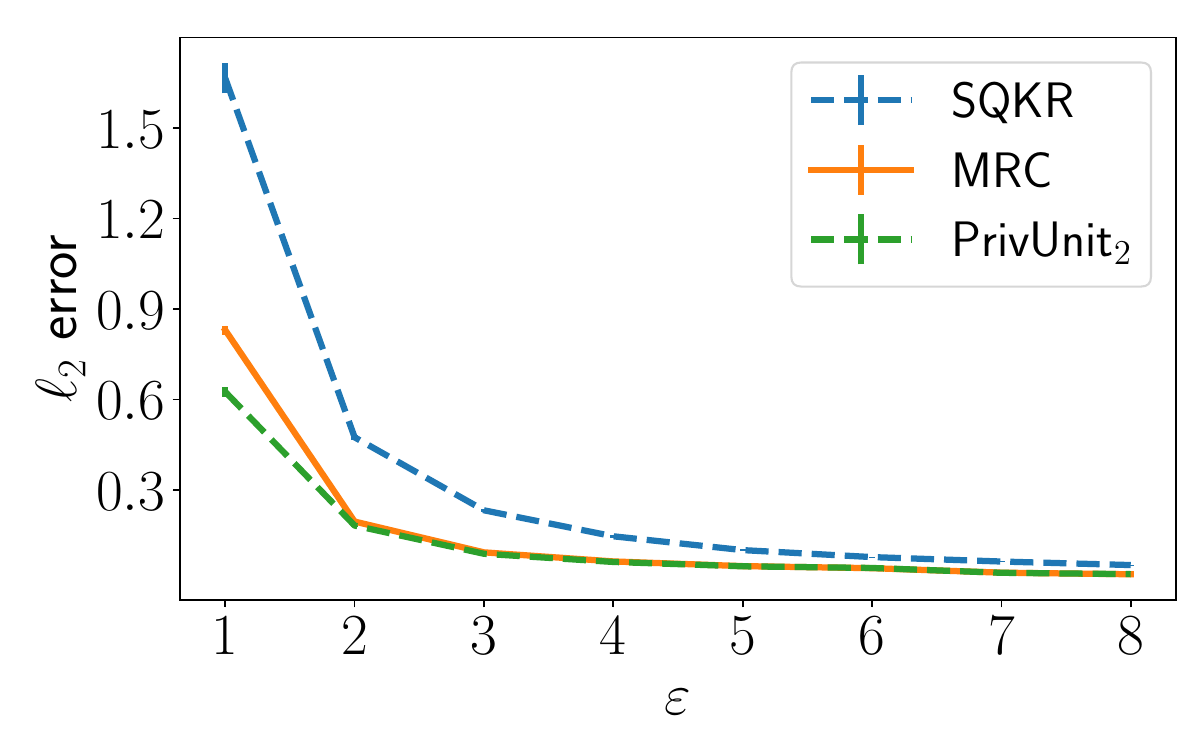}
\caption{Comparing $\PrivUnit$, $\MRC$ simulating $\PrivUnit$ and SQKR for mean estimation in terms of $\ell_2$ error vs $\varepsilon$ with $d = 500$, $n = 5000$, and $\#$bits $= (3\varepsilon/ \ln 2) + 6$.}
\label{fig:a_mean}
\end{figure}

\section{Modified Minimal Random Coding Simulating \texorpdfstring{$\PrivUnit$}{PrivUnit}}
\label{appendix:mmrc_pu}
In this section, we prove Lemma \ref{lemma:mmrc_privunit_bias} (in Appendix \ref{appendix:mmrc_pu_debias}) and Theorem \ref{thm:me_mmrc_pu} (in Appendix \ref{appendix:mmrc_pu_ut}). To prove Theorem \ref{thm:me_mmrc_pu}, first, in Appendix \ref{appendix:scaling_mmrc_pu}, we show that when the number of candidates $N$ is exponential in $\varepsilon$, the scaling factor $m_{\texttt{mmrc}}$ is close to the scaling parameter associated with $\PrivUnit$ (i.e., $m_{\texttt{pu}}$). Next, in Appendix \ref{appendix:mmrc_pu_scaling_mse}, we provide the relationship between the mean squared error associated with $\MMRC$ simulating $\PrivUnit$ and the mean squared error associated with $\PrivUnit$.
Finally, in Appendix \ref{appendix:mmrc_pu_emp}, we provide some empirical comparisons in addition to the ones in Section \ref{subsec:mmrc_privunit_empirical} between $\MMRC$ simulating $\PrivUnit$ and $\PrivUnit$.


\subsection{Unbiased Modified Minimal Random Coding simulating \texorpdfstring{$\PrivUnit$}{PrivUnit}}\label{appendix:mmrc_pu_debias}

Consider the $\PrivUnit$ $\varepsilon$-LDP mechanism $\qPU$ described in Section \ref{sec:preliminaries} with parameters $p_0$ and $\gamma$. $\PrivUnit$ is a cap-based mechanism with $\msf{Cap}_{\svbx} = \{\svbz \in \sphere^{d-1} \mid
\lan \svbz, \svbx  \ran \ge \gamma\}$ as discussed in Appendix \ref{appendix:privunit}. Let $\piMMRC$ be the distribution and $\svbz_1, \svbz_2,...,\svbz_N$ be the candidates obtained from Algorithm \ref{alg:mmrc} when the reference distribution is $\Unif(\sphere^{d-1})$. Let  $K \sim \piMMRC(\cdot)$. Define $p_{\texttt{mmrc}} \coloneqq \Probability(\svbz_K \in \msf{Cap}_{\svbx})$
to be the probability with which the sampled candidate $\svbz_K$ belongs to the spherical cap associated with $\PrivUnit$.
Define $m_{\texttt{mmrc}}$ as the scaling factor in \eqref{eq:m} when $p_0$ in \eqref{eq:m} is replaced by $p_{\texttt{mmrc}}$. Define $\hat{\svbx}^\texttt{mmrc} \coloneqq \svbz_K / m_\texttt{mmrc}$ as the estimator of the $\MMRC$ mechanism simulating $\PrivUnit$. 
\mmrcprivunitbias*
\begin{proof}
The proof is similar to the proof of Lemma \ref{theorem:mrc_privunit_bias}.
\end{proof}

\subsection{Utility of Modified Minimal Random Coding simulating \texorpdfstring{$\PrivUnit$}{PrivUnit}}\label{appendix:mmrc_pu_utility}

\subsubsection{The scaling factors of \texorpdfstring{$\PrivUnit$}{PrivUnit} and \texorpdfstring{$\MMRC$}{MMRC} are close when \texorpdfstring{$N$}{N} is of the right order}\label{appendix:scaling_mmrc_pu}
In the following Lemma, we show that when the number of candidates $N$ is exponential in $\varepsilon$, then the scaling parameters associated with $\PrivUnit$ and the $\MMRC$ scheme simulating $\PrivUnit$ are close.

\begin{lemma}\label{lemma:mmrc_privunit_approximation_error}
Let $N$ denote the number of candidates used in the \emph{$\MMRC$} scheme. Let $K \sim \piMMRC$ where $\piMMRC$ is the distribution over the indices $[N]$ associated the \emph{$\MMRC$} scheme simulating \emph{$\PrivUnit(\svbx,\gamma, p_0)$}. Consider any $\lambda > 0$.
Then, the scaling factor $m_{\texttt{pu}}$ associated with \emph{$\PrivUnit$} and the scaling factor $m_\texttt{mmrc}$ associated with the \emph{$\MMRC$} scheme simulating \emph{$\PrivUnit$} are such that
\begin{align}
    m_{\texttt{pu}} - m_\texttt{mmrc} \leq \lambda\cdot m_\texttt{mmrc}
\end{align}
as long as
\begin{align}
    N \geq   \frac{e^{2\varepsilon}}{2} \lp\frac{2 (1+\lambda)}{\lambda \lp p_0 -1/2 \rp}\rp^2 \ln\lp \frac{4(1+\lambda)}{\lambda \lp p_0 -1/2 \rp} \rp.
\end{align}
\end{lemma}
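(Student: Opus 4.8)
The plan is to reuse the skeleton of Lemma~\ref{lemma:mrc_privunit_approximation_error} essentially verbatim, with $\MRC$ replaced by $\MMRC$ and $p_{\texttt{mrc}}$ by $p_{\texttt{mmrc}}$; the only genuinely new ingredient is an exact formula for $p_{\texttt{mmrc}}$ that accounts for the clamping performed by Algorithm~\ref{alg:mmrc}. First I would recall, following \cite{BDFKR2018} and the proof of Lemma~\ref{lemma:mrc_privunit_approximation_error}, that $m_{\texttt{pu}} = \gamma_+ p_0 + \gamma_-(1-p_0)$ and $m_\texttt{mmrc} = \gamma_+ p_{\texttt{mmrc}} + \gamma_-(1-p_{\texttt{mmrc}})$ with $\gamma_- \le 0 \le \gamma_+$ and $|\gamma_+| \ge |\gamma_-|$, so $\gamma_-/(\gamma_+-\gamma_-) \ge -1/2$. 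Running the manipulation in \eqref{eq:mrc_privunit_approximation_error_0}--\eqref{eq:mrc_privunit_approximation_error_1} with these substitutions reduces the claim to proving $\frac{p_0-p_{\texttt{mmrc}}}{p_0-1/2} \le \frac{\lambda}{1+\lambda}$ (this forces $p_{\texttt{mmrc}} > 1/2$, so all the scaling factors are positive); rearranging then yields $m_{\texttt{pu}} - m_\texttt{mmrc} \le \lambda\, m_\texttt{mmrc}$.

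The heart of the argument is controlling $p_0 - p_{\texttt{mmrc}}$. I would write $p_{\texttt{mmrc}} = \Expectation_\theta[g(\theta)]$, where $\theta \sim \frac{1}{N}\Binom(N,\mu)$ with $\mu \coloneqq \Expectation[\theta] = \Probability_{\svbz \sim \Unif(\sphere^{d-1})}(\svbz \in \msf{Cap}_{\svbx})$, and $g(\theta) \coloneqq \Probability(\svbz_K \in \msf{Cap}_{\svbx} \mid \theta)$ is the conditional cap probability under $\piMMRC$. Reading off the three branches of Algorithm~\ref{alg:mmrc} gives $g$ as an explicit \emph{piecewise-linear} function of $\theta$: for $\theta < \mu$ the upper threshold is active and $g(\theta) = \frac{\theta c_1}{\mu c_1 + (1-\mu)c_2}$; for $\theta > \mu$ the lower threshold is active and $g(\theta) = 1 - \frac{(1-\theta)c_2}{\mu c_1 + (1-\mu)c_2}$; and for $\theta = \mu$, $\piMMRC = \piMRC$ and $g(\mu) = \frac{\mu c_1}{\mu c_1 + (1-\mu)c_2}$, where $c_1 = c_1(\varepsilon,d) \ge c_2 = c_2(\varepsilon,d)$ are the two density levels of $\qPU$ from \eqref{eq:privunit_density}. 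Using the identities $\mu\, c_1 A(1,d) = p_0$ and $(1-\mu)\, c_2 A(1,d) = 1-p_0$ (which follow by integrating $\qPU(\cdot\mid\svbx)$ over $\msf{Cap}_{\svbx}$ and its complement), I would check: (i) $g$ is continuous at $\mu$ with $g(\mu) = p_0$; (ii) $g$ is \emph{concave} on $[0,1]$, since its slope drops from $\frac{c_1}{\mu c_1 + (1-\mu)c_2}$ to $\frac{c_2}{\mu c_1 + (1-\mu)c_2}$ at $\theta = \mu$ and $c_1 \ge c_2$; and (iii) $g$ is $L$-Lipschitz on $[0,1]$ with $L = \frac{c_1}{\mu c_1 + (1-\mu)c_2} = \frac{c_1/c_2}{1+\mu(c_1/c_2 - 1)} \le \frac{c_1}{c_2} \le \exp(\varepsilon)$, the last inequality being \eqref{eq:cap-based-bound}.

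The finish then mirrors the $\MRC$ proof. Concavity of $g$ and Jensen's inequality give $p_{\texttt{mmrc}} = \Expectation_\theta[g(\theta)] \le g(\mu) = p_0$; Hoeffding's inequality gives $\Probability(|\theta - \mu| \ge \eta) \le \beta$ for $\eta = \sqrt{\ln(2/\beta)/(2N)}$; and splitting $\Expectation_\theta[g(\mu) - g(\theta)]$ over the event $\{|\theta-\mu|\le\eta\}$ and its complement, using $0 \le g(\mu) - g(\theta) \le L|\theta-\mu|$ on the former and $g(\mu) - g(\theta) \le g(\mu) \le 1$ on the latter, gives $p_0 - p_{\texttt{mmrc}} \le L\eta + \beta \le \exp(\varepsilon)\sqrt{\ln(2/\beta)/(2N)} + \beta$. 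Choosing $\beta = \frac{\lambda(p_0-1/2)}{2(1+\lambda)}$, so that $2/\beta = \frac{4(1+\lambda)}{\lambda(p_0-1/2)}$, and invoking the hypothesized lower bound on $N$ makes $\exp(\varepsilon)\sqrt{\ln(2/\beta)/(2N)} \le \beta$, hence $p_0 - p_{\texttt{mmrc}} \le 2\beta = \frac{\lambda(p_0-1/2)}{1+\lambda}$, as required. The main obstacle is the case analysis that produces the formula for $g$ and the verification of its concavity and Lipschitz constant; once that is done, everything else is the computation already carried out in Lemma~\ref{lemma:mrc_privunit_approximation_error}. Notably, the factor-of-four improvement in the required $N$ over the $\MRC$ bound is exactly the gain from $g$ being piecewise-linear — its Lipschitz constant is $\exp(\varepsilon)$, whereas the corresponding bound in the $\MRC$ argument carried the constant $2\exp(\varepsilon)$.
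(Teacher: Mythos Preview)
Your proposal is correct and follows essentially the same approach as the paper: both derive the piecewise-linear formula for $p_\texttt{mmrc}$ from Algorithm~\ref{alg:mmrc}, bound $p_0-p_\texttt{mmrc}$ by $e^\varepsilon\sqrt{\ln(2/\beta)/(2N)}+\beta$ via Hoeffding, and then plug into the reduction from Lemma~\ref{lemma:mrc_privunit_approximation_error}. One tiny slip: the pointwise claim ``$0\le g(\mu)-g(\theta)$'' is false for $\theta>\mu$ (your $g$ is increasing), but it is irrelevant---only the upper bound $g(\mu)-g(\theta)\le L|\theta-\mu|$ is used, and the nonnegativity of $p_0-p_\texttt{mmrc}=\E[g(\mu)-g(\theta)]$ follows from Jensen as you note.
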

\begin{proof}
The proof follows a structure similar to the proof of Lemma \ref{lemma:mrc_privunit_approximation_error}. As in the proof of Lemma \ref{lemma:mrc_privunit_approximation_error}, we have
\begin{align}
    \frac{1}{m_\texttt{mmrc}} - \frac{1}{m_{\texttt{pu}}} \leq \frac{1}{m_{\texttt{pu}}}\lp\frac{ 1 }{ \dfrac{p_0 -1/2}{p_0 - p_\texttt{mmrc}}-1}\rp
    \label{eq:mmrc_privunit_approximation_error_1}
\end{align}
We will now upper bound $\dfrac{p_0 - p_\texttt{mmrc}}{p_0 -1/2}$. We start by obtaining expressions for $p_\texttt{mmrc}$ and $p_0$.

To compute $p_\texttt{mmrc} \coloneqq \Pr\lbp \svbz_K \in \msf{Cap}_{\svbx} \rbp$, recall that $\theta$ denotes the fraction of candidates that belong inside the $\msf{Cap}_{\svbx}$. Let $c_1(\varepsilon, d)$ and $c_2(\varepsilon, d)$ be as defined in \eqref{eq:privunit_density}. Let $\bar{c}_1(\varepsilon, d) = c_1(\varepsilon, d) \times A(1,d)$ and $\bar{c}_2(\varepsilon, d) = c_2(\varepsilon, d) \times A(1,d)$. It is easy to see from Algorithm \ref{alg:privunit} and \eqref{eq:privunit_density} that $\Probability(\svbz_k \in \msf{Cap}_{\svbx}) =  \bar{c}_1(\varepsilon, d) / p_0$. 
Further, since $\svbz_k$ are generated uniformly at random,
$$ \theta \sim \frac{1}{N} \msf{Binom}\lp N, \frac{ \bar{c}_1(\varepsilon, d)}{p_0}\rp,$$ so we have
\begin{align}
    p_\texttt{mmrc} = \Pr\lbp \svbz_K \in \msf{Cap}_{\svbx} \rbp  & = \E\lb \Pr\lbp \svbz_K \in \msf{Cap}_{\svbx}| \theta \rbp \rb  \\
    & \stackrel{(a)}{=} \E\bigg[ \frac{\theta  \bar{c}_1(\varepsilon,d) }{ \bar{c}_2(\varepsilon,d) + \E\lb \theta \rb \lp  \bar{c}_1(\varepsilon,d) -  \bar{c}_2(\varepsilon,d) \rp} \times \Indicator \lp \theta \leq \E\lb \theta \rb \rp  \\
    & \qquad + \frac{\E \lb \theta \rb   \bar{c}_1(\varepsilon,d) + (\theta - \E \lb \theta \rb)  \bar{c}_2(\varepsilon,d)}{ \bar{c}_2(\varepsilon,d) + \E\lb \theta \rb \lp  \bar{c}_1(\varepsilon,d) -  \bar{c}_2(\varepsilon,d) \rp}  \times \Indicator \lp \theta > \E\lb \theta \rb \rp \bigg]
    \label{eq:mmrc_privunit_approximation_error_2}
\end{align}
where $(a)$ follows from Algorithm \ref{alg:mmrc}.

Similarly, with some simple manipulations on the definition of $p_0$, we have 
\begin{align}
p_0 
& = \frac{\E\lb \theta \rb  \bar{c}_1(\varepsilon,d)}{ \bar{c}_2(\varepsilon,d) + \E\lb \theta \rb \lp  \bar{c}_1(\varepsilon,d) -  \bar{c}_2(\varepsilon,d) \rp} \label{eq:mmrc_privunit_approximation_error_3}
\end{align}
From \eqref{eq:mmrc_privunit_approximation_error_2} and \eqref{eq:mmrc_privunit_approximation_error_3}, we have
\begin{align}
    p_0 - p_\texttt{mmrc} 
    & =  \frac{\E\lb  \bar{c}_1(\varepsilon,d) (\E[\theta] - \theta) \times \Indicator \lp \theta \leq \E\lb \theta \rb \rp +  \bar{c}_2(\varepsilon,d) (\E[\theta] - \theta) \times \Indicator \lp \theta > \E\lb \theta \rb \rp \rb}{ \bar{c}_2(\varepsilon,d) + \E\lb \theta \rb \lp  \bar{c}_1(\varepsilon,d) -  \bar{c}_2(\varepsilon,d) \rp}
     \\
    & \stackrel{(a)}{\leq} \frac{\E\lb  \bar{c}_1(\varepsilon,d) (\E[\theta] - \theta) \times \Indicator \lp \theta \leq \E\lb \theta \rb \rp \rb}{ \bar{c}_2(\varepsilon,d) + \E\lb \theta \rb \lp  \bar{c}_1(\varepsilon,d) -  \bar{c}_2(\varepsilon,d) \rp}
\end{align}
where $(a)$ follows because $(\E[\theta] - \theta) \times \Indicator \lp \theta > \E\lb \theta \rb \rp \leq 0$. Now, using the Hoeffding's inequality, we have $\Probability\lbp \lba \theta - \E[\theta] \rba \geq \sqrt{\frac{\ln\lp 2/\beta \rp}{2N}} \rbp \leq \beta$. Conditioned on the event $\lbp \lba \theta - \E[\theta] \rba \leq \sqrt{\frac{\ln\lp 2/\beta \rp}{2N}} \rbp$ and using the fact that $\lba p_0 - p_\texttt{mmrc}  \rba\leq 1$, we have
\begin{align}
    p_0 - p_\texttt{mmrc} & \leq \frac{ \bar{c}_1(\varepsilon,d) \sqrt{\frac{\ln(2/\beta)}{2N}}}{ \bar{c}_2(\varepsilon,d) + \E\lb \theta \rb \lp  \bar{c}_1(\varepsilon,d) -  \bar{c}_2(\varepsilon,d) \rp}  + \beta \\
    & \stackrel{(a)}{\leq}  \frac{ \bar{c}_1(\varepsilon,d)}{ \bar{c}_2(\varepsilon,d)} \sqrt{\frac{\ln(2/\beta)}{2N}} + \beta
    \stackrel{(b)}{\leq} \exp(\varepsilon) \sqrt{\frac{\ln(2/\beta)}{2N}} + \beta \stackrel{(c)}{\leq} \frac{\lambda(p_0 - 1/2)}{1+\lambda}. \label{eq:mmrc_privunit_approximation_error_6}
\end{align}
where $(a)$ follows because $\E\lb \theta \rb \geq 0$, $(b)$ follows because $ \bar{c}_1(\varepsilon,d)/ \bar{c}_2(\varepsilon,d) \leq e^{\varepsilon}$, and $(c)$ follows if we pick 
\begin{align*}
    \beta & \leq \frac{\lambda(p_0 - 1/2)}{2(1+\lambda)},\\
    N & \geq\frac{\exp(2\varepsilon) \ln(2/\beta)}{2\lp \frac{\lambda(p_0 - 1/2)}{1+\lambda} - \beta \rp^2}
    = \frac{\exp(2\varepsilon)}{2}\lp\frac{2 (1+\lambda)}{\lambda \lp p_0 -1/2 \rp}\rp^2 \ln\lp \frac{4(1+\lambda)}{\lambda \lp p_0 -1/2 \rp} \rp.
\end{align*}
The rest of the proof is similar to the proof of Lemma \ref{lemma:mrc_privunit_approximation_error}.
\end{proof}

\subsubsection{Relationship between the mean squared errors associated with \texorpdfstring{$\PrivUnit$}{PrivUnit} and \texorpdfstring{$\MMRC$}{MMRC} simulating \texorpdfstring{$\PrivUnit$}{PrivUnit}}\label{appendix:mmrc_pu_scaling_mse}
In the following Proposition,
we show that if the scaling factor $m_{\texttt{mmrc}}$ is close to the scaling parameter $m_{\texttt{pu}}$, then the mean squared error associated with $\MMRC$ simulating $\PrivUnit$ (i.e., $\Expectation_{\qMMRC} \big[ \lV  \hat{\svbx}^\texttt{mmrc} - \svbx \rV^2_2  \big]$) is close to the mean squared error associated with $\PrivUnit$ (i.e., $\Expectation_{\qPU} \big[ \lV  \hat{\svbx}^\texttt{pu} - \svbx \rV^2_2  \big]$).
\begin{proposition}\label{proposition:mmrc_mse_wrt_privunit}
Let $\qPU(\svbz | \svbx)$ be the  $\varepsilon$-LDP \emph{$\PrivUnit$} mechanism with parameters $p_0$ and $\gamma$ and estimator $\hat{\svbx}^{\texttt{pu}}$. Let $\qMMRC(\svbz|\svbx)$ denote the \emph{$\MMRC$} privatization mechanism simulating \emph{$\PrivUnit$} with $N$ candidates and estimator $\hat{\svbx}^{\texttt{mmrc}}$.
Let $m_{\texttt{pu}}$ denote the scaling factor  associated with \emph{$\PrivUnit$} and $m_\texttt{mmrc}$ denote the scaling factor  associated with the \emph{$\MMRC$} scheme simulating \emph{$\PrivUnit$}. Consider any $\lambda > 0$. If $m_{\texttt{pu}} - m_\texttt{mmrc} \leq \lambda \cdot m_\texttt{mmrc}$, then 
  \begin{align}
    \Expectation_{\qMMRC} \big[ \lV  \hat{\svbx}^\texttt{mmrc} - \svbx \rV^2_2  \big]  \leq  \lp 1+\lambda \rp^2  \Expectation_{\qPU}\big[\|\hat{\svbx}^{\texttt{pu}} - \svbx\|^2\big] + 2(1+\lambda)(2+\lambda) \sqrt{\Expectation_{\qPU}\big[\|\hat{\svbx}^{\texttt{pu}} - \svbx\|^2\big]} + (2+\lambda)^2.
\end{align}
\end{proposition}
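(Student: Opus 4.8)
The plan is to mirror the proof of Proposition~\ref{proposition:mrc_mse_wrt_privunit} essentially verbatim, substituting $\MMRC$ for $\MRC$, $\hat{\svbx}^{\texttt{mmrc}}$ for $\hat{\svbx}^{\texttt{mrc}}$, $m_\texttt{mmrc}$ for $m_\texttt{mrc}$, and $\qMMRC$ for $\qMRC$ throughout. The three structural ingredients I would rely on are: $\|\hat{\svbx}^{\texttt{mmrc}}\| = 1/m_\texttt{mmrc}$ holds deterministically (since $\hat{\svbx}^{\texttt{mmrc}} = \svbz_K/m_\texttt{mmrc}$ and $\svbz_K \in \sphere^{d-1}$), the analogous identity $\|\hat{\svbx}^{\texttt{pu}}\| = 1/m_{\texttt{pu}}$ for $\PrivUnit$, and $\|\svbx\| \le 1$ since $\svbx \in \sphere^{d-1}$.

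First I would relate $1/m_{\texttt{pu}}$ to the error of $\PrivUnit$. By the triangle inequality together with the two norm identities above,
\begin{align}
    \|\hat{\svbx}^{\texttt{pu}} - \svbx\| \ge \|\hat{\svbx}^{\texttt{pu}}\| - \|\svbx\| \ge \frac{1}{m_{\texttt{pu}}} - 1,
\end{align}
so that $1/m_{\texttt{pu}} \le \sqrt{\Expectation_{\qPU}[\|\hat{\svbx}^{\texttt{pu}} - \svbx\|^2]} + 1$.

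Next I would expand the $\MMRC$ mean squared error: writing $\Expectation_{\qMMRC}[\|\hat{\svbx}^{\texttt{mmrc}} - \svbx\|^2] = \Expectation_{\qMMRC}[\|\hat{\svbx}^{\texttt{mmrc}}\|^2] + \|\svbx\|^2 - 2\lan \Expectation_{\qMMRC}[\hat{\svbx}^{\texttt{mmrc}}], \svbx\ran$, bounding the inner-product term in absolute value by Cauchy--Schwarz, and substituting $\|\hat{\svbx}^{\texttt{mmrc}}\| = 1/m_\texttt{mmrc}$ and $\|\svbx\| \le 1$, I obtain $\Expectation_{\qMMRC}[\|\hat{\svbx}^{\texttt{mmrc}} - \svbx\|^2] \le (1/m_\texttt{mmrc})^2 + 1 + 2/m_\texttt{mmrc}$. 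The hypothesis $m_{\texttt{pu}} - m_\texttt{mmrc} \le \lambda \cdot m_\texttt{mmrc}$ rearranges to $1/m_\texttt{mmrc} \le (1+\lambda)/m_{\texttt{pu}}$, turning this into $\le \big((1+\lambda)/m_{\texttt{pu}}\big)^2 + 1 + 2(1+\lambda)/m_{\texttt{pu}}$. Substituting the first-step bound $1/m_{\texttt{pu}} \le t + 1$ with $t \coloneqq \sqrt{\Expectation_{\qPU}[\|\hat{\svbx}^{\texttt{pu}} - \svbx\|^2]}$ and using the elementary identity $(1+\lambda)^2(t+1)^2 + 1 + 2(1+\lambda)(t+1) = (1+\lambda)^2 t^2 + 2(1+\lambda)(2+\lambda)t + (2+\lambda)^2$ yields exactly the claimed bound.

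I do not expect any genuine obstacle: the entire argument is inequality-chasing isomorphic to the $\MRC$ case, and the only mildly delicate part is the constant bookkeeping in the final algebraic identity, which is one line. I would also remark that unbiasedness of $\hat{\svbx}^{\texttt{mmrc}}$ (Lemma~\ref{lemma:mmrc_privunit_bias}) is not actually used in this proposition, since we only invoke the Cauchy--Schwarz bound on the cross term; unbiasedness enters instead when passing to the corollary on empirical mean estimation.
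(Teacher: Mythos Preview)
Your proposal is correct and matches the paper's approach exactly: the paper's proof of this proposition simply says ``The proof is similar to the proof of Proposition~\ref{proposition:mrc_mse_wrt_privunit},'' and you have faithfully reproduced that argument with the appropriate substitutions. Your observation that unbiasedness is not actually invoked here (only the Cauchy--Schwarz bound on the cross term) is a correct and useful clarification.
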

\begin{proof}
The proof is similar to the proof of Proposition \ref{proposition:mrc_mse_wrt_privunit}.
\end{proof}

In the following Lemma, we show that with on the order of $\varepsilon$-bits of communication, the mean squared error associated with $\MMRC$ simulating $\PrivUnit$ (i.e., $\Expectation_{\qMMRC} \big[ \lV  \hat{\svbx}^\texttt{mmrc} - \svbx \rV^2_2  \big]$) is close to the mean squared error associated with $\PrivUnit$ (i.e., $\Expectation_{\qPU} \big[ \lV  \hat{\svbx}^\texttt{pu} - \svbx \rV^2_2  \big]$).
\begin{restatable}{lemma}{mmrcprivunit}\label{theorem:mmrc_accuracy_privunit}
Let $\qPU(\svbz | \svbx)$ be the  $\varepsilon$-LDP \emph{$\PrivUnit$} mechanism with parameters $p_0$ and $\gamma$ and estimator $\hat{\svbx}^{\texttt{pu}}$. Let $\qMMRC(\svbz|\svbx)$ denote the \emph{$\MMRC$} privatization mechanism simulating \emph{$\PrivUnit$} with $N$ candidates and estimator $\hat{\svbx}^{\texttt{mmrc}}$ as defined above. 
Consider any $\lambda > 0$. Then,
\begin{align}
  \Expectation_{\qMMRC} \big[ \lV  \hat{\svbx}^\texttt{mmrc} - \svbx \rV^2_2  \big] 
     \leq  \lp 1+\lambda \rp^2  \Expectation_{\qPU}\big[\|\hat{\svbx}^{\texttt{pu}} - \svbx\|^2\big]  + 2(1+\lambda)(2+\lambda) \sqrt{\Expectation_{\qPU}\big[\|\hat{\svbx}^{\texttt{pu}} - \svbx\|^2\big]} + (2+\lambda)^2
\end{align}
as long as 
\begin{align}
N \geq   \frac{e^{2\varepsilon}}{2}\lp\frac{2 (1+\lambda)}{\lambda \lp p_0 -1/2 \rp}\rp^2 \ln\lp \frac{4(1+\lambda)}{\lambda \lp p_0 -1/2 \rp} \rp.
\end{align}
\end{restatable}
\begin{proof}
The proof follows from Proposition \ref{proposition:mmrc_mse_wrt_privunit} and Lemma \ref{lemma:mmrc_privunit_approximation_error}.
\end{proof}

\subsubsection{Simulating \texorpdfstring{$\PrivUnit$}{PrivUnit} using Modified Minimal Random Coding}\label{appendix:mmrc_pu_ut}
Now, we provide a proof of Theorem \ref{thm:me_mmrc_pu}.
\mmrcpu*
\begin{proof}
The proof follows directly from Lemma~\ref{theorem:mmrc_accuracy_privunit} since for all $i \in [n]$, $\hat{\svbx}^{\texttt{mmrc}}_i$ are independent of each other as well as unbiased.
\end{proof}

\subsection{Additional Empirical Comparisons}\label{appendix:mmrc_pu_emp}
In Section \ref{subsec:mmrc_privunit_empirical}, we empirically demonstrated the privacy-accuracy-communication tradeoffs of $\MMRC$ simulating $\PrivUnit$ against $\PrivUnit$ and SQKR in terms of $\ell_2$ error vs $\#$bits and $\ell_2$ error vs $\varepsilon$ (see Figure \ref{fig:mean}). In this section, we provide comparisons between these methods in terms of $\ell_2$ error vs $d$ (see Figure \ref{fig:mean_app} (left)) and $\ell_2$ error vs $n$ (see Figure \ref{fig:mean_app} (right)) for a fixed $\varepsilon$ (=6) and a fixed $\#$bits (=11). As before, SQKR uses $\#$bits $= \varepsilon$ for both because it leads to a poor performance if $\#$bits $ > \varepsilon$.
\begin{figure}[h]
\centering
\includegraphics[width=0.45\linewidth]{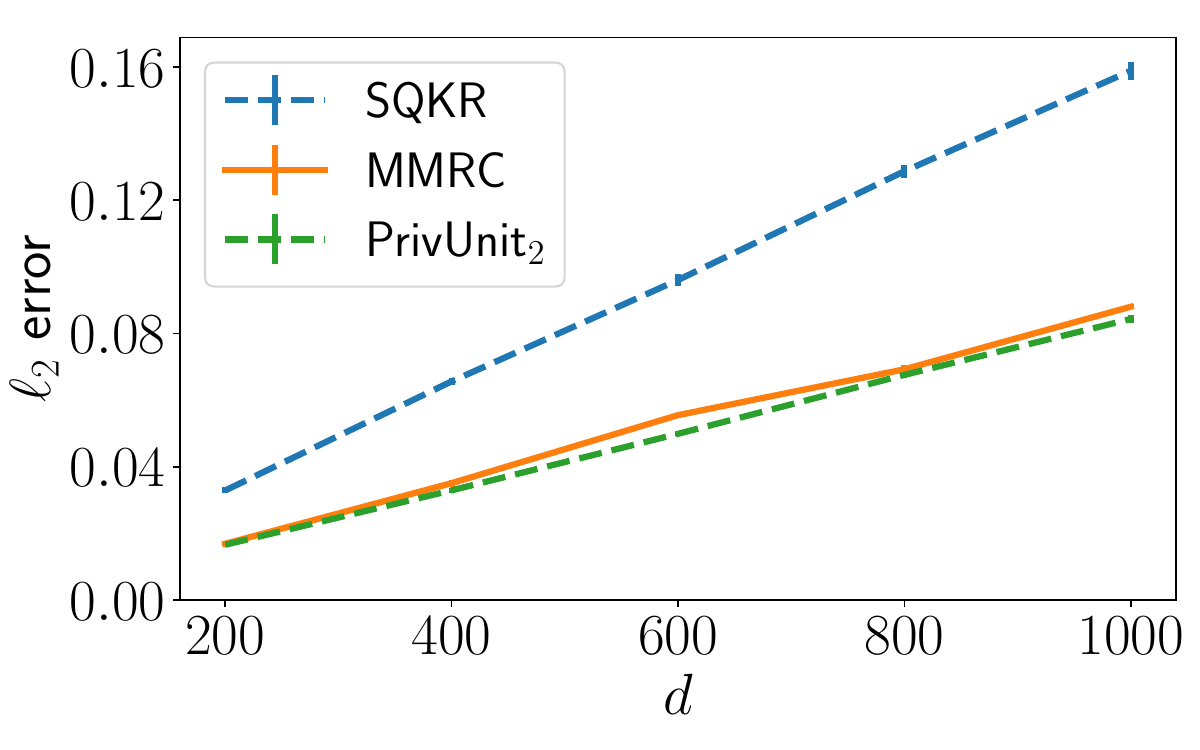} \qquad \includegraphics[width=0.45\linewidth]{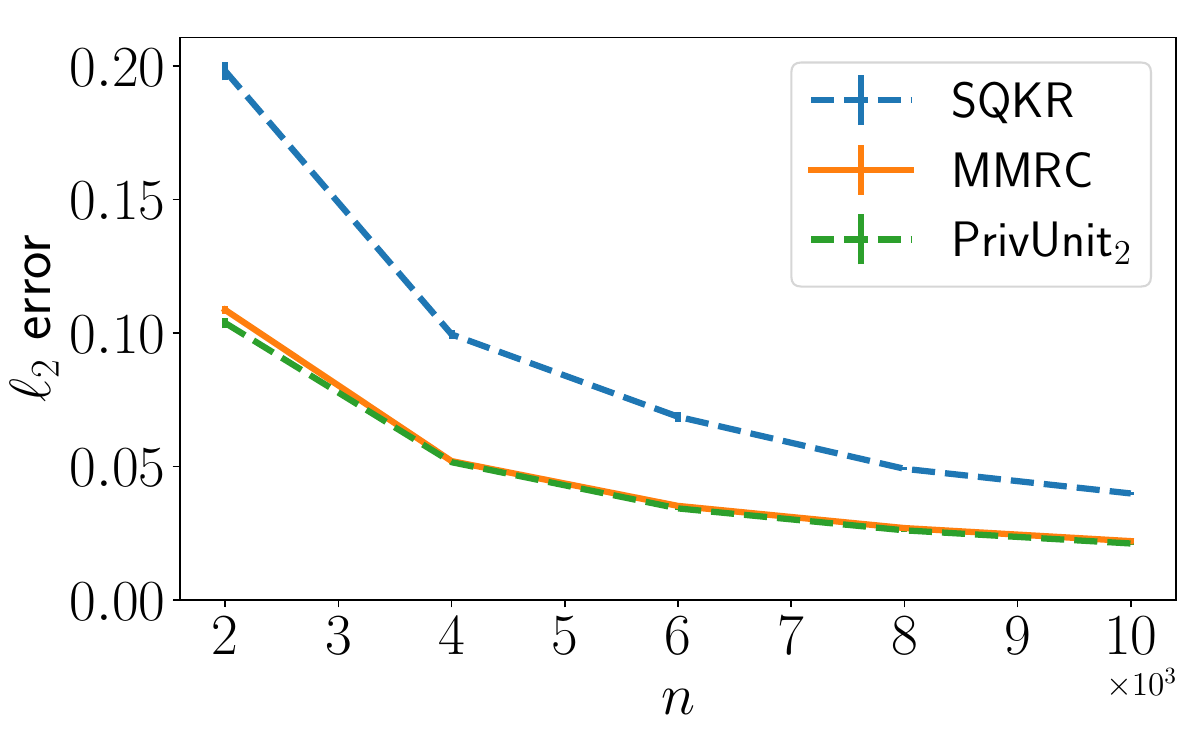}%
\caption{Comparing $\PrivUnit$, $\MMRC$ simulating $\PrivUnit$ and SQKR for mean estimation with $\varepsilon=6$ and $\#$bits $=11$. \textbf{Left:} $\ell_2$ error vs $d$ for $n = 5000$. \textbf{Right:} $\ell_2$ error vs $n$ for $d = 500$.}
\label{fig:mean_app}
\end{figure}
\section{Preliminary on \texorpdfstring{$\SubsetSelection$}{Subset Selection}}
\label{appendix:ss}

In this section, we briefly recap the $\SubsetSelection$ (SS) mechanism proposed in \cite{YB18}.
Let $\svbx = (x_1, x_2,...,x_d) \in \{0, 1\}^d$ be the one-hot representation of an input symbol in $\cX = [d] = \{1,\cdots, d\}$\footnote{With a slight abuse of notation, when context is clear, we sometime use $\svbx = i$ for some $ i \in [d]$ to indicate the one-hot representation of symbol $i$}. Let $\qSS(\svbz|\svbx)$ be the $\SubsetSelection$ mechanism defined in \cite{YB18} where the output alphabet is the set of all $d-$bit binary strings with Hamming weight $s \in [d]$, i.e.,
\begin{align}
\mcal{Z} = \lbp \svbz = (z_1, z_2,...,z_d) \in \{0, 1\}^d: \sum_{i=1}^d z_i = s \rbp. \label{eq:z_ss}    
\end{align}
Given $\svbx \in \cX$,  $\SubsetSelection$ maps it to $\svbz \in \cZ$ with the following conditional probability:
\begin{align}\label{eq:q_ss_app}
    \qSS(\svbz|\svbx = i) \coloneqq \begin{cases}
    \dfrac{e^{\varepsilon}}{{ \binom{d-1}{s-1} }e^{\varepsilon}+{ \binom{d-1}{s} }} & \text{if}\ \svbz \in \cZ_{i}
    \\[10pt]
    \dfrac{1}{{ \binom{d-1}{s-1} }e^{\varepsilon}+{ \binom{d-1}{s} }} & \text{if}\ \svbz \in \cZ \setminus \cZ_{i}
    \end{cases}
\end{align}
where $\cZ_{i} = \lbp \svbz = (z^{(1)}, \cdots ,z^{(d)}) \in \cZ : z^{(i)} = 1 \rbp$ is the set of elements in $\cZ$ with $1$ in the $i^{th}$ location.

\cite{YB18} show that the marginal distribution of $\svbz$ is a linear function of that of $\svbx$. In particular, if we define $p_i \coloneqq \Pr\lbp x_i = 1 \rbp$ for all $i \in [d]$ and let $\svbz \sim \qSS(\cdot |\svbx)$, then \eqref{eq:ss_eq5} is due to (5) in \cite{YB18},
\begin{align}
    \qSS_i \coloneqq \Pr\lbp z_i = 1 \rbp
    & = \frac{\binom{d-1}{s-1} e^\varepsilon p_i + \lp \binom{d-2}{s-2}e^\varepsilon + \binom{d-2}{s-1} \rp (1-p_i)}{\binom{d-1}{s-1} e^\varepsilon + \binom{d-1}{s}} \label{eq:ss_eq5}\\
    & = \frac{s(d  -  s)(e^{\varepsilon}-1)}{(d  -  1)(s(e^{\varepsilon}  -  1)+d)} p_i+  \frac{s((s  -  1)e^{\varepsilon} +(d  -  s))}{(d  -  1)(s(e^{\varepsilon}  -  1)+d)} \\
    & = m_{\texttt{ss}} \cdot p_i + b_{\texttt{ss}}, \label{eq:ss_marginal}
\end{align}
where 
\begin{align}
    m_{\texttt{ss}} \coloneqq \frac{s(d  -  s)(e^{\varepsilon}-1)}{(d  -  1)(s(e^{\varepsilon}  -  1)+d)}, \qquad  b_{\texttt{ss}} \coloneqq  \frac{s((s  -  1)e^{\varepsilon} +(d  -  s))}{(d  -  1)(s(e^{\varepsilon}  -  1)+d)}. \label{eq:scaling_factors_ss}
\end{align} 

The final estimator of $\svbx$ is denoted by $\hat{\svbx}^{\texttt{ss}}$ and is defined as $\frac{1}{m_{\texttt{ss}}} \cdot (\svbz-b_{\texttt{ss}}\cdot\mb{1}_d)$, where 
$\mb{1}_d \eqDef [1,\cdots,1]^\intercal \in \mbb{R}^d$. In other words, $m_{\texttt{ss}}$ and $b_{\texttt{ss}}$ are used de-bias the outcome $\svbz$. The scheme is summarized in Algorithm~\ref{alg:ss}.
\begin{algorithm}
\caption{Subset Selection}
\label{alg:ss}
\begin{algorithmic}
\Require $\svbx \in [d]$, $s \in [d]$.
\State Draw a $s$-hot random vector $\svbz$ according to the distribution $\qSS(\svbz|\svbx)$ in \eqref{eq:q_ss_app}.

\Return $\hat{\svbx}^{\texttt{ss}} = \frac{1}{m_{\texttt{ss}}} \cdot (\svbz-b_{\texttt{ss}}\cdot\mb{1}_d)$
\end{algorithmic}
\end{algorithm}

\subsection{\texorpdfstring{$\SubsetSelection$}{Subset Selection} is unbiased and order-optimal}
The following proposition borrowed from \cite{YB18} shows that the output of the $\SubsetSelection$ mechanism  (a) is unbiased and (b) has order-optimal utility.

\begin{proposition}
Let $\hat{\svbx}^{\texttt{ss}}$ = \emph{$\SubsetSelection(\svbx,s)$} for some $\svbx \in \cX$ and $s \in [d]$. Then, $\E[\hat{\svbx}^{\texttt{ss}}] = \svbx$. Further, the $\ell_2$ estimation error is 
\begin{equation*}
    \E\lb \lV \hat{\svbx}^{\texttt{ss}} -\svbx \rV^2_2\rb = \lp \frac{\lp s(d-2)+1 \rp e^{2\varepsilon}}{(d-s)\lp e^\varepsilon -1\rp^2}+\frac{2(d-2)}{\lp e^\varepsilon -1\rp^2}+ \frac{(d-2)(d-s)+1}{s\lp e^\varepsilon -1\rp^2} - \sum_i p_i^2\rp.
\end{equation*}
Moreover, if we pick $s \coloneqq \lceil \frac{d}{1+e^\varepsilon}\rceil$, then
$$\E\lb \lV \hat{\svbx}^{\texttt{ss}} -\svbx\rV^2_2\rb = \frac{d}{\min\lp e^\varepsilon, \lp e^\varepsilon-1 \rp^2, d \rp}, $$
which is order-optimal.
\end{proposition}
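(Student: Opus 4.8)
The plan is to avoid re-deriving the distribution of $\svbz$ and instead work from the linearity relation \eqref{eq:ss_marginal}. For unbiasedness, write $p_i = \Pr\{x_i = 1\}$; equation \eqref{eq:ss_marginal} already gives $\Pr\{z_i = 1\} = m_{\texttt{ss}}\, p_i + b_{\texttt{ss}}$ for every coordinate $i$. Since $\hat{\svbx}^{\texttt{ss}} = (\svbz - b_{\texttt{ss}})/m_{\texttt{ss}}$, linearity of expectation immediately yields $\Expectation[\hat{x}^{\texttt{ss}}_i] = p_i$ for all $i$, i.e. $\Expectation[\hat{\svbx}^{\texttt{ss}}] = \svbx$ (equivalently, for a fixed one-hot input the estimator returns that input in expectation).

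For the $\ell_2$ error, I would use unbiasedness to write
\begin{align}
\Expectation\big[\lV \hat{\svbx}^{\texttt{ss}} - \svbx \rV_2^2\big] = \sum_{i=1}^d \Var(\hat{x}^{\texttt{ss}}_i) = \frac{1}{m_{\texttt{ss}}^2}\sum_{i=1}^d \Var(z_i),
\end{align}
noting that no covariance terms appear because this is a sum of coordinatewise squared errors, not the variance of a sum. Each $z_i$ is Bernoulli with mean $q_i \eqDef m_{\texttt{ss}} p_i + b_{\texttt{ss}}$, so $\Var(z_i) = q_i(1-q_i)$ and $\sum_i \Var(z_i) = \sum_i q_i - \sum_i q_i^2$. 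The deterministic Hamming-weight constraint $\sum_i z_i = s$ forces $\sum_i q_i = s$ (equivalently $m_{\texttt{ss}} + d\, b_{\texttt{ss}} = s$, which one checks directly from \eqref{eq:scaling_factors_ss}), while $\sum_i q_i^2 = m_{\texttt{ss}}^2 \sum_i p_i^2 + 2 m_{\texttt{ss}} b_{\texttt{ss}} + d\, b_{\texttt{ss}}^2$. Hence
\begin{align}
\Expectation\big[\lV \hat{\svbx}^{\texttt{ss}} - \svbx \rV_2^2\big] = \frac{s - 2 m_{\texttt{ss}} b_{\texttt{ss}} - d\, b_{\texttt{ss}}^2}{m_{\texttt{ss}}^2} - \sum_i p_i^2,
\end{align}
and substituting the closed forms of $m_{\texttt{ss}}, b_{\texttt{ss}}$ from \eqref{eq:scaling_factors_ss} and clearing denominators (common denominator $s(d-s)(e^{\varepsilon}-1)^2$) is a purely mechanical simplification that reproduces the stated three-term expression.

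For order-optimality with $s = \lceil d/(1+e^{\varepsilon}) \rceil$, I would split on $e^{\varepsilon}$ versus $d$. When $e^{\varepsilon} \le d$ one has $s = \Theta\!\big(d/(1+e^{\varepsilon})\big)$ and $d - s = \Theta\!\big(d e^{\varepsilon}/(1+e^{\varepsilon})\big)$; feeding these into the three terms shows the first and third are each $\Theta\!\big(d e^{\varepsilon}/(e^{\varepsilon}-1)^2\big)$ and the second is $\Theta\!\big(d/(e^{\varepsilon}-1)^2\big)$, while $-\sum_i p_i^2 \in [-1,0]$ is negligible, so the sum is $\Theta\!\big(d(e^{\varepsilon}+1)/(e^{\varepsilon}-1)^2\big) = \Theta\!\big(d/\min(e^{\varepsilon},(e^{\varepsilon}-1)^2)\big)$. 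When $e^{\varepsilon} > d$ one has $s = 1$, and a direct estimate of the three terms gives $\Theta(1) = \Theta(d/d)$. Combining the two regimes gives $\Theta\!\big(d/\min(e^{\varepsilon},(e^{\varepsilon}-1)^2,d)\big)$, as claimed.

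The only genuinely laborious part is the algebraic simplification turning $(s - 2 m_{\texttt{ss}} b_{\texttt{ss}} - d\, b_{\texttt{ss}}^2)/m_{\texttt{ss}}^2$ into the three-term form, together with keeping careful track of the ceiling in $s$ and the split $e^{\varepsilon} \lessgtr d$ in the order computation; everything structural is a one-line consequence of \eqref{eq:ss_marginal} and the fixed Hamming weight.
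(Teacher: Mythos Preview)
The paper does not supply its own proof of this proposition; it is explicitly attributed to \cite{YB18} and stated without argument. Your derivation is correct and is in fact the same decomposition the paper itself invokes elsewhere (see the proof of Proposition~\ref{proposition:mrc_mse_wrt_ss}), namely $\E[\lV\hat{\svbx}^{\texttt{ss}} - \svbx\rV_2^2] = m_{\texttt{ss}}^{-2}\sum_i \qSS_i(1-\qSS_i)$ with the $\qSS_i$ given by \eqref{eq:ss_marginal}. The use of the fixed Hamming weight to force $\sum_i q_i = s$, the algebraic reduction to the three-term form, and the $e^{\varepsilon}\lessgtr d$ case split for the order bound are all standard. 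One small point worth making explicit: the identity $\E[(\hat{x}^{\texttt{ss}}_i - x_i)^2] = \Var(\hat{x}^{\texttt{ss}}_i)$ relies on $\svbx$ being held fixed (so $p_i = x_i$); you hint at this parenthetically, but since the displayed formula retains the symbol $\sum_i p_i^2$ it is worth stating once that \eqref{eq:ss_marginal} is being applied with the point-mass prior $p=\svbx$.
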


\subsection{\texorpdfstring{$\SubsetSelection$}{Subset Selection} is a cap-based mechanism}\label{appendix:ss_cap}
As discussed in Section \ref{sec:main_results}, $\qSS$ defined in \eqref{eq:q_ss} is a cap-based mechanism with $\msf{Cap}_{\svbx} = \cZ_{\svbx}$, $c_1(\varepsilon, d) = \dfrac{e^{\varepsilon}}{{ \binom{d-1}{s-1} }e^{\varepsilon}+{ \binom{d-1}{s} }}$, and $c_2(\varepsilon, d) = \dfrac{1}{{ \binom{d-1}{s-1} }e^{\varepsilon}+{ \binom{d-1}{s} }}$.

Further, $\Probability_{\svbz \sim \Unif(\cZ)}\lp \svbz \in \cZ_{\svbx} \rp = \dfrac{ \binom{d-1}{s-1}}{ \binom{d}{s}} = \frac{s}{d}$. Therefore,
\begin{align}
    \frac{c_1(\varepsilon, d)}{c_2(\varepsilon, d)} \times \Probability_{\svbz \sim \Unif(\cZ)}\lp \svbz \in \cZ_{\svbx} \rp = e^{\varepsilon} \times \frac{s}{d} \stackrel{(a)}{=}
    \frac{e^{\varepsilon}}{d} \times \lceil \frac{d}{1+e^\varepsilon}\rceil \geq \frac{e^{\varepsilon}}{d} \times \frac{d}{1+e^\varepsilon} \stackrel{(b)}{\geq} \frac{1}{2}
\end{align}
where $(a)$ follows by plugging in $s = \lceil \frac{d}{1+e^\varepsilon}\rceil$ and $(b)$ follows because $\varepsilon \geq 0$.
\section{Simulating \texorpdfstring{$\SubsetSelection$}{Subset Selection} using Minimal Random Coding}\label{appendix:mrc_ss}

In this section, we simulate $\SubsetSelection$ using $\MRC$ analogous to how we simulate $\SubsetSelection$ using $\MMRC$ in Section \ref{sec:frequency_estimation}. 
First, in Appendix \ref{appendix:debias_mrc_ss}, we provide an unbiased estimator for $\MRC$ simulating $\SubsetSelection$.
Next, in Appendix \ref{appendix:mrc_ss_ut} we provide the utility guarantee associated with $\MRC$ simulating $\SubsetSelection$. 
To do that, first, in Appendix \ref{appendix:scaling_mrc_ss}, we show that when the number of candidates $N$ is exponential in $\varepsilon$, the scaling factor $m_{\texttt{mrc}}$ is close to the scaling parameter associated with $\SubsetSelection$ (i.e., $m_{\texttt{ss}}$). 
Next, in Appendix \ref{appendix:mrc_ss_scaling_mse}, we provide the relationship between the mean squared error associated with $\MRC$ simulating $\SubsetSelection$ and the mean squared error associated with $\SubsetSelection$. In Appendix \ref{appendix:mrc_ss_utility}, we combine everything and show that, for frequency estimation, $\MRC$ can simulate $\SubsetSelection$ in a near-lossless manner while only using on the order of $\varepsilon$-bits of communication. Finally, in Appendix \ref{appendix:mrc_ss_emp}, we provide some empirical comparisons.

\subsection{Unbiased Minimal Random Coding simulating \texorpdfstring{$\SubsetSelection$}{Subset Selection}}\label{appendix:debias_mrc_ss}
Consider the $\SubsetSelection$ $\varepsilon$-LDP mechanism $\qSS$ with parameter $s$ as described in Section \ref{sec:preliminaries} and Appendix \ref{appendix:ss}. 
Let $\piMRC$ be the distribution and $\svbz_1, \svbz_2,...,\svbz_N$ be the candidates obtained from Algorithm \ref{alg:mrc} when the reference distribution is $\Unif(\cZ)$ where $\cZ$ is as defined in \eqref{eq:z_ss}.
Let $\theta$ denote the fraction of candidates inside $\msf{Cap}_{\svbx} = \cZ_{\svbx}$ where $\cZ_{\svbx}$ is the set of elements in $\cZ$ with $1$ in the same location as $\svbx$. 
It is easy to see that $\theta \sim \frac{1}{N}\msf{Binom}\lp N, \frac{s}{d} \rp$.
Let $\qMRC_i = \Probability(z_i = 1)$ where $\svbz \sim \qMRC(\cdot|\svbx)$ i.e., $\qMRC_i = \Probability\lbp(\svbz_K)_i = 1\rbp$ where $K \sim \piMRC(\cdot)$. 

The following lemma shows that the marginal distribution of $\qMRC_i$ can be written as a linear function of $p_i$ similar to $\qSS_i$ in \eqref{eq:ss_marginal}. This allows us to provide an unbiased estimator for $\MRC$ simulating $\SubsetSelection$.
\begin{restatable}{lemma}{mrcssbias}\label{thm:mrc_ss_bias}
Let $K \sim \piMRC(\cdot)$ and $\qMRC_i = \Probability\lbp(\svbz_K)_i = 1\rbp$ for $i \in [d]$. Then, 
\begin{align}
    \qMRC_i = p_i  m_\texttt{mrc} + b_\texttt{mrc}
\end{align}
where
\begin{align}\label{eq:q_check}
m_\texttt{mrc} & \coloneqq \E\lb \frac{\theta e^\varepsilon}{e^\varepsilon \theta+(1-\theta)} \rb - \frac{1}{d-1}\E\lb s-\frac{e^\varepsilon \theta}{e^\varepsilon \theta+(1-\theta)} \rb, \nonumber\\
b_\texttt{mrc} & \coloneqq \frac{1}{d-1}\E\lb s-\frac{e^\varepsilon \theta}{e^\varepsilon \theta + (1-\theta)} \rb.
\end{align}
Further, $\hat{\svbx}_\texttt{mrc} \coloneqq (\svbz_K - b_\texttt{mrc}\cdot\mb{1}_d)/m_\texttt{mrc}$ is an unbiased estimator of $\svbx$, i.e., $\E[\hat{\svbx}_\texttt{mrc}] = \svbx$.
\end{restatable}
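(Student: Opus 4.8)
The plan is to compute the single-coordinate marginal $\qMRC_i = \Probability\lbp (\svbz_K)_i = 1 \rbp$ in closed form and recognize it as an affine function of $p_i$, exactly mirroring the derivation of $\qSS_i$ in \eqref{eq:ss_eq5}--\eqref{eq:ss_marginal}. First I would fix $\svbx = e_j$ and invoke the special form \eqref{eq:special_pi_mrc} of $\piMRC$: since $\SubsetSelection$ is cap-based with $\msf{Cap}_{\svbx} = \cZ_{\svbx}$, $c_1(\varepsilon,d)/c_2(\varepsilon,d) = e^\varepsilon$, and $\Probability_{\svbz \sim \Unif(\cZ)}\lp \svbz \in \cZ_{\svbx} \rp = s/d$ (Appendix~\ref{appendix:ss_cap}), the number of in-cap candidates satisfies $\theta \sim \tfrac1N \msf{Binom}\lp N, \tfrac{s}{d}\rp$ and $\piMRC(k)$ is piecewise constant in $k$, depending on the candidates only through $\theta$.

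Next I would compute the probability that the transmitted index lands in the cap. Conditional on $\theta$, the $N\theta$ in-cap candidates each carry $\piMRC$-mass $\tfrac1N\cdot\tfrac{e^\varepsilon}{\theta e^\varepsilon + (1-\theta)}$, so $\Probability\lbp \svbz_K \in \cZ_{\svbx} \mid \theta \rbp = \tfrac{\theta e^\varepsilon}{\theta e^\varepsilon + (1-\theta)}$, and taking expectations gives $A \eqDef \Probability\lbp (\svbz_K)_j = 1 \mid \svbx = e_j \rbp = \Expectation_\theta\lb \tfrac{\theta e^\varepsilon}{\theta e^\varepsilon + (1-\theta)} \rb$. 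Moreover, the whole $\MRC$ process with $\svbx = e_j$ is equivariant under permutations of the coordinates $\neq j$ (the candidates are i.i.d.\ uniform on $\cZ$, and $\piMRC$ sees the candidates only through their membership in the permutation-invariant set $\cZ_{\svbx}$), hence the law of $\svbz_K$ given $\svbx = e_j$ is invariant under that group. Together with $\sum_i (\svbz_K)_i = s$ almost surely, this forces $\Probability\lbp (\svbz_K)_i = 1 \mid \svbx = e_j \rbp = B$ for every $i \neq j$, where $A + (d-1)B = s$, i.e.\ $B = \tfrac{s-A}{d-1}$. (Equivalently, conditioning on in-/out-of-cap status makes $\svbz_K$ uniform on $\{\svbz \in \cZ : z_j = 1\}$ resp.\ $\{\svbz \in \cZ : z_j = 0\}$, which gives the same $B$ by a one-line counting argument.)

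Unconditioning over $\svbx$ then yields $\qMRC_i = \sum_j p_j\,\Probability\lbp (\svbz_K)_i = 1 \mid \svbx = e_j \rbp = p_i A + (1-p_i)B = p_i(A-B) + B$; comparing with \eqref{eq:q_check} identifies $m_\texttt{mrc} = A-B$ and $b_\texttt{mrc} = B$, and I would note that both are genuine constants, depending on $N,s,d,\varepsilon$ but on neither $\svbx$ nor the $p_i$, because $\theta$ has the same law for every $\svbx$. Unbiasedness is then immediate: $\Expectation\lb (\hat{\svbx}_\texttt{mrc})_i \rb = \bigl(\qMRC_i - b_\texttt{mrc}\bigr)/m_\texttt{mrc} = p_i$, so conditioning on a fixed symbol $\svbx = e_i$ (so that $p_j = \Indicator(j=i)$) gives $\Expectation[\hat{\svbx}_\texttt{mrc} \mid \svbx] = \svbx$.

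This is an exact identity, so, in contrast to Lemma~\ref{lemma:mrc_privunit_approximation_error}, there are no concentration bounds to control; the work is essentially bookkeeping. The two points needing care are (i) checking that $\piMRC$ for $\SubsetSelection$ collapses to the $\theta$-only form \eqref{eq:special_pi_mrc} and that $\theta \sim \tfrac1N\msf{Binom}(N, s/d)$, and (ii) the equivariance/uniformity step used to pin down $B$ for $i \neq j$, which I expect to be the main (if mild) obstacle; the degenerate events $\theta \in \{0,1\}$ contribute $0$ to the relevant conditional expectations and are dispatched in a line.
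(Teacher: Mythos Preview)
Your proposal is correct and follows the same overall scaffold as the paper: condition on $\svbx = e_j$, compute the in-cap probability $A = \Pr\lbp (\svbz_K)_j = 1 \mid \svbx = e_j\rbp$ by conditioning on $\theta$, compute the off-diagonal probability $B = \Pr\lbp (\svbz_K)_i = 1 \mid \svbx = e_j\rbp$ for $i\neq j$, and then recombine. The one substantive difference is how you obtain $B$. The paper computes $B$ directly by splitting on whether $(\svbz_K)_j$ equals $1$ or $0$ and introducing an auxiliary fraction $\bar{\theta}$ (with a second layer of conditioning and a binomial computation). Your route is more elementary: you observe that the law of $\svbz_K$ given $\svbx = e_j$ is invariant under permutations of the coordinates $\neq j$, so all off-diagonal marginals coincide, and then the deterministic constraint $\sum_i (\svbz_K)_i = s$ immediately gives $A + (d-1)B = s$. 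This symmetry-plus-constraint argument is cleaner and bypasses the $\bar{\theta}$ bookkeeping entirely; the paper's direct computation has the minor advantage of making the conditional-uniformity of $\svbz_K$ on $\cZ_{\svbx}$ and $\cZ\setminus\cZ_{\svbx}$ explicit, which you mention only parenthetically. Both approaches land on the same $m_\texttt{mrc}, b_\texttt{mrc}$, and the unbiasedness step (specializing to a point mass $\svbp = \svbx$) is identical.
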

\begin{proof}
We have
\begin{align}
    \Pr\lbp (\svbz_K)_i  = 1 \rbp 
    & = \sum_j p_j \Pr\lbp (\svbz_K)_i = 1 \mv \svbx = j\rbp  \\
    & \stackrel{(a)}{=} p_i\Pr\lbp (\svbz_K)_i = 1 \mv \svbx = i\rbp + (1-p_i) \Pr\lbp (\svbz_K)_i = 1 \mv \svbx = j\rbp. \label{eq:ss_step_1}
\end{align}
where $(a)$ follows by symmetry. Next, we compute $\Pr\lbp (\svbz_K)_i = 1 \mv \svbx = i\rbp$ and $\Pr\lbp (\svbz_K)_i = 1 \mv \svbx = j \rbp$ separately.

To compute $\Pr\lbp (\svbz_K)_i = 1 \mv \svbx = i\rbp$, recall that $\theta$ denotes the fraction of candidates that belong inside the $\msf{Cap}_{\svbx}$ i.e., have $1$ in the same location as $\svbx$. From Appendix \ref{appendix:ss_cap}, recall that $c_1(\varepsilon,d) \coloneqq \dfrac{e^\varepsilon}{\binom{d-1}{s-1}e^\varepsilon + \binom{d-1}{s}}$, $c_2(\varepsilon,d) \coloneqq \dfrac{1}{\binom{d-1}{s-1}e^\varepsilon + \binom{d-1}{s}}$. Further, since $\svbz_k$ are generated uniformly at random,
$$ \theta \sim \frac{1}{N} \msf{Binom}\lp N, \frac{\binom{d-1}{s-1}}{\binom{d}{s}} \rp = \frac{1}{N} \msf{Binom}\lp N, \frac{s}{d} \rp,$$ so we have
\begin{align}
    \Pr\lbp (\svbz_K)_i = 1 | \svbx = i \rbp = \Pr\lbp \svbz_K \in \msf{Cap}_{\svbx} | \svbx = i \rbp  & \stackrel{(a)}{=} \E\lb \Pr\lbp \svbz_K  \in \msf{Cap}_{\svbx} |\svbx=i, \theta \rbp\rb\\
    & = \E\lb \frac{c_1(\varepsilon,d) \theta}{c_1(\varepsilon,d) \theta + (1-\theta)c_2(\varepsilon,d)} \rb \\
    & \stackrel{(b)}{=} \E\lb \frac{e^\varepsilon \theta}{e^\varepsilon \theta + (1-\theta)} \rb, \label{eq:ss_step_2}
\end{align}
where $(a)$ follows by the law of total probability and $(b)$ is due to $c_1(\varepsilon,d)/c_2(\varepsilon,d) = e^\varepsilon$.

To compute $\Pr\lbp (\svbz_K)_i = 1 \mv \svbx = j \rbp$, we decompose it into
\begin{align}
    \Pr\lbp (\svbz_K)_i = 1 \mv \svbx = j \rbp = \Pr\lbp (\svbz_K)_i = 1, (\svbz_K)_j = 1 \mv \svbx = j\rbp + \Pr\lbp (\svbz_K)_i = 1, (\svbz_K)_j = 0 \mv \svbx = j\rbp, \label{eq:Step0}
\end{align}
for any $j \neq i$ and calculate each of the terms separately.

As before, let $\theta$ denotes the fraction of candidates that belong inside the $\msf{Cap}_{\svbx}$ i.e., have $1$ in the same location as $\svbx$. Further, let $\bar{\theta}$ denotes the fraction of candidates that belong inside the $\msf{Cap}_{\svbx}$ i.e., have $1$ in the same location as $\svbx$ as well as have $1$ in the $j^{th}$ location. Since $\svbz_k$ are generated uniformly at random,
$$ \bar{\theta} \sim \frac{1}{N} \msf{Binom}\lp N\theta, \frac{\binom{d-2}{s-2}}{\binom{d-1}{s-1}} \rp = \frac{1}{N} \msf{Binom}\lp N\theta, \frac{s-1}{d-1} \rp,$$ so we have
\begin{align}
    \Pr\lbp (\svbz_K)_i = 1, (\svbz_K)_j = 1 | \svbx = j \rbp 
    & \stackrel{(a)}{=} \E_{\theta}\lb \E_{\bar{\theta}}\lb \Pr\lbp (\svbz_K)_i = 1, (\svbz_K)_j = 1 \mv \svbx = j, \bar{\theta}, \theta \rbp \rb\rb \\
    & = \E_{\theta}\lb \E_{\bar{\theta}}\lb \frac{ c_1(\varepsilon,d)\bar{\theta}}{ c_1(\varepsilon,d) \theta  + (1 - \theta) c_2(\varepsilon,d)} \rb\rb \\ 
    & \stackrel{(b)}{=} \frac{s-1}{d-1}\E_{\theta}\lb \frac{ c_1(\varepsilon,d) \theta}{ c_1(\varepsilon,d)\theta +(1-\theta) c_2(\varepsilon,d)} \rb \\
    & \stackrel{(c)}{=} \frac{s-1}{d-1}\E\lb \frac{e^\varepsilon \theta}{e^\varepsilon \theta + (1-\theta)} \rb \label{eq:Step1}
\end{align}
where $(a)$ follows by the law of total probability, $(b)$ follows because $\E[\bar{\theta}] = \frac{s-1}{d-1} \times \theta$, and $(c)$ is due to $c_1(\varepsilon,d)/c_2(\varepsilon,d) = e^\varepsilon$.

Similarly, to compute the term $\Pr\lbp (\svbz_K)_i = 1, (\svbz_K)_j = 0 | \svbx = j \rbp$, let $\bar{\theta}$ denote the fraction of candidates that belong inside the $\msf{Cap}_{\svbx}$ i.e., have $1$ in the same location as $\svbx$ as well as have $0$ in the $j^{th}$ location. Since $\svbz_k$ are generated uniformly at random,
$$ \bar{\theta} \sim \frac{1}{N} \msf{Binom}\lp N(1-\theta), \frac{\binom{d-2}{s-1}}{\binom{d-1}{s}} \rp = \frac{1}{N} \msf{Binom}\lp N(1-\theta), \frac{s}{d-1} \rp,$$ so we have
\begin{align}
    \Pr\lbp (\svbz_K)_i = 1, (\svbz_K)_j = 0 | \svbx = j \rbp 
    & \stackrel{(a)}{=} \E_{\theta}\lb \E_{\bar{\theta}}\lb \Pr\lbp (\svbz_K)_i = 1, (\svbz_K)_j = 0 \mv \svbx = j, \bar{\theta}, \theta \rbp \rb\rb \\
    & = \E_{\theta}\lb \E_{\bar{\theta}}\lb \frac{ c_2(\varepsilon,d)\bar{\theta}}{ c_1(\varepsilon,d) \theta  + (1 - \theta) c_2(\varepsilon,d)} \rb\rb \\ 
    & \stackrel{(b)}{=} \frac{s}{d-1}\E_{\theta}\lb \frac{ c_2(\varepsilon,d) (1-\theta)}{ c_1(\varepsilon,d)\theta +(1-\theta) c_2(\varepsilon,d)} \rb \\ 
    & \stackrel{(c)}{=} \frac{s}{d-1}\E\lb \frac{(1-\theta)}{e^\varepsilon \theta + (1-\theta)} \rb, \label{eq:Step2}
\end{align}
where $(a)$ follows by the law of total probability, $(b)$ follows because $\E[\bar{\theta}] = \frac{s}{d-1} \times (1-\theta)$, and $(c)$ is due to $c_1(\varepsilon,d)/c_2(\varepsilon,d) = e^\varepsilon$.
Using \eqref{eq:Step1} and \eqref{eq:Step2} in \eqref{eq:Step0}, we have
\begin{align}
    \Pr\lbp (\svbz_K)_i = 1 | \svbx = j \rbp 
    & = \Pr\lbp (\svbz_K)_i = 1, (\svbz_K)_j = 1 | \svbx = j \rbp+\Pr\lbp (\svbz_K)_i = 1, (\svbz_K)_j = 0 | \svbx = j \rbp \\
    & = \frac{s-1}{d-1}\E\lb \frac{e^\varepsilon \theta}{e^\varepsilon \theta + (1-\theta)} \rb + \frac{s}{d-1}\E\lb \frac{(1-\theta)}{e^\varepsilon \theta + (1-\theta)} \rb \\
    & = \frac{1}{d-1}\lp s - \E\lb \frac{e^\varepsilon \theta}{e^\varepsilon \theta +(1-\theta)}\rb \rp \label{eq:ss_step_3}
\end{align}
Combining everything, we have
\begin{align}
    \qMRC_i & = \Probability\lbp(\svbz_K)_i = 1\rbp \\
    & \stackrel{(a)}{=} p_i \times \lb \Pr\lbp (\svbz_K)_i = 1 \mv \svbx = i\rbp - \Pr\lbp (\svbz_K)_i = 1 \mv \svbx = j\rbp \rb + \Pr\lbp (\svbz_K)_i = 1 \mv \svbx = j\rbp. \\
    & \stackrel{(b)}{=} p_i \times \lb \E\lb \frac{e^\varepsilon \theta}{e^\varepsilon \theta + (1-\theta)} \rb - \frac{1}{d-1}\lp s - \E\lb \frac{e^\varepsilon \theta}{e^\varepsilon \theta +(1-\theta)}\rb \rp \rb \\
    &  + \frac{1}{d-1}\lp s - \E\lb \frac{e^\varepsilon \theta}{e^\varepsilon \theta +(1-\theta)}\rb \rp \\
    & \stackrel{(c)}{=} p_i  m_\texttt{mrc} + b_\texttt{mrc}
\end{align}
where $(a)$ follows from \eqref{eq:ss_step_1}, $(b)$ follows from \eqref{eq:ss_step_2} and \eqref{eq:ss_step_3}, and $(c)$ follows from the definitions of $m_\texttt{mrc}$ and $b_\texttt{mrc}$. 

Note that the above conclusion holds for all prior distribution $\svbp = (p_1, ..., p_d)$ such that $\svbx \sim \svbp$. Thus by setting $\svbp = \svbx$ (here $\svbx$ is viewed as a one-hot vector), i.e., letting $\svbp$ be the point mass distribution at $\svbx$, we have
\begin{align}
    \E[\hat{\svbx}_\texttt{mrc}] = (\E[ \svbz_K ] - b_\texttt{mrc}\cdot\mb{1}_d)/m_\texttt{mrc}
    & = ( \qMRC - b_\texttt{mrc}\cdot\mb{1}_d)/m_\texttt{mrc} \\
    & = \lp\lp m_\texttt{mrc}\cdot \svbp + b_\texttt{mrc}\cdot\mb{1}_d\rp -b_\texttt{mrc}\cdot\mb{1}_d\rp/m_\texttt{mrc} = \svbp \stackrel{(a)}{=} \svbx,
\end{align}
where $(a)$ is due to our construction of $\svbp$.
\end{proof}

 \subsection{Utility of Minimal Random Coding simulating \texorpdfstring{$\SubsetSelection$}{Subset Selection}}\label{appendix:mrc_ss_ut}
 
 \subsubsection{The scaling factors of \texorpdfstring{$\SubsetSelection$}{Subset Selection} and \texorpdfstring{$\MRC$}{MRC} are close when \texorpdfstring{$N$}{N} is of the right order}\label{appendix:scaling_mrc_ss}
 In the following Lemma, we show that when the number of candidates $N$ is exponential in $\varepsilon$, then the scaling parameters associated with $\SubsetSelection$ and the $\MRC$ scheme simulating $\SubsetSelection$ are close.
 
\begin{lemma}\label{lemma:mrc_ss_approximation_error}
Let $N$ denote the number of candidates used in the \emph{$\MRC$} scheme. Let $K \sim \piMRC$ where $\piMRC$ is the distribution over the indices $[N]$ associated the \emph{$\MRC$} scheme simulating \emph{$\SubsetSelection$}. Consider any $\lambda > 0$.
Then, the scaling factors $m_{\texttt{ss}}$ and $b_{\texttt{ss}}$ associated with \emph{$\SubsetSelection$} and the scaling factors $m_\texttt{mrc}$ and $b_{\texttt{mrc}}$ associated with the \emph{$\MRC$} scheme simulating \emph{$\SubsetSelection$} are such that
\begin{align}
    m_{\texttt{ss}} - m_\texttt{mrc} \leq \lambda\cdot m_\texttt{mrc}
\end{align}
and $b_{\texttt{ss}} \leq b_\texttt{mrc}$
as long as
\begin{align}
    N\geq \frac{2(e^{\varepsilon}+3)^2(1+\lambda)^2}{0.24^2\lambda^2}\ln\lp \frac{8(1+\lambda)}{0.24\lambda}\rp.
\end{align}
\end{lemma}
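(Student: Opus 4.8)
Proof proposal for Lemma~\ref{lemma:mrc_ss_approximation_error}.

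\medskip

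The plan is to express all four quantities $m_{\texttt{ss}},b_{\texttt{ss}},m_\texttt{mrc},b_\texttt{mrc}$ through the single scalar function $\phi(\theta)\coloneqq\frac{e^\varepsilon\theta}{e^\varepsilon\theta+(1-\theta)}$, evaluated either at the mean $\E[\theta]=\Probability_{\svbz\sim\Unif(\cZ)}(\svbz\in\cZ_{\svbx})=s/d$ (this is the ``$\SubsetSelection$'' side) or averaged over the random $\theta\sim\tfrac1N\msf{Binom}(N,s/d)$ (the ``$\MRC$'' side, via Lemma~\ref{thm:mrc_ss_bias}), and then to bound the resulting gap by concentration of $\theta$. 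Concretely, starting from \eqref{eq:scaling_factors_ss} and using $\binom{d-1}{s}/\binom{d-1}{s-1}=(d-s)/s$, one checks the identities $b_{\texttt{ss}}=\tfrac{1}{d-1}\lp s-\phi(\E[\theta])\rp$ and $m_{\texttt{ss}}+b_{\texttt{ss}}=\phi(\E[\theta])$, i.e.\ $m_{\texttt{ss}}=\tfrac{d}{d-1}\phi(\E[\theta])-\tfrac{s}{d-1}$; and Lemma~\ref{thm:mrc_ss_bias} gives literally $b_\texttt{mrc}=\tfrac{1}{d-1}\lp s-\E[\phi(\theta)]\rp$ and $m_\texttt{mrc}=\tfrac{d}{d-1}\E[\phi(\theta)]-\tfrac{s}{d-1}$. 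Subtracting,
\begin{align}
  m_{\texttt{ss}}-m_\texttt{mrc}=\tfrac{d}{d-1}\lp\phi(\E[\theta])-\E[\phi(\theta)]\rp,\qquad
  b_\texttt{mrc}-b_{\texttt{ss}}=\tfrac{1}{d-1}\lp\phi(\E[\theta])-\E[\phi(\theta)]\rp.
\end{align}

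Since $\phi'(\theta)=e^\varepsilon/\lp(e^\varepsilon-1)\theta+1\rp^2>0$ and $\phi''(\theta)=-2e^\varepsilon(e^\varepsilon-1)/\lp(e^\varepsilon-1)\theta+1\rp^3<0$ for $\varepsilon\ge1$, the function $\phi$ is concave and increasing on $[0,1]$, so Jensen's inequality gives $\E[\phi(\theta)]\le\phi(\E[\theta])$. This makes the common gap $\Delta\coloneqq\phi(\E[\theta])-\E[\phi(\theta)]$ nonnegative, which immediately yields $b_\texttt{mrc}\ge b_{\texttt{ss}}$ (the second claim) and $m_{\texttt{ss}}\ge m_\texttt{mrc}$. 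For the first claim, observe $m_{\texttt{ss}}-m_\texttt{mrc}\le\lambda m_\texttt{mrc}$ is equivalent to $m_{\texttt{ss}}-m_\texttt{mrc}\le\tfrac{\lambda}{1+\lambda}m_{\texttt{ss}}$, so it suffices to prove $\tfrac{d}{d-1}\Delta\le\tfrac{\lambda}{1+\lambda}m_{\texttt{ss}}$; I would then invoke the absolute lower bound $m_{\texttt{ss}}\ge 0.24$, obtained by plugging $s=\lceil d/(1+e^\varepsilon)\rceil$ into \eqref{eq:scaling_factors_ss} and using $\varepsilon\ge1$, reducing the target to $\Delta\lesssim\lambda/(1+\lambda)$ (the factor $\tfrac{d}{d-1}\le 2$ is harmless).

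The final step controls $\Delta=\E\lb\phi(\E[\theta])-\phi(\theta)\rb$ by concentration. Writing $\theta=\tfrac1N\sum_k\Indicator(\svbz_k\in\cZ_{\svbx})$, Hoeffding's inequality gives $\Probability\lp|\theta-\E[\theta]|\ge\hat\rho\rp\le\beta$ with $\hat\rho=\sqrt{\ln(2/\beta)/(2N)}$. Split the expectation over $\{|\theta-\E[\theta]|\le\hat\rho\}$ and its complement: on the bad event use $0\le\phi\le1$ so that event contributes at most $\beta$; on the good event note $\E[\theta]=s/d\ge\frac{1}{1+e^\varepsilon}$, so provided $\hat\rho\le\frac{1}{2(1+e^\varepsilon)}$ (which the eventual choice of $N$ enforces) we have $\theta\ge\frac{1}{2(1+e^\varepsilon)}$ on that event, and since $\phi'$ is decreasing, $\phi'$ there is at most $\phi'\lp\E[\theta]-\hat\rho\rp\le\frac{4e^\varepsilon(1+e^\varepsilon)^2}{(3e^\varepsilon+1)^2}\le e^\varepsilon+3$ — the last inequality a short polynomial check ($5u^3+25u^2+15u+3>0$ with $u=e^\varepsilon$). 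Hence $\Delta\le(e^\varepsilon+3)\hat\rho+\beta$. Choosing $\beta$ of order $\lambda/(1+\lambda)$ so that $\ln(2/\beta)=\ln\frac{8(1+\lambda)}{0.24\lambda}$, and solving $\tfrac{d}{d-1}\lp(e^\varepsilon+3)\hat\rho+\beta\rp\le\frac{0.24\lambda}{1+\lambda}$ for $N$, produces exactly the stated bound $N\ge\frac{2(e^\varepsilon+3)^2(1+\lambda)^2}{0.24^2\lambda^2}\ln\frac{8(1+\lambda)}{0.24\lambda}$; one checks this $N$ also satisfies the auxiliary requirement $\hat\rho\le\frac{1}{2(1+e^\varepsilon)}$. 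The main obstacle is the Lipschitz estimate: the derivative of $\phi$ near $\E[\theta]$ is genuinely of order $e^\varepsilon$ (in fact $\phi'(\E[\theta])=(1+e^\varepsilon)^2/(4e^\varepsilon)$), which is what forces $N$ to scale like $e^{2\varepsilon}$, and the careful good/bad bookkeeping together with the lower bound on $m_{\texttt{ss}}$ and the choice of $\beta$ is what pins down the constants $0.24$ and $e^\varepsilon+3$; by contrast the concavity/Jensen observation is the cheap step that makes the whole argument go through and disposes of the $b$-claim outright.
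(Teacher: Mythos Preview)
Your proposal follows essentially the same route as the paper's proof: rewrite $m_{\texttt{ss}},b_{\texttt{ss}},m_\texttt{mrc},b_\texttt{mrc}$ through the single function $\phi(\theta)=e^\varepsilon\theta/((e^\varepsilon-1)\theta+1)$, invoke concavity and Jensen to get the signs (this disposes of $b_{\texttt{ss}}\le b_\texttt{mrc}$), control $\Delta=\phi(\E[\theta])-\E[\phi(\theta)]$ by Hoeffding plus a good/bad-event split, and finish with the numerical lower bound $m_{\texttt{ss}}\ge 0.24$. All of these steps are exactly what the paper does.

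The one discrepancy is quantitative. You bound $\phi(\E[\theta])-\phi(\theta)$ via the mean value theorem, i.e.\ by $\phi'(\E[\theta]-\hat\rho)\,|\theta-\E[\theta]|$, and then upper-bound that derivative by $e^\varepsilon+3$. The paper instead uses the exact identity
\[
\phi(\E[\theta])-\phi(\theta)=\frac{(\E[\theta]-\theta)\,e^\varepsilon}{\big((e^\varepsilon-1)\E[\theta]+1\big)\big((e^\varepsilon-1)\theta+1\big)},
\]
so one denominator is pinned at $\E[\theta]$ and only the other is degraded on the good event; together with $d/(d-1)\le 2$ this yields $m_{\texttt{ss}}-m_\texttt{mrc}\le(e^\varepsilon+3)\hat\rho+2\beta$. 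Your MVT route, after multiplying by $d/(d-1)\le 2$, gives $2(e^\varepsilon+3)\hat\rho+2\beta$, which is a factor $2$ looser in the $\hat\rho$ term and hence a factor of roughly $4$ in the resulting lower bound on $N$. So the claim that your choices ``produce exactly the stated bound'' is not quite right; to recover the exact constant, replace the MVT step with the product-of-denominators identity above.
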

\begin{proof}
First, we will obtain convenient expressions for $m_{\texttt{ss}}$ and $b_{\texttt{ss}}$ defined in \eqref{eq:scaling_factors_ss}. We can write 
\begin{align}
& m_{\texttt{ss}} \coloneqq  \lp \frac{\E[\theta]e^\varepsilon}{e^\varepsilon \E[\theta]+(1-\E[\theta])}\rp - \frac{1}{d-1}\lp s-\frac{e^\varepsilon \E[\theta]}{e^\varepsilon \E[\theta]+(1-\E[\theta])} \rp \label{eq:check_mb_def_1}\\
& b_{\texttt{ss}} \coloneqq \frac{1}{d-1}\lp s-\frac{e^\varepsilon\E[\theta]}{e^\varepsilon \E[\theta] + (1-\E[\theta])} \rp. \label{eq:check_mb_def_2}   
\end{align}
To verify these, we simply plug  $\E[\theta] = \frac{s}{d}$ into \eqref{eq:check_mb_def_1} resulting in:
\begin{align*}
    m_{\texttt{ss}} & = \frac{d}{d-1}\frac{se^\varepsilon}{se^\varepsilon + (d-s)} - \frac{s}{d-1}  = \frac{d s e^\varepsilon - s^2 e^\varepsilon - s(d-t)}{(d-1)\lp s e^\varepsilon+d-s\rp}  = \frac{s(d-s)(e^\varepsilon-1)}{(d-1)\lp se^\varepsilon+d-s\rp}.
\end{align*}
and into \eqref{eq:check_mb_def_2} resulting in:
\begin{align*}
    b_{\texttt{ss}} = \frac{1}{d-1}\lp s - \frac{se^\varepsilon}{se^\varepsilon + d - s} \rp 
    & =  \frac{1}{d-1}\lp \frac{s^2e^\varepsilon + s(d -s)-se^\varepsilon}{se^\varepsilon + d - s} \rp \\
    & = \frac{1}{d-1}\lp \frac{s(s-1)e^\varepsilon + s(d -s)}{se^\varepsilon + d - s} \rp.
\end{align*}

Recall the definitions of
$b_{\texttt{ss}}$ and $m_{\texttt{ss}}$ from Lemma \ref{thm:mrc_ss_bias}. Applying Jensen's inequality on the concave function $x\mapsto \frac{x}{x+c}$ for some $c >0$ yields $m_{\texttt{mrc}}\leq m_{\texttt{ss}}$ and $b_{\texttt{mrc}} \geq b_{\texttt{ss}}$.

Now, we will bound $\lba m_{\texttt{mrc}} - m_{\texttt{ss}} \rba$:
\begin{align}
    \lba m_{\texttt{mrc}} - m_{\texttt{ss}} \rba 
    & = \lp\frac{d}{d-1}\rp\lp \frac{\E[\theta]e^\varepsilon}{e^\varepsilon \E[\theta]+(1-\E[\theta])} - \E\lb \frac{\theta e^\varepsilon}{e^\varepsilon \theta+(1-\theta)} \rb\rp \\
    &\stackrel{(a)}{\leq} 2 \lp \frac{\E[\theta]e^\varepsilon}{e^\varepsilon \E[\theta]+(1-\E[\theta])} - \E\lb \frac{\theta e^\varepsilon}{e^\varepsilon \theta+(1-\theta)} \rb\rp \\
    & = 2 \lp \E\lb \frac{\lp \E[\theta]-\theta\rp e^\varepsilon}{\lp(e^\varepsilon-1) \E[\theta]+1\rp\lp(e^\varepsilon-1) \theta+1\rp} \rb\rp\label{eq:ss_m_hat_minus_m_eq_1},
\end{align}
where $(a)$ holds since $d\geq 2$.
Next, we condition on the event $\mcal{E} \coloneqq \lbp \lba \E[\theta]-\theta\rba \leq \sqrt{\frac{\ln(2/\beta)}{2N}} \rbp$, which has probability $\Pr_\theta\lbp \mcal{E} \rbp \geq 1-\beta$ by Hoeffding's inequality. We continue to upper bound \eqref{eq:ss_m_hat_minus_m_eq_1}:
\begin{align}
\lba m_{\texttt{mrc}}    -    m_{\texttt{ss}} \rba 
&= 2 \bigg( \Pr\lbp \mcal{E} \rbp\E\lb \frac{\lp  \E[\theta]-\theta\rp e^\varepsilon}{\lp(e^\varepsilon    -    1) \E[\theta]    +    1\rp\lp(e^\varepsilon    -    1) \theta    +    1\rp} \mv \mcal{E} \rb    \\
& \qquad +    \Pr\lbp \mcal{E}^c \rbp\E\lb \frac{\lp  \E[\theta]-\theta\rp e^\varepsilon}{\lp(e^\varepsilon    -    1) \E[\theta]   +    1\rp\lp(e^\varepsilon    -    1) \theta    +    1\rp} \mv \mcal{E}^c \rb \bigg)\\
& \stackrel{(a)}{\leq} 2 \lp \E\lb \frac{\lp  \E[\theta]-\theta\rp e^\varepsilon}{\lp(e^\varepsilon-1) \E[\theta]+1\rp\lp(e^\varepsilon-1) \theta+1\rp} \mv \mcal{E} \rb +\beta \rp\\
& \stackrel{(b)}{\leq} 2 \lp \E\lb \frac{\lp  \E[\theta]-\theta\rp e^\varepsilon}{\lp(e^\varepsilon-1) \E[\theta]+1\rp\lp(e^\varepsilon-1) E[\theta]/2+1\rp} \mv \mcal{E} \rb +\beta \rp\\
    & \leq 4\E\lb \frac{\lp \E[\theta] - \theta \rp e^\varepsilon}{\lp(e^\varepsilon-1) \E[\theta]+1\rp^2} \mv \mcal{E} \rb + 2\beta \stackrel{(c)}{\leq} 4\sqrt{\frac{\ln(2/\beta)}{2N}} \frac{e^\varepsilon(1+e^\varepsilon)^2}{4e^{2\varepsilon}}+ 2\beta \\
    & = \sqrt{\frac{\ln(2/\beta)}{2N}}\lp e^\varepsilon+2+\frac{1}{e^\varepsilon} \rp + 2\beta \leq \sqrt{\frac{\ln(2/\beta)}{2N}}(e^\varepsilon+3)+2\beta, \label{eq:mrc_ss_nrel1}
\end{align}
 where $(a)$ holds since 
 $$ \frac{\lp  \E[\theta]-\theta\rp e^\varepsilon}{\lp(e^\varepsilon-1) \E[\theta]+1\rp\lp(e^\varepsilon-1) \theta+1\rp} = \frac{\E[\theta]e^\varepsilon}{e^\varepsilon \E[\theta]+(1-\E[\theta])} -  \frac{\theta e^\varepsilon}{e^\varepsilon \theta+(1-\theta)} \leq 1,$$
 $(b)$ holds if we pick $N$ large enough so that $\lba \theta - \E[\theta]\rba \leq \frac{\E[\theta]}{2} $ for which a sufficient condition is $ \sqrt{\frac{\ln(2/\beta)}{2N}} \leq \frac{\E[\theta]}{2}  $ i.e., $N\geq \frac{2\ln(2/\beta)}{\E[\theta]^2} = 2(d/s)^2\ln(2/\beta)$,
and $(c)$ holds since $\E[\theta] = s/d \geq 1/(1+e^\varepsilon)$. Notice that the constraint $N\geq  2(d/s)^2 \ln(2/\beta)$ in inequality $(b)$ can be further satisfied as long as $ N \geq 2\ln(2/\beta)(1+e^\varepsilon)^2 $
since $s/d \geq 1/(1+e^\varepsilon)$.

Next, we lower bound $m_{\texttt{ss}}$ in \eqref{eq:check_mb_def_1}:
\begin{align}
    m_{\texttt{ss}} & = \lp\frac{d}{d-1}\rp\lp \frac{\E[\theta]e^\varepsilon}{e^\varepsilon \E[\theta]+(1-\E[\theta])} - \frac{s}{d} \rp  \geq \frac{\E[\theta]e^\varepsilon}{e^\varepsilon \E[\theta]+(1-\E[\theta])} - \frac{s}{d}\\
    &\stackrel{(a)}{=}\frac{s}{d}\lb \frac{(e^\varepsilon-1)(d-s)}{(e^\varepsilon-1)\cdot s +d} \rb =\frac{s}{d}\lb \frac{(e^\varepsilon-1)(1-s/d)}{(e^\varepsilon-1)\cdot s/d +1} \rb \\
    & \stackrel{(b)}{\geq} \frac{1}{1+e^\varepsilon}\lb \frac{(e^\varepsilon-1)\lp\frac{e^\varepsilon}{1+e^\varepsilon}-\frac{1}{d}\rp}{(e^\varepsilon-1)\lp\frac{1}{1+e^\varepsilon}+\frac{1}{d}\rp+1} \rb \\
    & \stackrel{(c)}{\geq} \frac{1}{1+e^\varepsilon}\lb \frac{(e^\varepsilon-1)\frac{e^\varepsilon-1}{1+e^\varepsilon}}{(e^\varepsilon-1)\lp\frac{2}{1+e^\varepsilon}\rp+1} \rb = \frac{\lp e^\varepsilon-1\rp^2}{(3e^\varepsilon-1)(e^\varepsilon+1)} \\
    &\stackrel{(d)}{\geq} \frac{(e-1)^2}{(3e-1)(e+1)} \geq 0.24, \label{eq:mrc_ss_nrel2}
\end{align}
where $(a)$ holds by plugging in $\E[\theta] = s/d$, $(b)$ holds since $s = \lceil d/(1+e^\varepsilon)\rceil$ (so $\frac{1}{1+e^\varepsilon} \leq \frac{s}{d} \leq \frac{1}{1+e^\varepsilon}+\frac{1}{d}$), $(c)$ holds since we only focus on the regime where $\varepsilon \leq d-1$ (so $\frac{1}{d} \leq \frac{1}{1+\varepsilon}$), and $(d)$ holds by observing that $f(x) \coloneqq \frac{(x-1)^2}{(3x-1)(x+1)}$ is an increasing function for $x \geq 1$ and we have $\varepsilon \geq 1$. Putting things together, we obtain
\begin{align}
    \frac{m_{\texttt{ss}}-m_{\texttt{mrc}}}{m_{\texttt{mrc}}} =  \frac{m_{\texttt{ss}}-m_{\texttt{mrc}}}{m_{\texttt{ss}}-(m_{\texttt{ss}}-m_{\texttt{mrc}})} \stackrel{(a)}{\leq} \frac{\sqrt{\frac{\ln(2/\beta)}{2N}}(e^\varepsilon+3)+2\beta}{0.24-\lp \sqrt{\frac{\ln(2/\beta)}{2N}}(e^\varepsilon+3)+2\beta \rp} \stackrel{(b)}{\leq} \lambda,
\end{align}
where $(a)$ follows from \eqref{eq:mrc_ss_nrel1} and \eqref{eq:mrc_ss_nrel2} and $(b)$ follows as long as
\begin{align}
    \sqrt{\frac{\ln(2/\beta)}{2N}}(e^\varepsilon+3)+2\beta \leq \frac{0.24 \lambda}{1+\lambda}. \label{eq:mrc_ss_nrel3}
\end{align}
To ensure \eqref{eq:mrc_ss_nrel3}, we let
\begin{align*}
    \beta \leq \frac{0.24\lambda}{4(1+\lambda)} \qquad \text{and} \qquad N \geq \frac{1}{2}\lp\frac{(e^\varepsilon+3)}{\frac{0.24\lambda}{(1+\lambda)}-2\beta}\rp^2\ln(2/\beta) = \frac{2(e^{\varepsilon}+3)^2(1+\lambda)^2}{0.24^2\lambda^2}\ln\lp \frac{8(1+\lambda)}{0.24\lambda}\rp.
\end{align*}
It is easy to verify that this choice of $N$ satisfies $ N \geq 2\ln(2/\beta)(1+e^\varepsilon)^2 $.
\end{proof}

\subsubsection{Relationship between mean squared errors associated with \texorpdfstring{$\SubsetSelection$}{Subset Selection} and \texorpdfstring{$\MRC$}{MRC} simulating \texorpdfstring{$\SubsetSelection$}{Subset Selection}}\label{appendix:mrc_ss_scaling_mse}
In the following Proposition,
we show that if $m_{\texttt{mrc}}$ is close to $m_{\texttt{ss}}$ and $b_{\texttt{mrc}} \geq b_{\texttt{ss}}$, then the mean squared error associated with $\MRC$ simulating $\SubsetSelection$ (i.e., $\Expectation_{\qMRC} \big[ \lV  \hat{\svbx}^\texttt{mrc} - \svbx \rV^2_2  \big]$) is close to the mean squared error associated with $\SubsetSelection$ (i.e., $\Expectation_{\qSS} \big[ \lV  \hat{\svbx}^\texttt{ss} - \svbx \rV^2_2  \big]$).
\begin{proposition}\label{proposition:mrc_mse_wrt_ss}
Let $\qSS(\svbz | \svbx)$ be the  $\varepsilon$-LDP \emph{$\SubsetSelection$} mechanism with estimator $\hat{\svbx}^{\texttt{ss}}$. Let $\qMRC(\svbz|\svbx)$ denote the \emph{$\MRC$} privatization mechanism simulating \emph{$\SubsetSelection$} with $N$ candidates and estimator $\hat{\svbx}^{\texttt{mrc}}$.
Let $m_{\texttt{ss}}$ and $b_{\texttt{ss}}$ denote the scaling factors  associated with \emph{$\SubsetSelection$} and $m_\texttt{mrc}$ and $b_\texttt{mrc}$ denote the scaling factors associated with the \emph{$\MRC$} scheme simulating \emph{$\SubsetSelection$}. Consider any $\lambda > 0$. If $m_{\texttt{pu}} - m_\texttt{mrc} \leq \lambda \cdot m_\texttt{mrc}$ and $b_{\texttt{mrc}} \geq b_{\texttt{ss}}$, then 
  \begin{align}
    \Expectation_{\qMRC} \big[ \lV  \hat{\svbx}^\texttt{mrc} - \svbx \rV^2_2  \big]  \leq  \lp 1+4\lambda +5\lambda^2 + 2\lambda^3 \rp  \Expectation_{\qSS}\big[\|\hat{\svbx}^{\texttt{ss}} - \svbx\|^2\big]
\end{align}
\end{proposition}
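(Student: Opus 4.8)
The plan is to mirror Proposition~\ref{proposition:mrc_mse_wrt_privunit}, exploiting three structural features of $\SubsetSelection$ and its $\MRC$ simulation. First, both $\hat{\svbx}^{\texttt{ss}}$ and $\hat{\svbx}^{\texttt{mrc}}$ are unbiased for $\svbx$ (Lemma~\ref{thm:mrc_ss_bias} and the unbiasedness of $\SubsetSelection$, see Appendix~\ref{appendix:ss}), and $\svbx$ is a one-hot vector so $\|\svbx\| = 1$; hence for each mechanism $\E[\|\hat{\svbx} - \svbx\|^2] = \E[\|\hat{\svbx}\|^2] - 1$. Second, the output $\svbz$ under $\qSS$ and the output $\svbz_K$ under $\qMRC$ always have Hamming weight $s$, so $\|\svbz - b\mathbf{1}\|^2 = s - 2sb + db^2$ is a deterministic quantity and $\E[\|\hat{\svbx}\|^2] = (s - 2sb + db^2)/m^2$. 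Third, summing the affine coordinate marginals $\Pr[(\svbz)_i = 1] = p_i m + b$ over $i \in [d]$ against $\sum_i \Pr[(\svbz)_i = 1] = \E[\sum_i (\svbz)_i] = s$ forces $m + db = s$, i.e.\ $b = (s-m)/d$, for both $(m_\texttt{ss}, b_\texttt{ss})$ and $(m_\texttt{mrc}, b_\texttt{mrc})$. Substituting $b=(s-m)/d$ back and simplifying gives the common closed form
\[
\E_{q}\big[\|\hat{\svbx} - \svbx\|^2\big] = \frac{s(d-s)}{d\,m^2} - \frac{d-1}{d},
\]
with $m = m_\texttt{ss}$ for $\qSS$ and $m = m_\texttt{mrc}$ for $\qMRC$.

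Next I would feed in the two hypotheses. The assumption $b_\texttt{mrc} \ge b_\texttt{ss}$ together with $m+db=s$ gives $m_\texttt{mrc} \le m_\texttt{ss}$, so the difference of the two closed forms is nonnegative, and $m_\texttt{ss} - m_\texttt{mrc} \le \lambda m_\texttt{mrc}$ gives $1/m_\texttt{mrc} \le (1+\lambda)/m_\texttt{ss}$. Subtracting,
\[
\E_{\qMRC}\big[\|\hat{\svbx}^{\texttt{mrc}} - \svbx\|^2\big] - \E_{\qSS}\big[\|\hat{\svbx}^{\texttt{ss}} - \svbx\|^2\big] = \frac{s(d-s)}{d}\Big(\frac{1}{m_\texttt{mrc}^2} - \frac{1}{m_\texttt{ss}^2}\Big) \le (2\lambda+\lambda^2)\,\frac{s(d-s)}{d\,m_\texttt{ss}^2},
\]
and the right-hand side equals $(2\lambda+\lambda^2)\big(\E_{\qSS}[\|\hat{\svbx}^{\texttt{ss}} - \svbx\|^2] + \tfrac{d-1}{d}\big)$. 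Finally I would absorb the leftover additive term $(2\lambda+\lambda^2)\tfrac{d-1}{d}$ by showing $\E_{\qSS}[\|\hat{\svbx}^{\texttt{ss}} - \svbx\|^2] \ge \tfrac{d-1}{d}$ — equivalently $m_\texttt{ss}^2 \le \tfrac{s(d-s)}{2(d-1)}$, which after plugging in $m_\texttt{ss}$ and $s = \lceil d/(1+e^\varepsilon)\rceil$ and using $(s(e^\varepsilon-1)+d)^2 \ge 4 s e^\varepsilon(d-s)$ is implied by $\tfrac{(e^\varepsilon-1)^2}{e^\varepsilon} \le 2(d-1)$. This yields $\E_{\qMRC}[\|\hat{\svbx}^{\texttt{mrc}} - \svbx\|^2] \le (1+4\lambda+2\lambda^2)\,\E_{\qSS}[\|\hat{\svbx}^{\texttt{ss}} - \svbx\|^2]$, which implies the stated bound since $1+4\lambda+2\lambda^2 \le 1+4\lambda+5\lambda^2+2\lambda^3 = (1+\lambda)^2(1+2\lambda)$ for $\lambda \ge 0$.

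The main obstacle is the last step: the gap $\E_{\qMRC}[\cdot] - \E_{\qSS}[\cdot]$ is only controlled relative to $1/m_\texttt{ss}^2$, and converting it into a multiplicative slack on the $\SubsetSelection$ error requires that error be bounded below by an absolute constant, which is what forces the relation $\tfrac{(e^\varepsilon-1)^2}{e^\varepsilon} = O(d)$ between $\varepsilon$ and $d$ (compatible with the regime $\varepsilon \le d-1$ used elsewhere). Everything else — the unbiasedness reductions, the identities $\|\svbz - b\mathbf{1}\|^2 = s - 2sb + db^2$ and $m+db=s$, and the polynomial arithmetic $(1+\lambda)^2(1+2\lambda) = 1+4\lambda+5\lambda^2+2\lambda^3$ — is routine. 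An alternative route that lands exactly on the stated constant is to copy the proof of Proposition~\ref{proposition:mrc_mse_wrt_privunit} verbatim, bounding $\|\hat{\svbx}^{\texttt{mrc}} - \svbx\| \le \|\hat{\svbx}^{\texttt{mrc}}\| + \|\svbx\|$ and $\|\hat{\svbx}^{\texttt{ss}} - \svbx\| \ge \|\hat{\svbx}^{\texttt{ss}}\| - \|\svbx\|$ and then using the comparison between $\|\hat{\svbx}^{\texttt{mrc}}\|$ and $\|\hat{\svbx}^{\texttt{ss}}\|$ implied by $m_\texttt{ss} - m_\texttt{mrc} \le \lambda m_\texttt{mrc}$.
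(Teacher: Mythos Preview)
Your closed-form computation is correct and elegant: using unbiasedness, the fixed Hamming weight of $\svbz$, and the identity $m+db=s$, both errors reduce to $s(d-s)/(dm^2)-(d-1)/d$, and the comparison of $1/m_\texttt{mrc}^2$ with $1/m_\texttt{ss}^2$ via $m_\texttt{ss}-m_\texttt{mrc}\le\lambda m_\texttt{mrc}$ is clean. The gap is the last absorption step. You need $\E_{\qSS}[\|\hat{\svbx}^{\texttt{ss}}-\svbx\|^2]\ge (d-1)/d$, i.e.\ $m_\texttt{ss}^2\le s(d-s)/(2(d-1))$, and you reduce this to $(e^\varepsilon-1)^2/e^\varepsilon\le 2(d-1)$. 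But that is essentially $e^\varepsilon\lesssim 2d$, which is \emph{not} implied by the regime $\varepsilon\le d-1$: take $d=10$, $\varepsilon=9$, $s=1$; then $m_\texttt{ss}=(e^9-1)/(e^9+9)\approx 0.999$, so $\E_{\qSS}\approx s(d-s)/(dm_\texttt{ss}^2)-(d-1)/d\approx 0.902-0.9=0.002\ll 0.9$. In this range the additive term $(2\lambda+\lambda^2)(d-1)/d$ completely swamps $\E_{\qSS}$, and no multiplicative bound with the stated constant can come out of your decomposition. Your ``alternative route'' of copying Proposition~\ref{proposition:mrc_mse_wrt_privunit} verbatim also does not transfer: there $\|\hat{\svbx}\|=1/m$ because the $\PrivUnit$ output lies on the sphere, whereas here $\|\hat{\svbx}\|^2=(s-2sb+db^2)/m^2$ depends on $b$ as well, so the triangle-inequality comparison between $\|\hat{\svbx}^{\texttt{mrc}}\|$ and $\|\hat{\svbx}^{\texttt{ss}}\|$ does not reduce to a comparison of $m$'s alone.

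The paper avoids this obstacle by not separating the error into a ``variance'' part and an additive constant. It writes both errors as $(1/m^2)\sum_i q_i(1-q_i)$ with $q_i=p_i m+b$, and bounds the Bernoulli-variance ratio coordinatewise: from $m_\texttt{ss}-m_\texttt{mrc}\le\lambda m_\texttt{mrc}$, $b_\texttt{mrc}\ge b_\texttt{ss}$, and $m_\texttt{mrc}\le m_\texttt{ss}$ one gets $\qMRC_i\ge(1-\lambda)\qSS_i$, hence
\[
\frac{\qMRC_i(1-\qMRC_i)-\qSS_i(1-\qSS_i)}{\qSS_i(1-\qSS_i)}=\frac{(\qSS_i-\qMRC_i)(\qSS_i+\qMRC_i-1)}{\qSS_i(1-\qSS_i)}\le\frac{\lambda}{1-\qSS_i},
\]
and then uses $\qSS_i\le m_\texttt{ss}+b_\texttt{ss}=\dfrac{se^\varepsilon}{se^\varepsilon+(d-s)}\approx\tfrac12$ to turn this into $\sum_i\qMRC_i(1-\qMRC_i)\le(1+2\lambda)\sum_i\qSS_i(1-\qSS_i)$. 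Combined with $1/m_\texttt{mrc}^2\le(1+\lambda)^2/m_\texttt{ss}^2$ this yields the factor $(1+\lambda)^2(1+2\lambda)=1+4\lambda+5\lambda^2+2\lambda^3$ directly, with no lower bound on $\E_{\qSS}$ required. The missing idea in your argument is precisely this coordinatewise control of $q_i(1-q_i)$ via $\qSS_i\lesssim\tfrac12$, which is what makes the bound purely multiplicative.
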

\begin{proof}
We have
$$  \Expectation_{\qMRC} \big[ \lV  \hat{\svbx}^\texttt{mrc} - \svbx \rV^2_2  \big]  \stackrel{(a)}{=} \sum_{i=1}^d \Var\lp \hat{\svbx}^\texttt{mrc}_i \rp \stackrel{(b)}{=} \lp\frac{1}{m_\texttt{mrc}}\rp^2 \sum_i \Var \lp \lp \svbz_K \rp_i\rp \stackrel{(c)}{=} \lp\frac{1}{m_\texttt{mrc}}\rp^2 \sum_i \qMRC_i (1-\qMRC_i). $$
where $(a)$ follows because $\svbx$ is a constant, $(b)$ follows because $\hat{\svbx}_\texttt{mrc} =(\svbz_K - b_\texttt{mrc})/m_\texttt{mrc}$, and $(c)$ follows because $\lp \svbz_K \rp_i\sim\msf{Ber}(\qMRC_i)$. Similarly, we have
We have
$$  \Expectation_{\qSS} \big[ \lV  \hat{\svbx}^\texttt{ss} - \svbx \rV^2_2  \big]  \stackrel{(a)}{=} \sum_{i=1}^d \Var\lp \hat{\svbx}^\texttt{ss}_i \rp \stackrel{(b)}{=} \lp\frac{1}{m_\texttt{ss}}\rp^2 \sum_i \Var \lp z_i\rp \stackrel{(c)}{=} \lp\frac{1}{m_\texttt{ss}}\rp^2 \sum_i \qSS_i (1-\qSS_i). $$
where $(a)$ follows because $\svbx$ is a constant, $(b)$ follows because $\hat{\svbx}^\texttt{ss} =(\svbz -  b_\texttt{ss})/m_\texttt{ss}$, and $(c)$ follows because $z_i\sim\msf{Ber}(\qSS_i)$.

Now, let us look at the difference i.e.,
\begin{align*}
    & \Expectation_{\qMRC} \big[ \lV  \hat{\svbx}^\texttt{mrc} - \svbx \rV^2_2  \big] - \Expectation_{\qSS} \big[ \lV  \hat{\svbx}^\texttt{ss} - \svbx \rV^2_2  \big] \\
    & = \lp\frac{1}{m_\texttt{mrc}}\rp^2 \sum_i \qMRC_i (1-\qMRC_i) - \lp\frac{1}{m_\texttt{ss}}\rp^2 \sum_i \qSS_i (1-\qSS_i)\\
    & \leq \lp\frac{1}{m_\texttt{mrc}}\rp^2\sum_{i}\lp \qMRC_i (1-\qMRC_i) - \qSS_i (1-\qSS_i) \rp + \lb \frac{1}{m_\texttt{mrc}^2}-\frac{1}{m_\texttt{ss}^2}\rb\lp \sum_i \qSS_i (1-\qSS_i) \rp.
\end{align*}



Now, first, we will bound $\lp\frac{1}{m_\texttt{mrc}}\rp^2\sum_{i}\lp \qMRC_i (1-\qMRC_i) - \qSS_i (1-\qSS_i) \rp$. To that end, observe that $m_{\texttt{pu}} - m_\texttt{mrc} \leq \lambda \cdot m_\texttt{mrc}$ implies 
\begin{equation}\label{eq:m_hat_bdd_2}
    \frac{1}{m_\texttt{mrc}} \leq (1+\lambda)\frac{1}{m_{\texttt{ss}}}.
\end{equation} 
Further, we have
\begin{align}
\qMRC_i & \stackrel{(a)}{=} m_\texttt{mrc} p_i+b_\texttt{mrc} \stackrel{(b)}{=} \qSS_i + (m_\texttt{mrc}-m_\texttt{ss})p_i+(b_\texttt{mrc}-b_\texttt{ss}) \stackrel{(c)}{\geq} \qSS_i - \lambda \cdot m_\texttt{mrc} \cdot p_i +(b_\texttt{mrc}-b_\texttt{ss}) \\ & 
\stackrel{(d)}{\geq} \qSS_i - \lambda \cdot m_\texttt{mrc} \cdot p_i
\stackrel{(e)}{\geq} \qSS_i - \lambda \cdot m_\texttt{ss} \cdot p_i
\stackrel{(f)}{\geq} (1-\lambda)\qSS_i, \label{eq:q_rel}
\end{align}
where $(a)$ follows from Lemma \ref{thm:mrc_ss_bias}, $(b)$ follows from \eqref{eq:ss_marginal}, $(c)$ follows because $m_{\texttt{pu}} - m_\texttt{mrc} \leq \lambda \cdot m_\texttt{mrc}$, $(d)$ follows because $b_\texttt{mrc} \geq b_\texttt{ss}$, $(e)$ follows because $m_\texttt{ss} \geq m_\texttt{mrc}$ as seen in Lemma \ref{lemma:mrc_ss_approximation_error}, and $(f)$ follows because $b_\texttt{ss} \geq 0$. Next, we have
\begin{align}
    \frac{\qMRC_i(1-\qMRC_i) - \qSS_i(1-\qSS_i)}{\qSS_i(1-\qSS_i)} 
     = \frac{(\qSS_i - \qMRC_i)(\qSS_i+\qMRC_i -1)}{\qSS_i(1-\qSS_i)} 
     \stackrel{(a)}{\leq} \frac{\lambda \qSS_i(\qSS_i+\qMRC_i-1)}{\qSS_i(1-\qSS_i)}
     \stackrel{(b)}{\leq} \frac{\lambda }{1-\qSS_i} \label{eq:mrc_ss_inter1}
\end{align}
where $(a)$ follows from \eqref{eq:q_rel} and $(b)$ follows since $\qSS_i \leq 1$ and $\qMRC_i \leq 1$.

Let us now upper bound $\qSS_i$. We have 
\begin{align}
    \qSS_i  = m_\texttt{ss} \cdot p_i +
b_\texttt{ss} \stackrel{(a)}{\leq} m_\texttt{ss} + b_\texttt{ss} \stackrel{(b)}{=} \lp \frac{\E[\theta]e^\varepsilon}{e^\varepsilon \E[\theta]+(1-\E[\theta])}\rp  \stackrel{(c)}{\leq} \frac{1}{2} \label{eq:mrc_ss_inter2}
\end{align}
where $(a)$ follows because $p_i \leq 1$, $(b)$ follows from \eqref{eq:check_mb_def_1} and \eqref{eq:check_mb_def_2}, and $(c)$ follows because $\E[\theta] = \frac{s}{d} \geq \frac{1}{e^\varepsilon+ 1}$. Combining \eqref{eq:mrc_ss_inter1} and \eqref{eq:mrc_ss_inter2}, and then re-arranging results in
$$ \sum_i \qMRC_i(1-\qMRC_i) - \sum_i \qSS_i(1-\qSS_i)\leq 2\lambda \sum_i \qSS_i(1-\qSS_i). $$
Together with \eqref{eq:m_hat_bdd_2}, we obtain
$$ \lp\frac{1}{m_\texttt{mrc}}\rp^2\sum_{i}\lp \qMRC_i (1-\qMRC_i) - \qSS_i (1-\qSS_i) \rp \leq  \frac{2\lambda(1+\lambda)^2}{m_\texttt{ss}^2}\sum_i \qSS_i(1-\qSS_i).$$

To bound $\lb \frac{1}{m_\texttt{mrc}^2}-\frac{1}{m_\texttt{ss}^2}\rb\lp \sum_i \qSS_i (1-\qSS_i) \rp$, simply note that \eqref{eq:m_hat_bdd_2} implies $\frac{1}{m_\texttt{mrc}^2} \leq (1+\lambda)^2\frac{1}{m_{\texttt{ss}}^2}$ resulting in
$$\lb \frac{1}{m_\texttt{mrc}^2}-\frac{1}{m_\texttt{ss}^2}\rb\lp \sum_i \qSS_i (1-\qSS_i) \rp \leq \frac{2\lambda + \lambda^2}{m_\texttt{ss}^2} \lp \sum_i \qSS_i(1-\qSS_i) \rp.$$

Combining everything, we have
\begin{align}
    \Expectation_{\qMRC} \big[ \lV  \hat{\svbx}^\texttt{mrc}    -     \svbx \rV^2_2  \big]   & \leq    \lp 1+2\lambda(1+\lambda)^2 + 2\lambda + \lambda^2 \rp\frac{1}{m_\texttt{mrc}^2}\sum_i \qSS_i(1-\qSS_i)\\
    & =     \lp 1+4\lambda +5\lambda^2 + 2\lambda^3 \rp  \Expectation_{\qSS}\big[\|\hat{\svbx}^{\texttt{ss}}    -    \svbx\|^2\big]
\end{align}
\end{proof}

In the following Lemma, we show that with on the order of $\varepsilon$-bits of communication, the mean squared error associated with $\MRC$ simulating $\SubsetSelection$ (i.e., $\Expectation_{\qMRC} \big[ \lV  \hat{\svbx}^\texttt{mrc} - \svbx \rV^2_2  \big]$) is close to the mean squared error associated with $\SubsetSelection$ (i.e., $\Expectation_{\qSS} \big[ \lV  \hat{\svbx}^\texttt{ss} - \svbx \rV^2_2  \big]$).
\begin{restatable}{lemma}{mrcss}\label{theorem:mrc_ss}
Let $\qSS(\svbz | \svbx)$ be the  $\varepsilon$-LDP \emph{$\SubsetSelection$} mechanism with estimator $\hat{\svbx}^{\texttt{ss}}$. Let $\qMRC(\svbz|\svbx)$ denote the \emph{$\MRC$} privatization mechanism simulating \emph{$\SubsetSelection$} with $N$ candidates and estimator $\hat{\svbx}^{\texttt{mrc}}$. 
Consider any $\lambda > 0$. Then,
\begin{align}
   \Expectation_{\qMRC} \big[ \lV  \hat{\svbx}^\texttt{mrc} - \svbx \rV^2_2  \big]  \leq  \lp 1+4\lambda +5\lambda^2 + 2\lambda^3 \rp  \Expectation_{\qSS}\big[\|\hat{\svbx}^{\texttt{ss}} - \svbx\|^2\big]
\end{align}
as long as 
\begin{align}
    N \geq  \frac{2(e^{\varepsilon}+3)^2(1+\lambda)^2}{0.24^2\lambda^2}\ln\lp \frac{8(1+\lambda)}{0.24\lambda}\rp.
\end{align}
\end{restatable}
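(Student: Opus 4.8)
The plan is to obtain Theorem~\ref{theorem:mrc_ss} by combining two results already established in this section. The first is Lemma~\ref{lemma:mrc_ss_approximation_error}, which guarantees that whenever the number of candidates satisfies
\[
N \geq \frac{2(e^{\varepsilon}+3)^2(1+\lambda)^2}{0.24^2\lambda^2}\ln\lp \frac{8(1+\lambda)}{0.24\lambda}\rp,
\]
the scaling factors of $\SubsetSelection$ and of the $\MRC$ scheme simulating $\SubsetSelection$ obey $m_{\texttt{ss}} - m_\texttt{mrc} \leq \lambda\cdot m_\texttt{mrc}$ and $b_{\texttt{mrc}} \geq b_{\texttt{ss}}$. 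The second is Proposition~\ref{proposition:mrc_mse_wrt_ss}, which shows that these two scaling-factor inequalities alone imply
\[
\Expectation_{\qMRC}\big[\lV \hat{\svbx}^{\texttt{mrc}} - \svbx \rV^2_2\big] \leq \lp 1+4\lambda+5\lambda^2+2\lambda^3 \rp \Expectation_{\qSS}\big[\lV \hat{\svbx}^{\texttt{ss}} - \svbx \rV^2_2\big].
\]
Since the hypothesis on $N$ in the theorem is exactly the hypothesis of Lemma~\ref{lemma:mrc_ss_approximation_error}, the theorem is a direct chaining of these two statements.

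Concretely, I would first invoke Lemma~\ref{lemma:mrc_ss_approximation_error} with the given $\lambda$ and $N$ to produce the pair of inequalities on $m_\texttt{mrc}$ and $b_\texttt{mrc}$, and then feed them into Proposition~\ref{proposition:mrc_mse_wrt_ss}, whose conclusion is precisely the displayed error bound. No additional computation is required at the level of the theorem itself.

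The substantive work lives inside those two supporting results. In Lemma~\ref{lemma:mrc_ss_approximation_error} one uses (i) the concavity of $x \mapsto x/(x+c)$, so that moving the expectation over the random fraction $\theta$ inside the map only shrinks $m$ and enlarges $b$, yielding $m_\texttt{mrc} \leq m_{\texttt{ss}}$ and $b_\texttt{mrc} \geq b_{\texttt{ss}}$ for free; and (ii) a Hoeffding bound $\Pr\{|\theta-\E[\theta]| \geq \sqrt{\ln(2/\beta)/2N}\} \leq \beta$ together with $\E[\theta] = s/d \geq 1/(1+e^\varepsilon)$ to control $|m_{\texttt{ss}} - m_\texttt{mrc}|$ on the good event and bound the bad event by $\beta$; tuning $\beta$ and $N$ then produces the claimed relative error $\lambda$. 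In Proposition~\ref{proposition:mrc_mse_wrt_ss} one writes each $\ell_2$ error as a sum of Bernoulli variances, $\sum_i \qMRC_i(1-\qMRC_i)$ and $\sum_i \qSS_i(1-\qSS_i)$, invokes the linear marginal identities $\qMRC_i = m_\texttt{mrc} p_i + b_\texttt{mrc}$ and $\qSS_i = m_{\texttt{ss}} p_i + b_{\texttt{ss}}$ (together with $b_\texttt{mrc} \geq b_{\texttt{ss}} \geq 0$ and $m_{\texttt{ss}} \geq m_\texttt{mrc}$) to show $\qMRC_i \geq (1-\lambda)\qSS_i$, and finally uses $\qSS_i \leq 1/2$ to bound $\big(\qMRC_i(1-\qMRC_i)-\qSS_i(1-\qSS_i)\big)/\big(\qSS_i(1-\qSS_i)\big)$ by a multiple of $\lambda$. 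I expect this variance comparison to be the main obstacle, since the bias shift $b_\texttt{mrc}$ perturbs every marginal and one must verify it moves each Bernoulli variance in a controlled direction; but as both results are already in hand, the proof of Theorem~\ref{theorem:mrc_ss} reduces to the two-line assembly above.
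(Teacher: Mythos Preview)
Your proposal is correct and matches the paper's approach exactly: the paper's proof of Theorem~\ref{theorem:mrc_ss} is the one-line statement that the result follows from Proposition~\ref{proposition:mrc_mse_wrt_ss} and Lemma~\ref{lemma:mrc_ss_approximation_error}. Your additional summary of the content of those two supporting results is also accurate.
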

\begin{proof}
The proof follows from Proposition \ref{proposition:mrc_mse_wrt_ss} and Lemma \ref{lemma:mrc_ss_approximation_error}.
\end{proof}

\subsubsection{Simulating \texorpdfstring{$\SubsetSelection$}{Subset Selection} using Minimal Random Coding}\label{appendix:mrc_ss_utility}
The following Theorem shows that, for frequency estimation, $\MRC$ can simulate $\SubsetSelection$ in a near-lossless manner (when $\lambda$ is small) while only using on the order of $\varepsilon$ bits of communication.

\begin{restatable}{theorem}{mrcssut}
Let $r_{\msf{FE}} \lp \hat{\Pi}^\texttt{ss}, \qSS \rp$ and $r_{\msf{FE}} \lp \hat{\Pi}^\texttt{mrc}, \qMRC \rp$ be the empirical frequency estimation error for \emph{$\SubsetSelection$} and \emph{$\MRC$} simulating \emph{$\SubsetSelection$} with $N$ candidates respectively. Consider any $\lambda > 0$. Then
 \begin{equation}
     r_{\msf{FE}} \lp \hat{\Pi}^{ \texttt{mrc}}, \qMRC \rp \leq  
     \lp 1+4\lambda+5\lambda^2+2\lambda^3 \rp r_{\msf{FE}} \lp \hat{\Pi}^{ \texttt{ss}}, \qSS \rp,
 \end{equation}
as long as 
\begin{align}
   N \geq  \frac{2(e^{\varepsilon}+3)^2(1+\lambda)^2}{0.24^2\lambda^2}\ln\lp \frac{8(1+\lambda)}{0.24\lambda}\rp.
\end{align}
\end{restatable}
\begin{proof}
The proof follows directly from Lemma~\ref{theorem:mrc_ss} since for all $i \in [n]$, $\hat{\svbx}^{\texttt{mrc}}_i$ are independent of each other as well as unbiased.
\end{proof}

\subsection{Empirical Comparisons}
\label{appendix:mrc_ss_emp}
In this section, we compare $\MRC$ simulating $\SubsetSelection$ (using its approximate DP guarantee) against $\SubsetSelection$ and RHR for frequency estimation with $d = 500$ and $n = 5000$. We use the same data generation scheme described in Section \ref{subsec:mmrc_ss_empirical} and set $\delta = 10^{-6}$. As before, RHR uses $\#$-bits $= \varepsilon$ because it leads to a poor performance if $\#$-bits $ > \varepsilon$. We show the privacy-accuracy tradeoffs for these three methods in Figure \ref{fig:a_freq}. We see that $\MRC$ simulating $\SubsetSelection$ can attain the accuracy of the uncompressed $\SubsetSelection$ for the range of  $\varepsilon$'s typically considered by LDP mechanisms while only using $(3\varepsilon/ \ln 2) + 6$ bits. In comparison with the results from Section \ref{subsec:mmrc_ss_empirical}, the results in this section come with an approximate guarantee ($\delta = 10^{-6}$) and with a higher number of bits of communication. In other words, along with the obvious gains of pure privacy instead of approximate privacy, $\MMRC$ results in a lower communication cost (and therefore a lower computation cost) compared to $\MRC$.

\begin{figure}[h]
\centering
\includegraphics[width=0.45\linewidth]{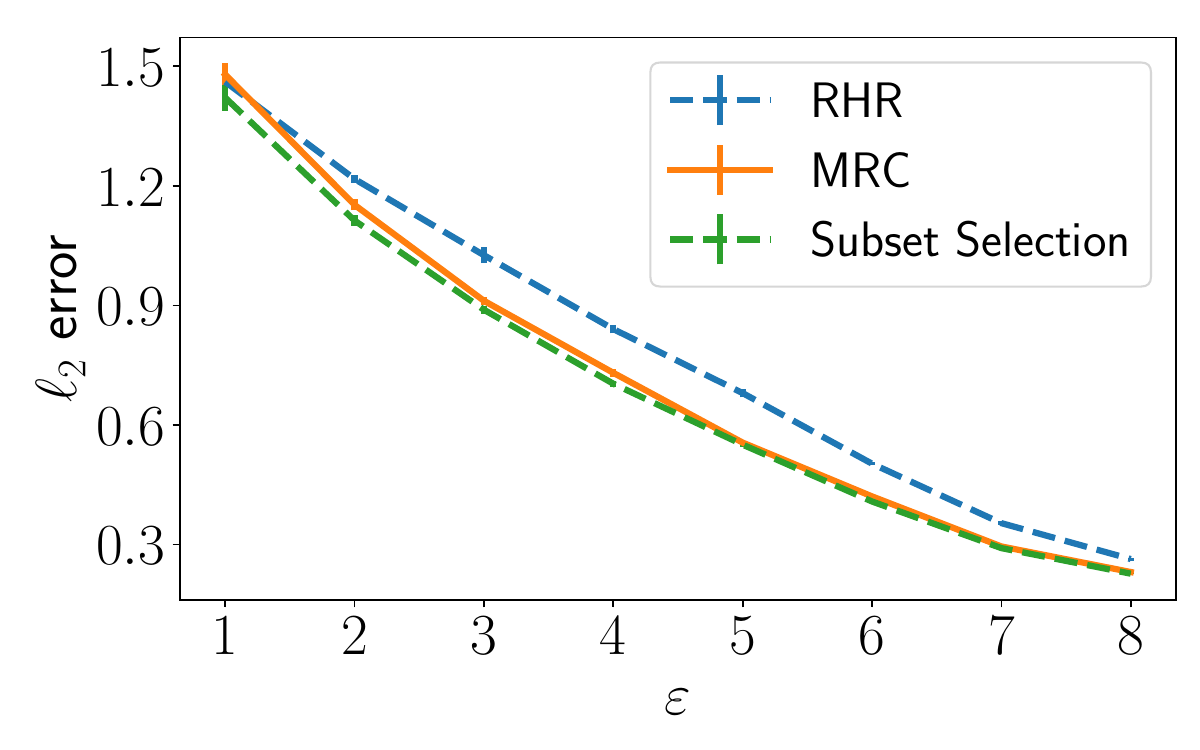}
\caption{Comparing $\SubsetSelection$, $\MRC$ simulating $\SubsetSelection$ and SQKR for frequency estimation in terms of $\ell_2$ error vs $\varepsilon$ with $d = 500$, $n = 5000$, and $\#$bits $= (3\varepsilon/ \ln 2) + 6$.}
\label{fig:a_freq}
\end{figure}
\section{Modified Minimal Random Coding Simulating \texorpdfstring{$\SubsetSelection$}{Subset Selection}}
\label{sec:mmrc_ss_app}

In this section, we prove Lemma \ref{lemma:mmrc_ss_bias} (in Appendix \ref{appendix:mmrc_ss_debias}) and Theorem \ref{thm:fe_mmrc_ss} (in Appendix \ref{appendix:mmrc_ss_utility}). To prove Theorem \ref{thm:fe_mmrc_ss}, first, in Appendix \ref{appendix:scaling_mmrc_ss}, we show that when the number of candidates $N$ is exponential in $\varepsilon$, the scaling factor $m_{\texttt{mmrc}}$ is close to the scaling parameter associated with $\SubsetSelection$ (i.e., $m_{\texttt{ss}}$). Next, in Appendix \ref{appendix:mmrc_ss_scaling_mse}, we provide the relationship between the mean squared error associated with $\MMRC$ simulating $\SubsetSelection$ and the mean squared error associated with $\SubsetSelection$. Finally, in Appendix \ref{appendix:mmrc_ss_emp}, we provide some empirical comparisons in addition to the ones in Section \ref{subsec:mmrc_ss_empirical} between $\MMRC$ simulating $\SubsetSelection$ and $\SubsetSelection$.

\subsection{Unbiased Modified Minimal Random Coding simulating \texorpdfstring{$\SubsetSelection$}{Subset Selection}}\label{appendix:mmrc_ss_debias}
Consider the $\SubsetSelection$ $\varepsilon$-LDP mechanism $\qSS$ described in Section~\ref{sec:preliminaries} with $s \coloneqq \lceil \frac{d}{1+e^\varepsilon}\rceil$. $\SubsetSelection$ is cap-based mechanism as discussed in Section \ref{sec:main_results} and Appendix \ref{appendix:ss} with $\msf{Cap}_{\svbx} = \cZ_{\svbx}$ and $\Probability_{\svbz \sim \Unif(\cZ)}\lp \svbz \in \msf{Cap}_{\svbx} \rp = s/d$.
Let $\piMMRC$ be the distribution and $\svbz_1, \svbz_2,...,\svbz_N$ be the candidates obtained from Algorithm \ref{alg:mmrc} when the reference distribution is $\Unif(\cZ)$ where $\cZ$ is as defined in \eqref{eq:z_ss}.
Let $\theta$ denote the fraction of candidates inside $\msf{Cap}_{\svbx} = \cZ_{\svbx}$ where $\cZ_{\svbx}$ is the set of elements in $\cZ$ with $1$ in the same location as $\svbx$. 
It is easy to see that $\theta \sim \frac{1}{N}\msf{Binom}\lp N, \frac{s}{d} \rp$.
Let $\qMMRC_i = \Probability(z_i = 1)$ where $\svbz \sim \qMMRC(\cdot|\svbx)$ i.e., $\qMMRC_i = \Probability\lbp(\svbz_K)_i = 1\rbp$ where $K \sim \piMMRC(\cdot)$. 

\begin{restatable}{lemma}{mrmcssbias}\label{thm:mmrc_ss_bias_0}
Let $K \sim \piMMRC(\cdot)$ and $\qMMRC_i = \Probability\lbp(\svbz_K)_i = 1\rbp$ for $i \in [d]$. Then, 
\begin{align}
    \qMMRC_i = p_i  m_\texttt{mmrc} + b_\texttt{mmrc}
\end{align}
where
\begin{align}
m_\texttt{mmrc} &\coloneqq \frac{d}{d-1} \E\lb \frac{e^\varepsilon\theta}{e^\varepsilon \E\lb\theta\rb + (1-\E \lb\theta\rb)}  \cdot \Indicator \lp\theta\leq \E\lb\theta\rb \rp  + \frac{e^\varepsilon \E \lb\theta\rb +\theta- \E \lb\theta\rb}{e^\varepsilon \E\lb\theta\rb + (1-\E \lb\theta\rb)}  \cdot \Indicator \lp\theta> \E\lb\theta\rb \rp \rb -\frac{s}{d-1} \label{eq:m_mmrc_ss}\\
b_\texttt{mmrc} &\coloneqq \frac{1}{d-1}\lp s - \E\lb \frac{e^\varepsilon\theta}{e^\varepsilon \E\lb\theta\rb + (1-\E \lb\theta\rb)}  \cdot \Indicator \lp\theta\leq \E\lb\theta\rb \rp  + \frac{e^\varepsilon \E \lb\theta\rb +\theta- \E \lb\theta\rb}{e^\varepsilon \E\lb\theta\rb + (1-\E \lb\theta\rb)}  \cdot \Indicator \lp\theta> \E\lb\theta\rb \rp \rb \rp. \label{eq:b_mmrc_ss}
\end{align}
\end{restatable}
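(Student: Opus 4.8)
The plan is to follow the structure of the proof of Lemma~\ref{thm:mrc_ss_bias}, replacing the $\MRC$ index distribution with $\piMMRC$ and carefully tracking the thresholding/re-normalization step of Algorithm~\ref{alg:mmrc}. First I would condition on the input symbol and invoke the permutation symmetry of $\SubsetSelection$ among the non-matching coordinates to write, for any $j \neq i$,
\begin{align}
\qMMRC_i = p_i\,\Pr\lbp (\svbz_K)_i = 1 \mv \svbx = i\rbp + (1-p_i)\,\Pr\lbp (\svbz_K)_i = 1 \mv \svbx = j\rbp,
\end{align}
so it suffices to evaluate the two conditional probabilities. Throughout, $\theta$ denotes the fraction of candidates lying in $\msf{Cap}_{\svbx} = \cZ_{\svbx}$, which is $\tfrac{1}{N}\Binom\lp N, s/d\rp$-distributed, so $\E[\theta] = s/d$, and I will repeatedly use $c_1(\varepsilon,d)/c_2(\varepsilon,d) = e^\varepsilon$ (Appendix~\ref{appendix:ss_cap}).

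For the first conditional probability I would condition further on $\theta$. Inspecting the three branches of Algorithm~\ref{alg:mmrc} together with \eqref{eq:special_pi_mrc} (no modification when $\theta = \E[\theta]$; upper threshold violated, so $\piMMRC(k) = t_u$ on the cap, when $\theta < \E[\theta]$; lower threshold violated, so $\piMMRC(k) = t_l$ off the cap, when $\theta > \E[\theta]$), I would observe that \emph{given $\theta$} every in-cap candidate carries the same mass $P(\theta)$ and every out-of-cap candidate the same mass $Q(\theta)$, with $P$ and $Q$ functions of $\theta$ alone. A short computation then gives
\begin{align}
\Pr\lbp \svbz_K \in \msf{Cap}_{\svbx} \mv \theta\rbp
&= N\theta\,P(\theta) \\
&= \frac{e^\varepsilon \theta}{e^\varepsilon\E[\theta] + (1-\E[\theta])}\,\Indicator(\theta \leq \E[\theta]) + \frac{e^\varepsilon\E[\theta] + \theta - \E[\theta]}{e^\varepsilon\E[\theta] + (1-\E[\theta])}\,\Indicator(\theta > \E[\theta]),
\end{align}
and I denote the expectation of the right-hand side by $g$, so that $\Pr\lbp (\svbz_K)_i = 1 \mv \svbx = i\rbp = g$.

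For the second conditional probability I would split $\Pr\lbp (\svbz_K)_i = 1 \mv \svbx = j\rbp$ according to whether $(\svbz_K)_j = 1$ or $(\svbz_K)_j = 0$, and in each case introduce the sub-fraction $\bar\theta$ of candidates that additionally have a $1$ in coordinate $i$; since the candidates are uniform on $\cZ$, $\bar\theta$ is a binomial thinning of the $N\theta$ in-cap candidates with parameter $(s-1)/(d-1)$ in the first case, and of the $N(1-\theta)$ out-of-cap candidates with parameter $s/(d-1)$ in the second. Because $P(\theta)$ and $Q(\theta)$ depend on $\theta$ only, the inner expectation over $\bar\theta$ factors cleanly, yielding $\Pr\lbp (\svbz_K)_i = 1, (\svbz_K)_j = 1 \mv \svbx = j\rbp = \tfrac{s-1}{d-1}\,g$ and $\Pr\lbp (\svbz_K)_i = 1, (\svbz_K)_j = 0 \mv \svbx = j\rbp = \tfrac{s}{d-1}\,(1-g)$, whose sum is $\tfrac{s-g}{d-1}$. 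Substituting the two conditional probabilities back into the symmetry identity gives $\qMMRC_i = p_i\lp\tfrac{d}{d-1}g - \tfrac{s}{d-1}\rp + \tfrac{1}{d-1}(s - g)$, and recognizing $g$ as the expectation appearing in \eqref{eq:m_mmrc_ss}--\eqref{eq:b_mmrc_ss} identifies the two coefficients as $m_\texttt{mmrc}$ and $b_\texttt{mmrc}$.

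The step I expect to be the main obstacle is the bookkeeping around the re-normalization: one must verify that conditioning on $\theta$ (and, in the joint-coordinate computations, on $(\theta, \bar\theta)$) leaves the per-candidate masses constant within each of the two groups, so that expectations such as $\E[N\bar\theta\,P(\theta)]$ decouple as $\tfrac{s-1}{d-1}\,\E[N\theta\,P(\theta)]$ rather than coupling $\bar\theta$ with $P(\theta)$ nontrivially. Once this independence structure is made explicit, the remaining manipulations are exactly the law-of-total-probability and linearity steps used in Lemma~\ref{thm:mrc_ss_bias}, with $g$ playing the role that $\E\big[e^\varepsilon\theta/(e^\varepsilon\theta + (1-\theta))\big]$ plays there. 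Having established the marginal identity, unbiasedness of $\hat\svbx^\texttt{mmrc} = (\svbz_K - b_\texttt{mmrc})/m_\texttt{mmrc}$ (Lemma~\ref{lemma:mmrc_ss_bias}) then follows as in the $\MRC$ case by specializing the prior $\svbp$ to the point mass at $\svbx$.
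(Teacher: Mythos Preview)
Your proposal is correct and follows essentially the same approach as the paper's proof: the same symmetry decomposition over $\svbx=i$ versus $\svbx=j$, the same case analysis of Algorithm~\ref{alg:mmrc} conditioned on $\theta$ to obtain the in-cap mass, the same split on $(\svbz_K)_j$ with the binomial thinning of $\bar\theta$ by $(s-1)/(d-1)$ and $s/(d-1)$, and the same final recombination. Your shorthand $g$ and the observation that $P(\theta),Q(\theta)$ depend only on $\theta$ make the decoupling of $\bar\theta$ slightly more explicit than in the paper, but the argument is otherwise identical.
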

\begin{proof}
Following the proof of Lemma ~\ref{thm:mrc_ss_bias}, we compute 
$\Pr\lbp (\svbz_{K})_i = 1 | \svbx = i \rbp $ and $\Pr\lbp (\svbz_{K})_i = 1 | \svbx = j  \rbp $ separately.

To compute $\Pr\lbp (\svbz_K)_i = 1 \mv \svbx = i\rbp$, recall that $\theta$ denotes the fraction of candidates that belong inside the $\msf{Cap}_{\svbx}$ i.e., have $1$ in the same location as $\svbx$. From Appendix \ref{appendix:ss_cap}, recall that $c_1(\varepsilon,d) \coloneqq \dfrac{e^\varepsilon}{\binom{d-1}{s-1}e^\varepsilon + \binom{d-1}{s}}$, $c_2(\varepsilon,d) \coloneqq \dfrac{1}{\binom{d-1}{s-1}e^\varepsilon + \binom{d-1}{s}}$. Further, since $\svbz_k$ are generated uniformly at random,
$$ \theta \sim \frac{1}{N} \msf{Binom}\lp N, \frac{\binom{d-1}{s-1}}{\binom{d}{s}} \rp = \frac{1}{N} \msf{Binom}\lp N, \frac{s}{d} \rp,$$ so we have
\begin{align}
     & \Pr\lbp (\svbz_K)_i = 1 | \svbx = i \rbp \\
     &\qquad  = \Pr\lbp \svbz_K \in \msf{Cap}_{\svbx} | \svbx = i \rbp  \stackrel{(a)}{=} \E\lb \Pr\lbp \svbz_K  \in \msf{Cap}_{\svbx} |\svbx=i, \theta \rbp\rb \\
    &\qquad \stackrel{(b)}{=} \E\lb \frac{e^\varepsilon \theta}{e^\varepsilon \E\lb \theta \rb + (1-\E \lb \theta \rb)}  \cdot \Indicator \lp \theta \leq \E\lb \theta \rb \rp  + \frac{e^\varepsilon \E \lb \theta \rb + \theta - \E \lb \theta \rb}{e^\varepsilon \E\lb \theta \rb + (1-\E \lb \theta \rb)}  \cdot \Indicator \lp \theta > \E\lb \theta \rb \rp \rb  \label{eq:mmrc_ss_step_1}
\end{align}
where $(a)$ follows by the law of total probability and $(b)$ is due to Algorithm \ref{alg:mmrc} and $c_1(\varepsilon,d)/c_2(\varepsilon,d) = e^\varepsilon$.


To compute $\Pr\lbp (\svbz_K)_i = 1 \mv \svbx = j \rbp$, we decompose it into
\begin{align}
    \Pr\lbp (\svbz_K)_i = 1 \mv \svbx = j \rbp = \Pr\lbp (\svbz_K)_i = 1, (\svbz_K)_j = 1 \mv \svbx = j\rbp + \Pr\lbp (\svbz_K)_i = 1, (\svbz_K)_j = 0 \mv \svbx = j\rbp, \label{eq:mmrc_Step0}
\end{align}
for any $j \neq i$ and calculate each of the terms separately.

As before, let $\theta$ denotes the fraction of candidates that belong inside the $\msf{Cap}_{\svbx}$ i.e., have $1$ in the same location as $\svbx$. Further, let $\bar{\theta}$ denotes the fraction of candidates that belong inside the $\msf{Cap}_{\svbx}$ i.e., have $1$ in the same location as $\svbx$ as well as have $1$ in the $j^{th}$ location. Since $\svbz_k$ are generated uniformly at random,
$$ \bar{\theta} \sim \frac{1}{N} \msf{Binom}\lp N\theta, \frac{\binom{d-2}{s-2}}{\binom{d-1}{s-1}} \rp = \frac{1}{N} \msf{Binom}\lp N\theta, \frac{s-1}{d-1} \rp,$$ so we have
\begin{align}
    &\Pr\lbp (\svbz_K)_i = 1, (\svbz_K)_j = 1 | \svbx = j \rbp 
    \stackrel{(a)}{=} \E_{\theta}\lb \E_{\bar{\theta}}\lb \Pr\lbp (\svbz_K)_i = 1, (\svbz_K)_j = 1 \mv \svbx = j, \bar{\theta}, \theta \rbp \rb\rb \\
    & \stackrel{(b)}{=} \E_{\theta}\lb \E_{\bar{\theta}}\lb \frac{e^\varepsilon}{e^\varepsilon \E\lb \theta \rb + (1-\E \lb \theta \rb)} \times \bar{\theta} \rb \Indicator \lp \theta \leq \E\lb \theta \rb \rp
    +\E_{\bar{\theta}}\lb\frac{e^\varepsilon \E \lb \theta \rb + \theta - \E \lb \theta \rb}{\theta\lp e^\varepsilon \E\lb \theta \rb + (1-\E \lb \theta \rb)\rp }\times \bar{\theta}\rb \Indicator \lp \theta > \E\lb \theta \rb \rp\rb \\ 
    & \stackrel{(c)}{=} \frac{s-1}{d-1}\E\lb \frac{e^\varepsilon \theta}{e^\varepsilon \E\lb \theta \rb + (1-\E \lb \theta \rb)}  \cdot \Indicator \lp \theta \leq \E\lb \theta \rb \rp  + \frac{e^\varepsilon \E \lb \theta \rb + \theta - \E \lb \theta \rb}{e^\varepsilon \E\lb \theta \rb + (1-\E \lb \theta \rb)}  \cdot \Indicator \lp \theta > \E\lb \theta \rb \rp \rb. \label{eq:mmrc_Step1}
\end{align}
where $(a)$ follows by the law of total probability, $(b)$ follows 
from Algorithm \ref{alg:mmrc}, and $(c)$ is because $\E[\bar{\theta}] = \frac{s-1}{d-1} \times \theta$.

Similarly, to compute the term $\Pr\lbp (\svbz_K)_i = 1, (\svbz_K)_j = 0 | \svbx = j \rbp$, let $\bar{\theta}$ denote the fraction of candidates that belong inside the $\msf{Cap}_{\svbx}$ i.e., have $1$ in the same location as $\svbx$ as well as have $0$ in the $j^{th}$ location. Since $\svbz_k$ are generated uniformly at random,
$$ \bar{\theta} \sim \frac{1}{N} \msf{Binom}\lp N(1-\theta), \frac{\binom{d-2}{s-1}}{\binom{d-1}{s}} \rp = \frac{1}{N} \msf{Binom}\lp N(1-\theta), \frac{s}{d-1} \rp,$$ so we have
\begin{align}
    & \Pr\lbp (\svbz_K)_i = 1, (\svbz_K)_j = 0 | \svbx = j \rbp 
    \stackrel{(a)}{=} \E_{\theta}\lb \E_{\bar{\theta}}\lb \Pr\lbp (\svbz_K)_i = 1, (\svbz_K)_j = 0 \mv \svbx = j, \bar{\theta}, \theta \rbp \rb\rb \\
    & \stackrel{(b)}{=} \E_{\theta}\lb \E_{\bar{\theta}}\lb \frac{\bar{\theta}}{e^\varepsilon \E\lb \theta \rb + (1-\E\lb \theta \rb)} \cdot \Indicator \lp \theta > \E\lb \theta \rb \rp  + \frac{(1-\E\lb \theta \rb) + \lp \E \lb \theta \rb - \theta \rp e^{\varepsilon}}{(1-\theta) (e^\varepsilon \E\lb \theta \rb + (1-\E\lb \theta \rb))} \bar{\theta} \cdot \Indicator \lp \theta \leq \E\lb \theta \rb \rp  \rb \rb\\ 
    & \stackrel{(c)}{=} \frac{s}{d-1}\E\lb \frac{(1-\theta)}{e^\varepsilon \E\lb \theta \rb + (1-\E\lb \theta \rb)} \cdot \Indicator \lp \theta > \E\lb \theta \rb \rp  + \frac{(1-\E\lb \theta \rb) + \lp \E \lb \theta \rb - \theta \rp e^{\varepsilon}}{e^\varepsilon \E\lb \theta \rb + (1-\E\lb \theta \rb)} \cdot \Indicator \lp \theta \leq \E\lb \theta \rb \rp  \rb. \label{eq:mmrc_Step2}
\end{align}
where $(a)$ follows by the law of total probability, $(b)$ follows 
from Algorithm \ref{alg:mmrc}, and $(c)$ is because $\E[\bar{\theta}] = \frac{s}{d-1} \times \theta$. Using \eqref{eq:mmrc_Step1} and \eqref{eq:mmrc_Step2} in \eqref{eq:mmrc_Step0}, we have
\begin{align}
    & \Pr\lbp (\svbz_K)_i = 1 | \svbx = j \rbp \\
    & \qquad = \Pr\lbp (\svbz_K)_i = 1, (\svbz_K)_j = 1 | \svbx = j \rbp+\Pr\lbp (\svbz_K)_i = 1, (\svbz_K)_j = 0 | \svbx = j \rbp \\
    &\qquad  = \frac{1}{d-1}\lp s - \E\lb \frac{e^\varepsilon \theta}{e^\varepsilon \E\lb \theta \rb + (1-\E \lb \theta \rb)}  \cdot \Indicator \lp \theta \leq \E\lb \theta \rb \rp  + \frac{e^\varepsilon \E \lb \theta \rb + \theta - \E \lb \theta \rb}{e^\varepsilon \E\lb \theta \rb + (1-\E \lb \theta \rb)}  \cdot \Indicator \lp \theta > \E\lb \theta \rb \rp \rb \rp.\label{eq:mmrc_ss_step_2}
\end{align}
Combining everything, we have
\begin{align}
    \qMMRC_i  & =   \Probability\lbp(\svbz_K)_i  =   1\rbp \\
    & =   p_i \times \lb \Pr\lbp (\svbz_K)_i   =   1 \mv \svbx = i\rbp - \Pr\lbp (\svbz_K)_i   =   1 \mv \svbx = j\rbp \rb + \Pr\lbp (\svbz_K)_i   =   1 \mv \svbx = j\rbp  \\
    & \stackrel{(a)}{=}   p_i  m_\texttt{mmrc} + b_\texttt{mmrc}
\end{align}
where $(a)$ follows from \eqref{eq:mmrc_ss_step_1} and \eqref{eq:mmrc_ss_step_2}, and the definitions of $m_\texttt{mmrc}$ and $b_\texttt{mmrc}$. 
\end{proof}

\mmrcssbias*
\begin{proof}
Given Lemma \ref{thm:mmrc_ss_bias_0}, the proof follows from the proof of Lemma \ref{thm:mrc_ss_bias}.
\end{proof}

 \subsection{Utility of Modified Minimal Random Coding simulating \texorpdfstring{$\SubsetSelection$}{Subset Selection}}\label{appendix:mmrc_ss_ut}
 
 \subsubsection{The scaling factors of \texorpdfstring{$\SubsetSelection$}{Subset Selection} and \texorpdfstring{$\MMRC$}{MMRC} are close when \texorpdfstring{$N$}{N} is of the right order}\label{appendix:scaling_mmrc_ss}
 In the following Lemma, we show that when the number of candidates $N$ is exponential in $\varepsilon$, then the scaling parameters associated with $\SubsetSelection$ and the $\MMRC$ scheme simulating $\SubsetSelection$ are close.
 
\begin{lemma}\label{lemma:MMRC_SS_N}
Let $N$ denote the number of candidates used in the \emph{$\MMRC$} scheme. Let $K \sim \piMMRC$ where $\piMMRC$ is the distribution over the indices $[N]$ associated the \emph{$\MMRC$} scheme simulating \emph{$\SubsetSelection$}. Consider any $\lambda > 0$.
Then, the scaling factors $m_{\texttt{ss}}$ and $b_{\texttt{ss}}$ associated with \emph{$\SubsetSelection$} and the scaling factors $m_\texttt{mmrc}$ and $b_{\texttt{mmrc}}$ associated with the \emph{$\MMRC$} scheme simulating \emph{$\SubsetSelection$} are such that
\begin{align}
    m_{\texttt{ss}} - m_\texttt{mmrc} \leq \lambda\cdot m_\texttt{mmrc}
\end{align}
and $b_{\texttt{ss}} \leq b_\texttt{mmrc}$
as long as
\begin{align}
    N\geq \frac{2(e^{\varepsilon}+1)^2(1+\lambda)^2}{0.24^2\lambda^2}\ln\lp \frac{8(1+\lambda)}{0.24\lambda}\rp.
\end{align}
\end{lemma}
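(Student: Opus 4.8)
The plan is to follow the proof of Lemma~\ref{lemma:mrc_ss_approximation_error} almost verbatim, replacing the $\MRC$ marginal probabilities by the $\MMRC$ ones. Using the closed forms for $m_{\texttt{mmrc}}$ and $b_{\texttt{mmrc}}$ from Lemma~\ref{thm:mmrc_ss_bias_0} together with the expressions \eqref{eq:check_mb_def_1}--\eqref{eq:check_mb_def_2} for $m_{\texttt{ss}},b_{\texttt{ss}}$, I would introduce the shorthand $g^\star \coloneqq \frac{e^\varepsilon \E[\theta]}{e^\varepsilon \E[\theta] + (1-\E[\theta])}$ (recall $\E[\theta] = s/d$) and the random variable $h(\theta)$ which equals $\frac{e^\varepsilon\theta}{e^\varepsilon \E[\theta]+(1-\E[\theta])}$ when $\theta \le \E[\theta]$ and $\frac{e^\varepsilon\E[\theta] + \theta - \E[\theta]}{e^\varepsilon \E[\theta]+(1-\E[\theta])}$ when $\theta > \E[\theta]$. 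Then $m_{\texttt{mmrc}} = \tfrac{d}{d-1}\E[h(\theta)] - \tfrac{s}{d-1}$ and $b_{\texttt{mmrc}} = \tfrac{1}{d-1}(s - \E[h(\theta)])$, while $m_{\texttt{ss}} = \tfrac{d}{d-1}g^\star - \tfrac{s}{d-1}$ and $b_{\texttt{ss}} = \tfrac{1}{d-1}(s-g^\star)$, so everything reduces to comparing $\E[h(\theta)]$ with $g^\star$.

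The key new observation is that $h$ is a concave function of $\theta$ on $[0,1]$: its two affine pieces meet at $\theta=\E[\theta]$ with common value $g^\star$, and the slope drops from $\frac{e^\varepsilon}{e^\varepsilon\E[\theta]+(1-\E[\theta])}$ to $\frac{1}{e^\varepsilon\E[\theta]+(1-\E[\theta])}$ there because $e^\varepsilon\ge 1$. Jensen's inequality then gives $\E[h(\theta)] \le h(\E[\theta]) = g^\star$, which immediately yields $m_{\texttt{mmrc}}\le m_{\texttt{ss}}$ and $b_{\texttt{mmrc}}\ge b_{\texttt{ss}}$ --- the latter being one of the two claimed inequalities. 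This plays the role of the Jensen step for $x\mapsto x/(x+c)$ in Lemma~\ref{lemma:mrc_ss_approximation_error}.

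For the quantitative bound I would write $m_{\texttt{ss}}-m_{\texttt{mmrc}} = \tfrac{d}{d-1}\E[g^\star - h(\theta)]$. Since $g^\star - h(\theta)\le 0$ whenever $\theta>\E[\theta]$, dropping that contribution gives $\E[g^\star-h(\theta)] \le \E\big[\tfrac{e^\varepsilon(\E[\theta]-\theta)}{e^\varepsilon\E[\theta]+(1-\E[\theta])}\,\Indicator(\theta\le\E[\theta])\big]$, and the prefactor is at most $(1+e^\varepsilon)/2$ because $\E[\theta]=s/d\ge 1/(1+e^\varepsilon)$ forces $e^\varepsilon\E[\theta]+(1-\E[\theta])\ge 2e^\varepsilon/(1+e^\varepsilon)$. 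Conditioning on the event $\mcal{E}\coloneqq\{|\theta-\E[\theta]|\le\sqrt{\ln(2/\beta)/(2N)}\}$, which has probability at least $1-\beta$ by Hoeffding's inequality (using $\theta\sim\tfrac1N\msf{Binom}(N,s/d)$), and bounding the integrand by $1$ on $\mcal{E}^{c}$ (it always lies in $[0,1]$), I get $m_{\texttt{ss}}-m_{\texttt{mmrc}} \le (1+e^\varepsilon)\sqrt{\ln(2/\beta)/(2N)} + 2\beta$ after using $d\ge 2$. Note that, unlike the $\MRC$ argument, no auxiliary lower bound on $N$ is needed here since the prefactor is controlled deterministically, and this cleaner $(e^\varepsilon+1)$ factor is precisely what produces the stated $N$-threshold.

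Finally I would invoke the lower bound $m_{\texttt{ss}}\ge 0.24$ already established in \eqref{eq:mrc_ss_nrel2} (it depends only on $s=\lceil d/(1+e^\varepsilon)\rceil$, $\varepsilon\ge 1$, and $\varepsilon\le d-1$, all in force) and write $\frac{m_{\texttt{ss}}-m_{\texttt{mmrc}}}{m_{\texttt{mmrc}}} = \frac{m_{\texttt{ss}}-m_{\texttt{mmrc}}}{m_{\texttt{ss}}-(m_{\texttt{ss}}-m_{\texttt{mmrc}})} \le \lambda$, which holds as soon as $(1+e^\varepsilon)\sqrt{\ln(2/\beta)/(2N)}+2\beta\le 0.24\lambda/(1+\lambda)$; taking $\beta\le 0.24\lambda/(4(1+\lambda))$ and solving for $N$ recovers exactly the threshold in the statement. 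The only step that is genuinely new relative to Lemma~\ref{lemma:mrc_ss_approximation_error} is the concavity-of-$h$ claim, and the main care required is to verify that the piecewise function coming out of the clamping rule in Algorithm~\ref{alg:mmrc} is precisely this concave interpolant and takes values in $[0,1]$ --- routine, but the fiddliest part of the argument.
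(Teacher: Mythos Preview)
Your proposal is correct and follows essentially the same approach as the paper: both proofs compute $m_{\texttt{ss}}-m_{\texttt{mmrc}}$ and $b_{\texttt{mmrc}}-b_{\texttt{ss}}$ directly from the closed forms in Lemma~\ref{thm:mmrc_ss_bias_0} and \eqref{eq:check_mb_def_1}--\eqref{eq:check_mb_def_2}, drop the non-positive $\theta>\E[\theta]$ contribution, apply Hoeffding, and use $\E[\theta]\ge 1/(1+e^\varepsilon)$ to extract the $(e^\varepsilon+1)$ prefactor before reusing the $m_{\texttt{ss}}\ge 0.24$ bound from \eqref{eq:mrc_ss_nrel2}. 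Your concavity-of-$h$/Jensen framing is a clean repackaging of the same computation the paper does line by line (the paper obtains $b_{\texttt{mmrc}}\ge b_{\texttt{ss}}$ by replacing $e^\varepsilon$ with $1$ on the $\theta\le\E[\theta]$ piece, which is exactly your slope-drop observation), but the substance is identical.
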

\begin{proof}
The proof is similar to the proof of Lemma \ref{lemma:mrc_ss_approximation_error}. We only show the key steps here.

From \eqref{eq:b_mmrc_ss} and \eqref{eq:check_mb_def_2}, we have
\begin{align}
    b_\texttt{mmrc} - b_{\texttt{ss}} & =  \frac{1}{d-1} \cdot \frac{1}{e^\varepsilon \E [\theta] +(1-\E [\theta] )} \cdot \E\lb e^\varepsilon (\E [\theta]  - \theta) \cdot \Indicator \lp \theta \leq \E\lb \theta \rb \rp + (\E [\theta]  - \theta) \cdot \Indicator \lp \theta > \E\lb \theta \rb \rp \rb \\
    & \stackrel{(a)}{\geq} \frac{1}{d-1} \cdot \frac{1}{e^\varepsilon \E [\theta] +(1-\E [\theta] )} \cdot \E\lb  (\E [\theta]  - \theta) \cdot \Indicator \lp \theta \leq \E\lb \theta \rb \rp + (\E [\theta]  - \theta) \cdot \Indicator \lp \theta > \E\lb \theta \rb \rp \rb \\
    & = \frac{1}{d-1} \cdot \frac{1}{e^\varepsilon \E [\theta] +(1-\E [\theta] )} \cdot \E\lb (\E [\theta]  - \theta) \rb = 0.
\end{align}
where $(a)$ follows because $e^\varepsilon \geq 1$.
From \eqref{eq:m_mmrc_ss} and \eqref{eq:check_mb_def_1}, we have
\begin{align}
    m_{\texttt{ss}} - m_\texttt{mmrc} & = \frac{d}{d-1} \cdot \frac{1}{e^\varepsilon \E [\theta] +(1-\E [\theta] )} \cdot \E\lb e^\varepsilon (\E [\theta]  - \theta) \cdot \Indicator \lp \theta \leq \E\lb \theta \rb \rp + (\E [\theta]  - \theta) \cdot \Indicator \lp \theta > \E\lb \theta \rb \rp \rb \\
    & \leq \frac{d}{d-1} \cdot \frac{1}{e^\varepsilon \E [\theta] +(1-\E [\theta] )} \cdot \E\lb e^\varepsilon (\E [\theta]  - \theta) \cdot \Indicator \lp \theta \leq \E\lb \theta \rb \rp \rb\\
    & \stackrel{(a)}{\leq}  \frac{2}{e^\varepsilon \E [\theta] +(1-\E [\theta] )} \cdot \E\lb e^\varepsilon (\E [\theta]  - \theta) \cdot \Indicator \lp \theta \leq \E\lb \theta \rb \rp \rb \label{eq:ss_mmrc_minus_m1}
\end{align}
where $(a)$ holds since $d\geq 2$. Next, we condition on the event $\mcal{E} \coloneqq \lbp \lba \E[\theta]-\theta\rba \leq \sqrt{\frac{\ln(2/\beta)}{2N}} \rbp$, which has probability $\Pr_\theta\lbp \mcal{E} \rbp \geq 1-\beta$ by Hoeffding's inequality. We continue to upper bound \eqref{eq:ss_mmrc_minus_m1}:
\begin{align}
m_{\texttt{ss}} - m_\texttt{mmrc}
&= 2 \bigg( \Pr\lbp \mcal{E} \rbp\E\lb \frac{ e^\varepsilon (\E [\theta]  - \theta) \cdot \Indicator \lp \theta \leq \E\lb \theta \rb \rp }{e^\varepsilon \E [\theta] +(1-\E [\theta] )} \mv \mcal{E} \rb \\
& +\Pr\lbp \mcal{E}^c \rbp\E\lb \frac{ e^\varepsilon (\E [\theta]  - \theta) \cdot \Indicator \lp \theta \leq \E\lb \theta \rb \rp }{e^\varepsilon \E [\theta] +(1-\E [\theta] )} \mv \mcal{E}^c \rb \bigg)\\
& \stackrel{(a)}{\leq} 2 \lp \E\lb \frac{ e^\varepsilon (\E [\theta]  - \theta) \cdot \Indicator \lp \theta \leq \E\lb \theta \rb \rp }{e^\varepsilon \E [\theta] +(1-\E [\theta] )} \mv \mcal{E} \rb +\beta \rp\\
& \stackrel{(b)}{\leq} (1+e^\varepsilon) \sqrt{\frac{\ln(2/\beta)}{2N}}  + 2\beta
\end{align}
where $(a)$ holds since 
 $$ \frac{ e^\varepsilon (\E [\theta]  - \theta) \cdot \Indicator \lp \theta \leq \E\lb \theta \rb \rp }{e^\varepsilon \E [\theta] +(1-\E [\theta] )} = \frac{ e^\varepsilon \E [\theta]  \cdot \Indicator \lp \theta \leq \E\lb \theta \rb \rp }{e^\varepsilon \E [\theta] +(1-\E [\theta] )} -  \frac{ e^\varepsilon \theta \cdot \Indicator \lp \theta \leq \E\lb \theta \rb \rp }{e^\varepsilon \E [\theta] +(1-\E [\theta] )} \leq 1,$$ and $(b)$ holds since $\E[\theta] = s/d \geq 1/(1+e^\varepsilon)$. 

The rest of the proof is similar to the proof of Lemma \ref{lemma:mrc_ss_approximation_error}.
\end{proof}

\subsubsection{Relationship between the mean squared errors associated with \texorpdfstring{$\SubsetSelection$}{Subset Selection} and \texorpdfstring{$\MMRC$}{MMRC} simulating \texorpdfstring{$\SubsetSelection$}{Subset Selection}}\label{appendix:mmrc_ss_scaling_mse}
In the following Proposition,
we show that if $m_{\texttt{mmrc}}$ is close to $m_{\texttt{ss}}$ and $b_{\texttt{mmrc}} \geq b_{\texttt{ss}}$, then the mean squared error associated with $\MMRC$ simulating $\SubsetSelection$ (i.e., $\Expectation_{\qMMRC} \big[ \lV  \hat{\svbx}^\texttt{mmrc} - \svbx \rV^2_2  \big]$) is close to the mean squared error associated with $\SubsetSelection$ (i.e., $\Expectation_{\qSS} \big[ \lV  \hat{\svbx}^\texttt{ss} - \svbx \rV^2_2  \big]$).
\begin{proposition}\label{proposition:mmrc_mse_wrt_ss}
Let $\qSS(\svbz | \svbx)$ be the  $\varepsilon$-LDP \emph{$\SubsetSelection$} mechanism with estimator $\hat{\svbx}^{\texttt{ss}}$. Let $\qMMRC(\svbz|\svbx)$ denote the \emph{$\MMRC$} privatization mechanism simulating \emph{$\SubsetSelection$} with $N$ candidates and estimator $\hat{\svbx}^{\texttt{mmrc}}$.
Let $m_{\texttt{ss}}$ and $b_{\texttt{ss}}$ denote the scaling factors  associated with \emph{$\SubsetSelection$} and $m_\texttt{mmrc}$ and $b_\texttt{mmrc}$ denote the scaling factors associated with the \emph{$\MMRC$} scheme simulating \emph{$\SubsetSelection$}. Consider any $\lambda > 0$. If $m_{\texttt{pu}} - m_\texttt{mmrc} \leq \lambda \cdot m_\texttt{mmrc}$ and $b_{\texttt{mmrc}} \geq b_{\texttt{ss}}$, then 
  \begin{align}
    \Expectation_{\qMMRC} \big[ \lV  \hat{\svbx}^\texttt{mmrc} - \svbx \rV^2_2  \big]  \leq  \lp 1+4\lambda +5\lambda^2 + 2\lambda^3 \rp  \Expectation_{\qSS}\big[\|\hat{\svbx}^{\texttt{ss}} - \svbx\|^2\big]
\end{align}
\end{proposition}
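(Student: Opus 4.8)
The plan is to mirror the argument for Proposition~\ref{proposition:mrc_mse_wrt_ss} essentially line by line, substituting the $\MMRC$ quantities $\qMMRC_i$, $m_\texttt{mmrc}$, $b_\texttt{mmrc}$ for the $\MRC$ quantities $\qMRC_i$, $m_\texttt{mrc}$, $b_\texttt{mrc}$. Since $\hat{\svbx}^\texttt{mmrc} = (\svbz_K - b_\texttt{mmrc})/m_\texttt{mmrc}$ is unbiased by Lemma~\ref{lemma:mmrc_ss_bias}, and $(\svbz_K)_i \sim \Ber(\qMMRC_i)$ by Lemma~\ref{thm:mmrc_ss_bias_0}, I would first write
\begin{align}
\Expectation_{\qMMRC}\big[\lV \hat{\svbx}^\texttt{mmrc} - \svbx \rV^2_2\big] = \sum_{i=1}^d \Var\big((\hat{\svbx}^\texttt{mmrc})_i\big) = \Big(\frac{1}{m_\texttt{mmrc}}\Big)^2 \sum_i \qMMRC_i(1-\qMMRC_i),
\end{align}
and likewise $\Expectation_{\qSS}\big[\lV \hat{\svbx}^\texttt{ss} - \svbx \rV^2_2\big] = (1/m_\texttt{ss})^2 \sum_i \qSS_i(1-\qSS_i)$. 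Then I would decompose the difference of the two errors as
\begin{align}
\Big(\frac{1}{m_\texttt{mmrc}}\Big)^2 \sum_i \big(\qMMRC_i(1-\qMMRC_i) - \qSS_i(1-\qSS_i)\big) + \Big[\frac{1}{m_\texttt{mmrc}^2} - \frac{1}{m_\texttt{ss}^2}\Big] \sum_i \qSS_i(1-\qSS_i).
\end{align}

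The three inputs needed are exactly those used in the $\MRC$ proof: (i) the hypothesis $m_\texttt{ss} - m_\texttt{mmrc} \le \lambda m_\texttt{mmrc}$, which gives $1/m_\texttt{mmrc} \le (1+\lambda)/m_\texttt{ss}$; (ii) the linear form $\qMMRC_i = m_\texttt{mmrc} p_i + b_\texttt{mmrc}$ from Lemma~\ref{thm:mmrc_ss_bias_0}, combined with $\qSS_i = m_\texttt{ss} p_i + b_\texttt{ss}$ and the hypotheses $m_\texttt{mmrc} \le m_\texttt{ss}$, $b_\texttt{mmrc} \ge b_\texttt{ss} \ge 0$ (established in Lemma~\ref{lemma:MMRC_SS_N}), which yields $\qMMRC_i \ge (1-\lambda)\qSS_i$; and (iii) the bound $\qSS_i \le 1/2$, which follows from $\qSS_i \le m_\texttt{ss} + b_\texttt{ss}$ and $\Expectation[\theta] = s/d \ge 1/(1+e^\varepsilon)$. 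Feeding (ii) and (iii) into the chain \eqref{eq:mrc_ss_inter1}--\eqref{eq:mrc_ss_inter2} gives $\sum_i \qMMRC_i(1-\qMMRC_i) - \sum_i \qSS_i(1-\qSS_i) \le 2\lambda \sum_i \qSS_i(1-\qSS_i)$, while (i) gives $1/m_\texttt{mmrc}^2 \le (1+\lambda)^2/m_\texttt{ss}^2$ for the first term and $1/m_\texttt{mmrc}^2 - 1/m_\texttt{ss}^2 \le (2\lambda + \lambda^2)/m_\texttt{ss}^2$ for the second. Collecting the coefficients produces $1 + 2\lambda(1+\lambda)^2 + 2\lambda + \lambda^2 = 1 + 4\lambda + 5\lambda^2 + 2\lambda^3$, as claimed.

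I do not expect a genuinely new obstacle: the entire computation has already been carried out for the $\MRC$ case, and the proof reduces to checking that its structural ingredients carry over. The one mild subtlety, relative to $\MRC$, is that the $\MMRC$ expressions for $m_\texttt{mmrc}$ and $b_\texttt{mmrc}$ in Lemma~\ref{thm:mmrc_ss_bias_0} carry the indicator $\Indicator(\theta \le \Expectation[\theta])$, so establishing $m_\texttt{mmrc} \le m_\texttt{ss}$ and $b_\texttt{mmrc} \ge b_\texttt{ss}$ is slightly less immediate than the one-line Jensen step in the $\MRC$ argument; but that comparison is not re-derived here — it is imported directly from Lemma~\ref{lemma:MMRC_SS_N}. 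Hence the proof is a short reduction: invoke Lemma~\ref{lemma:mmrc_ss_bias} and Lemma~\ref{thm:mmrc_ss_bias_0} for the variance identity and the linear form, invoke Lemma~\ref{lemma:MMRC_SS_N} for the monotonicity of the scaling factors, and then repeat the estimates of Proposition~\ref{proposition:mrc_mse_wrt_ss}.
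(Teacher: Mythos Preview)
Your proposal is correct and matches the paper's approach exactly: the paper's proof of this proposition consists of the single sentence ``The proof is similar to the proof of Proposition~\ref{proposition:mrc_mse_wrt_ss},'' and what you have written is precisely the line-by-line transcription of that proof with $\MMRC$ quantities in place of $\MRC$ ones, importing the monotonicity $m_\texttt{mmrc} \le m_\texttt{ss}$ and $b_\texttt{mmrc} \ge b_\texttt{ss}$ from Lemma~\ref{lemma:MMRC_SS_N} just as the $\MRC$ version imports them from Lemma~\ref{lemma:mrc_ss_approximation_error}.
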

\begin{proof}
The proof is similar to the proof of Proposition \ref{proposition:mrc_mse_wrt_ss}.
\end{proof}

In the following Lemma, we show that with on the order of $\varepsilon$-bits of communication, the mean squared error associated with $\MMRC$ simulating $\SubsetSelection$ (i.e., $\Expectation_{\qMMRC} \big[ \lV  \hat{\svbx}^\texttt{mmrc} - \svbx \rV^2_2  \big]$) is close to the mean squared error associated with $\SubsetSelection$ (i.e., $\Expectation_{\qSS} \big[ \lV  \hat{\svbx}^\texttt{ss} - \svbx \rV^2_2  \big]$).
 \begin{restatable}{lemma}{ssmmrc}\label{theorem:mmrc_accuracy_ss}
 Let $\qSS(\svbz | \svbx)$ be the  $\varepsilon$-LDP \emph{$\SubsetSelection$} mechanism with parameters $d$ and $s=\lceil \frac{d}{1+e^\varepsilon} \rceil$ and estimator $\hat{\svbx}^{\texttt{ss}}$. Let $\qMMRC(\svbz|\svbx)$ denote the \emph{$\MMRC$} privatization mechanism simulating \emph{$\SubsetSelection$} with $N$ candidates and estimator $\hat{\svbx}^{\texttt{mmrc}}$ as defined above. Consider any $\lambda > 0$. Then,
 \begin{align}
      & \E_{\qMMRC} \lb \lV \hat{\svbx}^{\texttt{mmrc}}  -  \svbx\rV^2_2\rb   \leq  (1 + 4\lambda + 5\lambda^2 + 2\lambda^3) \E_{\qSS} \lb \lV \hat{\svbx}^{\texttt{ss}} - \svbx\rV^2_2\rb,
 \end{align}
  as long as 
 \begin{align}
      N\geq \frac{2(e^{\varepsilon}+1)^2(1+\lambda)^2}{0.24^2\lambda^2}\ln\lp \frac{8(1+\lambda)}{0.24\lambda}\rp.
  \end{align}
 \end{restatable}
 \begin{proof}
The proof follows from Proposition \ref{proposition:mmrc_mse_wrt_ss} and Lemma \ref{lemma:MMRC_SS_N}.
\end{proof}
 
\subsubsection{Simulating \texorpdfstring{$\SubsetSelection$}{Subset Selection} using Modified Minimal Random Coding}\label{appendix:mmrc_ss_utility}
Now, we provide a proof of Theorem \ref{thm:fe_mmrc_ss}.
\mmrcss*
\begin{proof}
The proof follows directly from Lemma~\ref{theorem:mmrc_accuracy_ss} since for all $i \in [n]$, $\hat{\svbx}^{\texttt{mmrc}}_i$ are independent of each other as well as unbiased.
\end{proof}

\subsection{Additional Empirical Comparisons}\label{appendix:mmrc_ss_emp}
In Section \ref{subsec:mmrc_ss_empirical}, we empirically demonstrated the privacy-accuracy-communication tradeoffs of $\MMRC$ simulating $\SubsetSelection$ against $\SubsetSelection$ and RHR in terms of $\ell_2$ error vs $\#$bits and $\ell_2$ error vs $\varepsilon$ (see Figure \ref{fig:freq}). In this section, we provide comparisons between these methods in terms of $\ell_2$ error vs $d$ (see Figure \ref{fig:freq_app} (left)) and $\ell_2$ error vs $n$ (see Figure \ref{fig:freq_app} (right)) for a fixed $\varepsilon$ (=6) and a fixed $\#$bits (=14). As before, RHR uses $\#$bits $= \varepsilon$ for both because it leads to a poor performance if $\#$bits $ > \varepsilon$.
\begin{figure}[h]
\centering
\includegraphics[width=0.45\linewidth]{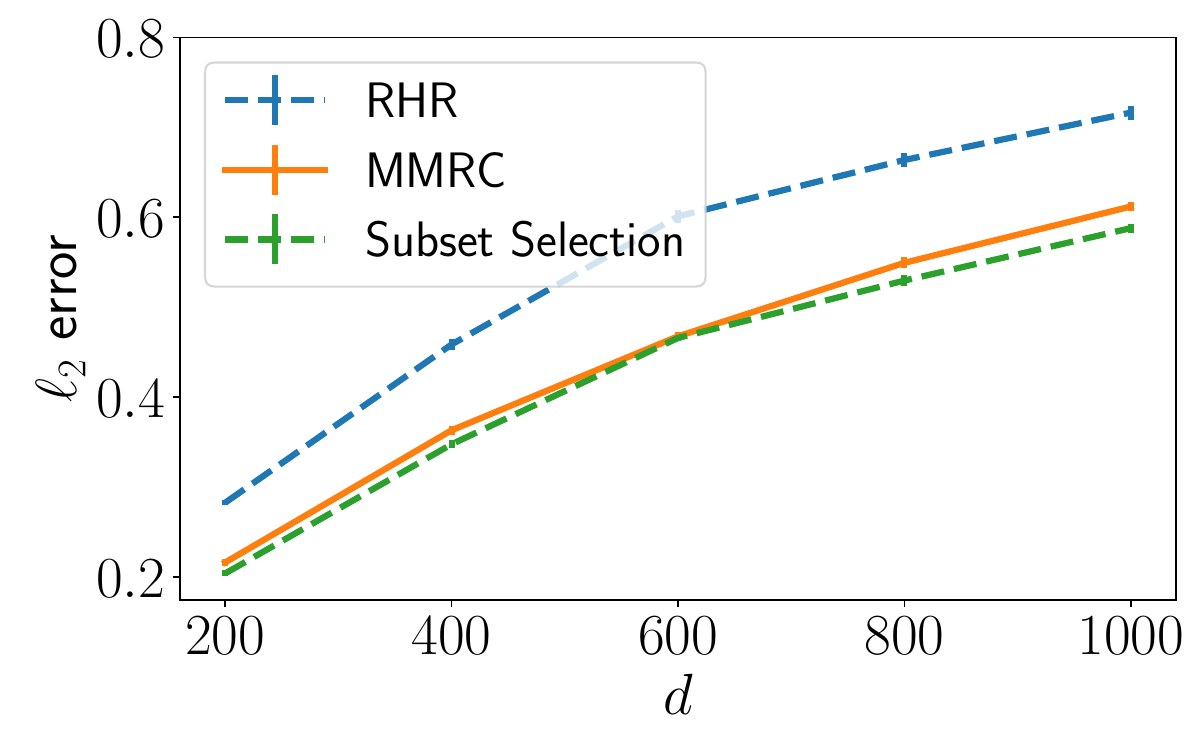} \qquad \includegraphics[width=0.45\linewidth]{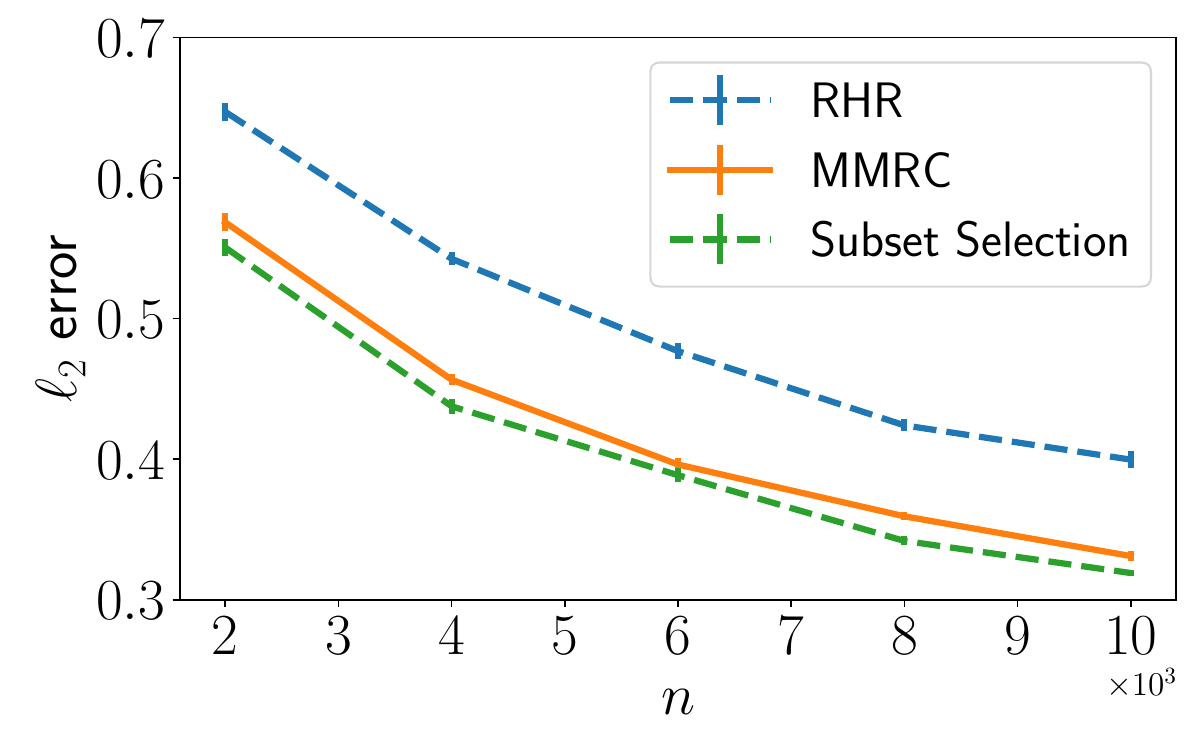}%
\caption{Comparing $\SubsetSelection$, $\MMRC$ simulating $\SubsetSelection$ and RHR for frequency estimation with $\varepsilon=6$ and $\#$bits $=14$. \textbf{Left:} $\ell_2$ error vs $d$ for $n = 5000$. \textbf{Right:} $\ell_2$ error vs $n$ for $d = 500$.}
\label{fig:freq_app}
\end{figure}

\end{document}